\patchcmd{\maketitle}{\@copyrightspace}{}{}{}
\author{
\alignauthor Alin Deutsch\\
\affaddr{UC San Diego}\\
\email{deutsch@cs.ucsd.edu}
\alignauthor Yuliang Li\\
\affaddr{UC San Diego}\\
\email{yul206@eng.ucsd.edu }
\alignauthor Victor Vianu \\
\affaddr{UC San Diego \& INRIA Saclay} \\ 
\email{vianu@cs.ucsd.edu}
}
\title{Verification of Hierarchical Artifact Systems}
\date{}
\newtheorem{theorem}{Theorem}
\newtheorem{corollary}[theorem]{Corollary}
\newtheorem{lemma}[theorem]{Lemma}
\newtheorem{definition}[theorem]{Definition}
\newtheorem{fact}[theorem]{Fact}
\newtheorem{example}[theorem]{Example}
\setlist[itemize]{leftmargin=*}
\setlist[enumerate]{leftmargin=*}
\newcommand{\alin}[1]{{\it\small\textcolor{green}{[[[ {#1}\ --alin ]]]}}}
\newcommand{\victor}[1]{{\it\small\textcolor{red}{[[[ {#1}\ --victor ]]]}}}
\newcommand{\yuliang}[1]{{\it\small\textcolor{blue}{[[[ {#1}\ --yuliang ]]]}}}
\newcommand{\reviewer}[1]{{\it\small\textcolor{cyan}{[[[ {#1}\ --reviewers ]]]}}}
\renewcommand{\alin}[1]{}
\renewcommand{\victor}[1]{}
\renewcommand{\yuliang}[1]{}
\renewcommand{\reviewer}[1]{}
\newcommand{\eat}[1]{}
\newcommand{\calc}{{\cal C}}
\newcommand{\calb}{{\cal B}}
\newcommand{\calk}{{\cal K}}
\newcommand{\cala}{{\cal A}}
\newcommand{\calh}{{\cal H}}
\newcommand{\call}{{\cal L}}
\newcommand{\calp}{{\cal P}}
\newcommand{\calr}{{\cal R}}
\newcommand{\calv}{{\cal V}}
\newcommand{\cals}{{\cal S}}
\newcommand{\calt}{{\cal T}}
\newcommand{\cale}{{\cal E}}
\newcommand{\cali}{{\cal I}}
\newcommand{\lora}{\longrightarrow}
\newcommand{\goto}[1]{\stackrel{#1}{\lora}}
\newcommand{\db}{\mathcal{DB}}
\newcommand{\ltlfo}{\ensuremath{\text{LTL-FO} } }
\newcommand{\hltlfo}{\ensuremath{\text{HLTL-FO} } }
\newcommand{\ainit}{\ensuremath{\mathtt{init}}}
\newcommand{\aactive}{\ensuremath{\mathtt{active}}}
\newcommand{\aclosed}{\ensuremath{\mathtt{closed}}}
\newcommand{\anull}{\ensuremath{\mathtt{null}}}
\newcommand{\buchi}{B\"{u}chi}
\newcommand{\T}{\emph{Tree}}
\newcommand{\Sym}{\ensuremath{\text{\bf \em Sym} } }
\newcommand{\Tree}{\ensuremath{\text{\bf \em Tree} } }
\newcommand{\xttcup}{\bar{x}^T_{T^\uparrow_c}}
\newcommand{\xttcdown}{\bar{x}^T_{T^\downarrow_c}}
\newcommand{\varid}{\ensuremath{\emph{VAR}_{id}}}
\newcommand{\varnum}{\ensuremath{\emph{VAR}_{\mathbb{R}}}}
\newcommand{\IND}{\ensuremath{\text{\bf IND} } }
\newcommand{\FD}{\ensuremath{\text{\bf FD} } }
\newcommand{\ts}{\ensuremath{\emph{TS}(T) } }
\newcommand{\tsib}{\ensuremath{\emph{TS}_{\emph{ib}}(T) } }
\newcommand{\aflightid}{\ensuremath{\mathtt{flight\_id}}}
\newcommand{\ahotelid}{\ensuremath{\mathtt{hotel\_id}}}
\newcommand{\aticketprice}{\ensuremath{\mathtt{ticket\_price}}}
\newcommand{\ahotelprice}{\ensuremath{\mathtt{hotel\_price}}}
\newcommand{\alowerprice}{\ensuremath{\mathtt{discount\_price}}}
\newcommand{\ahigherprice}{\ensuremath{\mathtt{unit\_price}}}
\newcommand{\aamountpaid}{\ensuremath{\mathtt{amount\_paid}}}
\newcommand{\anewamountpaid}{\ensuremath{\mathtt{new\_amount\_paid}}}
\newcommand{\ahotelamountpaid}{\ensuremath{\mathtt{hotel\_amount\_paid}}}
\newcommand{\aamountrefunded}{\ensuremath{\mathtt{amount\_refunded}}}
\newcommand{\astatus}{\ensuremath{\mathtt{status}}}
\newcommand{\dbhotels}{\ensuremath{\mathtt{HOTELS}}}
\newcommand{\dbflights}{\ensuremath{\mathtt{FLIGHTS}}}
\begin{document}
\maketitle

\begin{abstract}
Data-driven workflows, of which IBM's Business Artifacts are a prime exponent, have 
been successfully deployed in practice, adopted in industrial standards, and
have spawned a rich body of research in academia, focused primarily on static analysis.  
The present work represents a significant advance on the problem of artifact verification, by considering 
a much richer and more realistic model than in previous work, 
incorporating core elements of IBM's successful Guard-Stage-Milestone model. 
In particular, the model features task hierarchy, concurrency, and richer artifact data.
It also allows database key and foreign key dependencies,
as well as arithmetic constraints.
The results show decidability of verification and establish its complexity, making use of novel techniques including
a hierarchy of Vector Addition Systems and a variant of quantifier elimination tailored to our context.
\end{abstract}

\keywords{data-centric workflows; business process management; temporal logic; verification}

\section{Introduction} \label{sec:intro}

The past decade has witnessed the evolution of workflow specification frameworks from the traditional
process-centric approach towards data-awareness.
Process-centric formalisms focus on control flow while
under-specifying the underlying data and its manipulations by the process
tasks, often abstracting them away completely.
In contrast, data-aware formalisms treat data as first-class
citizens. A notable exponent of this class is IBM's {\em business artifact model} 
pioneered in~\cite{Nigam03:artifacts}, successfully deployed in practice ~\cite{Bhatt-2005:Artifacts-pharm,Bhatt-2007:artifacts-customer-engagements,IGF-case-study:BPM-2009,Cordys-case-management,IBM-case-mgmt} and adopted in industrial standards.
Business artifacts have also spawned a rich body of research in academia, dealing with issues ranging from formal semantics 
to static analysis (see related work). 

In a nutshell, business artifacts (or simply ``artifacts'') model key business-relevant
entities, which are updated by a set of services that implement
business process tasks, specified declaratively by pre-and-post conditions. 
A collection of artifacts and services is called an {\em artifact system}. 
IBM has developed several variants of artifacts, of which the most recent is 
Guard-Stage-Milestone (GSM) \cite{Damaggio:BPM11,GSM:DEBS-2011}. 
The GSM approach provides rich structuring mechanisms for services, including 
parallelism, concurrency and hierarchy, and has been incorporated in the OMG standard 
for Case Management Model and Notation (CMMN) \cite{OMG:CMMN:Beta1,GSM-CMMN:2012}.

Artifact systems deployed in industrial settings typically specify very complex workflows 
that are prone to costly bugs, whence the need for verification of critical properties.
Over the past few years, we have embarked upon a study of the verification problem for artifact systems.
Rather than relying on general-purpose software verification tools suffering from well-known limitations,
our aim is to identify practically relevant classes of artifact systems and properties for which {\em fully automatic} verification is possible.
This is an ambitious goal, since artifacts are infinite-state systems due to the presence of unbounded data. 
Our approach relies critically on the declarative nature of service specifications and brings into play a novel marriage 
of database and computer-aided verification techniques. 
 
In previous work~\cite{DHPV:ICDT:09,tods12}, we studied the verification problem for a bare-bones variant of artifact systems,
without hierarchy or concurrency, in which each artifact consists of a flat tuple of evolving values and the services 
are specified by simple pre-and-post conditions on the artifact and database.
More precisely, we considered the problem of statically checking whether all runs of an
artifact system satisfy desirable properties expressed in $\ltlfo$, an 
extension of linear-time temporal logic where propositions are interpreted as $\exists$FO sentences 
on the database and current artifact tuple. 
In order to deal with the resulting infinite-state system, 
we developed in \cite{DHPV:ICDT:09} a symbolic approach allowing a reduction to finite-state model checking and
yielding a {\sc pspace} verification algorithm for the simplest variant of the
model (no database dependencies and uninterpreted data domain).  
In \cite{tods12} we extended our approach to allow for database dependencies and numeric data 
testable by arithmetic constraints. Unfortunately, decidability was obtained subject to 
a rather complex semantic restriction on the artifact system and property (feedback freedom), 
and the verification algorithm has non-elementary complexity. 

The present work represents a significant advance on the artifact verification problem on several fronts.
We consider a much richer and more realistic model, called {\em Hierarchical Artifact System} (HAS), 
abstracting core elements of the GSM model. In particular, the model features task hierarchy, concurrency, and richer artifact data
(including updatable artifact relations). 
We consider properties expressed in 
a novel {\em hierarchical} temporal logic, $\hltlfo$, that is well-suited to the model.
Our main results establish the complexity of checking $\hltlfo$ properties
for various classes of HAS, highlighting the impact of various features on verification.  
The results require qualitatively novel techniques, because the reduction to finite-state model checking
used in previous work is no longer possible. Instead, the richer model requires the use of 
a hierarchy of Vector Addition Systems with States (VASS) \cite{vass}. 
The arithmetic constraints are handled using quantifier elimination techniques, adapted to our setting. 
 
We next describe the model and results in more detail.
A HAS consists of a database and a hierarchy (rooted tree) of {\em tasks}. 
Each task has associated to it local evolving data consisting of a tuple of artifact variables and an updatable artifact relation.
It also has an associated set of {\em services}.
Each application of a service is guarded by a pre-condition on the database and local data and
causes an update of the local data, specified by a post condition (constraining the next artifact tuple) 
and an insertion or retrieval of a tuple from the artifact relation. In addition, a task may invoke a child task with a tuple of parameters, 
and receive back a result if the child task completes. A run of the artifact system consists of an infinite sequence 
of transitions obtained by any valid interleaving of concurrently running task services.

In order to express properties of HAS's we introduce {\em hierarchical} $\ltlfo$ ($\hltlfo$).
Intuitively, an $\hltlfo$ formula uses as building blocks $\ltlfo$ formulas acting on runs of individual tasks, called local runs, 
referring only to the database and local data, and can recursively state $\hltlfo$ properties on runs resulting from calls to children tasks.
The language $\hltlfo$ closely fits the computational model and is also motivated on technical grounds discussed in the paper.
A main justification for adopting $\hltlfo$ is that $\ltlfo$ (and even LTL) properties are undecidable for HAS's.

Hierarchical artifact systems as sketched above provide powerful extensions to the variants we previously studied, 
each of which immediately leads to undecidability of verification 
if not carefully controlled.  Our main contribution is to put forward a package of restrictions that
ensures decidability while capturing a significant subset of the GSM model.
This requires a delicate balancing act aiming to
limit the dangerous features while retaining their most useful aspects.
In contrast to \cite{tods12}, this is achieved without the need for unpleasant
semantic constraints such as feedback freedom.  
The restrictions are discussed in detail in the paper, and shown to be necessary by undecidability results.

\eat{
We next describe briefly the main limitations we impose:
\begin{itemize}\itemsep=0pt\parskip=0pt
\item {\bf Database}
We use a two-sorted data domain of tuple IDs and numeric values. Each attribute's domain is of sort ID or numeric. 
Each relation comes equipped with an ID attribute and may have foreign key attributes, each referencing the IDs of another relation (or the special value $\anull$).  Acyclic schemas are defined in the obvious way based on the graph of foreign keys references.
\item {\bf Tasks and services} As described above, the run of each task proceeds by applying its own services and calling children tasks. 
Pre-and-post conditions of services are quantifier-free FO formulas over the artifact variables, using the database 
and artifact relation. 
The FO formulas may also use polynomial inequalities with integer coefficients on numeric variables.
In order to limit recursive computation in the system,
each child task can be called at most once between two services, and a new service can be applied only 
if all active children tasks have returned. Moreover, in a transition caused by a service of a given task, 
only the input parameters of the task are explicitly propagated from one artifact tuple to the next. 
\item {\bf Artifact relations}  These contain tuples of IDs and are updated as follows. Upon a transition,
a specified tuple of ID artifact variables may be inserted in the relation, and/or an arbitrary tuple may be retrieved from the relation
and stored in some of the artifact variables.
\item {\bf $\hltlfo$} Propositions pertaining to a given task are interpreted as quantifier-free FO formulas using the artifact variables, 
artifact relation, and database. As in $\ltlfo$, there are additional universally quantified global variables interpreted across configurations.  
\end{itemize}

As we shall see, the above limitations lead to decidability of verification.
Moreover, we claim that these are sufficiently permissive to capture 
a wide class of applications of practical interest.
This is confirmed by numerous examples of practical business processes modeled as artifact systems, that we encountered
in our collaboration with IBM (see \cite{tods12}). Consider for example the restrictions on the recursion and data flow among tasks and services.  
In practical workflows, the required recursion is rarely powerful enough to allow unbounded propagation of data among services.
Instead, as also discussed in \cite{tods12}, recursion is often due to two scenarios: 
\begin{itemize}\itemsep=0pt\parskip=0pt
\item  allowing a certain task to undo and retry an unbounded number of times,
with each retrial independent of previous ones, and depending only on
a context that remains unchanged throughout the retrial phase (its input parameters).
A typical example is repeatedly
providing credit card information until the payment goes through,
while the order details remain unchanged.
\item  allowing a task to batch-process an unbounded collection of records,
each processed independently, with unchanged input parameters
(e.g. sending invitations to an event to all attendants on the list, for the
same event details).
\end{itemize}
Such recursive computation can be expressed with the above restrictions.
For further illustration, an example modeling a simplified travel agency workflow 
is provided in Appendix \ref{sec:example}.
The example also illustrates the use of other features such as the artifact relations (there implementing a shopping cart).

}

The complexity of verification under various restrictions is summarized in 
Tables \ref{tab:complexity1} (without arithmetic) and \ref{tab:complexity2} (with arithmetic). 
As seen, the complexity ranges from {\sc pspace} to non-elementary for various packages of features. 
The non-elementary complexity (a tower of exponentials whose height is the depth of the hierarchy) 
is reached for HAS with cyclic schemas, artifact relations and arithmetic. 
For acyclic schemas, which include the widely used Star (or Snowflake) schemas \cite{starschema1, starschema2}, the complexity ranges from {\sc pspace} (without 
arithmetic or artifact relations) to double-exponential space (with both arithmetic and artifact relations). 
This is a significant improvement over the previous algorithm of \cite{tods12}, which even for acyclic schemas 
has non-elementary complexity in the presence of arithmetic 
(a tower of exponentials whose height is the square of the total number of artifact variables in the system).  

The paper is organized as follows. The HAS model is presented in Section \ref{sec:framework}. We present its syntax and semantics,
including a representation of runs as a tree of local task runs, that factors out interleavings of independent concurrent tasks.  
An example HAS modeling a simple travel booking process is provided in the appendix. 
The temporal logic $\hltlfo$ is introduced in Section \ref{sec:ltl-fo}, together with a corresponding
extension of B\"uchi automata to trees of local runs. 
In Section \ref{sec:verification} we prove the decidability of verification 
without arithmetic, and establish its complexity.  To this end, we develop a symbolic representation 
of HAS runs and a reduction of model checking to state reachability problems in a set of nested VASS (mirroring the task hierarchy).
In Section \ref{sec:arithmetic} we show how the verification results can be extended in the presence of arithmetic.
Section \ref{sec:undecidability} traces the boundary of decidability, showing that the main restrictions adopted in defining the HAS model 
cannot be relaxed.
Finally, we discuss related work in Section \ref{sec:related} and conclude.
The appendix provides more details and proofs, together with our running example.

\section{Framework}\label{sec:framework}

\reviewer{
\begin{itemize}
\item why for example we cannot "simulate" a hierarchical artifact system by a non-hierarchical one
\item it would have been nice to get some self-contained intuition on the formal framework.
\item Move examples and motivations to main body of the paper.
\item Explain why there are deviations compared to the nonhierarchical case. For example, the notion of database is different (with the keys and foreign keys).
\item Explain why verification of a HLTL-specification on a hierarchical artifact system cannot be reduced to verification of an LTL-specification on an artifact system such as in the [24] paper
\item I struggled to see what you can (and cannot) do with the relational instance within a task (even after looking at the example in the appendix). I would suggest having a few sentences that just explain how tuples can flow in and out of these instances.
\item possibility of flattening
\end{itemize}
}



In this section we present the syntax and semantics of Hierarchical Artifact Systems (HAS's). 
We begin with the underlying database schema. 
\vspace{-1mm}
\begin{definition}
A \textbf{database schema} $\db$ is a finite set of relation symbols, where 
each relation $R$ of $\db$ has an associated sequence of distinct attributes 
containing the following: 
\vspace{-1mm}
\begin{itemize}\itemsep=0pt\parskip=0pt
\item a key attribute $\emph{ID}$ (present in all relations), 
\item a set of foreign key attributes $\{F_1, \dots, F_m\}$, and
\item a set of non-key attributes $\{A_1, \dots, A_n\}$ disjoint from \\
$\{\emph{ID}, F_1, \dots, F_m\}$.
\end{itemize}
\vspace{-1mm}
To each foreign key attribute $F_i$ of $R$ is associated a relation $R_{F_i}$ of $\db$
and the inclusion dependency $R[F_i] \subseteq R_{F_i}[\emph{ID}]$.
It is said that $F_i$ references $R_{F_i}$.
\end{definition} \vspace{-1mm}

\reviewer{motivation for non-key attributes being all numeric} \victor{Added a paragraph below}
\reviewer{It would also be nice to understand the reason for insisting on a single ID key for all relations and that non-numeric attributes can only be foreign keys to such IDs and nothing else.} \victor{See Discussion}
\reviewer{page 3: The notion of linearly cyclic looks similar to ``flatness''. Is linearly cyclic a standard terminology?}
\victor{Alin, do you know what "flatness" is ?}

The domain $Dom(A)$ of each attribute $A$ depends on its type. The domain of all non-key attributes is
numeric, specifically $\mathbb{R}$. 
The domain of each key attribute is a countable infinite domain disjoint from $\mathbb{R}$.
For distinct relations $R$ and $R'$, $Dom(R.\emph{ID}) \cap Dom(R'.\emph{ID}) = \emptyset$.
The domain of a foreign key attribute $F$ referencing $R$ is $Dom(R.\emph{ID})$. We denote by
$\emph{DOM}_{id} = \cup_{R \in \db} Dom(R.\emph{ID})$. 
Intuitively, in such a database schema, each
tuple is an object with a \emph{globally} unique id. This id does not appear
anywhere else in the database except as foreign keys referencing it.
An {\em instance} of a database schema $\db$ is a mapping $D$ associating to each relation symbol $R$ a finite relation
$D(R)$ of the same arity of $R$, whose tuples provide, for each attribute, a value from its domain. In addition,
$D$ satisfies all key and inclusion dependencies associated with the keys and foreign keys of the schema. 
The active domain $D$, denoted $\texttt{adom}(D)$, consists of all elements of $D$ (id's and reals). 
A database schema $\db$ is {\em acyclic} if there are no cycles in the references induced by foreign keys.
More precisely, consider the labeled graph FK whose nodes are the relations of the schema and in which there is an edge
from $R_i$ to $R_j$ labeled with $F$ if $R_i$ has a foreign key attribute $F$ referencing $R_j$. The schema $\db$ is {\em acyclic} if the graph FK
is acyclic, and it is {\em linearly-cyclic} if each relation $R$ is contained in at most one simple cycle.

The assumption that the ID of each relation is a single attribute is made for simplicity, and multiple-attribute IDs
can be easily handled. The fact that the domain of all non-key attributes is numeric is also harmless. 
Indeed, an uninterpreted domain on which only equality can be used can be easily simulated. 
Note that the keys and foreign keys used on our schemas are special cases of the dependencies used in \cite{tods12}.
The limitation to keys and foreign keys is one of the factors leading to improved complexity of verification and still
captures most schemas of practical interest. 

We next proceed with the definition of tasks and services, described informally in the introduction.
The definition imposes various restrictions needed for decidability of verification.
These are discussed and motivated in Section \ref{sec:undecidability}.

Similarly to the database schema, we consider two infinite, disjoint sets $\varid$ of ID variables and $\varnum$ of
numeric variables. We associate to each variable $x$ its {\em domain} $Dom(x)$.
If $x \in \varid$, then $Dom(x) = \{\anull\} \cup \emph{DOM}_{id}$, 
where $\anull \not\in \emph{DOM}_{id} \cup \mathbb{R}$
($\anull$ plays a special role that will become clear shortly).
If $x \in \varnum$, then $Dom(x) = \mathbb{R}$. An {\em artifact variable} is a variable in $\varid \cup \varnum$.
If $\bar{x}$ is a sequence of artifact variables, a {\em valuation} of $\bar{x}$ 
is a mapping $\nu$ associating to each variable in $\bar{x}$ an element of its domain $Dom(x)$. 

\vspace{-2mm}
\begin{definition}
A \textbf{task schema} over database schema $\db$ is a triple $T = \langle \bar{x}^T, S^T, \bar{s}^T \rangle$ where 
$\bar{x}^T$ is a sequence of distinct artifact variables, $S^T$ is a relation symbol not in $\db$ with associated arity $k$, and 
$\bar{s}^T$ is a sequence of $k$ distinct id variables in $\bar{x}^T$. 
\end{definition}
\vspace{-1mm}

We denote by $\bar x^T_{id} = \bar x^T \cap \varid$ and $\bar x^T_{\mathbb{R}} = \bar x^T \cap \varnum$.
We refer to $S^T$ as the {\em artifact relation} or {\em set} of $T$.
\vspace{-2mm}
\begin{definition}
An \textbf{artifact schema} is a tuple $\mathcal{A} = \langle \calh, \db \rangle$ 
where $\mathcal{DB}$ is a database schema and $\calh$ is a rooted tree of task schemas over $\db$
with pairwise disjoint sets of artifact variables and distinct artifact relation symbols.
\end{definition}\vspace{-1mm}

\reviewer{disjoint $\rightarrow$ pairwise disjoint}
\victor{Fixed}

The rooted tree $\calh$ defines the {\em task hierarchy}. 
Suppose the set of tasks is $\{T_1, \ldots, T_k\}$. 
For uniformity, we always take task $T_1$ to be the root of $\calh$. 
We denote by $\preceq_\calh$ (or simply $\preceq$ when $\calh$ is understood) the 
partial order on $\{T_1, \dots, T_k\}$ induced by $\calh$ (with $T_1$ the minimum). For a node $T$ of $\calh$, 
we denote by $\emph{tree(T)}$ the subtree of $\calh$ rooted at $T$, 
$\emph{child}(T)$ the set of children of $T$ (also called {\em subtasks} of $T$), 
$\emph{desc}(T)$ the set of descendants of $T$ (excluding $T$). Finally, $\emph{desc}^*(T)$ denotes
$\emph{desc}(T) \cup \{T\}$.
We denote by $\cals_\calh$ (or simply $\cals$ when $\calh$ is understood) the relational schema $\{S^{T_i} \mid 1 \leq i \leq k\}$.
An instance of $\cals$ is a mapping associating to each $S^{T_i} \in \cals$ a finite relation over $\emph{DOM}_{id}$
of the same arity.
\vspace{-2mm}
\begin{definition}
An \textbf{instance} of an artifact schema $\mathcal{A} = \langle \calh, \db \rangle $ is a tuple 
$\bar I = \langle \bar \nu, stg, D, \bar S \rangle$ where 
$D$ is a finite instance of $\mathcal{DB}$, $\bar S$ a finite instance of $\cals$,
$\bar{\nu}$ a valuation of 
$\bigcup_{i=1}^k \bar{x}^{T_i}$, 
and $stg$ (standing for ``stage'') a mapping of $\{T_1, \dots, T_k\}$ to 
$\{\ainit, \aactive, \aclosed \}$. 
\reviewer{need to make specific that $stg$ stands for stage}
\victor{Done.}
\end{definition}
\vspace{-1mm}

The stage $stg(T_i)$ of a task $T_i$ has the following intuitive meaning in the context of a run of its parent: 
$\ainit$ indicates that $T_i$ has not yet been called within the run,  
$\aactive$ says that $T_i$ has been called and has not returned its answer, and $\aclosed$ indicates that $T_i$
has returned its answer. As we will see, a task $T_i$ can only be called once within a given run of its parent.
However, it can be called again in subsequent runs.   

We denote by $\calc$ an infinite set of relation symbols, each of which has a fixed interpretation 
as the set of real solutions of a finite set of polynomial inequalities with integer
coefficients. By slight abuse, we sometimes use the same notation for a relation 
symbol in $\calc$ and its fixed interpretation.
For a given artifact schema $\mathcal{A} = \langle \calh, \mathcal{DB} 
\rangle$ and a sequence $\bar x$ of variables, 
a {\em condition} on $\bar x$ is a quantifier-free FO formula over 
$\mathcal{DB} \cup \calc \cup \{=\}$ \reviewer{need explanation of $\calc$} \victor{Added above.}
whose variables are included in $\bar{x}$.
The special constant $\anull$ can be used in equalities with ID variables.
For each atom $R(x, y_1, \dots, y_m, z_1, \dots, z_n)$ of relation
$R(\emph{ID}, A_1, \dots, A_m, F_1, \dots, F_n) \in \db$, 
$\{x, z_1, \dots, z_n\} \subseteq \varid$ and $\{y_1, \dots, y_m\} \subseteq \varnum$.
Atoms over $\calc$ use only numeric variables.
If $\alpha$ is a condition on $\bar x$, 
$D$ is an instance of $\mathcal{DB}$ and $\nu$ a valuation of $\bar x$, 
we denote by $D \cup \calc \models \alpha(\nu)$ the fact that $D \cup 
\calc$ satisfies $\alpha$ with valuation $\nu$ with standard semantics. 
For an atom $R(\bar{y})$ in $\alpha$ where $R \in \db$ and $\bar{y} \subseteq \bar{x}$,
if $\nu(y) = \anull$ for any $y \in \bar{y}$, then $R(\bar{y})$ is false.


We next define services of tasks. We start with internal services, which update
the artifact variables and artifact relation of the task. \reviewer{artifact tuple is not defined.}  \yuliang{fixed}
\vspace{-1mm}
\begin{definition}
Let $T = \langle \bar{x}^T, S^T, \bar{s}^T \rangle$ 
be a task of an artifact schema $\mathcal{A}$.
An \textbf{internal service} $\sigma$ of $T$
is a tuple $\langle \pi, \psi, \delta \rangle$ where:
\vspace*{-1.5mm}
\begin{itemize}\itemsep=0pt\parskip=0pt
\item $\pi$ and $\psi$, called \emph{pre-condition} and \emph{post-condition}, 
respectively, are conditions over $\bar{x}^T$
\item $\delta \subseteq \{+S^T(\bar{s}^T), -S^T(\bar{s}^T)\}$ is a set of {\em set updates};  
$+S^T(\bar{s}^T)$ and $-S^T(\bar{s}^T)$ are 
called the \textbf{insertion} and \textbf{retrieval} of $\bar{s}^T$, respectively. 
\reviewer{$\delta$ is from a two element set. And explain why do we need empty $\delta$ and $\delta$ with both insert and retrieve}
\reviewer{explanation of retrieve}
\end{itemize}
\end{definition}
\vspace{-1mm}

Intuitively, $+S^T(\bar{s}^T)$ causes an insertion of the {\em current} value of $\bar{s}^T$ into $S^T$,
while $-S^T(\bar{s}^T)$ causes the removal of some non-deterministically chosen current tuple of $S^T$
and its assignment as the {\em next} value of $\bar{s}^T$. In particular, if $\delta = \{+S^T(\bar{s}^T), -S^T(\bar{s}^T)\}$,
the tuple inserted by $+S^T(\bar{s}^T)$ and the one retrieved by $-S^T(\bar{s}^T)$ are generally distinct, 
but may be the same as a degenerate case (see definition of the semantics below).

As will become apparent, although pre-and-post conditions are quantifier-free, $\exists$FO conditions can be simulated 
by adding variables to $\bar{x}^T$. \reviewer{At this point in the paper
we have seen no semantics, so how can we observe this?} \victor{Reformulated}

An internal service of a task $T$ specifies transitions that only 
modify the variables $\bar{x}^T$ of $T$ and the contents of $S^T$. 
Interactions among tasks are specified using two kinds of special services, 
called the \emph{opening-services} and \emph{closing-services}.
\reviewer{The paragraph  starting "Intuitively..." before Definition 7 and the sentence after
Definition 7 starting with "Intuitively" were both useful, but it would be nice if the conditions outlined there could be mapped to ones
in the formal definition.}

\vspace{-2mm}
\begin{definition}
Let $T_c$ be a child of a task $T$ in $\cala$. \\
(i) The \textbf{opening-service} $\sigma_{T_c}^o$ of $T_c$ is a tuple $\langle 
\pi, f_{in} \rangle$, where $\pi$ is a condition over $\bar{x}^{T}$, and
$f_{in}$ is a partial 1-1 mapping \reviewer{explain why it has to be 1-1} from $\bar{x}^{T_c}$ to $\bar{x}^{T}$ (called the input variable mapping).  
We denote $dom(f_{in})$ by $\bar{x}_{in}^{T_c}$, called the \textbf{input variables} of $T_c$,
and $range(f_{in})$ by  $\bar x^{T}_{{T_c}^\downarrow}$ (the variables of $T$ passed as input to $T_c$). \\
(ii) The \textbf{closing-service} $\sigma_{T_c}^c$ of $T_c$ is a tuple $\langle 
\pi, f_{out} \rangle$, 
where $\pi$ is a condition over $\bar{x}^{T_c}$,
and $f_{out}$ is a partial 1-1 mapping from $\bar{x}^{T}$ to $\bar{x}^{T_c}$ (called the output variable mapping). 
We denote $dom(f_{out})$ by $\bar{x}^{T}_{{T_c}^\uparrow}$, referred to as the \textbf{returned variables} from $T_c$.
It is required that 
$\bar{x}_{{T_c}^\uparrow}^{T} \cap {\bar x}_{in}^{T} = \emptyset$ \yuliang{this restriction is said twice} \victor{Where?}.
We denote by $\bar{x}^{T_c}_{ret}$ the \textbf{to-be-returned variables} (or return variables),
defined as $range(f_{out})$. 
\end{definition}
\vspace{-2mm}

Intuitively, the opening-service  $\langle \pi, f_{in} \rangle$ of a task $T_c$ specifies the condition $\pi$ that the parent task $T$ 
has to satisfy in order to open $T_c$. When $T_c$ is opened, a subset of the variables of $T$
are \yuliang{typo fixed} 
sent to $T_c$ according to the mapping $f_{in}$. 
Similarly, the closing-service $\langle \pi, f_{out} \rangle$ specifies the 
condition $\pi$ that $T_c$ has to satisfy in order to be closed and return to $T$. When $T_c$ is closed, a
subset of $\bar{x}^{T_c}$ is sent back to $T$, as specified by $f_{out}$.

\reviewer{motivation of this restriction.} 
\victor{Explained later.}

For uniformity of notation, we also equip the root task $T_1$ with a service $\sigma^o_{T_1}$ with pre-condition {\em true} that initiates 
the computation by providing a valuation to a designated subset $\bar x^{T_1}_{in}$ of $\bar x^{T_1}$ (the input variables of $T_1$),
and a service $\sigma^c_{T_1}$ whose pre-condition is \emph{false} 
(so it never occurs in a run). \yuliang{need to unify $\sigma^o_{T_1}$ and the global pre-condition $\Pi$}
For a task $T$ we denote by $\Sigma_T$ the set of its internal services,
$\Sigma_T^{oc} = \Sigma_T \cup \{\sigma^o_T, \sigma^c_T\}$, 
$\Sigma_T^{\emph{obs}} = \Sigma_T^{oc} \cup \{\sigma^o_{T_c}, \sigma^c_{T_c} \mid T_c \in \emph{child}(T)\}$, and
$\Sigma_T^\delta = \Sigma_T \cup \{\sigma^o_T\} \cup \{\sigma^c_{T_c} \mid T_c \in \emph{child}(T)\}$.
Intuitively, $\Sigma_T^{\emph{obs}}$ consists of the services observable in runs of task $T$ 
and $\Sigma_T^\delta$ consists of services whose application can modify the variables $\bar x^T$. 
\reviewer{move the definition of $\Sigma_T$ after the following definition?}

\vspace{-2mm}
\begin{definition}
A \emph{Hierarchical Artifact System} (HAS) is a triple $\Gamma = \langle \mathcal{A}, \Sigma, \Pi \rangle$, 
where $\mathcal{A}$ is an artifact schema, $\Sigma$ is a set of services over $\mathcal{A}$ including 
$\sigma^o_T$ and  $\sigma^c_T$ for each task $T$ of $\mathcal{A}$, 
and $\Pi$ is a condition over $\bar x^{T_1}_{in}$  (where $T_1$ is the root task).
\victor{Changed. Note that this is not the precondition of $\sigma_{T_1}^o$.}
\end{definition}
\vspace*{-1mm}

We next define the semantics of HAS's. Intuitively, a run of a HAS on a database $D$ consists of 
an infinite sequence of transitions among HAS instances (also referred to as configurations, or snapshots), 
starting from an initial artifact tuple satisfying pre-condition $\Pi$. 
At each snapshot, each active task $T$ can open a subtask $T_c$ if the pre-condition of the opening service of $T_c$ holds, and the values of
a subset of $\bar{x}^T$ is passed to $T_c$ as its input variables. $T_c$ can be closed if the pre-condition of
its closing service is satisfied. When $T_c$ is closed, the values of a subset of $\bar{x}^{T_c}$ are sent to $T$ as
$T$'s returned variables from $T_c$. An internal service of $T$ can only be applied after all active subtasks of $T$ have returned their answer. 

\vspace{2mm}
Because of the hierarchical structure, and the locality of task specifications, 
the actions of concurrently active children of a given task are independent of each other and can be arbitrarily interleaved.
To capture just the essential information, factoring out the arbitrary interleavings, we first define the notion
of {\em local run} and {\em tree of local runs}. Intuitively, a local run of a task consists of a sequence 
of services of the task, together with the transitions they cause on the task's local artifact variables and relation. 
The tasks's input and output are also specified. A tree of local runs captures the relationship between 
the local runs of tasks and those of their subtasks, including the passing of inputs and results.
Then the runs of the full artifact system simply consist of all legal interleavings
of transitions represented in the tree of local runs, lifted to full HAS instances (we refer to these as {\em global runs}).  
We begin by defining instances of tasks and local transitions.
For a mapping $M$, we denote by $M[a \mapsto b]$ the mapping that sends $a$ to $b$ and agrees with $M$ everywhere else.
\vspace{-2mm}
\begin{definition}
Let $T =  \langle \bar{x}^T, S^T, \bar{s}^T \rangle$ be a task in $\Gamma$ and $D$ a database instance over $\mathcal{DB}$.
An {\em instance} of $T$ is a pair $(\nu, S)$ where $\nu$ is a valuation of $\bar x^T$ and $S$ an instance of $S^T$.
For instances $I = (\nu, S)$ and $I' = (\nu', S')$ of $T$ and a service $\sigma \in \Sigma^{obs}_T$,
there is a local transition $I \goto{\sigma} I'$ if the following holds.
If $\sigma$ is an internal service $(\pi,\psi)$, then:
\vspace*{-1mm}
\begin{itemize}\itemsep=0pt\parskip=0pt
\item $D \cup \calc \models \pi(\nu)$ and $D \cup \calc \models \psi(\nu')$
\item $\nu'(y) = \nu(y)$ for each $y$ in $\bar{x}^T_{in}$
\item if $\delta = \{+S^T(\bar{s}^T)\}$, then $S' = S \cup \{\nu(\bar{s}^T)\}$,
\item if $\delta = \{-S^T(\bar{s}^T)\}$, then
$\nu'(\bar{s}^T) \in S$ and $S' =  S - \{\nu'(\bar{s}^T)\}$,
\item if $\delta = \{+S^T(\bar{s}^T), -S^T(\bar{s}^T)\}$, then
$\nu'(\bar{s}^T) \in S \cup \{\nu(\bar{s}^T)\}$ and $S' = (S \cup \{\nu(\bar{s}^T)\}) - \{\nu'(\bar{s}^T)\}$, 
\item if $\delta = \emptyset$ then $S' = S$.
\end{itemize}
\vspace*{-1mm}

If $\sigma =  \sigma^o_{T_c} = \langle \pi, f_{in} \rangle$ is the opening-service for a child $T_c$ of $T$
then $D \cup \calc \models \pi(\nu)$, $\nu' = \nu$ and $S' = S$.
If $\sigma = \sigma^c_{T_c}$ then $S = S'$, $\nu'|(\bar x^T - \bar x^T_{T_c \uparrow}) =
\nu|(\bar x^T - \bar x^T_{T_c \uparrow})$ and $\nu'(z) = \nu(z)$ for every $z \in \bar x^T_{T_c \uparrow} \cap \varid$ for which $\nu(z) \neq \anull$.  Finally, if $\sigma = \sigma^c_T$ then $I' = I$.
\end{definition}

\vspace*{-2mm}
We now define local runs.

\vspace{-2mm}
\begin{definition}
Let $T =  \langle \bar{x}^T, S^T, \bar{s}^T \rangle$ be a non-root task in $\Gamma$ and $D$ a database instance over $\mathcal{DB}$.
A {\em local run} of $T$ over $D$ is a triple 
$\rho_T = (\nu_{in}, \nu_{out}, \{(I_i,\sigma_i)\}_{0 \leq i < \gamma})$, where:
\vspace*{-1mm}
\begin{itemize}\itemsep=0pt\parskip=0pt
\item $\gamma \in \mathbb{N} \cup \{\omega\}$
\item for each $i \geq 0$, $I_i$ is an instance of $T$ and $\sigma_i \in \Sigma^{obs}_T$
\item $\nu_{in}$ is a valuation of $\bar x^T_{in}$ 
\item $\sigma_0 = \sigma^o_T$ and $S_0 = \emptyset$,
\item $\nu_0|{\bar x}^T_{in} = \nu_{in}$, 
$\nu_0(z) = \anull$ for $z \in \varid- \bar{x}^T_{in}$ and  
$\nu_0(z) = 0$ for $z \in \varnum -  \bar{x}^T_{in}$ 
\item if for some $i$, $\sigma_i = \sigma^c_T$ then $\gamma \in \mathbb{N}$ and $i = \gamma-1$ 
(and $\rho_T$ is called a {\em returning} local run)
\item $\nu_{out} = \nu_{\gamma-1} |\bar{x}^T_{ret}$ if $\rho_T$ is a returning run and $\bot$ otherwise
\item a \emph{segment} of $\rho_T$ is a subsequence $\{(I_i,\sigma_i)\}_{i \in J}$, where
$J$ is a maximal interval $[a, b] \subseteq \{i \mid 0 \leq i < \gamma\}$
such that no $\sigma_j$ is an internal service of $T$ for $j \in [a + 1, b]$. A segment $J$
is {\em terminal} if $\gamma \in \mathbb{N}$ and
$b = \gamma - 1$ (and is called returning if $\sigma_{\gamma-1} = \sigma_T^c$ and blocking otherwise). 
Segments of ${\rho}_T$ must satisfy the following
properties. For each child $T_c$ of $T$ there is at most one $i \in J$ such that $\sigma_i = \sigma^o_{T_c}$.
If $J$ is not blocking and such $i$ exists, there is exactly one $j \in J$ for which $\sigma_j = \sigma^c_{T_c}$, and $j > i$.
If $J$ is blocking, there is {\em at most} one such $j$.
\item for every $0 < i < \gamma$, $I_{i-1} \goto{\sigma_i} I_i$. 
\end{itemize}
\vspace*{-1mm}
Local runs of the root task $T_1$ are defined as above, except that $\nu_{in}$ is a valuation of $\bar x^{T_1}_{in}$ such that
$D \cup \calc \models \Pi$, and $\nu_{out} = \bot$ (the root task never returns).
\end{definition}
\vspace*{-1mm}

For a local run as above, we denote $\gamma(\rho_T) = \gamma$.
Note that by definition of segment, a task can call each of its children tasks at most once between two consecutive services in $\Sigma_T^{oc}$
and all of the called children tasks must complete within the segment, unless it is blocking.  
These restrictions are essential for decidability and are discussed in Section \ref{sec:undecidability}.
\victor{I commented out the definition of stage for local runs, which is very close to that defined earlier for global runs. 
Let's see if this works.}

\reviewer{
page 4: Why do you require that an id variable is not overwritten unless currently null? Is it crucial for decidability? Is it more natural than overwriting? \\
page 4: Why do you forbid internal services when a child task is running? Is it natural? Why cant an internal service of parent execute in parallel with child task?}

Observe that local runs take arbitrary inputs and allow for arbitrary return values from its children tasks.
The valid interactions between the local runs of a tasks and those of its children is captured by the notion of {\em tree of local runs}.   
\vspace*{-2mm}
\begin{definition}
A {\em tree of local runs} is a directed labeled tree $\Tree$ in which each node
is a local run ${\rho}_T$ for some task $T$, and every edge connects a local run of
a task $T$ with a local run of a child task $T_c$ and is labeled with a non-negative integer $i$
(denoted $i({\rho}_{T_c})$). In addition, the following properties are satisfied.
Let ${\rho}_T = (\nu^T_{in}, \nu^T_{out}, \{(I_i,\sigma_i)\}_{0 \leq i < \gamma})$ be a node of $\Tree$, where $I_i = (\nu_i, S_i)$, $i \geq 0$.  Let
$i$ be such that $\sigma_i = \sigma^o_{T_c}$ for some child $T_c$ of $T$. There exists a unique edge
labeled $i$ from ${\rho}_T$ to a node ${\rho}_{T_c} = (\nu_{in}, \nu_{out}, \{(I'_i,\sigma'_i)\}_{0 \leq i < \gamma'})$
of $\Tree$, and the following hold:
\vspace*{-1mm}
\begin{itemize}\itemsep=0pt\parskip=0pt
\item $\nu_{in} = f_{in} \circ \nu_i$ 
where\footnote{Composition is left-to-right.} $f_{in}$ is the input variable mapping of $\sigma_{T_c}^o$
\item ${\rho}_{T_c}$ is a returning run iff there exists $j > i$ such that $\sigma_j = \sigma^c_{T_c}$;
let $k$ be the minimum such $j$. Then $\nu_k(z) = \nu_{out}(f_{out}(z))$ for every $z \in \bar x^{T}_{{T_c}^\uparrow}$
for which $\nu_{k-1}(z) = \anull$, where $f_{out}$ is the output mapping of $\sigma_{T_c}^c$.
\end{itemize}
Finally, for every node $\rho_T$ of $\Tree$, if ${\rho}_T$ is blocking then there exists a child of
${\rho}_T$ that is not returning (so infinite or blocking).
\end{definition}
\vspace*{-1mm}

Note that a tree of local runs may generally be rooted at a local run of any task of $\Gamma$. 
We say that $\Tree$ is {\em full} if it is rooted at a local run of $T_1$.

We next turn to global runs. A global run of $\Gamma$ 
on database instance $D$ over $\mathcal{DB}$
is an infinite sequence $\rho = \{ (I_i, \sigma_i) \}_{i \geq 0}$,
where each $I_i$ is an instance $(\nu_i, stg_i, D, S_i)$ of $\cala$ and $\sigma_i \in \Sigma$,
resulting from a tree of local runs by interleaving its transitions, lifted to full HAS instances
(see Appendix for the formal definition).  
For a tree of local runs $\Tree$, we denote by $\call(\Tree)$ the set of all global runs 
induced by the legal interleavings of $\Tree$. 

\eat{

\vspace{2mm}
\noindent
{\bf Discussion}
The above definitions place various restrictions on the HAS computational model, including the following:
\begin{itemize}\itemsep=0pt\parskip=0pt
\item in an internal transition of a given task (caused by an internal service),
only the input parameters of the task are explicitly propagated from one artifact tuple to the next
\item each subtask may be called at most once between internal transitions, and may overwrite upon return only $\anull$
variables in the parent task (consequently each artifact variable can be overwritten at most once by subtasks in the course of 
a local run) 
\item the artifact variables of a task storing the values returned by its subtasks are disjoint from the tasks's input variables 
\item  an internal transition can take place only if all active subtasks have returned 
\item the artifact relation of a task is reset to empty every time the task closes
\item the tuple of artifact variables whose value is inserted or retrieved from a task's artifact relation 
is fixed and statically defined, as well as the tuple of variables passed as input to subtasks and returned to parent tasks. 
\end{itemize}
\vspace*{-1mm}

These restrictions are placed in order to control the data flow and recursive computation in the system. 
As will be seen in Section \ref{sec:restrictions}, lifting any of these restrictions leads to undecidability of verification.
We claim that the restrictions remain sufficiently permissive to capture
a wide class of applications of practical interest.
This is confirmed by numerous examples of practical business processes modeled as artifact systems, that we encountered
in our collaboration with IBM (see \cite{tods12}). Consider for example the restrictions on the recursion and 
data flow among tasks and services.
In practical workflows, the required recursion is rarely powerful enough to allow unbounded propagation of data among services.
Instead, as also discussed in \cite{tods12}, recursion is often due to two scenarios:
\begin{itemize}\itemsep=0pt\parskip=0pt
\item  allowing a certain task to undo and retry an unbounded number of times,
with each retrial independent of previous ones, and depending only on
a context that remains unchanged throughout the retrial phase (its input parameters).
A typical example is repeatedly
providing credit card information until the payment goes through,
while the order details remain unchanged.
\item  allowing a task to batch-process an unbounded collection of records,
each processed independently, with unchanged input parameters
(e.g. sending invitations to an event to all attendants on the list, for the
same event details).
\end{itemize}
Such recursive computation can be expressed with the above restrictions.
For further illustration see the next example, 
as well as its extension in Appendix \ref{app:example}.

}

\section{Hierarchical LTL-FO}\label{sec:ltl-fo}

In order to specify temporal properties of HAS's we use an extension of LTL (linear-time temporal logic).
Recall that LTL is propositional logic augmented with temporal operators $\mathbf{X}$ (next), $\mathbf{U}$ (until),
{\bf G} (always) and {\bf F} (eventually) (e.g., see \cite{Eme90}). Their semantics is reviewed in Appendix \ref{app:ltl}. 
An extension of LTL in which propositions are interpreted as FO sentences has previously been defined 
to specify properties of sequences of structures \cite{S00}, and in particular of runs of artifact systems \cite{DHPV:ICDT:09,tods12}.
The extension is denoted by $\ltlfo$. In order to specify properties of HAS's, we shall use a variant of $\ltlfo$, called  
{\em hierarchical} $\ltlfo$, denoted $\hltlfo$.
Intuitively, an $\hltlfo$ formula uses as building blocks $\ltlfo$ formulas acting on local runs of individual tasks, 
referring only to the database and local data, and can recursively state $\hltlfo$ properties on runs resulting from calls to children tasks.
This closely mirrors the hierarchical execution of tasks, and is a natural fit for this computation model.  
In addition to its naturaleness, the choice of $\hltlfo$ has several technical justifications. First, verification of 
$\ltlfo$ (and even LTL) properties is not possible for HAS's. 
\vspace*{-1mm}
\begin{theorem} \label{thm:ltlfo}
It is undecidable, given an $\ltlfo$ formula $\varphi$ and a HAS $\Gamma =  \langle \cala, \Sigma, \Pi \rangle$, 
whether $\Gamma \models \varphi$.
Moreover, this holds even for LTL formulas over $\Sigma$ (restricting the sequence of services in a global run).
\end{theorem}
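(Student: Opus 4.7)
The plan is a reduction from the halting problem of deterministic two-counter (Minsky) machines, which is undecidable. Given such a machine $M$, I would construct a HAS $\Gamma_M$ and an LTL formula $\varphi_M$ over the alphabet $\Sigma$ of services such that $\Gamma_M \models \varphi_M$ iff $M$ does not halt. Taking $\varphi_M = \mathbf{G}\,\neg \sigma_{\mathrm{halt}}$ for a distinguished service $\sigma_{\mathrm{halt}}$ that can fire only upon reaching $M$'s halt state, the verification question becomes equivalent to the complement of the halting problem, hence undecidable. Since $\varphi_M$ is already an LTL formula over $\Sigma$, this proves the stronger statement of the theorem, and the $\ltlfo$ claim follows a fortiori.

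For the construction of $\Gamma_M$, I would use a root task $T_1$ whose artifact variables encode the current instruction pointer of $M$, together with two child tasks $T_a, T_b$ whose artifact relations $S^{T_a}, S^{T_b}$ represent the two counters by cardinality. An unbounded reservoir of fresh IDs is supplied by a database relation referenced via foreign keys from the sets $S^{T_c}$, ensuring each increment uses a distinct ID so that the counter is not spuriously collapsed. An increment of counter $c$ is an internal service of $T_c$ with $\delta = \{+S^{T_c}(\bar s^{T_c})\}$, while a decrement is $\delta = \{-S^{T_c}(\bar s^{T_c})\}$, which naturally blocks on an empty relation. Each instruction of $M$ is assigned its own service label and each call/return of $T_c$ emits fixed opening/closing labels, so that the observable service trace of a valid global run directly encodes the corresponding instruction trace of $M$. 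The root's pre- and post-conditions force the instruction pointer to advance according to $M$'s transition function.

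The principal obstacle, as in any counter-machine reduction, is faithfully simulating the zero-test, since retrieval only blocks silently on an empty relation and yields no positive signal of emptiness. I would handle this via a guess-and-restrict scheme: at a zero-test of counter $c$, $T_1$ nondeterministically commits to the outcome by emitting either $\sigma_i^{\mathrm{zero}_c}$ or $\sigma_i^{\mathrm{nonzero}_c}$. In the non-zero branch, $T_c$ is required to perform a decrement immediately followed by a compensating increment, witnessing non-emptiness. For the zero branch, I would exploit the reset-on-close semantics implicit in the definition of a local run (where $S_0=\emptyset$): after guessing ``zero'', the root closes and re-opens $T_c$ so its relation restarts empty, and globally accounts for increments and decrements of $c$ via dedicated root-level services whose pre-conditions enforce that the committed sequence of $\mathrm{inc}_c/\mathrm{dec}_c/\mathrm{zero}_c$ is balanced. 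Any dishonest zero-guess breaks this balance and leaves $\Gamma_M$ in a configuration from which $\sigma_{\mathrm{halt}}$ cannot be reached, while every true halting computation of $M$ has a matching run of $\Gamma_M$ firing $\sigma_{\mathrm{halt}}$. Making this accounting work purely with LTL over $\Sigma$, without appealing to data, is the delicate part of the argument; the rest is routine bookkeeping showing that $\Gamma_M \models \varphi_M$ iff $M$ does not halt.
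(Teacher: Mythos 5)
Your core idea---using children's artifact relations as counters, with insertion as increment, retrieval (which blocks on an empty set) as decrement, and close/reopen as reset---is exactly the paper's device. But your reduction source and, crucially, your zero-test gadget do not work. Closing and reopening $T_c$ merely \emph{erases} $S^{T_c}$; it never certifies that it was empty, so a run that dishonestly guesses ``zero'' on a nonzero counter simply resets it and proceeds, and nothing prevents it from eventually firing $\sigma_{\mathrm{halt}}$. The ``accounting'' you propose to catch such cheating is precisely the step that cannot be carried out: requiring the emitted $\mathrm{inc}_c/\mathrm{dec}_c/\mathrm{zero}_c$ sequence to be balanced is a non-regular (Dyck-like) condition, hence not expressible by an LTL formula over $\Sigma$, and service pre/post-conditions are quantifier-free formulas over the task variables and database that cannot mention $S^T$ or count unboundedly; pushing the count into another artifact relation just reproduces the same missing emptiness test. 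For the same reason your ``fresh ID reservoir'' does not force faithful increments: conditions cannot test membership of the inserted tuple in $S^T$, so insertions may silently collide and the counter is inherently lossy.

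This is not incidental: the dynamics actually available---lossy increments, decrements that block at zero, and resets---are those of a VASS with resets and lossiness, a well-structured system for which control-state reachability is decidable. So any halting-style reduction whose target is ``some run fires $\sigma_{\mathrm{halt}}$'' is bound to fail unless exact zero tests are smuggled in, which the model forbids. The paper sidesteps this by reducing not from Minsky-machine halting but from \emph{repeated} control-state reachability of VASS with reset arcs and bounded lossiness (undecidable by Mayr): tasks $C_1,\dots,C_d$ carry the counters, resets are simulated by closing and reopening a $C_i$, the lossiness absorbs both insertion collisions and deliberately allowing one or two retrievals per decrement, and the LTL formula over $\Sigma$ enforces the action sequencing together with a B\"uchi-type condition $\mathbf{G}\mathbf{F}\,\sigma^{q_f}$ rather than a reachability condition. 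If you want to salvage your write-up, you should switch to that source problem and drop the zero-test gadget entirely.
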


The proof, provided in Appendix \ref{app:ltlfo}, is by reduction from repeated state reachability in VASS 
with resets and bounded lossiness, whose undecidability follows from \cite{mayr2003undecidable}. \yuliang{This needs to be removed.}

Another technical argument in favor of $\hltlfo$ is that it only expresses
properties that are invariant under interleavings of independent tasks.
Interleaving invariance is not only a natural soundness condition, but also
allows more efficient model checking by {\em partial-order reduction} \cite{peled94}. 
Moreover, $\hltlfo$ enjoys a pleasing completeness property: it expresses, in a reasonable sense, 
{\em all } interleaving-invariant $\ltlfo$ properties of HAS's.
The proof is non-trivial, building on completeness results 
for propositional temporal logics on Mazurkiewicz traces \cite{gastin04,gastin06} (see Appendix \ref{sec:complete}). \yuliang{This also needs to be fixed.}

We next define $\hltlfo$. 
Propositions in $\hltlfo$ are interpreted as conditions\footnote{For consistency with previous notation, we denote the logic HLTL-FO
although the FO interpretations are restricted to be quantifier free.}
on artifact instances
in the run, or recursively as $\hltlfo$ formulas on runs of invoked children tasks. 
The different conditions may share some universally quantified global variables. 

\vspace{-2mm}
\begin{definition}
Let $\Gamma = \langle \cala, \Sigma, \Pi \rangle$
be an artifact system where $\cala = \langle \calh, \mathcal{DB} \rangle$. 
Let $\bar y$ be a finite sequence of variables in $\varid \cup \varnum$ disjoint from $\{\bar{x}^T \mid T \in \calh \}$, called {\em global variables}.
We first define recursively the set $\Psi(T, \bar y)$ of {\em basic $\hltlfo$ formulas} with global variables $\bar y$,
for each task $T \in \calh$. The set $\Psi(T, \bar y)$ consists of all formulas $\varphi_f$ obtained as follows:
\vspace*{-1mm}
\begin{itemize}\itemsep=0pt\parskip=0pt
\item $\varphi$ is an LTL formula with propositions $P \cup \Sigma_T^{obs}$
where $P$ is a finite set of proposition disjoint from $\Sigma$;
\item Let $\Phi$ be the set of conditions on $\bar{x}_T \cup \bar y$ extended by allowing atoms of the form
$S^T(\bar z)$ in which all variables in $\bar z$ are in $\bar y \cap \varid$;
$f$ is a function from $P$ to\footnote{$[\psi]_{T_c}$ is an expression whose meaning is explained below.} $\Phi \cup \{[\psi]_{T_c} \mid \psi \in \Psi (T_c, \bar y), T_c \in child(T)\}$;
\item $\varphi_f$ is obtained by replacing each $p \in P$ with $f(p)$;
\end{itemize}
An $\hltlfo$ formula over $\cala$ is an expression $\forall \bar y [\varphi_f]_{T_1}$ where $\varphi_f$ is in $\Psi(T_1, \bar y)$.
\end{definition}

In an $\hltlfo$ formula of task $T$, each proposition is mapped to either a quantifier-free FO formula referring 
to the variables and set of task $T$, or an 
$\hltlfo$ formula of a child task of $T$. The intuition is the following. A proposition mapped to a
quantifier-free FO formula holds in a given configuration of $T$ if the formula is true
in that configuration. A proposition mapped to an expression $[\psi]_{T_c}$ holds 
in a given configuration if $T$ makes a call to $T_c$ and the run of $T_c$ resulting from the call
satisfies $\psi$. 

\vspace*{-2mm}
\begin{example} Let $T_1$ be a root task with child tasks $T_2$ and $T_3$.
The $\hltlfo$ formula (with no global variables)
$$
\varphi = [ 
~\mathbf{F}[\psi_2]_{T_2} 
\rightarrow 
\mathbf{G} (\sigma^o_{T_3} \rightarrow [\psi_3]_{T_3})
]_{T_1}
$$
states that whenever $T_1$ calls child task $T_2$ 
and $T_2$'s local run satisfies property $\psi_2$, then if $T_3$ is also called (via the opening service $\sigma^o_{T_3}$), its local run
must satisfy property $\psi_3$. 
\end{example}
\vspace*{-2mm}
See Appendix \ref{app:example-hltl} for a concrete $\hltlfo$ property of similar structure,
in the context of our example for the HAS model.

Since $\hltlfo$ properties depend on local runs of tasks and their relationship to local runs of their descendants, their semantics is naturally defined using the full trees of local runs.
We first define satisfaction by a local run of $\hltlfo$ formulas with no global variables. 
This is done recursively. Let \Tree be a full tree of local runs of $\Gamma$ over some database $D$.
Let $\varphi_f$ be a formula in $\Psi(T, \langle \rangle)$ (no global variables).
Recall that $\varphi$ is a propositional LTL formula over $P \cup \Sigma_T^{obs}$. Let 
$\rho_T = (\nu_{in}, \nu_{out}, \{(I_i,\sigma_i)\}_{i < \gamma})$ be a local run of $T$ in $\Tree$.
A proposition $\sigma \in \Sigma_T^{obs}$ holds in $(I_j, \sigma_{j})$ if $\sigma = \sigma_j$.  
Consider $p \in P$ and $f(p)$. If $f(p)$
is an FO formula, the standard definition applies. If $f(p) = [\psi]_{T_c}$, then $(I_j, \sigma_{j})$ satisfies
$[\psi]_{T_c}$ iff $\sigma_j = \sigma^0_{T_c}$ and the local run of $T_c$ connected to $\rho_T$ in \Tree by an edge labeled $j$  
satisfies $\psi$. The formula $\varphi_f$ is satisfied if the sequence of truth values of its propositions via $f$
satisfies $\varphi$. Note that $\rho_T$ may be finite, in which case a finite variant of the LTL semantics is used \cite{ltl-finite}
(see Appendix \ref{app:ltl}). \yuliang{Needs to be fixed.}

A full tree of local runs satisfies $\varphi_f \in \Psi(T_1, \langle \rangle)$ if
its root (a local run of $T_1$) satisfies $\varphi_f$. 
Finally, let $\varphi_f(\bar y)$ be a formula in $\Psi(T_1, \bar y)$.
Then $\forall \bar y [\varphi_f(\bar y)]_{T_1}$ is satisfied by $\Tree$, denoted $\Tree \models \forall \bar y [\varphi_f(\bar y)]_{T_1}$,
if for every valuation $\nu$ of $\bar y$, 
$\Tree$ satisfies $\varphi_{f^\nu}$ where $f^\nu$ is obtained from $f$ by replacing each $y$ in $f(p)$ by $\nu(y)$ 
for every $p \in P$.  Note that $\varphi_{f^\nu} \in \Psi(T_1, \langle \rangle)$. 
Finally, $\Gamma$ satisfies $\forall \bar y [\varphi_f(\bar y)]_{T_1}$, denoted $\Gamma \models \forall \bar y [\varphi_f(\bar y)]_{T_1}$,
if $\Tree \models \forall \bar y [\varphi_f(\bar y)]_{T_1}$ for every database instance $D$ 
and tree of local runs $\Tree$ of $\Gamma$ on $D$. 

The semantics of $\hltlfo$ on trees of local runs of a HAS also induces a semantics on the global runs of the HAS. 
Let $\forall \bar y [\varphi_f(\bar y)]_{T_1}$ be an $\hltlfo$ formula and $\rho \in \call(\Tree)$, where  $\Tree$
is a full tree of local runs of $\Gamma$. We say that $\rho$ satisfies $\forall \bar y [\varphi_f(\bar y)]_{T_1}$ if
$\Tree$ satisfies $\forall \bar y [\varphi_f(\bar y)]_{T_1}$. This is well defined in view of the following
easily shown fact: if $\rho \in \call(\Tree_1) \cap \call(\Tree_2)$ then $\Tree_1 = \Tree_2$.

\vspace{2mm}
\noindent
{\bf Simplifications}~
Before proceeding, we note that several simplifications to $\hltlfo$ formulas and HAS specifications
can be made without impact on verification. First,
although useful at the surface syntax, the global variables, as well as set atoms,
can be easily eliminated from the $\hltlfo$ formula to be verified (Lemma \ref{lem:noglobal} in Appendix \ref{app:simplification}).
It is also useful to note that one can assume, without loss of generality,
two simplifications on artifact systems regarding the interaction of tasks with their subtasks:
(i) for every task $T$, the set of variables passed to subtasks is disjoint with the set of variables returned by subtasks, and
(ii) all variables returned by subtasks are non-numeric (Lemma \ref{lem:simplification} in Appendix \ref{app:simplification}).
In view of the above, we henceforth consider only properties with no global variables or set atoms,
and artifact systems simplified as described. \yuliang{Needs to be fixed.}

\vspace{4mm}
\noindent
{\bf Checking $\hltlfo$ Properties Using Automata}\\
We next show how to check $\hltlfo$ properties of trees of local runs of artifact systems. 
Before we do so, recall the standard construction of a B\"{u}chi automaton $B_\varphi$
corresponding to an LTL formula $\varphi$ \cite{VW:LICS:86,SVW87}.
The automaton $B_\varphi$ has exponentially many states and accepts 
precisely the set of $\omega$-words that satisfy $\varphi$. 
Recall that we are interested in evaluating LTL formulas $\varphi$ on both infinite {\em and} finite runs.
It is easily seen that for the $B_\varphi$ obtained by the standard construction 
there is a subset $Q^{\emph{fin}}$ of its states such that $B_\varphi$
viewed as a finite-state automaton with final states $Q^{\emph{fin}}$ 
accepts precisely the finite words that satisfy $\varphi$ (details omitted).

Consider now an artifact system $\Gamma$ and let $\varphi = [\xi]_{T_1}$ be
an $\hltlfo$ formula over $\Gamma$.  Consider a full tree \Tree of local runs.
For task $T$, denote by $\Phi_T$ the set of sub-formulas $[\psi]_{_T}$ occurring in
$\varphi$ and by $2^{\Phi_T}$ the set of truth assignments to these formulas. 
For each $T$ and $\eta \in 2^{\Phi_T}$, 
let $B(T, \eta)$ be the \buchi \ automaton
constructed from the formula
$$\left( \wedge_{\psi \in \Phi_T, \eta(\psi)=1} ~\psi \right) \land 
\left( \wedge_{\psi \in \Phi_T, \eta(\psi)=0} ~\neg \psi \right)$$
and define $\calb_\varphi = \{B(T,\eta) \mid T \in \calh, \eta \in 2^{\Phi_T}\}$.

We now define acceptance of \Tree by $\calb_\varphi$.
An {\em adornment} of \Tree is a mapping $\alpha$ associating to each edge from $\rho_T$ to $\rho_{T_c}$
a truth assignment in $2^{\Phi_{T_c}}$.  \Tree is accepted by $\calb_\varphi$
if there exists an adornment $\alpha$ such that:
\vspace*{-1mm}
\begin{itemize}\itemsep=0pt\parskip=0pt
\item for each local run $\rho_T$ of $T$ with no outgoing edge and incoming edge with adornment $\eta$,
$\rho_T$ is accepted by $B(T,\eta)$
\item for each local run $\rho_T$ of $T$ with incoming edge labeled by $\eta$,  $\alpha(\rho_T)$ is accepted by
$B(T, \eta)$, where $\alpha(\rho_T)$ extends $\rho_T$ by assigning to each configuration $(\rho_j, \sigma^o_{T_c})$ the truth assignment
in $2^{\Phi_{T_c}}$ adorning its outgoing edge labeled $j$. (Recall that in configurations $(I_j, \sigma_j)$ for which
$\sigma_j \neq \sigma^o_{T_c}$, all formulas in $\Phi_{T_c}$ are {\em false} by definition.) 
\item $\alpha(\rho_{T_1})$ is accepted by the \buchi\ automaton $B_\xi$ where $\alpha(\rho_{T_1})$ is defined as above.
\end{itemize}    

The following can be shown.
\vspace*{-1mm}
\begin{lemma} \label{lem:proj}
A full tree of local runs \Tree satisfies $\varphi = [\xi]_{T_1}$ iff \Tree is accepted by $\calb_\varphi$.
\end{lemma}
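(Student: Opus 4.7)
The plan is to prove both directions by induction on the task hierarchy, working bottom-up from the leaves. The key idea is that the adornment $\alpha$ is essentially a witness that records, at each call to a child task $T_c$, the truth values of the sub-formulas in $\Phi_{T_c}$ on the resulting local run of $T_c$. Once we fix such an adornment, the recursive semantics of $\hltlfo$ on a local run $\rho_T$ reduces to ordinary propositional LTL semantics on the word obtained by labeling each configuration with the applicable atoms (the FO conditions over $\bar x^T$) together with the truth values supplied by $\alpha$ at opening-service steps; this is exactly what $B(T,\eta)$ checks.

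For the $(\Rightarrow)$ direction, assume $\Tree \models \varphi$. Define the canonical adornment $\alpha^*$ by setting, for each edge from $\rho_T$ to a child $\rho_{T_c}$ labeled $j$, the truth assignment $\alpha^*(\rho_{T_c})(\psi) = 1$ iff $\rho_{T_c}$ satisfies $\psi$ (in the sense of the $\hltlfo$ semantics of the preceding paragraphs). Proceed by induction from the leaves of $\Tree$ upward: at a leaf local run, all sub-formulas in $\Phi_{T}$ are vacuously assigned and $\rho_T$ is required to match the conjunction dictated by its incoming adornment $\eta$; by the standard LTL-to-Büchi correspondence (using $Q^{\emph{fin}}$ for returning/blocking finite runs and the Büchi condition for infinite runs), acceptance by $B(T,\eta)$ is equivalent to satisfaction of that conjunction. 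At an internal node, the induction hypothesis guarantees that every child's local run indeed satisfies its adorned conjunction, hence the word $\alpha^*(\rho_T)$ has, at each position $(I_j,\sigma_j)$, the correct truth value for every $[\psi]_{T_c} \in \Phi_T$ (in particular, such a proposition is \emph{false} whenever $\sigma_j \neq \sigma^o_{T_c}$, matching the definition). Hence $\alpha^*(\rho_T)$ is accepted by $B(T,\alpha^*(\rho_T))$, and at the root $\alpha^*(\rho_{T_1})$ is accepted by $B_\xi$ since $\Tree \models \varphi$.

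For the $(\Leftarrow)$ direction, suppose an adornment $\alpha$ witnesses acceptance by $\calb_\varphi$. Again by induction from the leaves upward, show that for every non-root node $\rho_T$ with incoming edge adorned by $\eta$, $\rho_T$ satisfies exactly the formulas in $\Phi_T$ marked true by $\eta$. The base case is immediate from the LTL-to-Büchi correspondence applied to $B(T,\eta)$. For the inductive step, the induction hypothesis says that at every opening-service configuration $(I_j,\sigma^o_{T_c})$ of $\rho_T$, the child's local run satisfies exactly the formulas in $\Phi_{T_c}$ marked true by the outgoing adornment at label $j$; thus the $\hltlfo$-truth of each $[\psi]_{T_c}$ at that configuration coincides with its truth value in $\alpha(\rho_T)$. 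Consequently $\rho_T$ satisfies a conjunction of $\Phi_T$-literals iff $\alpha(\rho_T)$ is accepted by the corresponding $B(T,\eta)$. Applying this at the root yields $\Tree \models \varphi$.

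The main subtlety — and the step that requires most care — is the handling of finite local runs (returning or blocking segments of $\Tree$), together with the convention that propositions $[\psi]_{T_c}$ are \emph{false} at any configuration whose service is not $\sigma^o_{T_c}$. This must be threaded consistently through the definition of $\alpha(\rho_T)$ and through the choice of acceptance condition (Büchi vs.\ finite-state with $Q^{\emph{fin}}$) for each $B(T,\eta)$, matching the finite-LTL semantics recalled in Appendix \ref{app:ltl}; once this bookkeeping is fixed, the inductive argument goes through uniformly.
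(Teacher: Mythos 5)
Your proof is correct and is essentially the intended argument: the paper leaves this lemma's proof implicit precisely because it follows from induction on the (finite-depth) task hierarchy, using the canonical adornment that records the actual truth values of the $\Phi_{T_c}$-formulas on children's runs, together with the standard LTL--B\"uchi correspondence (with $Q^{\emph{fin}}$ for finite local runs). Your closing remark about the bookkeeping for finite runs and the convention that $[\psi]_{T_c}$ is false off opening-service positions is exactly the point that needs care, and you handle it appropriately.
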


\section{Verification Without Arithmetic} \label{sec:verification}

\reviewer{
\begin{itemize}
\item I was expecting some words on lower bounds, which could answer the following questions: Is it possible to devise a different decision procedure for this verification problem with a better complexity? Is there a milder restriction (for example lifting the restriction imposed on logic) that would still give decidability?
\item Since the specification is already interleaving invariant, can the verification problem be reduced to verification of non-hierarchical artifact systems?
\item Artifact systems with data dependencies and arithmetic was studies by the same authors in [18,19].  How does the present framework compare with [18,19] in terms of expressiveness?
\item connection with DCDS
\item Explain the role of the schema restrictions (unary keys etc.) My impression (from reading C.3) is that the need for counters on isomorphism types is still there even under your most drastic restriction, acyclicity. If this is the case, my suggestion is that you reverse your current ordering and give the argument for acyclic schemas only in the body. This would allow the reader to focus on the VASS construction without having to worry about the more complex data abstraction used to get rid of cyclicity.
\item Can you simulate a VASS using your machines?
\end{itemize}}

\newcommand{\tauin}{\tau_{in}}
\newcommand{\tauout}{\tau_{out}}
\newcommand{\nuin}{\nu_{in}}
\newcommand{\nuout}{\nu_{out}}
\newcommand{\II}{\{(I_i,\sigma_i)\}_{0 \leq i < \gamma}}
\newcommand{\IIw}{\{(I_i,\sigma_i)\}_{0 \leq i < \omega}}
\newcommand{\RS}{\{(\rho_i, \sigma_i)\}_{0 \leq i < \gamma}}
\newcommand{\RSw}{\{(\rho_i, \sigma_i)\}_{0 \leq i < \omega}}
\newcommand{\trt}{\tilde{\rho}_T}
\newcommand{\trtc}{\tilde{\rho}_{T_c}}
\newcommand{\trtp}{\tilde{\rho}_{T_p}}
\newcommand{\ID}{\emph{ID}}

\newcommand{\Ret}{\ensuremath{\texttt{Retrieve} } }
\newcommand{\Prop}{\ensuremath{\texttt{Prop} } }
\newcommand{\merge}{\ensuremath{\texttt{merge} } }
\newcommand{\Add}{\ensuremath{\texttt{Add} } }
\newcommand{\reach}{\ensuremath{\texttt{Reach} } }
\newcommand{\adom}{\ensuremath{\texttt{adom} } }
\newcommand{\vmap}{\ensuremath{\texttt{vmap} } }
\newcommand{\connect}{\ensuremath{\texttt{connect} } }
\newcommand{\type}{\ensuremath{\texttt{type} } }

\newcommand{\e}{\epsilon}
\newcommand{\GP}{\emph{GP}}
\newcommand{\hexp}{\text{-}\exp}

In this section we consider verification for the case when the artifact system and the $\hltlfo$ 
property have no arithmetic constraints. We show in Section \ref{sec:arithmetic} how our approach can be extended
when arithmetic is present.

The roadmap to verification is the following. Let $\Gamma$ be a HAS and $\varphi = [\xi]_{T_1}$ an $\hltlfo$ formula over $\Gamma$.
To verify that every tree of local runs of $\Gamma$ satisfies $\varphi$, we check that there is no tree of local runs 
satisfying $\neg \varphi = [\neg \xi]_{T_1}$, or equivalently, accepted by $\calb_{\neg\varphi}$. 
Since there are infinitely many trees of local runs of $\Gamma$ due so the unbounded data domain, and each tree can be infinite,
an exhaustive search is impossible.  We address this problem by developing a symbolic representation 
of trees of local runs, called {\em symbolic tree of runs}. 
The symbolic representation is subtle for several reasons. First, unlike the representations in \cite{DHPV:ICDT:09,tods12}, 
it is not finite state. This is because summarizing the relevant information about artifact relations requires keeping track of the number of tuples of
various isomorphism types. Second, the symbolic representation does not capture the full information about the actual runs, 
but just enough for verification. Specifically, we show that for every $\hltlfo$ formula $\varphi$,
there exists a tree of local runs accepted by $\calb_\varphi$ iff there exists a symbolic tree of runs accepted by $\calb_\varphi$.  
We then develop an algorithm to check the latter.
The algorithm relies on reductions to state reachability problems in Vector Addition Systems with States (VASS) \cite{vass}.

\victor{Discussion of flat simulation}
One might wonder whether there is a simpler approach to verification of HAS, that reduces it to verification of a flat system
(consisting of a single task). This could indeed be done in the absence of artifact relations, by essentially 
concatenating the artifact tuples of the tasks along the hierarchy that are active at any given time, 
and simulating all transitions by internal services. 
However, there is strong evidence that this is no longer possible when tasks are equipped with artifact relations.
First, a naive simulation using a single artifact relation would require 
more powerful updating capabilities than available in the model.  
Moreover, Theorem \ref{thm:ltlfo} shows that LTL is undecidable for hierarchical systems, 
whereas the results in this section imply that it is decidable for flat ones (as it coincides with HLTL for single tasks). 
While this does not rule out a simulation, 
it shows that there can be no effective simulation natural enough to be
extensible to LTL properties. A reduction to the model of \cite{tods12} is even less plausible, because of the lack of
artifact relations.  Note that, even if a reduction were possible, the results of \cite{tods12} would be of no help in obtaining our
lower complexities for verification, since the algorithm
provided there is non-elementary in all cases.

We next embark upon the development outlined above.

\subsection{Symbolic Representation} \label{sec:symrep}

We begin by defining the symbolic analog of a local run, called {\em local symbolic run}.
The symbolic tree of runs is obtained by connecting the local symbolic runs similarly to the way
local runs are connected in trees of local runs.

Each local symbolic run is a sequence of symbolic representations 
of an actual instance within a local run of a task $T$. The representation has the following ingredients:
\vspace*{-1mm}
\begin{enumerate}\itemsep=0pt\parskip=0pt  
\item the equality type of the artifact variables of $T$ 
and the elements in the database reachable from them by navigating foreign keys up to a specified depth $h(T)$. 
This is called the \emph{$T$-isomorphism type} of the variables.
\item the $T$-isomorphism type of the input and return variables (if representing a returning local run)
\item for each $T$-isomorphism type of the set variables of $T$ together with the input variables, the net number of insertions 
of tuples of that type in $S^T$. 
\end{enumerate}\vspace*{-1mm}

Intuitively, (1) and (2) are needed in order to ensure that the assumptions made about the database while navigating via foreign keys
in tasks and their subtasks are consistent. The depth $h(T)$ is chosen to be sufficiently large to ensure the consistency. 
(3) is required in order to make sure that a retrieval from $S^T$ of a tuple with a given $T$-isomorphism type is allowed only when 
sufficiently many tuples of that type have been inserted in $S^T$. 

We now formally define the symbolic representation, starting with $T$-isomorphism type.
Let $\bar x^T$ be the variables of $T$.
We define $h(T)$ as as follows.
Let FK be the foreign key graph of the schema $\db$ and $F(n)$ be the maximum number of distinct paths of length at most $n$
starting from any relation $R$ in FK. 
Let $h(T) = 1 + |\bar x^T| \cdot F(\delta)$ where $\delta = 1$ if $T$ is a leaf task and $\delta = \max_{T_c \in child(T)} h(T_c)$
otherwise. 
\reviewer{this is only discussed in the appendix.}
\victor{Removed the sentence.}

We next define expressions that denote navigation via foreign keys starting from the set of id variables $\bar x^T_{id}$ of $T$.
\yuliang{added reminder for $\bar{x}^T_{id}$} 
For each $x \in \bar x^T_{id}$ and $R \in \db$, let $x_R$ be a new symbol.
An expression is a sequence $\xi_1.\xi_2.\ldots\xi_m$, $\xi_1 = x_R$ for some  $x \in \bar x^T_{id}$
and $R \in \db$, $\xi_j$ is a foreign key in some relation of $\db$ for $2 \leq j < m$, $\xi_m$ is a foreign key or a numeric attribute, 
$\xi_2$ is an attribute of $R$, and for each $i$, $2 < i \leq m$, if $\xi_{i-1}$ is a foreign key referencing $Q$ then $\xi_i$ is
an attribute of $Q$. We define the length of $\xi_1.\xi_2.\ldots\xi_m$ as $m$.
A {\em navigation set} $\cale_T$ is a set of expressions such that:
\vspace*{-1mm}
\begin{itemize}\itemsep=0pt\parskip=0pt
\item for each $x \in \bar x^T_{id}$ there is at most one $R \in \db$ for which the expression $x_R$ is in $\cale_T$;
\item every expression in $\cale_T$ is of the form $x_R.w$ where $x_R \in \cale_T$, and has length $\leq h(T)$;
\item if $e \in \cale_T$ then every expression $e.s$ of length $\leq h(T)$ extending $e$ is also in $\cale_T$.
\end{itemize}
\vspace*{-1mm}

Note that $\cale_T$ is closed under prefix. \reviewer{prefix or suffix?}  \yuliang{I think it should be prefix.}
We can now define $T$-isomorphism type.
Let $\cale_T^+ = \cale_T \cup \bar x^T \cup \{\anull, 0\}$.
The {\em sort} of $e \in \cale_T^+$ is numeric if $e \in \bar x^T_{\mathbb{R}} \cup \{0\}$ or $e = w.a$ where $a$ is a numeric attribute;
its sort is $\anull$ if $e = \anull$ or $e = x \in \bar x^T_{id}$ and $x_R \not\in \cale_T$ for all $R \in \db$; and its sort is
ID$(R)$ for $R \in \db$ if $e = x_R$, or $e = x \in \bar x^T_{id}$ and $x_R \in \cale_T$, or $e = w.f$ where $f$ is a 
foreign key referencing $R$.

\vspace{-2mm}
\begin{definition}
A $T$-isomorphism type $\tau$ consists of a navigation set $\cale_T$ together with an equivalence relation 
$\sim_\tau$ over $\cale_T^+$ such that:
\vspace*{-1mm}
\begin{itemize}\itemsep=0pt\parskip=0pt
\item if $e \sim_\tau f$ then $e$ and $f$ are of the same sort;
\item for every $\{x, x_R\} \subseteq \cale_T^+$, $x \sim_\tau x_R$; 
\item for every $e$ of sort $\anull$, $e \sim_\tau \anull$;
\item if $u \sim_\tau v$ and $u.f, v.f \in \cale_T$ then $u.f \sim_\tau v.f$.
\end{itemize}
\end{definition}
\vspace*{-2mm}

We call an equivalence relation $\sim_\tau$ as above an {\em equality type} for $\tau$.
The relation $\sim_\tau$ is extended to tuples componentwise.

Note that $\tau$ provides enough information to evaluate conditions over $\bar x^T$.
Satisfaction of a condition $\varphi$ by an isomorphism type $\tau$, denoted $\tau \models \varphi$, is defined as follows:
\vspace*{-1mm}
\begin{itemize}\itemsep=0pt\parskip=0pt
\item $x = y$ holds in $\tau$ iff $x \sim_\tau y$, 
\item $R(x, y_1, \dots, y_m, z_1, \dots, z_n)$ holds in $\tau$ for relation
$R(id, a_1, \\ \dots, a_m, f_1, \dots, f_n)$ iff $\{x_R.a_1, \dots, x_R.a_m, x_R.f_1, \dots, \\ x_R.f_n\} \subseteq \cale_T$, and
$(y_1, \dots, y_m, z_1, \dots, z_m) \sim_\tau (x_R.a_1, \dots, \\ x_R.a_m, x_R.f_1, \dots, x_R.f_n)$
\item Boolean combinations of conditions are standard.
\end{itemize}
\vspace*{-1mm}

Let $\tau$ be a $T$-isomorphism type with navigation set $\cale_T$ and equality type $\sim_\tau$.
The projection of $\tau$ onto a subset of variables $\bar{z}$ of $\bar x^T$ is defined
as follows. Let $\cale_T | \bar{z} = \{x_R.e \in \cale_T | x \in \bar{z} \}$ and
$\sim_\tau|\bar{z}$ be the projection of $\sim_\tau$ onto $\bar z \cup \cale_T |\bar{z} \cup \{\anull, 0\}$. 
The projection of $\tau$ onto $\bar{z}$, denoted as $\tau |\bar{z}$, is a $T$-isomorphism type with navigation set
$\cale_T | \bar{z}$ and equality type $\sim_\tau | \bar{z}$. Furthermore, the projection of
$T$-isomorphism onto $\bar{z}$ upto length $k$, denoted as $\tau | (\bar{z}, k)$, is defined as
$\tau | \bar{z}$ with all expressions in $\cale_T | \bar{z}$ with length more than $k$ removed.

We apply variable renaming to isomorphism types as follows.
Let $f$ be a 1-1 partial mapping from $\bar x^T$ to $\varid \cup \varnum$ such that $f(\bar x^T_{id}) \subseteq \varid$,
$f(\bar x^T_{\mathbb{R}}) \subseteq \varnum$ and $f(\bar x^T) \cap \bar x^T = \emptyset$.
For a $T$-isomorphism type $\tau$ with navigation set $\cale_T$, 
$f(\tau)$ is the isomorphism type obtained as follows. Its navigation set is obtained by replacing in $\cale_T$ each variable
$x$ and $x_R$ in $\cale_T$ with $f(x)$ and $f(x)_R$, for $x \in dom(f)$. The relation $\sim_{f(\tau)}$ is the image of 
$\sim_\tau$ under the same substitution.

As seen above, a $T$-isomorphism type captures all information needed
to evaluate a condition on $\bar x_T$. However, the set $S^T$ can contain unboundedly
many tuples, which cannot be represented by a finite equality type.
This is handled by keeping a set of counters for projections of $T$-isomorphism types on the variables relevant to $S^T$,
that is, $(\bar{x}^T_{in} \cup \bar{s}^T)$.
We refer to the projection of a $T$-isomorphism type onto $(\bar{x}^T_{in} \cup \bar{s}^T)$ as a $TS$-isomorphism type,
and denote by $\ts$ the set of $TS$-isomorphism types of $T$.
We will use counters to record the number of tuples in $S^T$ of each $TS$-isomorphism type. 
\reviewer{The juxtaposition of these
sentences is confusing, since the first is a definition which does not include
any counters. The counters are a forward pointer to the notion that is about
to be defined.}
\victor{Changed slightly.}

We can now define symbolic instances.

\vspace{-2mm}
\begin{definition}
A symbolic instance $I$ of task $T$ is a tuple $(\tau, \bar{c})$
where $\tau$ is a $T$-isomorphism type and $\bar{c}$ is a vector of integers where
each dimension of $\bar{c}$ corresponds to a $TS$-isomorphism type. 
\end{definition}

We denote by $\bar{c}(\hat{\tau})$ the value of the dimension of $\bar c$ corresponding to the $TS$-isomorphism type $\hat{\tau}$
and by $\bar{c}[\hat{\tau} \mapsto a]$ the vector obtained from $\bar{c}$
by replacing $\bar c(\hat{\tau})$ with $a$.

\vspace{-2mm}
\begin{definition}
A {\em local symbolic run} $\tilde{\rho}_T$ of task $T$ is a tuple $(\tau_{in}, \tau_{out}, \{(I_i, \sigma_i)\}_{0 \leq i < \gamma})$, where:
\vspace*{-1mm}
\begin{itemize}\itemsep=0pt\parskip=0pt
\item each $I_i$ is a symbolic instance $(\tau_i, \bar{c}_i)$ of $T$
\item each $\sigma_i$ is a service in $\Sigma_T^{obs}$
\item $\gamma \in \mathbb{N} \cup \{\omega\}$ 
(if $\gamma = \omega$ then $\tilde{\rho}_T$ is infinite, otherwise it is finite)
\item $\tau_{in}$, called the input isomorphism type, is a $T$-isomorphism type projected to $\bar{x}^T_{in}$. And $\tauin \models \Pi$ if $T = T_1$.
\yuliang{added the global precondition}
\item at the first instance $I_0$, $\tau_0 | \bar{x}^T_{in} = \tau_{in}$,
for every $x \in \bar{x}^{T}_{id} - \bar{x}^{T}_{in}$, $x \sim_{\tau_0} \anull$, 
and for every $x \in \bar{x}^{T}_{\mathbb{R}} - \bar{x}^{T}_{in}$, $x \sim_{\tau_0} 0$.
Also $\bar{c}_0 = \bar{0}$ and $\sigma_0 = \sigma_T^o$. 
\item if for some $i$, $\sigma_i = \sigma^c_T$ then $\tilde{\rho}_T$ is finite and $i = \gamma-1$ (and $\tilde{\rho}_T$ is called a {\em returning} run)
\item $\tau_{out}$ is $\bot$ if $\tilde{\rho}_T$ is infinite or finite but $\sigma_{\gamma-1} \neq \sigma^c_T$,
and it is $\tau_{\gamma-1} | (\bar{x}^T_{in} \cup \bar{x}^T_{ret})$ otherwise
\item a \emph{segment} of $\tilde{\rho}_T$ is a subsequence 
$\{(I_i,\sigma_i)\}_{i \in J}$, where $J$ is a maximal interval $[a, b] \subseteq \{i \mid 0 \leq i < \gamma\}$
such that no $\sigma_j$ is an internal service of $T$ for $j \in [a + 1, b]$. A segment $J$ 
is {\em terminal} if $\gamma \in \mathbb{N}$ and
$b = \gamma - 1$. Segments of $\tilde{\rho}_T$ must satisfy the following  
properties. For each child $T_c$ of $T$ there is at most one $i \in J$ such that $\sigma_i = \sigma^o_{T_c}$.
If $J$ is not terminal and such $i$ exists, there is exactly one $j \in J$ for which $\sigma_j = \sigma^c_{T_c}$, and $j > i$.   
If $J$ is terminal, there is {\em at most} one such $j$.
\item for every $0 < i < \gamma$, $I_{i}$ is a {\bf successor} \reviewer{highlight successor?} of $I_{i-1}$ under $\sigma_i$ (see below). 
\end{itemize}
\end{definition}
\vspace*{-2mm}

The successor relation is defined next.
We begin with some preliminary definitions.
A $TS$-isomorphism type $\hat{\tau}$ is \emph{input-bound} if for every $s \in \bar{s}^T$, 
$s \not\sim_{\hat{\tau}} \anull$ implies that there exists an expression $x_R.w$ in $\hat{\tau}$
such that $x \in \bar{x}^T_{in}$ and $x_R.w \sim_{\hat{\tau}} s$.
We denote by $\tsib$ the set of input-bound types in $\ts$.
For $\hat{\tau},\hat{\tau}' \in \ts$, update $\delta$ of the form $\{+S^T(\bar{s}^T)\}$ or
$\{-S^T(\bar{s}^T)\}$ and mapping $\bar{c}_{ib}$ from $\tsib$ to $\{0,1\}$, we define 
the mapping $\bar{a}(\delta, \hat{\tau}, \hat{\tau}', \bar{c}_{ib})$ from $\ts$ to $\{-1,0,1\}$ as follows
($\bar a_0$ is the mapping sending $\ts$ to $0$): 
\vspace*{-1mm}
\begin{itemize}\itemsep=0pt\parskip=0pt
\item if $\delta = \{+S^T(\bar{s}^T)\}$, then
$\bar{a}(\delta, \hat{\tau}, \hat{\tau}', \bar{c}_{ib})$ is $\bar a_0[\hat{\tau} \mapsto 1]$ 
if $\hat{\tau}$ is not input-bound, and   
$\bar{a}_0[\hat{\tau} \mapsto (1 - \bar{c}_{ib}(\hat{\tau}))]$ otherwise
\item if $\delta = \{-S^T(\bar{s}^T)\}$, then 
$\bar{a}(\delta, \hat{\tau}, \hat{\tau}', \bar{c}_{ib}) = \bar{a}_0[\hat{\tau}' \mapsto -1]$
\item if $\delta$ is $\{+S^T(\bar{s}^T),-S^T(\bar{s}^T)\}$  then 
\vspace{-1mm}
$$\bar{a}(\delta, \hat{\tau}, \hat{\tau}', \bar{c}_{ib}) = \bar{a}(\delta^+, \hat{\tau}, \hat{\tau}', \bar{c}_{ib}) + \bar{a}(\delta^-, \hat{\tau}, \hat{\tau}', \bar{c}_{ib})$$ \vspace{-1mm}
where $\delta^+ = \{+S^T(\bar{s}^T)\}$ and $\delta^- = \{-S^T(\bar{s}^T)\}$.
\end{itemize}
Intuitively, the vector $\bar{a}(\delta, \hat{\tau}, \hat{\tau}', \bar{c}_{ib})$ specifies how the current counters need to be modified
to reflect the update $\delta$. The input-bound $TS$-isomorphism types require special handling because consecutive insertions 
necessarily collide so the counter's value cannot go beyond $1$.

For symbolic instances $I = (\tau, \bar{c})$ 
and $I' = (\tau', \bar{c}')$, $I'$ is a successor of $I$
by applying service $\sigma'$ iff:
\vspace*{-1mm}
\begin{itemize}\itemsep=0pt\parskip=0pt
\item If $\sigma'$ is an internal service $\langle \pi, \psi, \delta \rangle$, 
then for $\hat{\tau} = \tau|(\bar{x}_{in}^T \cup \bar{s}^T)$ and $\hat{\tau}' = \tau'|(\bar{x}_{in}^T \cup \bar{s}^T)$, 
  \vspace*{-1mm}
  \begin{itemize}\itemsep=0pt\parskip=0pt
  \item $\tau | \bar{x}^T_{in} = \tau' | \bar{x}^T_{in}$,
  \item $\tau \models \pi$ and $\tau' \models \psi$,
  \item $\bar{c}' \geq \bar{0}$ and $\bar{c}' = \bar{c} + \bar{a}(\delta, \hat{\tau}, \hat{\tau}', \bar{c}_{ib})$, 
   where $\bar{c}_{ib}$ the restriction of $\bar{c}$ to $\tsib$.
  \end{itemize}
  \vspace*{-1mm}
\item If $\sigma'$ is an opening service $\langle \pi, f_{in} \rangle$ of 
  subtask $T_c$, then $\tau = \tau' \models \pi$ and $\bar{c}' = \bar{c}$.
\item If $\sigma'$ is a closing service of subtask $T_c$, then for
  $\bar{x}^T_{const} = \bar{x}^T - \{x \in \xttcup | x \sim_\tau \anull \}$, 
  $\tau' | \bar{x}^T_{const} = \tau  | \bar{x}^T_{const}$ and $\bar{c}' = \bar{c}$.
\item If $\sigma'$ is the closing service $\sigma^c_T = \langle \pi, f_{out} \rangle$ of $T$, then $\tau \models \pi$ and $(\tau, \bar{c}) = (\tau', \bar{c}')$. 
\end{itemize}
\vspace*{-1mm}

Note that there is a subtle mismatch between transitions in actual local runs and in symbolic runs.
In the symbolic transitions defined above, a service inserting a tuple in $S^T$ {\em always} causes the correspoding counter to increase 
(except for the input-bound case).
However, in actual runs, an inserted tuple may collide with an already existing tuple in the set, 
in which case the number of tuples does {\em not} increase.
Symbolic runs do not account for such collisions (beyond the input-bound case), which raises the danger that they might overestimate the number of available tuples and
allow impossible retrievals. Fortunately, the proof of Theorem \ref{thm:actual-symbolic} shows that collisions can be ignored 
at no peril. More specifically, it follows from the proof that 
for every actual local run with collisions satisfying an $\hltlfo$ property there exists 
an actual local run without collisions that satisfies the same property. The intuition is the following.
First, given an actual run with collisions, one can modify it so that only new tuples are inserted in the artifact relation,
thus avoiding collisions. However, this raises a challenge, since it may require augmenting the database with new tuples.
If done naively, this could result in an infinite database.
The more subtle observation, detailed in the proof of Theorem \ref{thm:actual-symbolic}, is that
only a bounded number of new tuples must be created, thus keeping the database finite.
\reviewer{a one line intuition behind why collisions do not matter is welcome}
\victor{Added a brief explanation, please read.}

\vspace{-2mm}
\begin{definition}
A {\em symbolic tree of runs} is a directed labeled tree $\Sym$ in which each node
is a local symbolic run $\tilde{\rho}_T$ for some task $T$, and every edge connects a local symbolic run of 
a task $T$ with a local symbolic run of a child task $T_c$ and is labeled with a non-negative integer $i$
(denoted $i(\tilde{\rho}_{T_c})$). In addition, the following properties are satisfied.
Let $\tilde{\rho}_T = (\tau_{in}, \tau_{out}, \{(I_i,\sigma_i)\}_{0 \leq i < \gamma})$
be a node of $\Sym$. 
Let $i$ be such that $\sigma_i = \sigma^o_{T_c}$ for some child $T_c$ of $T$. There exists a unique edge
labeled $i$ from $\tilde{\rho}_T$ to a node $\tilde{\rho}_{T_c} = (\tau'_{in}, \tau'_{out}, \\ \{(I'_i,\sigma'_i)\}_{0 \leq i < \gamma'})$ 
of $\Sym$, and the following hold:
\vspace*{-1mm}
\begin{itemize}\itemsep=0pt\parskip=0pt
\item $\tau'_{in} = f^{-1}_{in}(\tau_i) | (\bar{x}^{T_c}_{in}, h(T_c))$ 
where $f_{in}$ is the input variable mapping of $\sigma_{T_c}^o$
\item  $\tilde{\rho}_{T_c}$ is a returning run iff there exists $j > i$ such that $\sigma_j = \sigma^c_{T_c}$;
let $k$ be the minimum such $j$. 
Let $\bar{x}_r = \xttcdown$ and
$\bar{x}_{w} = \{x | x \in \xttcup, x \sim_{\tau_{k-1}} \anull \}$. 
Then $\tau_k | (\bar{x}_r \cup \bar{x}_w, h(T_c)) = ((f_{in} \circ f^{-1}_{out})(\tau_{out})) | (\bar{x}_r \cup \bar{x}_w)$
where $f_{out}$ is the output variable mapping of $\sigma_{T_c}^c$. 
\end{itemize}
\vspace*{-1mm}
For every local symbolic run $\tilde{\rho}_T$ where $\gamma \neq \omega$ and $\tauout = \bot$, there exists a child of
$\tilde{\rho}_T$ which is not returning.
\end{definition}
\vspace*{-1mm}

Now consider an $\hltlfo$ formula $\varphi = [\xi]_{T_1}$ over $\Gamma$. Satisfaction of 
$\varphi$ by a symbolic tree of runs is defined analogously to satisfaction by local runs, 
keeping in mind that as previously noted, isomorphism types
of symbolic instances of $T$ provide enough information to evaluate conditions over $\bar x^T$. The definition of acceptance by
the automaton $\calb_\varphi$, and Lemma \ref{lem:proj}, are also immediately extended 
to symbolic trees of runs. We state the following.
\vspace*{-1mm}
\begin{lemma} \label{lem:symbuchi} 
A symbolic tree of runs $\Sym$ over $\Gamma$ satisfies $\varphi$ iff $\Sym$ is accepted by $\calb_\varphi$.
\end{lemma}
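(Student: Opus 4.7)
The plan is to establish Lemma \ref{lem:symbuchi} by mirroring the proof of Lemma \ref{lem:proj}, exploiting the fact that the construction of $\calb_\varphi$, the notion of acceptance, and the semantics of $\hltlfo$ are all defined in a way that does not distinguish between actual local runs and symbolic ones, provided that the symbolic instances carry enough information to evaluate propositions. I would first record explicitly the observation underlying this: since propositions under $f$ are quantifier-free conditions over $\bar{x}^T$ and over $\Sigma_T^{obs}$, and since a $T$-isomorphism type $\tau$ determines the truth value of every such condition (this is already noted after the definition of $T$-isomorphism type, via $\tau \models \varphi$), the truth value of each proposition at each position of a local symbolic run $\tilde{\rho}_T$ is well-defined exactly as for local runs.

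Next I would prove the lemma by structural induction on the task hierarchy, working bottom-up from the leaves. The induction hypothesis is that for every descendant task $T_c$ of $T$ and every $\psi \in \Psi(T_c, \langle\rangle)$, a local symbolic run $\tilde{\rho}_{T_c}$ satisfies $\psi$ iff $\tilde{\rho}_{T_c}$ is accepted by $B(T_c,\eta)$ for the truth assignment $\eta$ that makes $\psi$ true and all other formulas in $\Phi_{T_c}$ true/false as specified. The base case is a leaf task: here $\Phi_T = \emptyset$, formulas are purely propositional LTL over quantifier-free conditions evaluated via the isomorphism types $\tau_i$, and acceptance by $B(T,\eta)$ collapses to the classical equivalence between an LTL formula and its B\"uchi automaton (in both the infinite and finite-word variants, as recalled before the construction of $\calb_\varphi$).

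For the inductive step at a task $T$, I would handle the two directions symmetrically. For the ``only if'' direction, assuming $\Sym$ satisfies $\varphi=[\xi]_{T_1}$, I define an adornment $\alpha$ as follows: for each edge from $\tilde{\rho}_T$ to $\tilde{\rho}_{T_c}$ in $\Sym$, set $\alpha(\tilde{\rho}_{T_c})(\psi)=1$ iff $\tilde{\rho}_{T_c}\models\psi$, for every $\psi\in\Phi_{T_c}$. By the induction hypothesis applied at $T_c$, each such $\tilde{\rho}_{T_c}$ is accepted by $B(T_c,\alpha(\tilde{\rho}_{T_c}))$. It remains to verify acceptance of $\alpha(\tilde{\rho}_T)$ by $B(T,\eta)$ at the parent level; this amounts to checking that the truth values assigned to positions of $\tilde{\rho}_T$ (where a position $(I_j,\sigma^o_{T_c})$ receives its truth values for $\Phi_{T_c}$ from $\alpha$ on the outgoing edge) are exactly those making the formula defining $B(T,\eta)$ true, which holds by the definition of satisfaction for $\hltlfo$ together with the observation that propositions $[\psi]_{T_c}$ are false at any position where $\sigma_j\neq\sigma^o_{T_c}$. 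For the ``if'' direction, given an adornment $\alpha$ witnessing acceptance, the induction hypothesis gives that each $\tilde{\rho}_{T_c}$ satisfies exactly the conjunction $\bigwedge_{\eta(\psi)=1}\psi\wedge\bigwedge_{\eta(\psi)=0}\neg\psi$ where $\eta=\alpha(\tilde{\rho}_{T_c})$; feeding these truth values back into $\tilde{\rho}_T$ reproduces $\alpha(\tilde{\rho}_T)$, so acceptance by $B(T,\eta)$ yields satisfaction of the corresponding formula at $\tilde{\rho}_T$.

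The main subtlety, rather than an obstacle, is making sure the propositions $[\psi]_{T_c}$ are interpreted consistently on symbolic runs: in the symbolic definition of a local run, opening services $\sigma^o_{T_c}$ still occur at well-defined positions labeled by edges to children, and children isomorphism types are related to parent ones through $f^{-1}_{in}$ and the returning condition in the definition of symbolic tree of runs, so the ``$\sigma^o_{T_c}$ position sees the child's run'' clause of the semantics transfers verbatim. Once this is noted, the argument reduces cleanly to the propositional-LTL/B\"uchi equivalence at each task, and the statement follows. In particular, the proof is essentially identical to that of Lemma \ref{lem:proj}, with the only substitution being ``symbolic instance and $T$-isomorphism type'' in place of ``actual instance and valuation''.
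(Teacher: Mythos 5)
Your proposal is correct and takes essentially the same route as the paper, which settles this lemma by observing that $T$-isomorphism types determine the truth values of all quantifier-free conditions, so that the \hltlfo semantics and the acceptance condition of Lemma~\ref{lem:proj} transfer verbatim to symbolic trees of runs; your bottom-up induction over the task hierarchy is just the spelled-out form of that transfer. One small slip worth fixing: for a leaf task $T$ it is not $\Phi_T$ that is empty (the parent may well refer to formulas $[\psi]_T$), but rather the formulas inside $\Phi_T$ contain no nested $[\psi']_{T_c}$ propositions, which is what lets the classical LTL--B\"uchi equivalence (in its finite- and infinite-word forms) settle the base case.
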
  
\vspace*{-1mm}

The key result enabling the use of symbolic trees of runs is the following (see Appendix for proof). \yuliang{need to be fixed.}
\vspace*{-1mm}
\begin{theorem} \label{thm:actual-symbolic}
For an artifact system $\Gamma$ and $\hltlfo$ property $\varphi$,
there exists a tree of local runs $\Tree$ accepted by $\calb_\varphi$,
iff there exists a symbolic tree of runs $\Sym$ accepted by $\calb_\varphi$.
\end{theorem}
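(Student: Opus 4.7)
The proof establishes both directions of the equivalence by exhibiting translations between actual and symbolic trees of runs that preserve acceptance by $\calb_\varphi$. Since propositions and conditions depend only on the equality type of the variables (and, via $\cale_T$, on the depth-$h(T)$ navigation pattern in the database), the main obligations are structural: to show that a local run can always be summarized by its isomorphism type plus a tuple-counter vector, and, conversely, that every symbolic run can be materialized by actual data.

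For the forward direction $\Tree \Rightarrow \Sym$, I would abstract each actual local run $\rho_T$ by replacing every instance $I_i=(\nu_i,S_i)$ with the pair $(\tau_i,\bar c_i)$, where $\tau_i$ is the $T$-isomorphism type realized by $\nu_i$ together with the closure of the database under foreign-key navigation of depth $h(T)$ from $\nu_i(\bar x^T_{id})$, and $\bar c_i(\hat\tau)$ counts the tuples of $S_i$ whose $TS$-isomorphism type is $\hat\tau$. The one mismatch to address is that an actual insertion $+S^T(\bar s^T)$ that collides with a tuple already present in $S_i$ leaves $|S_i|$ unchanged, whereas the symbolic rule always increments the counter (except in the input-bound case, where the semantics agree). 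I would handle this by first transforming $\Tree$ into a collision-free tree $\Tree'$ satisfying exactly the same subformulas of $\varphi$ at each position: whenever an insertion of a non-input-bound type would collide, the values of coordinates outside $\bar x^T_{in}$ are replaced by fresh id's or reals, which requires extending the underlying database by a bounded number of fresh tuples per insertion in order to close it under foreign keys without disturbing any already-realized isomorphism type. The abstraction of $\Tree'$ is then a symbolic tree of runs, and acceptance by $\calb_\varphi$ transfers because isomorphism types determine truth of all atoms appearing in $\calb_\varphi$'s local LTL components.

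For the harder direction $\Sym \Rightarrow \Tree$, I would realize $\Sym$ by a top-down construction. First build a single global database $D$ as the colimit of the navigation fragments declared by the $\tau_i$ appearing in every node of $\Sym$: each $\cale_T$ is read as a partial functional graph rooted at $\nu(\bar x^T_{id})$, and the choice $h(T)=1+|\bar x^T|\cdot F(\delta)$ provides enough navigational slack at the parent/child interface so that fragments produced in disjoint subtrees can be glued whenever id-variables are passed across tasks, without forcing unintended identifications. Second, recursively pick concrete valuations $\nu_i$ that realize $\tau_i$, assigning fresh elements to equivalence classes not yet anchored. The critical obligation is that every symbolic retrieval $-S^T(\bar s^T)$ be mirrored by a real retrieval from a non-empty $S_i$ of the correct type: this follows by induction from the invariant $\bar c_i\ge \bar 0$ once one verifies that insertions can always be made non-colliding. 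For non-input-bound types this is possible since free coordinates admit infinitely many fresh choices from $\emph{DOM}_{id}$ and $\mathbb R$; for input-bound types the symbolic rule $\bar c(\hat\tau)\le 1$ already reflects actual behavior. The resulting actual tree satisfies $\varphi$ because each $\nu_i$ realizes $\tau_i$, and so $\calb_\varphi$ accepts it.

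The principal obstacle is precisely the mismatch between the multiset-style counters of $\Sym$ and the set-style semantics of the artifact relations: the entire argument hinges on showing that collisions are a removable artifact, achievable in the forward direction by injecting fresh data into $D$ and in the backward direction by always materializing insertions with fresh witnesses. The careful choice of $h(T)$ is what makes both of these fresh-witness constructions compatible with a finite, globally consistent database and with the navigation patterns observed by ancestor and descendant tasks.
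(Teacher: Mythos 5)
Your proof has a genuine gap, and the difficulty is misplaced. In the forward direction ($\Tree \Rightarrow \Sym$) the collision mismatch is actually harmless: the counters of a symbolic run are uniquely determined by the sequence of isomorphism types and services, and the only constraint they must satisfy is $\bar c_i \geq \bar 0$; since the symbolic counter of each $TS$-isomorphism type dominates the actual number of tuples of that type in $S_i$ (they differ only when a non-input-bound insertion collides, and then the symbolic counter is the larger one), the direct abstraction of $\Tree$ already yields a valid accepted symbolic tree. Your detour through a collision-free $\Tree'$ is not needed, and the claim it rests on --- that collisions can be eliminated while keeping the database finite --- is precisely the delicate fact that the paper obtains only as a \emph{consequence} of the theorem (by going actual $\rightarrow$ symbolic $\rightarrow$ actual); asserting it via ``a bounded number of fresh tuples per insertion'' begs the hard question.

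The hard direction ($\Sym \Rightarrow \Tree$) is where your argument fails: the theorem demands a tree of local runs over a \emph{finite} database, but your construction spends a fresh value on every unanchored equivalence class and every insertion, so over an infinite local symbolic run (or a tree with infinitely many child calls) the database you build is infinite. The paper's proof devotes most of its effort to exactly this point: it first reduces, via the VASS correctness lemma, to symbolic trees whose local runs are finite or \emph{periodic}, and then, for periodic runs, chooses the matching between retrievals and earlier insertions (the \Ret function) so that the ``life cycles'' of set tuples have bounded timespan; this lets it partition the life cycles into finitely many classes of equivalent life cycles with pairwise disjoint timespans and reuse the same data values within each class (and across the periodic repetitions of child subtrees), which is what keeps $D$ finite. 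Relatedly, validity of retrievals is not just a counting fact plus freshness: the value retrieved at step $i$ must \emph{equal} a value inserted at a specific earlier step and still present, which the paper enforces through a chase constructing a global equality type (segment, life-cycle, input and FD conditions) together with the invariant that every merged class is reachable by a navigation of length at most $h(T)$ --- this is what prevents unintended identifications, not the ``navigational slack'' of $h(T)$ alone. Your colimit-plus-fresh-witnesses sketch addresses the gluing of database fragments at parent/child interfaces, but supplies neither the retrieval synchronization nor the finiteness argument, so as written it does not establish the stated equivalence.
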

\vspace*{-1mm}

The {\em only-if} part is relatively straightforward,
but the {\em if} part is non-trivial. The construction of an accepted tree of local runs from an accepted symbolic tree of runs $\Sym$
is done in two stages. First, an accepted tree of local runs over an {\em infinite} database is constructed, using a global equality type
that extends the local equality types by taking into account connections across instances 
resulting from the propagation of input variables and insertions and retrievals of tuples from $S^T$, 
and subject to satisfaction of the key constraints.
In the second stage, the infinite database is turned into a finite one  
by carefully merging data values, while avoiding any inconsistencies.

\subsection{Symbolic Verification}

In view of Theorem \ref{thm:actual-symbolic}, we can
now focus on the problem of checking the existence of a symbolic tree of runs satisfying a given $\hltlfo$ property.
To begin, we define a notion that captures the functionality of each task and allows a modular approach to 
the verification algorithm.  Let $\varphi$ be an $\hltlfo$ formula over $\Gamma$, and recall the automaton $\calb_\varphi$ 
and associated notation from Section \ref{sec:ltl-fo}.
We consider the relation $\calr_T$ between input and outputs of each task, 
defined by its symbolic runs that satisfy a 
given truth assignment $\beta$ to the formulas in $\Phi_T$. More specifically, we denote by $\calh_T$ the restriction of $\calh$ to $T$ 
and its descendants, and $\Gamma_T$ the corresponding HAS, with precondition {\em true}. 
The relation $\calr_T$ consists of the set of triples $(\tau_{in}, \tau_{out}, \beta)$
for which there exists a symbolic tree of runs $\Sym_T$ of $\calh_T$ such that:
\vspace*{-1mm}
\begin{itemize}\itemsep=0pt\parskip=0pt
\item $\beta$ is a truth assignment to $\Phi_T$
\item $\Sym_T$ is accepted by $\calb_\beta$
\item the root of $\Sym_T$ is $ \tilde{\rho}_{T} = (\tau_{in}, \tau_{out}, \{(I_i,\sigma_i)\}_{0 \leq i < \gamma})$ 
\end{itemize}\vspace*{-1mm}
Note that there exists a symbolic tree of runs $\Sym$ over $\Gamma$ satisfying $\varphi = [\xi]_{T_1}$ iff
$(\tau_{in}, \bot, \beta) \in \calr_{T_1}$ for some $\tau_{in}$ satisfying the precondition of $\Gamma$, and $\beta(\xi) = 1$.
Thus, if $\calr_{T}$ is computable for every $T$, then satisfiability of $[\xi]_{T_1}$ by some symbolic tree of runs over $\Gamma$ is decidable,
and yields an algorithm for model-checking $\hltlfo$ properties of HAS's.



We next describe an algorithm that computes the relations $\calr_T(\tauin, \tauout, \beta)$ recursively.
The algorithm uses as a key tool Vector Addition Systems with States (VASS) \cite{vass,vass-repeated-reachability}, 
which we review next.

A VASS $\calv$ is a pair $(Q,A)$ where $Q$ is a finite set of {\em states} and $A$ is a finite set of {\em actions} of the form
$(p,\bar a,q)$ where $\bar a \in \mathbb{Z}^d$ for some fixed $d > 0$, and $p,q \in Q$. A run  
of $\calv = (Q,A)$ is a finite sequence $(q_0, \bar z_0) \ldots (q_n,\bar z_n)$ where 
$\bar z_0 = \bar 0$ and for each $i \geq 0$, $q_i \in Q$, 
$\bar z_i \in \mathbb{N}^d$, and $(q_i, \bar a, q_{i+1}) \in A$ for some $\bar a$ such that $\bar z_{i+1} = \bar z_i + \bar a$.  
We will use the following decision problems related to VASS.
\vspace*{-1mm}
\begin{itemize}\itemsep=0pt\parskip=0pt
\item {\em State Reachability}: For given states $q_0, q_f \in Q$, is there a run $(q_0, \bar z_0) \ldots (q_n,\bar z_n)$
of $\calv$ such that $q_n = q_f$ ?
\item {\em State Repeated Reachability}:  For given states $q_0,q_f \in Q$, is there a run 
$(q_0, \bar z_0) \ldots (q_m, \bar z_m) \ldots (q_n,\bar z_n)$
of $\calv$ such that $q_m = q_n = q_f$ and $\bar z_m \leq \bar z_n$ ?
\end{itemize}\vspace*{-1mm}
Both problems are known to be {\sc expspace}-complete \cite{vass76,vass78,vass-repeated-reachability} \reviewer{need to cite both rackoff 78 and lipton 76} \yuliang{fixed}. In particular, 
\cite{vass-repeated-reachability} shows that for a $n$-states, $d$-dimensional VASS where every dimension of each action has constant size, 
the state repeated reachability problem can be solved in $O(( \log n ) 2^{c \cdot d \log d})$ non-deterministic space for some constant $c$.
The state reachability problem has the same complexity.


\begin{table*}[!ht]
\centering
\vspace*{-4mm}
\begin{tabular}{|c|c|c|c|}
\hline
 & Acyclic & Linearly-Cyclic & Cyclic \\ \hline
w/o. Artifact relations & $c \cdot N^{O(1)}$ & $O(N^{c \cdot h}) $ & $h\hexp(O(N))$\\ \hline
w.   Artifact relations & $O(\exp(N^{c}))$ & $O(2\hexp(N^{c \cdot h}))$ & $(h + 2)\hexp(O(N))$ \\ \hline
\end{tabular}
\vspace*{-2mm}
\caption{\small Space complexity of verification without arithmetic 
($N$: size of $(\Gamma, \varphi)$; $h$: depth of hierarchy; $c$: constants depending on the schema)}
\vspace*{-4mm}
\label{tab:complexity1}
\end{table*}

\vspace{2mm}
\noindent 
{\bf VASS Construction}
Let $T$ be a task, and suppose that relations $\calr_{T_c}$ have been computed for all children $T_c$ of $T$.
We show how to compute $\calr_T$ using an associated VASS.
For each truth assignment $\beta$ of ${\Phi_T}$, we construct a VASS $\calv(T, \beta) = (Q,A)$ as follows. 
The states in $Q$ are all tuples $(\tau, \sigma, q, \bar{o},\bar c_{ib})$ 
where $\tau$ is a $T$-isomorphism type, $\sigma$ a service, 
$q$ a state of $B(T, \beta)$, and $\bar c_{ib}$ a mapping from $\tsib$ to $\{0,1\}$.
The vector $\bar o$ indicates the current stage of each child $T_c$ of $T$ ($\ainit$, $\aactive$ or $\aclosed$)
and also specifies the outputs of $T_c$ (an isomorphism type or $\bot$). 
That is, $\bar o$ is a partial mapping associating to
some of the children $T_c$ of $T$ the value $\bot$,
a $T_c$-isomorphism type projected to $\bar{x}^{T_c}_{in} \cup \bar{x}^{T_c}_{ret}$ or the value $\aclosed$.
Intuitively, $T_c \not\in \emph{dom}(\bar{o})$ means that $T_c$ is in the $\ainit$ state,
and $\bar{o}(T_c) = \bot$ indicates that $T_c$ has been called but will not return.
If $\bar{o}(T_c)$ is an isomorphism type $\tau$, this indicates that $T_c$ has been called, has not yet returned,
and will return the isomorphism type $\tau$. When $T_c$ returns, 
$\bar{o}(T_c)$ is set to $\aclosed$, and $T_c$ cannot be called again before an internal service of $T$ is applied. 

The set of actions $A$ consists of all triples $(\alpha, \bar a, \alpha')$ where
$\alpha = (\tau, \sigma, q, \bar{o}, \bar c_{ib})$, $\alpha' = (\tau', \sigma', q', \bar{o}',\bar c_{ib}')$, $\delta'$ is the update of $\sigma'$, and the following hold:
\vspace*{-1mm}
\begin{itemize}\itemsep=0pt\parskip=0pt
\item $\tau'$ is a successor of $\tau$ by applying service $\sigma'$;
\item $\bar a = \bar{a}(\delta', \hat{\tau}, \hat{\tau}', \bar c_{ib})$ (defined in Section \ref{sec:symrep}), where  $\hat{\tau} = \tau|(\bar{x}_{in}^T \cup \bar{s}^T)$ 
and $\hat{\tau}' = \tau'|(\bar{x}_{in}^T \cup \bar{s}^T)$
\item $\bar c'_{ib} = \bar c_{ib} + \bar a$ 
\item if $\sigma'$ is an internal service, $\emph{dom}(\bar o') = \emptyset$.
\item If $\sigma' = \sigma_{T_c}^o$, then $T_c \not\in \emph{dom}(\bar{o})$ and for  \\
$\tauin^{T_c} = f_{in}^{-1}(\tau | (\xttcdown, h(T_c))),$
for some output $\tauout^{T_c}$ of $T_c$ and truth assignment $\beta^{T_c}$ to $\Phi_{T_c}$,
tuple $(\tauin^{T_c}, \tauout^{T_c}, \beta^{T_c})$ is in $\calr_{T_c}$.
Note that $\tauout^{T_c}$ can be $\bot$, which indicates that this call to $T_c$ does not return.
Also, $\bar o' = \bar o[T_c \mapsto \tauout^{T_c}]$.
\item If $\sigma' = \sigma_{T_c}^c$, then 
$\bar{o}(T_c) = (f_{out} \circ f_{in}^{-1}) (\tau' | (\xttcdown \cup \xttcup, h(T_c)))$ and
 $\bar{o}' = \bar{o}[T_c \mapsto \aclosed]$.
\item $q'$ is a successor of $q$ in $B(T, \beta)$ by evaluating $\Phi_T$ using $(\tau', \sigma')$. 
If $\sigma' = \sigma_{T_c}^o$, formulas in $\Phi_{T_c}$ are assigned the truth values defined by $\beta^{T_c}$. 
\end{itemize}
\vspace*{-1mm}

An {\em initial} state of $\calv(T, \beta)$ is a state of the form
$v_0 =$ \\ $(\tau_0, \sigma_0, q_0, \bar{o}_0, \bar c^0_{ib})$ where 
$\tau_0$ is an initial $T$-isomorphism type (i.e., 
for every $x \in \bar{x}^{T}_{id} - \bar{x}^{T}_{in}$, $x \sim_{\tau_0} \anull$,
and for every $x \in \bar{x}^{T}_{\mathbb{R}} - \bar{x}^{T}_{in}$, $x \sim_{\tau_0} 0$),
$\sigma_0 = \sigma_T^o$,
$q_0$ is the successor of some initial state of $B(T, \beta)$ under $(\tau_0, \sigma_0)$,
$\emph{dom}(\bar{o}_0) = \emptyset$, and $\bar c^0_{ib} = \bar 0$.

\vspace{2mm}
\noindent
{\bf Computing $\calr_T(\tauin, \tauout, \beta)$ from $\calv(T, \beta)$} \\
Checking whether $(\tauin, \tauout, \beta)$ is in $\calr_T$ can be done using a (repeated) reachability test on
$\calv(T, \beta)$, as stated in the following key lemma (see Appendix for proof). \yuliang{need to be fixed.}
\begin{lemma} \label{lem:vass-correctness}
$(\tauin, \tauout, \beta) \in \calr_T$ iff
there exists an initial state 
$v_0 = (\tau_0, \sigma_0, q_0, \bar{o}_0, \bar c^0_{ib})$ of $\calv(T, \beta)$ for which $\tau_0 | \bar{x}^T_{in} = \tau_{in}$
and the following hold:
\vspace*{-1mm}
\begin{itemize}\itemsep=0pt\parskip=0pt
\item If $\tauout \neq \bot$, then there exists state 
$v_n = (\tau_n, \sigma_n, q_n, \bar{o}_n, \bar c^n_{ib})$ where
$\tauout = \tau_n | (\bar{x}^T_{in} \cup \bar{x}^T_{ret})$, 
$\sigma_n = \sigma_T^c$, $q_n \in Q^{fin}$ where
$Q^{fin}$ is the set of accepting states of $B(T, \beta)$ for finite runs, such that $v_n$ is reachable from $v_0$.
A path from $(v_0, \bar{0})$ to $(v_n, \bar{z}_n)$ is called a \textbf{returning path}.
\item If $\tauout = \bot$, then one of the following holds:
  \vspace*{-1mm}
  \begin{itemize}\itemsep=0pt\parskip=0pt
  \item there exists a state $v_n = (\tau_n, \sigma_n, q_n, \bar{o}_n, \bar c^n_{ib})$ in which $q_n \in Q^{inf}$ where $Q^{inf}$ is the set of accepting states of $B(T, \beta)$ for infinite runs, such that
  $v_n$ is repeatedly reachable from $v_0$. A path 
$(v_0, \bar 0) \ldots (v_n, \bar z_n) \ldots (v_n,\bar z_n')$ where $\bar z_n \leq \bar z_n'$ is called a \textbf{lasso path}.
  \item There exists state $v_n = (\tau_n, \sigma_n, q_n, \bar{o}_n, \bar c^n_{ib})$ in which $\bar{o}_n(T_c) = \bot$ for some child $T_c$ of $T$ and $q_n  \in Q^{fin}$, such that
  $v_n$ is reachable from $v_0$. The path from $(v_0, \bar{0})$ to $(v_n, \bar{z}_n)$ is called a \textbf{blocking path}.
  \end{itemize}
  \vspace*{-1mm}
\end{itemize}
\end{lemma}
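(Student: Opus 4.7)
The plan is to prove both directions by induction on the height of task $T$ in the hierarchy $\calh$, using the fact that $\calv(T,\beta)$ is constructed to faithfully simulate the symbolic runs of $T$ while outsourcing the behavior of each subtask $T_c$ to the precomputed relation $\calr_{T_c}$. The inductive hypothesis asserts that for every child $T_c$, triples $(\tauin^{T_c}, \tauout^{T_c}, \beta^{T_c})$ in $\calr_{T_c}$ correspond exactly to symbolic trees of runs of $\Gamma_{T_c}$ accepted by $\calb_{\beta^{T_c}}$. Recall that the accepting states $Q^{fin}$ and $Q^{inf}$ of $B(T,\beta)$ give acceptance for finite and infinite words respectively, so the three cases in the lemma (returning path, lasso path, blocking path) mirror exactly the three ways a local symbolic run of $T$ can terminate or diverge.

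For the $(\Rightarrow)$ direction, given a symbolic tree $\Sym_T$ rooted at $\trt = (\tauin, \tauout, \II)$ accepted by $B(T,\beta)$, I walk through $\trt$ step by step, translating each symbolic instance $(\tau_i,\bar c_i)$ together with the current service $\sigma_i$, automaton state $q_i$, child-status vector $\bar o_i$, and restriction $\bar c_{ib,i}$ of $\bar c_i$ to $\tsib$, into a VASS state $\alpha_i = (\tau_i,\sigma_i,q_i,\bar o_i,\bar c_{ib,i})$. The counter vector of the VASS accumulates exactly the $\bar a(\delta_i,\hat{\tau}_{i-1},\hat{\tau}_i,\bar c_{ib,i-1})$ updates of Section~\ref{sec:symrep}, so the non-input-bound coordinates of $\bar z_i$ equal $\bar c_i$ on the non-input-bound types, which is nonnegative by definition of a local symbolic run. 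For every opening step $\sigma_i = \sigma^o_{T_c}$, the sub-tree attached to $\trt$ at position $i$ yields, by the inductive hypothesis applied to $\calh_{T_c}$, a triple $(\tauin^{T_c},\tauout^{T_c},\beta^{T_c}) \in \calr_{T_c}$ that certifies the VASS action guard. A returning $\trt$ then produces a returning path reaching $v_n$ with $\sigma_n = \sigma^c_T$ and $q_n \in Q^{fin}$; an infinite accepted $\trt$ visits some $q \in Q^{inf}$ infinitely often, and a standard VASS argument (Dickson's lemma on the counter valuations at visits to a fixed state) extracts a lasso path; a blocking $\trt$ has some child $T_c$ with $\bar o(T_c) = \bot$, giving the blocking path.

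For the $(\Leftarrow)$ direction, I reverse this translation. The sequence of VASS states along a returning/lasso/blocking path unambiguously determines the isomorphism types $\tau_i$, services $\sigma_i$ and $B(T,\beta)$ states $q_i$. At every step $\sigma_i = \sigma^o_{T_c}$, the VASS guard provides some $(\tauin^{T_c},\tauout^{T_c},\beta^{T_c}) \in \calr_{T_c}$, and by the inductive hypothesis I can attach a full symbolic sub-tree $\Sym_{T_c}$ at that position. For a lasso path, the fact that $\bar z_m \le \bar z_n$ at the two visits to $v_n$ means the loop can be unrolled infinitely many times while keeping counters nonnegative (a standard pumping argument for VASS); this yields an infinite symbolic local run visiting $q_n \in Q^{inf}$ infinitely often. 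Blocking paths yield a finite $\trt$ whose terminal segment contains at least one child that never returns, whose non-returning sub-tree is supplied by $\calr_{T_c}$ with $\tauout^{T_c} = \bot$. The resulting tree is accepted by $B(T,\beta)$ because the automaton states along the path come from the VASS construction.

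The main obstacle is the bookkeeping around input-bound $TS$-isomorphism types: because repeated insertions of an input-bound tuple collide, the action $\bar a(\delta,\hat\tau,\hat\tau',\bar c_{ib})$ depends on the current value of $\bar c_{ib}$, which is \emph{not} a VASS counter (VASS actions must be constants). I handle this by encoding $\bar c_{ib} \in \{0,1\}^{|\tsib|}$ directly into the VASS state component, as already done in the construction, so that the action $\bar a$ becomes a constant function of the source state; verifying that this encoding preserves the symbolic semantics is the one nontrivial check. The base case $T$ a leaf is just the restriction of the construction with empty $\bar o$ and no subtask guards, hence trivial once the above machinery is in place.
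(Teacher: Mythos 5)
Your proposal is correct and follows essentially the same route as the paper's proof: an induction on the task hierarchy, with a step-by-step translation between accepted symbolic (trees of) runs and returning/lasso/blocking paths of $\calv(T,\beta)$, using the induction hypothesis on $\calr_{T_c}$ at opening services, a Dickson/pigeonhole extraction for the lasso in the forward direction, and unrolling the lasso (with attached subtrees) in the backward direction. The paper handles the same input-bound bookkeeping via the $\bar c_{ib}$ state component exactly as you describe, so no substantive difference remains.
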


\noindent
{\bf Complexity of Verification}
We now have all ingredients in place for our verification algorithm. Let $\Gamma$ be a HAS and $\varphi = [\xi]_{T_1}$ an $\hltlfo$ 
formula over $\Gamma$.  In view of the previous development, $\Gamma \models \varphi$ iff
$[\neg\xi]_{T_1}$ is {\bf not} satisfiable by a symbolic tree of runs of $\Gamma$. We outline a non-deterministic algorithm for 
checking satisfiability of $[\neg\xi]_{T_1}$, 
and establish its space complexity $O(f)$, where $f$ is a function of the relevant parameters. 
The space complexity of verification (the complement) is then $O(f^2)$ by Savitch's theorem \cite{sipser}.
  
Recall that $[\neg\xi]_{T_1}$ is satisfiable by a symbolic tree of runs of $\Gamma$ iff 
$(\tau_{in}, \bot, \beta) \in \calr_{T_1}$ for some $\tau_{in}$ satisfying the precondition of $\Gamma$, and $\beta(\neg\xi) = 1$.
By Lemma \ref{lem:vass-correctness}, membership in $\calr_{T_1}$ can be reduced to 
state (repeated) reachability in the VASS $\calv(T_1, \beta)$. 
For a given VASS, (repeated) reachability is decided by non-deterministically generating runs of the VASS up to a certain length,
using space $O( \log n \cdot 2^{c \cdot d \log d})$ 
where $n$ is the number of states, $d$ is the vector dimension and $c$ is a constant \cite{vass-repeated-reachability}.
The same approach can be used for the VASS $\calv(T_1, \beta)$, with the added complication that generating transitions
requires membership tests in the relations $\calr_{T_c}$'s for $T_c \in child(T_1)$. These in turn become
(repeated) reachability tests in the corresponding VASS. Assuming that $n$ and $d$ are upper bounds for the number of states
and dimensions for all $\calv(T, \beta)$ with $T \in \calh$, 
this yields a total space bound of $O( h \log n \cdot 2^{c \cdot d \log d})$
for membership testing in $\calv(T_1, \beta)$, where $h$ is the depth of $\calh$.

In our construction of $\calv(T, \beta)$, the vector dimension $d$ is the number of $TS$-isomorphism types.
The number of states $n$ is at most the product of the number of $T$-isomorphism types,
the number states in $B(T, \beta)$, the number of all possible $\bar{o}$ and the number of possible states of $\bar{c}_{ib}$.
The worst-case complexity occurs for HAS with unrestricted schemas (cyclic foreign keys) and artifact relations.
To understand the impact of the foreign key structure and 
artifact relations, we also consider the complexity for acyclic and linear-cyclic schemas, and without artifact relations.  
A careful analysis yields the following (see Appendix \ref{sec:complexity-no-arith-app}). \yuliang{need to be fixed.}
For better readability, 
we state the complexity for HAS over a fixed schema (database and maximum arity of artifact relations). The impact of the schema is
detailed in Appendix \ref{sec:complexity-no-arith-app}. \yuliang{need to be fixed.}

\vspace*{-1mm}
\begin{theorem} \label{thm:no-arithmetic} 
Let $\Gamma$ be a HAS over a fixed schema and $\varphi$ an $\hltlfo$ formula over $\Gamma$. 
The deterministic space complexity of checking
whether $\Gamma \models \varphi$ is summarized in Table \ref{tab:complexity1}.
\footnote{$k\hexp$ is the tower of exponential functions of height $k$.}
\end{theorem}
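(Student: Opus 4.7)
The verification problem reduces to checking that $[\neg\xi]_{T_1}$ is \emph{not} satisfiable by any symbolic tree of runs of $\Gamma$. By Theorem~\ref{thm:actual-symbolic} this is equivalent to the absence of an actual tree of local runs accepted by $\calb_{\neg\varphi}$, and by the characterization preceding Lemma~\ref{lem:vass-correctness} this amounts to asking whether $(\tau_{in},\bot,\beta)\in \calr_{T_1}$ for some $\tau_{in}$ satisfying $\Pi$ and some $\beta$ with $\beta(\neg\xi)=1$. By Lemma~\ref{lem:vass-correctness}, each membership query $(\tau_{in},\tau_{out},\beta)\in\calr_T$ becomes either a state reachability (returning or blocking path) or a state repeated reachability (lasso path) test in the VASS $\calv(T,\beta)$, each solvable non-deterministically in space $O(\log n \cdot 2^{c d \log d})$ by \cite{vass-repeated-reachability}, where $n$ and $d$ denote respectively the number of states and the dimension of the VASS.

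The plan is to realize the recursive membership test by a single nondeterministic procedure that walks the hierarchy $\calh$ top-down: when simulating $\calv(T,\beta)$ and encountering an opening-service action $\sigma_{T_c}^o$, guess the triple $(\tau_{in}^{T_c},\tau_{out}^{T_c},\beta^{T_c})$ and recursively verify it in $\calv(T_c,\beta^{T_c})$; when the recursive call returns, resume simulation of $\calv(T,\beta)$. Since at any moment only one active state per level of $\calh$ needs to be stored, the space requirement is multiplied by $h$, yielding a non-deterministic bound of $O(h \log n \cdot 2^{c d \log d})$ with $n,d$ being uniform upper bounds over all $T$. Savitch's theorem then squares this into the deterministic bounds reported, and since the bounds are at least polynomial this squaring is absorbed into the stated constants (and into the height of the exponential tower for the cyclic case).

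The quantitative heart of the argument is to bound $n$ and $d$ as a function of $N=|(\Gamma,\varphi)|$, $h$, and the schema class. The dimension $d$ is the number of $TS$-isomorphism types, while $n$ is the product of the number of $T$-isomorphism types, the number of states of $B(T,\beta)$ (at most $2^{O(N)}$), the number of partial maps $\bar o$ on $child(T)$, and $2^{|\tsib|}$ for $\bar c_{ib}$. All of these depend polynomially on the number of isomorphism types, which in turn is controlled by the navigation depth $h(T) = 1 + |\bar x^T|\cdot F(\delta)$, where $\delta$ is the maximum $h(T_c)$ over children and $F(n)$ is the maximum number of foreign-key paths of length $\leq n$ in $\db$. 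The main obstacle is to unroll this recurrence for each schema class: for acyclic schemas $F$ is bounded by a constant of the schema, so $h(T)$ stays polynomial in $N$ and the number of isomorphism types is polynomial, giving $n,d$ polynomial in $N$; for linearly-cyclic schemas $F(n)$ is polynomial in $n$, so each level of the hierarchy multiplies $h(T)$ by a polynomial factor, giving $h(T)=N^{O(h)}$ and correspondingly $n,d$ of the same order; for cyclic schemas $F(n)$ is exponential in $n$, making the recurrence produce a tower of exponentials of height $h$.

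Plugging these bounds into the VASS space cost $\log n \cdot 2^{c d \log d}$ yields Table~\ref{tab:complexity1}. Without artifact relations, $\bar c_{ib}$ and the counter component collapse, $d=0$, and the VASS degenerates to a finite transition system, so (repeated) reachability is solved in non-deterministic space $O(h\log n)$ — giving \textsc{pspace} for acyclic ($n$ polynomial), $O(N^{c h})$ for linearly-cyclic, and $h\hexp(O(N))$ for cyclic. With artifact relations, the $2^{cd\log d}$ factor and the additional $2^{|\tsib|}$ in the state count each add one exponential on top of the isomorphism-type count, raising acyclic to $\exp(N^c)$, linearly-cyclic to $2\hexp(N^{ch})$, and cyclic to $(h+2)\hexp(O(N))$, matching the table. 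The intricate part is the bookkeeping of how $h(T)$ and hence the isomorphism-type blowup propagates up the hierarchy for each schema class, and verifying that the additional exponential coming from the VASS reachability cost interacts cleanly with the already-exponential counts rather than adding an extra level in the tower.
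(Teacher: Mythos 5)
Your high-level strategy is the paper's own: reduce $\Gamma \models \varphi$ to non-satisfiability of $[\neg\xi]_{T_1}$ over symbolic trees of runs, use Lemma~\ref{lem:vass-correctness} to turn membership in $\calr_T$ into state (repeated) reachability in $\calv(T,\beta)$, carry out the nested tests nondeterministically in space $O(h \log n \cdot 2^{c d \log d})$, apply Savitch, and then bound $n$ and $d$ per schema class through the recurrence for $h(T)$. So the skeleton is fine; the problem is in the counting step, which is exactly what this theorem is about.

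You conflate the size of the navigation set (the number of expressions, which is what $h(T)$ controls) with the number of isomorphism types; the latter is exponential in the former, because an isomorphism type also fixes an equality type over the navigation set. For acyclic schemas the number of $T$-isomorphism types is exponential in $N$, not polynomial --- what is polynomial (for a fixed schema) is $\log M$, and, separately, the number $D$ of $TS$-isomorphism types, since $\bar x^T_{in} \cup \bar s^T$ navigates within a fixed-arity fragment; this asymmetry between $M$ and $D$ is precisely what keeps the acyclic case with artifact relations at single-exponential space, and your proposal never isolates it. For linearly-cyclic schemas, $n$ and $d$ are $\exp(N^{O(h)})$, not $N^{O(h)}$ as you claim: plugging your stated values into your own cost formula $h\log n \cdot 2^{cd\log d}$ would give polylogarithmic space without artifact relations (instead of the $O(N^{c\cdot h})$ entry) and only $\exp(N^{O(h)})$ with them, one exponential short of the $2\hexp(N^{c\cdot h})$ entry of Table~\ref{tab:complexity1}. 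In other words, the table does not follow from your intermediate bounds as written; the correct derivation bounds the navigation-set size per schema class ($O(k)$ for acyclic, $(a k)^{O(r h)}$ for linearly-cyclic, a height-$h$ tower for cyclic), takes $M$ and $D$ exponential in that (with $D$ only polynomial in the acyclic case), and feeds $\log M$ and $2^{c D \log D}$ into the space bound. Finally, the factor $2^{|\tsib|}$ in the state count does not ``add one exponential'': it only adds $|\tsib| \le D$ to $\log n$; the single extra exponential in the with-artifact-relations row comes solely from the Rackoff-style $2^{c D \log D}$ term of the VASS reachability test.
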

\vspace*{-1mm}

Note that the worst-case space complexity is non-elementary, as for feedback-free systems \cite{tods12}.
However, the height of the tower of exponentials in \cite{tods12} is the square of the total number of artifact variables of the system,
whereas in our case it is the depth of the hierarchy, likely to be much smaller.

\section{Verification with Arithmetic} \label{sec:arithmetic}
\newcommand{\qe}{\ensuremath{\texttt{QE} } }
\newcommand{\poly}{\ensuremath{\texttt{proj} } }
\newcommand{\calert}{\cale^T_{\mathbb{R}} }
\newcommand{\calertc}{\cale^{T_c}_{\mathbb{R}} }
\newcommand{\calertct}{\cale_{\mathbb{R}}^{T_c \rightarrow T} }

\reviewer{relation with hybrid systems?}

\begin{table*}[!ht]
\centering
\vspace*{-4mm}
\begin{tabular}{|c|c|c|c|}
\hline
 & Acyclic & Linearly-Cyclic & Cyclic \\ \hline
w/o. Artifact relations & $O(\exp(N^{c \cdot h}))$ & $O(\exp(N^{c \cdot h^2}))$ & $(h + 1)\hexp(O(N))$ \\ \hline
w.   Artifact relations & $O(2\hexp(N^{c \cdot h}))$ & $O(2\hexp(N^{c \cdot h^2}))$ & $(h + 2)\hexp(O(N)))$ \\ \hline
\end{tabular}
\vspace*{-2mm}
\caption{\small Space complexity of verification with arithmetic ($N$: size of $(\Gamma, \varphi)$; $h$: depth of hierarchy; $c$: constants depending on the schema)}
\vspace*{-4mm}
\label{tab:complexity2}
\end{table*}

We next outline the extension of our verification algorithm to handle HAS and $\hltlfo$ properties whose conditions
use arithmetic constraints expressed as polynomial inequalities with integer coefficients over the numeric variables 
(ranging over $\mathbb{R}$).
We note that one could alternatively limit the arithmetic constraints to linear inequalities with integer coefficients 
(and variables ranging over $\mathbb{Q}$), with the same complexity results. These are sufficient for many applications.

The seed idea behind our approach is that, in order to determine whether 
the arithmetic constraints are satisfied, we do not need to keep track of actual valuations 
of the task variables and the numeric navigation expressions they anchor (for which the search space would be infinite).
Instead, we show that these valuations can be partitioned into a finite set of equivalence classes
with respect to satisfaction of the arithmetic constraints, which we then incorporate into the 
isomorphism types of Section~\ref{sec:verification}, extending the algorithm presented there.
This however raises some significant technical challenges, which we discuss next. 

Intuitively, this approach uses the fact that a finite set of polynomials $\calp$ 
partitions the space into a bounded number of {\em cells} containing points 
located in the same region ($=0, <0, >0$)  with respect to every polynomial $P \in \calp$.
Isomorphism types are extended to include a cell,
which determines which arithmetic constraints are satisfied in the conditions of services 
and in the property.
In addition to the requirements detailed in Section~\ref{sec:verification}, 
we need to enforce cell compatibility across symbolic service calls.
For instance, when a task executes an internal service, the corresponding symbolic transition from cell $c$ to $c'$ is 
possible only if the projections of $c$ and $c'$ on the subspace corresponding to the task's input variables have non-empty intersection 
(since input variables are preserved).
Similarly, when the opening or closing service of a child task is called, 
compatibility is required between the parent's and the child's cell 
on the shared variables, which amounts again to non-empty intersection between cell projections.
This suggests the following first-cut (and problematic) 
attempt at a verification algorithm:  
once a local transition imposes new constraints, represented by a cell $c'$, these constraints are propagated {\em back} 
to previously guessed cells, refining them via 
intersection with $c'$. If an intersection becomes empty, the candidate symbolic run constructed so far has no corresponding actual run and the search is pruned. 
The problem with this attempt is that it is incompatible with the way we deal with sets in Section~\ref{sec:verification}: the contents of sets
are represented by associating counters to the isomorphism types of their elements. Since extended isomorphism types include cells, retroactive
cell intersection invalidates the counters and the results of previous VASS reachability checks.

We develop an alternative solution that avoids retroactive cell intersection altogether.
More specifically, for each task, our algorithm extends isomorphism types with cells guessed 
from a {\em pre-computed} set constructed by following the task hierarchy bottom-up and 
including in the parent's set those cells obtained by appropriately projecting the children's cells on shared variables and expressions.
Only non-empty cells are retained. We call the resulting cell collection the Hierarchical Cell Decomposition (HCD). 

The key benefit of the HCD
is that it arranges the space of cells so that consistency of a symbolic run can 
be guaranteed by performing simple local compatibility tests on the cells 
involved in each transition. Specifically, (i) in the case of internal service calls, 
the next cell $c'$ must {\em refine} the current cell $c$ on the shared variables  
(that is, the projection of $c'$ must be contained in the projection of $c$);
(ii) in the case of child task opening/closing services, the parent cell $c$ must refine the child cell $c'$. 
This ensures that in case (i) the intersection with $c'$ of all relevant previously guessed cells is non-empty 
(because we only guess non-empty cells and $c'$ refines all prior guesses), and
in case (ii) the intersection with the child's cell $c'$ is a no-op for the parent cell. 
Consequently, retroactive intersection can be skipped as it can never lead to empty cells.

A natural starting point for constructing the HCD is to gather for each task 
all the polynomials appearing in its arithmetic constraints 
(or in the property sub-formulas referring to that task), and associate sign conditions to each. This turns out to be insufficient.
For example, the projection from the child cell can impose on the parent variables new constraints which do not appear explicitly in the parent task.
It is a priori not obvious that the constrained cells can be represented symbolically, let alone efficiently computed.
The tool enabling our solution is the Tarski-Seidenberg Theorem~\cite{tarski-seidenberg}, which ensures that the projection of a cell is representable
by a union of cells defined by a set of polynomials (computed from the original ones) and sign conditions for them.
The polynomials can be efficiently computed using quantifier elimination.

Observe that a bound on the number of newly constructed polynomials yields a bound on the number of cells in the
HCD, which in turn implies a bound on the number of distinct extended isomorphism types 
manipulated by the verification algorithm, ultimately yielding decidability of verification. 
A naive analysis produces a bound on the number of cells that is hyperexponential 
in the height of the task hierarchy, because the number of polynomials can proliferate at this rate
when constructing all possible projections, and $p$ polynomials may produce $3^p$ cells.
Fortunately, a classical result from real algebraic geometry (\cite{basu1996number}, reviewed in Appendix \ref{sec:algebraic-geometry}) \yuliang{need to be fixed.}
bounds the number of distinct {\em non-empty} cells to only exponential in the number of variables 
(the exponent is independent of the number of polynomials). 
This yields an upper bound of the number of cells (and also the number of extended isomorphism types) 
which is singly exponential in the number of numeric expressions and doubly exponential in the height of the hierarchy $\calh$.
We state below our complexity results for verification with arithmetic, 
relegating details (including a fine-grained analysis) to Appendix~\ref{app:arithmetic}. \yuliang{need to be fixed.}


\vspace*{-2mm}
\begin{theorem} \label{thm:arithmetic}
Let $\Gamma$ be a HAS over a fixed database schema
and $\varphi$ an $\hltlfo$ formula over $\Gamma$. 
If arithmetic is allowed in $(\Gamma, \varphi)$, then
the deterministic space complexity of checking whether $\Gamma \models \varphi$ is summarized in Table \ref{tab:complexity2}.
\end{theorem}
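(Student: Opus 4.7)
The plan is to extend the verification algorithm of Theorem~\ref{thm:no-arithmetic} by augmenting isomorphism types with a symbolic representation of the arithmetic state, then re-running the VASS-based reachability argument on the extended state space. Concretely, I will attach to each $T$-isomorphism type a \emph{cell} describing the sign condition satisfied by the numeric expressions visible at $T$ (the numeric artifact variables of $T$ together with the numeric ends of navigation expressions in $\cale_T$, plus any polynomials appearing in the services of $T$ or in the sub-formulas of $\varphi$ evaluated at $T$). A cell refines the equality type by specifying, for every polynomial $P$ in the relevant set, whether $P<0$, $P=0$ or $P>0$ on its coordinates. An \emph{extended} $T$-isomorphism type is then a pair $(\tau,c)$ where $c$ is such a cell compatible with $\tau$, and the semantics of conditions at $T$ extends in the obvious way.

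The first step is to construct the Hierarchical Cell Decomposition (HCD) bottom-up along $\calh$. For a leaf task $T$, start from the explicit set $\calp_T$ of polynomials occurring in the conditions of its services and in the local sub-formulas of $\varphi$, and let the cells of $T$ be the sign-condition classes realized by $\calp_T$ over the numeric expressions visible at $T$. For an internal task $T$, first compute the HCD of each child $T_c$; then, for every child cell $c_c$, apply the Tarski--Seidenberg theorem to project $c_c$ onto the subspace of expressions shared with $T$ (namely $f_{in}$ and $f_{out}$ composed with the numeric navigation expressions of $T_c$), obtaining a quantifier-free sign condition on new polynomials in the parent's variables. Add those polynomials to $\calp_T$ and take the cells of $T$ to be the sign-condition classes of the enlarged $\calp_T$; retain only non-empty cells. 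By construction, every parent cell $c$ either refines or is disjoint from every projected child cell, so the compatibility between $T$ and $T_c$ reduces to the local test ``$c$ refines the projection of $c_c$'', exactly mirroring cases (i) and (ii) in the informal discussion.

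The second step is to lift the symbolic machinery of Section~\ref{sec:verification} to extended isomorphism types. The definitions of local symbolic run, symbolic tree of runs, and the counters indexed by $TS$-isomorphism types are upgraded to use $(\tau,c)$ instead of $\tau$; the successor relation acquires the local compatibility clauses: for an internal service, $c'$ must refine $c$ on the input-variable projection and satisfy the pre/post polynomials; for $\sigma_{T_c}^o$ and $\sigma_{T_c}^c$, the parent's cell must refine the child's cell on the transferred variables. I then re-prove the analog of Theorem~\ref{thm:actual-symbolic}: given a symbolic tree of runs accepted by $\calb_\varphi$, the HCD refinement property guarantees that the cells picked at every node are jointly satisfiable by a single real assignment to all numeric expressions occurring in the tree, so the construction that produces an actual tree of local runs (merging values, fixing foreign-key witnesses) can be carried out as before, with numeric values chosen inside the selected cells. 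The VASS $\calv(T,\beta)$ is constructed verbatim over the extended states, and Lemma~\ref{lem:vass-correctness} goes through unchanged.

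The remaining and most delicate step is the complexity accounting, which is where the theorem actually lives. A naive analysis is catastrophic: each projection can blow up $|\calp_T|$ by a factor exponential in the number of variables, and $p$ polynomials can in principle produce $3^p$ sign-condition classes, so iterating along the hierarchy suggests a tower of exponentials whose height is $h$ even when the underlying schema is acyclic. The main obstacle is to tame this blow-up, and this is handled by invoking the classical bound from real algebraic geometry (see Appendix~\ref{sec:algebraic-geometry}): the number of \emph{non-empty} sign-condition cells cut out by $p$ polynomials on $n$ variables is bounded by $p^{O(n)}$, independently of $p$. Applied level by level, this yields, for a fixed schema, a single-exponential blow-up in the number of numeric expressions at each level and only a doubly-exponential dependence on $h$ in the size of the extended state space of $\calv(T,\beta)$. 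Plugging this into the $O(h\log n\cdot 2^{cd\log d})$ bound inherited from Section~\ref{sec:verification} and composing with Savitch's theorem gives the entries of Table~\ref{tab:complexity2}; the three columns (acyclic, linearly-cyclic, cyclic) come from reusing the bounds on the number of navigation expressions from the arithmetic-free case, while the two rows (with/without artifact relations) reflect the extra exponential introduced by counters over $TS$-isomorphism types, exactly as in Theorem~\ref{thm:no-arithmetic}.
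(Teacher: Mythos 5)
Your overall route is the paper's route: extend isomorphism types with cells, build a Hierarchical Cell Decomposition bottom-up by projecting children's cells via Tarski--Seidenberg/quantifier elimination and retaining only non-empty cells, enforce consistency through purely local refinement tests, re-prove the symbolic-versus-actual equivalence, run the same nested-VASS algorithm, and control the blow-up with the real-algebraic-geometry bound on non-empty sign conditions. So the architecture is right; the gap is in what your HCD actually contains and in the one sentence that carries the correctness proof.

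Your decomposition for a task $T$ consists only of the polynomials explicit in $T$'s conditions (and in $\varphi$) plus those obtained by projecting the children's cells. The paper's $\calp'_T$ additionally contains the polynomials $e$ and $e-e'$ for numeric expressions and, crucially, is closed under projection onto $T$'s own distinguished variable sets: it adds $\bigcup_{c}\poly(c,\bar{x}^T_{in}\cup\bar{s}^T)$ and then $\bigcup_{c}\poly(c,\bar{x}^T_{in})$ (the sets $\calp^s_T$ and $\calp'_T$). This closure is what yields Lemma \ref{lem:cell-projection} and Corollary \ref{cor:single-cell}: the projection of any cell onto $\bar{x}^T_{in}$ or $\bar{x}^T_{in}\cup\bar{s}^T$ is a \emph{single} cell of the same decomposition, and any two such projections are equal or disjoint. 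Without it, two things break. First, the extended $TS$-isomorphism types that index your VASS counters are not members of a fixed finite cell family, so the insert/retrieve bookkeeping is no longer sound under local checks alone: the arithmetic constraints recorded when a tuple is inserted need not be compatible with the cell in force when it is retrieved, which is exactly the interaction between cells and sets that the paper identifies as the central difficulty. Second, your justification of the symbolic-to-actual direction --- ``the cells picked at every node are jointly satisfiable by a single real assignment'' --- is both misstated and unsupported: there is no single global assignment (numeric variables change value along a run); what is needed is a per-instance choice of points agreeing on the preserved coordinates (input variables throughout a local run, set variables from insertion to retrieval along a life cycle, shared variables at each call and return). The paper constructs these points class-by-class over segments and life cycles and top-down through the tree, and every step uses the equal-or-disjoint single-cell projection property supplied by the closure; the parent/child refinement property you prove (the analogue of Lemma \ref{lem:cell-containment}) is not sufficient by itself. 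The complexity recursion must then be run over $|\calp'_T|$ rather than your smaller family, though this only costs further applications of the same QE and cell-counting bounds and does not change the entries of Table \ref{tab:complexity2}.
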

\vspace*{-2mm}

\section{Restrictions and Undecidability} \label{sec:undecidability}

We briefly review the main restrictions imposed on the HAS model and motivate them by showing that
they are needed to ensure decidability of verification. Specifically, recall that the following restrictions are placed in the model:
\vspace*{-1.5mm}
\begin{enumerate}\itemsep=0pt\parskip=0pt
\item in an internal transition of a given task (caused by an internal service),
only the input parameters of the task are explicitly propagated from one artifact tuple to the next
\item each task may overwrite upon return only $\anull$
variables in the parent task 
\item the artifact variables of a task storing the values returned by its subtasks are disjoint from the task's input variables
\item  an internal transition can take place only if all active subtasks have returned
\item each task has just one artifact relation
\item the artifact relation of a task is reset to empty every time the task closes
\item the tuple of artifact variables whose value is inserted or retrieved from a task's artifact relation
is fixed 
\item each subtask may be called at most once between internal transitions of its parent
\end{enumerate} \yuliang{one restriction is removed.}
\vspace*{-1mm}
These restrictions are placed in order to control the data flow and recursive computation in the system.
Lifting any of them leads to undecidability of verification, as stated informally next.

\vspace*{-1mm}
\begin{theorem} \label{thm:restrictions}
For each $i, 1 \leq i \leq 8$, let HAS$^{(i)}$ be defined identically to HAS but without restriction $(i)$ above.
It is undecidable, given a HAS$^{(i)}$ $\Gamma$ and an $\hltlfo$ formula $\varphi$ over $\Gamma$, whether $\Gamma \models \varphi$.
\end{theorem}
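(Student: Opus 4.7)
The plan is to establish eight separate undecidability reductions, one per relaxed restriction, each from a classical undecidable problem: the halting problem for Minsky (2-counter) machines, reachability in reset/transfer VASS, and the Post Correspondence Problem. In every reduction I build a HAS$^{(i)}$ $\Gamma_i$ together with an $\hltlfo$ (in fact often LTL) formula $\varphi_i$ such that the target instance has a halting/accepting computation iff $\Gamma_i \not\models \varphi_i$. The property $\varphi_i$ will typically assert ``no halting configuration is ever reached'', so a violating run encodes a halting computation of the source machine.

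I would group the restrictions by the type of power they recover when lifted. Restrictions (1), (2), (3), (8) all concern \emph{data flow across transitions}, and lifting any one of them allows one to carry unboundedly much information from snapshot to snapshot via the artifact variables. In each case I would reduce from the halting problem for a deterministic 2-counter machine: counter values and control state are stored in ID/numeric artifact variables, and the relaxation supplies precisely the missing channel by which the next instruction reads both counters and the state. For instance, lifting (3) lets a returned value be written into an input variable and hence be reused as input to the next subtask call, yielding unbounded chained recursion; lifting (8) lets the same child task be invoked arbitrarily many times between internal steps, producing the same effect through iteration rather than recursion. Lifting (1) or (2) lets a parent overwrite arbitrary variables on return or propagate any artifact variable across an internal transition, which directly simulates the assignment of a counter machine.

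Restrictions (5), (6), (7) concern the \emph{artifact relation}, and each can be shown necessary by reduction from reachability in reset VASS (undecidable by \cite{mayr2003undecidable}, already cited in the excerpt). Lifting (6) lets a persistent artifact relation act as a reliable counter across repeated invocations of the owning task, so zero-tests become available by observing emptiness on re-entry; combined with additions/removals this yields full Minsky-machine power. Lifting (5) allows pairs of relations to encode two unbounded counters that must be compared, again giving Minsky power. Lifting (7) makes the tuple inserted/retrieved dynamically chosen, which lets a single artifact relation store pairs $(i, v_i)$ and thus simulate a Turing-machine tape. Restriction (4) is handled by reduction from communicating counter machines (or equivalently, interleaved reset VASS): allowing an internal service of the parent to fire while children are still active gives the parent a channel to observe and react to children's intermediate states, which breaks the partial-order independence exploited in Section~\ref{sec:verification}.

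The main obstacle is to carry out each reduction within the \emph{remaining} seven restrictions, so that undecidability is genuinely attributed to the single lifted condition and not to some background combination. Verifying this requires, for each $i$, an encoding that uses only features enabled by dropping restriction $(i)$ and respects acyclicity of foreign keys, the one-insertion-per-transition discipline, and the segment structure of local runs. The encodings must also be expressible by a quantifier-free $\hltlfo$ property so that the undecidability persists even for the restricted logic considered in the paper; this is straightforward for safety-style halting assertions but has to be checked case by case. The formal constructions and correctness proofs are deferred to the appendix.
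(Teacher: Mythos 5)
Your plan diverges from the paper's proof and, for several of the eight cases, rests on capabilities the model does not actually provide. The paper uses a single uniform device for (1)--(7): a reduction from PCP in which the lifted restriction is used solely to let the system walk an unboundedly long path in the foreign-key graph of the database (a relation $G(\underline{id},\mathtt{next},\mathtt{label})$), advancing two pointers $P_a,P_b$ along the unique simple path from $\alpha$ to $\beta$ and checking that its labels spell a PCP solution; these reductions use no arithmetic at all, and (8) is the only case where arithmetic is genuinely needed (indeed (8) can be lifted harmlessly in the absence of numeric variables). Your counter-machine reductions for (1), (3) and (8) could in principle be made to work, since with arithmetic a post-condition can set a fresh variable to a propagated or input value plus one; this proves the theorem as stated but yields a weaker fact than the paper's (which gives undecidability even without arithmetic). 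The real problem lies in the remaining cases.

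Concretely: for (2), the lifted restriction concerns only \emph{ID} variables (a returning child may already overwrite numeric variables of its parent in plain HAS), and ID variables admit no arithmetic, so ``directly simulates the assignment of a counter machine'' cannot work; the extra power is only the ability to re-advance ID pointers, which is useful precisely in combination with the database successor structure --- i.e.\ the PCP path idea you are missing. For (4), letting the parent fire internal services while children are active does \emph{not} give it a ``channel to observe and react to children's intermediate states''; there is no communication besides call and return, so your justification collapses. For (5) and (6), service pre- and post-conditions cannot mention $S^T$ at all, so there is no emptiness or zero test and no way to ``compare'' two artifact relations: two relations give you only blocking decrements, i.e.\ a plain VASS (decidable reachability), not Minsky power, and ``observing emptiness on re-entry'' is not expressible; moreover insertions may collide with existing tuples, so set cardinality is not a reliable counter without extra machinery. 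For (7), storing pairs in a set does not yield a tape, since retrieval is nondeterministic and unaddressable. So for (2), (4), (5), (6), (7) the reductions as sketched would fail, and the missing ingredient is exactly the uniform mechanism the paper exploits: use the lifted restriction to extract an unbounded labeled path from the database and verify a PCP solution on it, which also automatically keeps each construction within the remaining seven restrictions.
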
 
\vspace*{-1mm}

The proofs of undecidability for (1)-(7) are by reduction from the Post Correspondence Problem (PCP) \cite{Post47,sipser}.
They make no use of arithmetic, so undecidability holds even without arithmetic constraints. The only undecidability 
result relying on arithmetic is (8). Indeed, restriction (8) can be lifted in the absence of numeric variables, with no impact on 
decidability or complexity of verification. 
This is because restriction (2) ensures that even if a subtask is called repeatedly, only a bounded number of calls
have a non-vacuous effect. 

The proofs using a reduction from the PCP rely on the same main idea: removal of the restriction allows
to extract from the database a path of unbounded length in a labeled graph, 
and check that its labels spell a solution to the PCP. 
For illustration, the proof of undecidability for (2) using this technique is sketched in Appendix \ref{app:undecidability}.
\yuliang{needs to be fixed.}

We claim that the above restrictions remain sufficiently permissive to capture
a wide class of applications of practical interest.
This is confirmed by numerous examples of practical business processes modeled as artifact systems, that we encountered
in our collaboration with IBM (see \cite{tods12}). The restrictions limit the recursion and data flow among tasks and services.
In practical workflows, the required recursion is rarely powerful enough to allow unbounded propagation of data among services.
Instead, as also discussed in \cite{tods12}, recursion is often due to two scenarios:

\vspace*{-1mm}
\begin{itemize}\itemsep=0pt\parskip=0pt
\item  allowing a certain task to undo and retry an unbounded number of times,
with each retrial independent of previous ones, and depending only on
a context that remains unchanged throughout the retrial phase (its input parameters).
A typical example is repeatedly
providing credit card information until the payment goes through,
while the order details remain unchanged.
\item  allowing a task to batch-process an unbounded collection of records,
each processed independently, with unchanged input parameters
(e.g. sending invitations to an event to all attendants on the list, for the
same event details).
\end{itemize}
\vspace*{-1mm}

Such recursive computation can be expressed with the above restrictions,
which are satisfied by our example provided in Appendix \ref{app:example-specification}. \yuliang{fixed}

\section{Related Work}\label{sec:related}
We have already discussed our own prior related work in the introduction. 
We summarize next other related work on verification of artifact systems.

Initial work on formal analysis of artifact-based business processes
in restricted contexts has investigated reachability
\cite{GBS:SOCA07,GeredeSu:ICSOC2007}, general temporal
constraints \cite{GeredeSu:ICSOC2007}, and the existence of complete
execution or dead end \cite{BGHLS-2007:artifacts-analysis}.
For each considered problem, verification is
generally undecidable; decidability results were obtained only
under rather severe restrictions, e.g., restricting all pre-conditions to be
``true'' \cite{GBS:SOCA07}, restricting to bounded
domains \cite{GeredeSu:ICSOC2007,BGHLS-2007:artifacts-analysis},
or restricting the pre- and post-conditions to be propositional,
and thus not referring to data values \cite{GeredeSu:ICSOC2007}.
\cite{CGHS:ICSOC:09} adopts an artifact model variation
with arithmetic operations but no database.
Decidability relies on restricting runs to boun\-ded length.
\cite{ZSYQ:TASE:09} addresses the problem of the existence of a run
that satisfies a temporal property, for a restricted case
with no database and only propositional LTL properties.
All of these works model no underlying database, sets (artifact relations), task hierarchy, or arithmetic.

A recent line of work has tackled verification of  artifact-centric
processes with an underlying relational data\-base.
\cite{DBLP:conf/icsoc/BelardinelliLP11,DBLP:conf/ijcai/BelardinelliLP11,DBLP:conf/kr/BelardinelliLP12,DBLP:conf/icsoc/BelardinelliLP12,DBLP:journals/ijcis/GiacomoMR12} evolve the business
process model and property language, culminating in~\cite{DBLP:conf/pods/HaririCGDM13}, which addresses verification of first-order $\mu$-calculus (hence branching time)
properties over business processes expressed in a framework that is equivalent to artifact systems whose input is provided by external services. 
\cite{DBLP:journals/jair/BelardinelliLP14,DBLP:conf/aaai/CalvaneseDM15} extend the results of~\cite{DBLP:conf/pods/HaririCGDM13} to artifact-centric multi-agent systems 
where the property language is a version of first-order branching-time temporal-epistemic logic expressing the knowledge of the agents. 
This line of work uses variations of a business process model called DCDS (data-centric dynamic systems), 
which is sufficienty expressive to capture the GSM model, as shown in~\cite{Montali:ICSOC:13}. In their unrestricted form,
DCDS and HAS have similar expressive power. However, the difference lies in the tackled verification problem and in 
the restrictions imposed to achieve decidability.
We check satisfaction of linear-time properties for every possible choice of initial database instance, 
whereas the related line checks branching-time properties and assumes that the initial database is given.
None of the related works address arithmetic. In the absence of arithmetic, the restrictions introduced
for decidability are incomparable (neither subsumes the other).

\eat{
\alin{this is already covered in introduction, good candidate for dropping given space crunch}
A third line of research started with publication~\cite{DHPV:ICDT:09} which considers an artifact model in which the infinite domain 
is equipped with a dense linear order and the property language is first-order LTL (no arithmetic or hierarchies are supported, and sets can be manipulated in significantly more restricted fashion). 
Follow-up work~\cite{tods12} adds support for arithmetic constraints (like the ones used here) 
as well as integrity constraints (tgds and egds) on the database, and is thus the most closely related work we are aware of.
Neither sets nor hierarchy are  supported in~\cite{tods12}.
As for arithmetic, the most important difference between the current paper and  ~\cite{tods12} lies in the verification techniques, with the current paper
improving the upper bound spectacularly: 
the bound in~\cite{tods12} is hyper-exponential in the squared number of artifact variables, even for acyclic schema without integrity constraints.
This discrepancy is not caused by lax complexity analysis (there are examples for which the worst case running time is realized), but by the superiority of the cell-refinement-based algorithm
presented here.
}

Beyond artifact systems, there is a plethora of literature on data-centric processes, dealing with
various static analysis problems and also with runtime monitoring and synthesis. We discuss the most related works here and refer the reader to the surveys~\cite{DBLP:conf/pods/CalvaneseGM13,DBLP:journals/sigmod/DeutschHV14} for more.
Static analysis for semantic web services is considered in
\cite{McIlraith-www2002}, but in a context restricted to finite domains.
The works~\cite{jcss,S00,AVFY98} are ancestors of~\cite{DHPV:ICDT:09}
from the context of verification of electronic commerce applications.
Their models could conceptually (if not naturally) be encoded in HAS 
but correspond only to 
particular cases supporting no arithmetic, sets, or hierarchies. 
Also, they limit external inputs  to essentially come from the active domain 
of the database, thus ruling out fresh values introduced during the run.

\eat{
\smallskip
\noindent
\alin{
Artifact systems are a particular case of infinite-state systems.
Research on automatic verification of infinite-state systems has
also focused on extending classical model checking techniques
(e.g., see \cite{Burkartetal01} for a survey).
However, in much of this work the emphasis is on studying
recursive control rather than data,
which is either ignored or finitely abstracted.
More recent work has been focusing specifically on data as a source of
infinity. This includes augmenting recursive procedures with integer
parameters \cite{Bouajjani&Habermehl&Mayr03}, rewriting systems with data
\cite{Bouajjani&Habermehl&Jurski&Sighireanu07},
Petri nets with data associated to tokens \cite{Lazicetal07},
automata and logics over infinite alphabets
\cite{Bouyer&Petit&Therien03,Bouyer02,Neven&Schwentick&Vianu04,Demri:2009:LFQ,JL07,BMSSD06,Bouajjani&Habermehl&Jurski&Sighireanu07},
and temporal logics manipulating data \cite{Demri:2009:LFQ}.
However, the restricted use of data and the particular properties verified
have limited applicability to the business process setting we target here.
}
}

\section{Conclusion} \label{sec:conclusion}
We showed decidability of verification for a rich artifact model capturing core elements of
IBM's successful GSM system:
task hierarchy, concurrency, database keys and foreign keys, arithmetic constraints, and richer artifact data.  
The extended framework requires the use of novel techniques including nested Vector Addition Systems and 
a variant of quantifier elimination tailored to our context. 
We improve significantly on previous work on verification of artifact systems with arithmetic~\cite{tods12},
which only exhibits non-elementary upper bounds regardless of the schema shape,
even absent artifact relations. In contrast, for acyclic and linearly-cyclic schemas, 
even in the presence of arithmetic and artifact relations, 
our new upper bounds are elementary
(doubly-exponential in the input size and triply-exponential in the depth of the hierarchy).
This brings the verification algorithm closer to practical relevance, particularly since
its complexity gracefully reduces to {\sc pspace} (for acyclic schema) and 
{\sc expspace} in the hierarchy depth (for 
linearly-cyclic schema) when arithmetic and artifact relations are not present.
The sole remaining case of nonelementary complexity occurs for arbitrary cyclic schemas.
Altogether, our results provide substantial new insight and techniques for the automatic verification of realistic artifact systems.

\vspace*{2mm}
\noindent
\emph{Acknowledgement} This work was supported in part by the National Science Foundation under award IIS-1422375.

\bibliographystyle{abbrv}
\bibliography{reference,artifacts,art2}
\appendix

\section{Examples} \label{app:example}

\eat{
In this section we provide an extension of the business process described in Section \ref{sec:example}.
The extended example models a simple travel booking business process similar to Expedia \cite{expedia}.
We also show an example property that the process should satisfy, using $\hltlfo$.}

In this section we provide an example of HAS modeling a simple travel booking business process similar to Expedia \cite{expedia}.
We also show an example property that the process should satisfy, using $\hltlfo$.

\subsection{Example Hierarchical Artifact System} \label{app:example-specification}

The artifact system captures a process
where a customer books flights and/or makes hotel reservations. 
The customer starts with constructing a trip by
adding a flight and/or hotel reservation to it.
During this time, the customer has the choice to store the trip as a candidate or retrieve 
a previously stored trip. Once the customer has made a decision,
she can proceed to book the trip. If a hotel reservation is made together with certain flights, 
a discount price may be applied to the hotel reservation.
In addition, the hotel reservation can be made by itself, together with the flight, or even after the flight
is purchased. After submitting a valid payment, the customer is able to cancel the
flight and/or the hotel reservation and receive a refund. 
If the customer cancels the purchase of a flight, 
she cannot receive the discount on the hotel reservation.

The Hierarchical artifact system has the following database schema:
\vspace{-0.5mm}
\begin{itemize}\itemsep=0pt\parskip=0pt
\item
\dbflights $\mathtt{(\underline{id}, price, comp\_hotel\_id)}$\\
\dbhotels $\mathtt{(\underline{id}, unit\_price, discount\_price)}$
\end{itemize}
\vspace{-0.5mm}

In the schema, the $\mathtt{id}$'s are key attributes, 
$\mathtt{price}$, $\mathtt{unit\_price}$, $\mathtt{discount\_price}$
are non-key attributes, and $\mathtt{comp\_hotel\_id}$ is a foreign key attribute
satisfying the dependency \\ $\dbflights[comp\_hotel\_id] \subseteq \dbhotels[id]$.

Intuitively, each flight stored in the $\dbflights$ table has a hotel compatible for discount.
If a flight is purchased together with a compatible hotel reservation, 
a discount is applied on the hotel reservation. Otherwise, the full price needs to be paid. 


The artifact system has 6 tasks: ``T1: \textbf{ManageTrips}'', 
``T2: \textbf{AddHotel}'', ``T3: \textbf{AddFlight}'', ``T4: \textbf{BookInitialTrip}'', ``T5: \textbf{Cancel}'' 
and ``T6: \textbf{AlsoBookHotel}'', which form the hierarchy represented in Figure \ref{fig:hierarchy}.
\begin{figure}[!ht]
\centering
\includegraphics[scale=0.5]{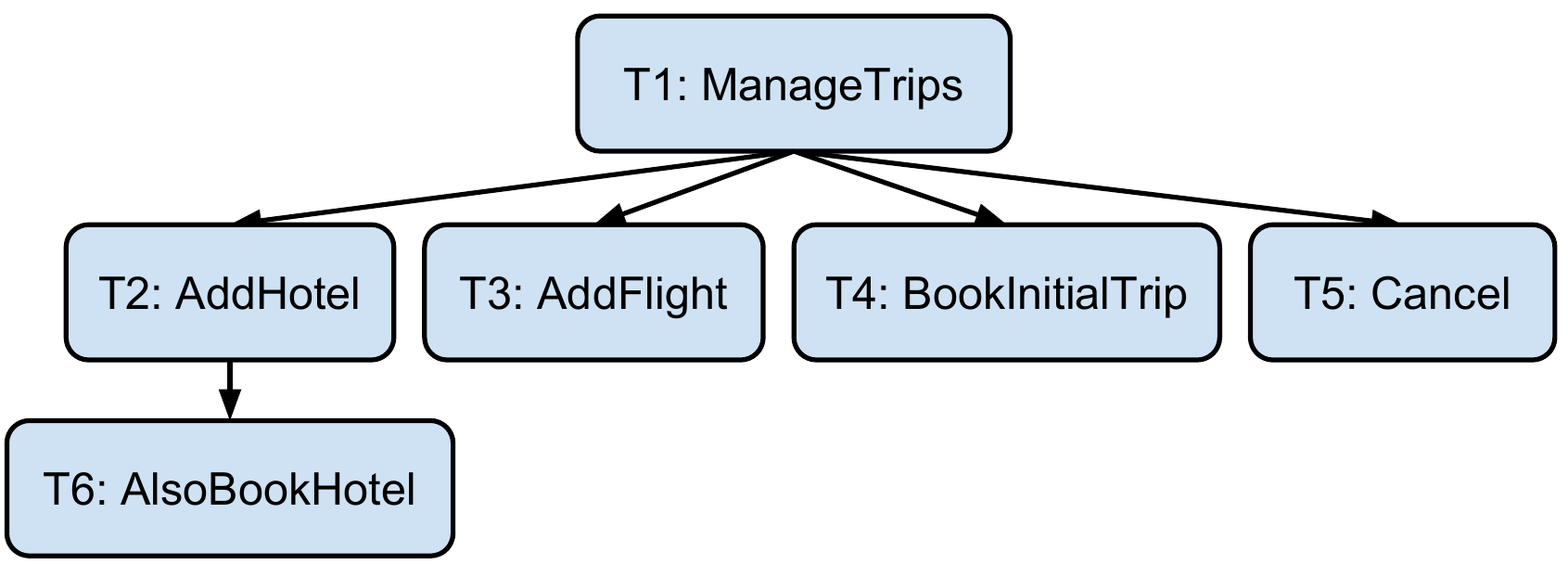}
\vspace*{-6mm}
\caption{Tasks Hierarchy} 
\label{fig:hierarchy}
\vspace*{-4mm}
\end{figure}

The process can be described informally as follows. The customer starts with task \textbf{ManageTrips}, 
where the customer can add a flight and/or hotel to the trip by 
calling the \textbf{AddHotel} or the \textbf{AddFlight} tasks. 
The customer is also allowed to store candidate trips in an artifact relation $\mathtt{TRIPS}$ 
and retrieve previously stored trips. 
(Note that for simplicity, our example considers only outbound flights in the trip.
Return flights can be added by a simple extension to the specification.)
After the customer has made a decision, the \textbf{BookInitialTrip}
task is called to book the trip and the payment is processed.
The process also mimics a key feature of Expedia as follows.
After payment is made successfully, if the customer booked the flight with no hotel reservation,
then she has the opportunity to add a hotel reservation by calling the \textbf{AddHotel} task.
When she does so, the task \textbf{AlsoBookHotel} needs to be called to handle the payment of
the added hotel reservation. Note that the \textbf{AlsoBookHotel} task can only be called after the flight is booked for
but a hotel reservation is missing in the trip.
Once the payment is made, the customer can cancel the order by calling the \textbf{Cancel} task.
Using \textbf{Cancel}, the customer is able to cancel the flight and/or the hotel with a full refund.
It is important to note that if the customer cancels the purchase of the flight,
then she cannot receive the discount on the hotel reservation.

The tasks are specified below.
For convenience, we use existential quantifications in conditions, which can be simulated by adding extra variables.
String values are used as syntactic sugar for numeric variables. We assume that the set of strings we used 
(``$\mathtt{Unpaid}$'', ``$\mathtt{Paid}$'', ``$\mathtt{FlightCanceled}$'', etc.)
correspond to distinct numeric constants. In particular, the string ``$\mathtt{Unpaid}$''
corresponds to the constant 0.
Also for convenience, we use artifact variables with the same names in parent and child tasks.
By default, each input/return variable is mapped to the variable in the parent/child task having the same name.

%
%

\vspace{2mm}
\noindent
\textbf{ManageTrips}: 
This is the root task, modeling the process whereby the customer creates, stores, and retrieves candidate trips.  
A trip consists of a flight and/or hotel reservation. 
Eventually, one of the candidate trips may be chosen for booking.
As the root task, its opening condition is $\mathtt{true}$ and closing condition is $\mathtt{false}$.
The task has the following artifact variables:
\vspace*{-1mm}
\begin{itemize}\itemsep=0pt\parskip=0pt
\item ID variables:  $\aflightid$, $\ahotelid$, 
\item numeric variables: $\astatus$ and $\aamountpaid$ 
\end{itemize}
\vspace*{-1mm}
It also has an artifact relation $\mathtt{TRIPS}$ storing candidate trips $(\aflightid, \ahotelid)$.
The customer can use the subtasks \textbf{AddFlight} and \textbf{AddHotel} (specified below) to fill in variables
$\aflightid$ and $\ahotelid$. In addition, the task has two internal services: \emph{StoreTrip} and \emph{RetrieveTrip}.
Intuitively, when \emph{StoreTrip} is called, the current candidate trip $(\aflightid, \ahotelid)$ is inserted into
$\mathtt{TRIPS}$. When \emph{RetrieveTrip} is called, one tuple is non-deterministically chosen and removed
from $\mathtt{TRIPS}$, and $(\aflightid, \ahotelid)$ is set to be the chosen tuple.
The two tasks are specified as follows:

\vspace{2mm}
\noindent
\emph{StoreTrip}: \\
Pre-condition: $\astatus = \text{``Unpaid''} \land (\aflightid \neq \anull \lor \ahotelid \neq \anull)$ \\
Post-condition: $\aflightid = \anull \land \ahotelid = \anull \land \astatus = \text{``Unpaid''} \land 
\aamountpaid = 0$ \\
Set update: $\{+\mathtt{TRIPS}(\aflightid, \ahotelid)\}$

\vspace{2mm}
\noindent
\emph{RetrieveTrip}: \\
Pre-condition: $\astatus = \text{``Unpaid''}$ \\
Post-condition: $\astatus = \text{``Unpaid''} \land \aamountpaid = 0$ \\
Set update: $\{-\mathtt{TRIPS}(\aflightid, \ahotelid)\}$

\vspace{2mm}
\noindent
\textbf{AddFlight}: This task adds a flight to the trip.
It can be opened if $\aflightid = \anull$ and $\astatus = \text{``Unpaid''}$ in the parent task. 
It has no input variable and the return variable is $\aflightid$.
The task has a single internal service \emph{ChooseFlight} 
that chooses a flight from the $\dbflights$ database and stores it in $\aflightid$, 
which is returned to \textbf{ManageTrips}. 

\vspace{2mm}
\noindent
\textbf{AddHotel}: 
This task adds a hotel reservation to the trip. It can be opened when $\ahotelid = \anull$ 
and $\astatus$ is either ``Paid'' or ``Unpaid''. 

This task has the following artifact variables:
\vspace*{-1mm}
\begin{itemize}\itemsep=0pt\parskip=0pt
\item ID variables: \uline{$\aflightid$}\footnote{the underlined variables are input variables}, \uwave{$\ahotelid$}\footnote{the wavy underlined variables are return variables}
\item numeric variables: \uline{$\astatus$}, \uline{$\aamountpaid$}, \uwave{$\anewamountpaid$} (overwriting $\aamountpaid$ in the parent task when the task returns), $\alowerprice$, $\ahigherprice$ and $\ahotelprice$
\end{itemize}
\vspace*{-1mm}
The task has a single internal service \emph{ChooseHotel} 
which picks a hotel from $\dbhotels$ and determines the price by checking 
whether the hotel is compatible with the chosen flight.
If they are compatible, then $\ahotelprice$ is set to 
the discount price, otherwise it is set to the full price. 

A hotel can be added to the trip in two scenarios.
First, if $\astatus$ is ``Unpaid'', which means that the trip has not been booked,
then this task chooses a hotel and the id of the hotel is returned to \textbf{ManageTrips}.
Second, if $\astatus$ is ``Paid'',  which means that a flight has already been purchased without a hotel reservation,
then this task chooses a hotel and then the child task \textbf{AlsoBookHotel} needs to be called to handle the payment 
of the newly added hotel. In \textbf{AlsoBookHotel}, a payment is received and the new total amount of payment received is written into
variable $\anewamountpaid$ when \textbf{AlsoBookHotel} returns.

The closing service of \textbf{AddHotel} has condition $\astatus = \text{``Unpaid''} \lor 
(\astatus = \text{``Paid''} \land \ahotelprice = \\ \anewamountpaid - \aamountpaid)$, 
which means that either there is no need to call \textbf{AlsoBookHotel} or a correct payment has been received in \textbf{AlsoBookHotel}.
The \emph{ChooseHotel} service is specified as follows:

\vspace{2mm}
\noindent
\emph{ChooseHotel}: \\
Pre-condition: $\mathtt{True}$ \\
Post-condition: \vspace*{-1mm}
\begin{align*}
& \exists cid \exists p_f \ (\aflightid = \anull \rightarrow cid = \anull) \land \\
& (\aflightid \neq \anull \rightarrow \dbflights(\aflightid, p_f, cid)) \land \\
& \dbhotels(\ahotelid, \ahigherprice, \alowerprice) \land \\
& (cid = \ahotelid \rightarrow \ahotelprice = \alowerprice ) \land \\
& (cid \neq \ahotelid \rightarrow \ahotelprice = \ahigherprice ) \land \\
& (\anewamountpaid = 0)
\end{align*}

\vspace*{2mm}
\noindent
\textbf{AlsoBookHotel}: This task handles payment of hotel reservation made after the flight is purchased.
It can be opened if $\ahotelid \neq \anull$ and $\astatus = 
\text{``Paid''} $ in \textbf{AddHotel}.
It receives input variables $\ahotelprice$ and $\aamountpaid$ from the parent and has local numeric variables 
$\anewamountpaid$ and $\ahotelamountpaid$. 
It has a single service \emph{Pay} which processes the payment. This service simply receives
a hotel payment in variable $\ahotelamountpaid$ and the new total amount of payment received is calculated 
($\anewamountpaid \\ = \aamountpaid + \ahotelamountpaid$).
The service can fail and the user can retry for unlimited number of times.
This task can return only when the payment is successful, which means that the closing condition is 
$\ahotelamountpaid = \ahotelprice$.
When \textbf{AlsoBookHotel} returns, the numeric variable $\anewamountpaid$ is returned to \textbf{ManageTrips}.

\vspace*{2mm}
\noindent
\textbf{BookInitialTrip}: This task allows the customer to reserve and pay for the chosen trip. 
Its opening condition is $\astatus = \text{``Unpaid''}$. This task has the following variables:

\vspace*{-1mm}
\begin{itemize}\itemsep=0pt\parskip=0pt
\item ID variables:  \uline{$\aflightid$}, \uline{$\ahotelid$}
\item numeric variables: \uwave{$\astatus$}, \uwave{$\aamountpaid$}, $\aticketprice$, $\ahotelprice$
\end{itemize}
\vspace*{-1mm}
The task contains a single service \emph{Pay} to process the payment, which can fail and be retried for
an unlimited number of times. Note that if the trip contains both the flight and hotel,
when \emph{Pay} is called, the payments for both of them are received.

If the payment is successful (i.e. $\aamountpaid$ equals to the flight price plus the hotel price), 
$\astatus$ is set to ``Paid''. Otherwise it is set to ``Failed''.
The closing condition of this task is $\astatus = \text{``Paid''}$ or $\astatus = \text{``Failed''}$.
When \textbf{BookInitialTrip} returns, $\astatus$ and $\aamountpaid$ in the parent task are updated by the
new $\astatus$ and $\aamountpaid$ returned by \textbf{BookInitialTrip}.
The \emph{Pay} service is specified as follows:

\vspace*{2mm}
\noindent
\emph{Pay}: \\
Pre-condition: $\ahotelid \neq \anull \lor \aflightid \neq \anull$ \\
Post-condition: 
\vspace*{-2mm}
\begin{align*}
& \exists cid \exists p_1 \exists p_2 \\
& (\aflightid = \anull \rightarrow \aticketprice = 0 \land cid = \anull) \land \\
& (\aflightid \neq \anull \rightarrow  \dbflights( \aflightid, \aticketprice, \\
& cid)) \land (\ahotelid = \anull \rightarrow \ahotelprice = 0) \land \\ 
& (\ahotelid \neq \anull \rightarrow (\dbhotels(\ahotelid, p_1, p_2) \land \\
& (\ahotelid = cid \rightarrow \ahotelprice = p_2) \land \\
& (\ahotelid \neq cid \rightarrow \ahotelprice = p_1)) \land \\
& (\aamountpaid = \aticketprice + \ahotelprice \rightarrow \\
& \astatus = \text{``Paid''}) \land (\aamountpaid \neq \aticketprice + \\
& \ahotelprice \rightarrow  \astatus = \text{``Failed''})
\end{align*}
%

\vspace*{2mm}
\noindent
\textbf{Cancel}: In this task, the customer can cancel the flight and/or hotel 
after the trip has been paid for. Its opening condition is $\astatus = \text{``Paid''}$. 
This task has the following variables:
\vspace*{-1mm}
\begin{itemize}\itemsep=0pt\parskip=0pt
\item ID variables:  \uline{$\ahotelid$} and \uline{$\aflightid$}
\item numeric variables: \uline{$\aamountpaid$}, $\aticketprice$, \\
$\alowerprice$, $\ahigherprice$, $\ahotelprice$, \\
$\aamountrefunded$ and \uwave{$\astatus$}
\end{itemize}
\vspace*{-1mm}

The task has 3 services, \emph{CancelFlight}, \emph{CancelHotel} and \emph{CancelBoth} which 
cancel the flight, the hotel reservation, or both of them, respectively.
When any of these services is called, $\aamountrefunded$ is calculated to be the correct amount
needs to be refunded to the customer and $\astatus$ is set to ``FlightCanceled'', ``HotelCanceled'' and ``AllCanceled'' respectively. 
In particular, if the customer would like to cancel the flight 
while keeping the hotel reservation, and if a discount has been applied on the hotel reservation,
then the correct $\aamountrefunded$ equals to $\aticketprice$ minus the difference between 
the normal cost and the discounted cost of the hotel since she is no longer eligible for the discount.

The closing condition of this task is True. 
We show the specification of \emph{CancelFlight} as an example. 
Let $\mathtt{Discounted}$ be the subformula 
$$ (\ahotelid \neq \anull) \land (\ahotelprice = \alowerprice)$$
And let $\mathtt{Penalized}$ be the subformula
\begin{align*}
\aamountrefunded = & \ \aticketprice \ - \\
& (\ahigherprice - \alowerprice) 
\end{align*}
\noindent
\emph{CancelFlight}: \\
Pre-condition: 
\begin{align*}
& \aflightid \neq \anull \land \astatus \neq \text{``FlightCanceled''} \land \\
& \astatus \neq \text{``HotelCanceled''} \land \astatus \neq \text{``AllCanceled''}
\end{align*}
Post-condition:
\begin{align*}
& \exists cid \ \dbflights(\aflightid, \aticketprice, cid) \land \\
& (\ahotelprice = \aamountpaid - \aticketprice) \land \\
& (\ahotelid \neq \anull \rightarrow \\
& (\dbhotels(\ahotelid, \ahigherprice, \alowerprice) \land \\
& (\neg \mathtt{Discounted} \rightarrow \aamountrefunded = \aticketprice) \land \\
& (\mathtt{Discounted} \rightarrow \mathtt{Penalized}) \land \astatus = \text{``FlightCanceled''}
\end{align*}


\subsection{Example HLTL-FO Property} \label{app:example-hltl}
\eat{
Let $T_1$ be a root task with child tasks $T_2$ and $T_3$.
The $\hltlfo$ formula (with no global variables)
$$
\varphi = [ 
~\mathbf{F}[\psi_2]_{T_2} 
\rightarrow 
\mathbf{G} (\sigma^o_{T_3} \rightarrow [\psi_3]_{T_3})
]_{T_1}
$$
states that whenever $T_1$ calls child task $T_2$ 
and $T_2$'s local run satisfies property $\psi_2$, then if $T_3$ is also called (via the opening service $\sigma^o_{T_3}$), its local run
must satisfy property $\psi_3$. }

Suppose we wish to enforce the following policy: \emph{if a discount is applied
to the hotel reservation, then a compatible flight must be purchased without cancellation}. 
One typical way to defeat the policy would be for a user to first pay for the
flight, then reserve the hotel with the discount price, but next cancel the flight
without penalty. Detecting such bugs can be subtle, especially in a system allowing concurrency.
The following $\hltlfo$ property of task \textbf{ManageTrips} says 
``If \emph{AddHotel} is called and a hotel reservation is added with a discounted price,
then at the task \emph{Cancel}, if the customer would like to cancel the flight, a penalty must be paid''.

The property is specified as
$[\varphi]_{\mathtt{T1}}$ where $\varphi$ is the formula:
\begin{align*}
& \varphi = \mathbf{F} [\mathbf{F} \ (\mathtt{Discounted} \land \mathbf{X} \ \sigma_{\mathtt{T6:AlsoBookHotel}}^o) ]_{\mathtt{T2:AddHotel}} \rightarrow \\
& \mathbf{G}(\sigma^o_{\mathtt{T5:Cancel}} \rightarrow [ \mathbf{G} ( \textit{CancelFlight} \ \rightarrow \mathtt{Penalized})]_{\mathtt{T5:Cancel}})
\end{align*}

\noindent
with the subformulas $\mathtt{Discounted}$ and $\mathtt{Penalized}$ defined above. 


Notice that in the specification there is no guard preventing \emph{AddHotel} and 
\emph{Cancel} to run concurrently after a successful payment is made, which can lead to a violation of this 
property. The problem can be fixed by adding a new variable in \textbf{ManageTrips} to
indicate whether \textbf{AddHotel} or \textbf{Cancel} are currently running and
modifying their opening conditions to make sure that these two tasks are mutual exclusive.

\section{Framework and HLTL-FO}
\label{app:hltl}

\subsection{Definition of global run}

The global runs of a HAS $\Gamma$ are obtained from interleavings of the transitions in a tree of local runs, lifted to 
transitions over instances of $\cala$. We make this more precise. 
Let $D$ be a database and \Tree a full tree of local runs over $D$. 
For a local run $\rho = (\nu_{in}, \nu_{out}, \{(I_m,\sigma_m)\}_{m < \gamma})$  (where $I_m = (\nu_m,S_m)$) and $i < \gamma$,
we denote by $\sigma(\rho,i) = \sigma_i$, $\nu(\rho,i) = \nu_i$, and $S(\rho,i) = S_i$.
Let $\preceq$ be the pre-order on the set
$\{(\rho, i) \mid \rho \in \Tree, 0 \leq i < \gamma(\rho)\}$ 
defined as the smallest reflexive-transitive relation containing the following:
\begin{enumerate}\itemsep=0pt\parskip=0pt
\item for each node $\rho$ and $0 \leq i \leq j  < \gamma(\rho)$, 
$(\rho, i) \preceq (\rho, j)$ 
\item for each edge in \Tree from $\rho_T$ to $\rho_{T_c}$ labeled $i$, $(\rho_T, i) \preceq (\rho_{T_c}, 0)$
and $(\rho_{T_c}, 0) \preceq (\rho_T, i)$. Additionally, if $\rho_{T_c}$ is returning and $m$ is the smallest
$j > i$ for which $\sigma(\rho_T, j) = \sigma_{T_c}^c$, then $(\rho_{T_c}, \gamma(\rho_{T_c})) \preceq (\rho_T, m)$
and $(\rho_T, m) \preceq (\rho_{T_c}, \\ \gamma(\rho_{T_c}))$.
\end{enumerate} 

Let $\sim$ be the equivalence relation induced by $\preceq$ (i.e., $a \sim b$ iff $a \preceq b$ and $b \preceq a$).
Note that all classes of $\sim$ are singletons except for the ones induced by (2), which are of the form 
$\{(\rho_1, i),(\rho_2, j)\}$ where $\sigma(\rho_1,i) = \sigma(\rho_2, j) \in \{\sigma_T^o, \sigma_T^c\}$ for some task $T$. 
For an equivalence class $\varepsilon$ 
of $\sim$ we denote by $\sigma(\varepsilon)$ the unique service of elements in $\varepsilon$. 
A {\em linearization} of $\preceq$ is an enumeration of the equivalence classes of $\sim$ consistent with $\preceq$. 
Consider a linearization $\{\varepsilon_i\}_{i \geq 0}$ of $\preceq$. Note that  
$\varepsilon_0 = (\rho_{T_1}, 0)$ and let $\nu(\rho_{T_1},0) = \nu_0$.
A global run induced by $\{\varepsilon_i\}_{i \geq 0}$ is
a sequence $\rho = \{ (\bar I_i, \sigma_i) \}_{i \geq 0}$ such that $\sigma_i = \sigma(\varepsilon_i)$ and
each $\bar I_i$ is an instance $(\bar \nu_i, stg_i, D, \bar S_i)$ of $\cala$, 
defined inductively as follows. For $i = 0$,
\begin{itemize}
\item $\bar \nu_0(\bar x^{T_1}) = \nu_0(\bar x^{T_1})$ (and arbitrary on other variables)
\item $stg_0 = \{T_1 \mapsto \aactive, T_i \mapsto \ainit \mid 2 \leq i \leq k\}$ 
\item $\bar S_0 = \{S^{T_i} \mapsto\emptyset \mid 1 \leq i \leq k\}$.
\end{itemize}
For $i > 0$, $\bar I_i$ is defined as follows. 
Suppose first that $\varepsilon_i = \{(\rho, j)\}$ where 
$\rho$ 
is a local run of task $T$ 
and $\sigma(\rho,j)$ is an internal service of $T$.
Then $\bar\nu_i = \bar \nu_{i-1}[\bar x^T \mapsto \nu(\rho,j)(\bar x^T)]$,
$\bar S_i = \bar S_{i-1}[S^T \mapsto S(\rho,j)]$, and $stg_i = stg_{i-1}[\bar T \mapsto \ainit \mid \bar T \in \emph{desc}(T)]$.
Now suppose $\varepsilon = \{(\rho_T, j), (\rho_{T_c}, 0)\}$, where $T_c$ is a child of $T$, $\rho_T$ and $\rho_{T_c}$ are local runs of $T$ and $T_c$, 
and  $\sigma(\varepsilon) =  \sigma^o_{T_c}$. Then $\bar \nu_i = \bar \nu_{i-1}[\bar x^{T_c} \mapsto \nu(\rho_{T_c},0)(\bar x^{T_c})]$, $\bar S_i = \bar S_{i-1}[S^{T_c} \mapsto \emptyset]$,
and $stg_i = stg_{i-1}[T_c \mapsto \aactive]$. Finally, suppose $\varepsilon = \{(\rho_T, j), (\rho_{T_c}, \gamma-1)\}$ 
where $\sigma(\varepsilon) =  \sigma^c_{T_c}$.
Then $\bar \nu_i = \bar \nu_{i-1}[\bar x^T \mapsto \nu(\rho_T,j)(\bar x^T)]$, 
$stg_i = stg_{i-1}[T_c \mapsto \aclosed]$, and $\bar S_i = S_{i-1}[S^{T_c} \mapsto \emptyset]$. 

We denote by  $\call(\Tree)$ the set of global runs induced by linearizations of $\preceq$. 
The set of global runs of $\Gamma$ on a database $D$ is 
$\emph{Runs}_D(\Gamma) = \bigcup \{\call(\Tree) \mid $
$\Tree$ is a full tree of local runs of $\Gamma \mbox{ on } D\}$ and
the set of global runs of $\Gamma$ is $\emph{Runs}(\Gamma) = \bigcup_D \emph{Runs}_D(\Gamma)$.

\subsection{Review of LTL}
\label{app:ltl}

We review the classical definition of linear-time temporal logic (LTL) over a set $P$ of propositions.
LTL specifies properties of infinite words ($\omega$-words)
$\{\tau_i\}_{i \geq 0}$ over the alphabet consisting of truth assignments to $P$.
Let $\tau_{\geq j}$ denote $\{\tau_i\}_{i \geq j}$, for $j \geq 0$.

The meaning of the temporal operators {\bf X}, {\bf U} is the following (where $\models$ denotes satisfaction
and $j \geq 0$):
\begin{itemize}
\item $\tau_{\geq j} \models {\bf X} \varphi$ iff $\tau_{\geq j+1} \models \varphi$,
\item $\tau_{\geq j} \models \varphi {\bf U} \psi$ iff $\exists k \geq j$ such that $\tau_{\geq k} \models \psi$
and $\tau_{\geq l} \models \varphi$ for $j \leq l < k$.
\end{itemize}
Observe that the above temporal operators can simulate all commonly used
operators, including  {\bf G} (always) and {\bf F} (eventually).
Indeed, ${\bf F} \varphi \equiv \mbox{\em true}~{\bf U}~\varphi$ and
${\bf G} \varphi ~\equiv~ \neg ({\bf F} \neg \varphi)$.

The standard construction of a B\"{u}chi automaton $B_\varphi$
corresponding to an LTL formula $\varphi$ is given in \cite{VW:LICS:86,SVW87}.
The automaton $B_\varphi$ has exponentially many states and accepts
precisely the set of $\omega$-words that satisfy $\varphi$.

It is sometimes useful to apply LTL on {\em finite} words rather than $\omega$-words.
The finite semantics we use for temporal operators is the following \cite{ltl-finite}.
Let $\{\tau_i\}_{0 \leq i \leq n}$ a finite sequence of
truth values of $P$. Similarly to the above, let $\tau_{\geq j}$ denote $\{\tau_i\}_{j \leq i \leq n}$, for $0 \leq j \leq n$.
The semantics of {\bf X} and {\bf U} are defined as follows:
\begin{itemize}\itemsep=0pt\parskip=0pt
\item $\tau_{\geq j} \models {\bf X} \varphi$ iff $n > j$ and $\tau_{\geq j+1} \models \varphi$,
\item $\tau_{\geq j} \models \varphi {\bf U} \psi$ iff $\exists k,  j \leq k \leq n$ such that $\tau_{\geq k} \models \psi$
and $\tau_{\geq l} \models \varphi$ for $j \leq l < k$.
\end{itemize}

It is easy to verify that for the $B_\varphi$ obtained by the standard construction \cite{VW:LICS:86,SVW87}
there is a subset $Q^{\emph{fin}}$ of its states such that $B_\varphi$
viewed as a finite-state automaton with final states $Q^{\emph{fin}}$
accepts precisely the finite words that satisfy $\varphi$.

\subsection{Proof of Theorem \ref{thm:ltlfo} } \label{app:ltlfo}
We show that it is undecidable whether a HAS $\Gamma = \langle \cala, \Sigma, \Pi \rangle$ 
satisfies an LTL formula over $\Sigma$. 
The proof is by reduction from the repeated state reachability problem of VASS with reset arcs and bounded lossiness (RB-VASS) \cite{mayr2003undecidable}. 
An RB-VASS extends the VASS reviewed in Section \ref{sec:verification} as follows.
In addition to increment and decrement of the counters, an action of RB-VASS also allows resetting the values of some counters to 0.
And after each transition, the value of each counter can decrease non-deterministically by an integer value bounded by some constant $c$.
The results in \cite{mayr2003undecidable} (Definition 2 and Theorem 18) indicate that the repeated state reachability problem for RB-VASS is undecidable for every fixed $c \geq 0$, since the structural termination problem for Reset Petri-net with bounded lossiness can be reduced to 
the repeated state reachability problem for RB-VASS's. In our proof, we use RB-VASS's with $c = 1$. \yuliang{added connection to the cited paper.}

Formally, a RB-VASS $\calv$ (with lossiness bound $1$ and dimension $d > 0$) is a pair $(Q, A)$ where $Q$ is a finite set of states and $A$ is a set of actions
of the form $(p, \bar{a}, q)$ where $\bar{a} \in \{-1, +1, r\}^d$, and $p, q \in Q$. 
A run of $\calv = (Q, A)$ is a sequence $(q_0, \bar{z}_0), \dots (q_n, \bar{z}_n)$ where $\bar{z}_0 = \bar{0}$
and for each $i \geq 0$, $q_i \in Q$, $\bar{z}_i \in \mathbb{N}^d$, and for some $\bar{a}$ such that 
$(q_i, \bar{a}, q_{i+1}) \in A$, and for $1 \leq j \leq d$:
\begin{itemize}
\item if $\bar{a}(j) \in \{-1, +1\}$, then $\bar{z}_{i+1}(j) = \bar{z}_{i}(j) + \bar{a}(j)$ or 
$\bar{z}_{i+1}(j) = \bar{z}_{i}(j) + \bar{a}(j) - 1$, and
\item if $\bar{a}(j) = r$, then $\bar{z}_{i+1}(j) = 0$.
\end{itemize}
For a given RB-VASS $\calv = (Q, A)$ and a pair of states $q_0, q_f \in Q$, we say that $q_f$ is repeatedly reachable from $q_0$
if there exists a run $(q_0, \bar{z}_0) \dots (q_n, \bar{z}_n) \dots (q_m, \bar{z}_m)$ of $\calv$ such that
$q_n = q_m = q_f$ and $\bar{z}_n \leq \bar{z}_m$. As discussed above, 
checking whether $q_f$ is repeatedly reachable from $q_0$ is undecidable.

We now show that for a given RB-VASS $\calv = (Q, A)$ and $(q_0, q_f)$,
one can construct a HAS $\Gamma = \langle \cala, \Sigma, \Pi \rangle$ 
and LTL property $\Phi$ over $\Sigma$ such that $q_f$ is repeatedly reachable from $q_0$ iff $\Gamma \models \Phi$.
At a high level, the construction of $\Gamma$ uses $d$ tasks to simulate the $d$-dimensional vector of counters.
Each task is equipped with an artifact relation, and the number of elements in the artifact relation
is the current value of the corresponding counter. Increment and decrement the counters are simulated by
internal services of these tasks, and reset of the counters are simulated by closing and reopening the task 
(recall that this resets the artifact relation to empty).
Then we specify in the LTL formula $\Phi$ that the updates of the counters of the same action 
are grouped in sequence. Note that this requires coordinating the actions of sibling tasks, which is not possible in $\hltlfo$.
The construction is detailed next.

The database schema of $\Gamma$ consists of a single unary relation $R(\underline{id})$.
The artifact system has a root task $T_1$ and subtasks 
$\{P_0, P_1, \dots, P_d, C_1, \dots, C_d\}$ which form the following tasks hierarchy:

\begin{figure}[!ht]
\centering
\includegraphics[scale=0.7]{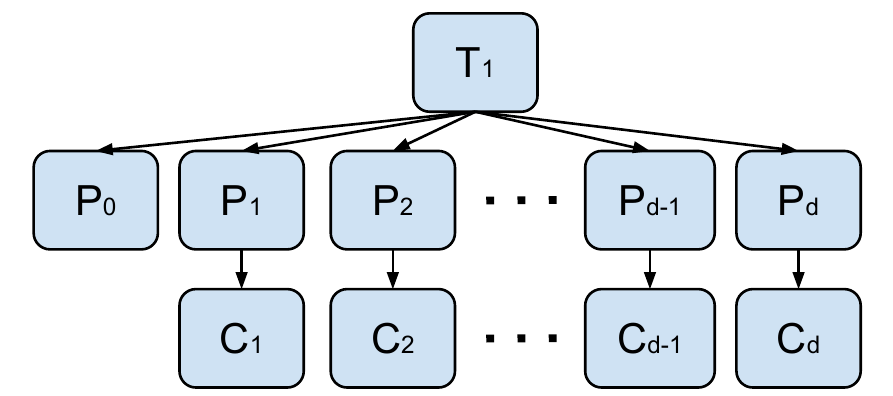}
\caption{Tasks Hierarchy} 
\label{fig:ltl}
\end{figure}

The tasks are defined as follows. 
The root task $T_1$ has no variables nor internal services. 
The task $P_0$ contains a numeric variable $s$, indicating the current state of the RB-VASS.
For each $q \in Q$, $P_0$ has a service $\sigma^q$, 
whose pre-condition is true and post-condition sets $s$ to $q$.

For $i \geq 1$, task $P_i$ has no variable. 
It has a single internal service $\sigma_i^r$ whose pre- and post-conditions are both $\mathtt{true}$.

Each $C_i$ has an ID variable $x$, an artifact relation $S_i$ and a pair of services $\sigma^+_i$ and $\sigma^-_i$, which 
simply insert $x$ into $S_i$ and removes an element from $S_i$, respectively.
Intuitively, the size of $S_i$ is the current value of the $i$-th counter.
Application of service $\sigma_i^r$ corresponds to resetting the $i$-th counter.
And application of services $\sigma_i^+$ and $\sigma_i^-$ correspond to 
increment and decrement of the $i$-th counter, respectively.

Except for the closing condition of $T_1$, all opening and closing conditions of tasks are $\mathtt{true}$.

We encode the set of actions $A$ into an LTL formula as follows.
For each state $p \in Q$, we denote by $\alpha(p)$ the set of actions starting from $p$.
For each action $\alpha = (p, \bar{a}, q) \in A$, we construct an LTL formula $\varphi(\alpha)$ as follows. 
First, let $\phi_1, \dots \phi_d, \phi_{d+1}$ be LTL formulas where:
\begin{itemize}
\item $\phi_{d+1} = \mathbf{X} \sigma^q$,
\item for $i = d, d - 1, \dots, 1$: 
\begin{itemize}
\item if $\bar{a}(i) = +1$, then $\phi_i =  \sigma_i^+ \land \mathbf{X} \phi_{i+1}$,
\item if $\bar{a}(i) = -1$, then $\phi_i = (\sigma_i^- \land \mathbf{X} \phi_{i+1})\lor (\sigma_i^- \land \mathbf{X} (\sigma_i^- \land \mathbf{X} \phi_{i+1}))$, and
\item if $\bar{a}(i) = r$, then $\phi_i = \sigma_i^c \land \mathbf{X} (\sigma^r_i \land \mathbf{X} (\sigma^o_i \land \mathbf{X} \phi_{i+1}))$
where $\sigma_i^o$ and $\sigma_i^c$ are the opening and closing services of task $C_i$.
\end{itemize}
\end{itemize}
Let $\varphi(\alpha) = \mathbf{X} \phi_1$. Intuitively, $\varphi(\alpha)$ specifies a sequence of service calls
that update the content of the artifact relations $S_1, \dots S_d$ according to the vector $\bar{a}$. 
In particular, for $\bar{a}(i) = r$, the subsequence of services $\sigma_i^c \sigma_i^r \sigma_i^o$ 
first closes task $C_i$ then reopens it. This empties $S_i$.
For $\bar{a}(i) = +1$, by executing $\sigma_i^+$, the size of $S_i$ might be increased by 1 or 0, 
depending on whether the element to be inserted is already in $S_i$.
And for $\bar{a}(i) = -1$, we let $\sigma_i^-$ to be executed either once or twice, 
so the size of $S_i$ can decrease by 1 or 2 nondeterministically.
Then we let
$$ \Phi = \Phi_{\mathtt{init}} \land \bigwedge_{p \in Q} \mathbf{G} \left( \sigma^p \rightarrow  \bigvee_{\alpha \in \alpha(p)} \varphi(\alpha) \right) \land \mathbf{G F} \sigma^{q_f}$$
where $\Phi_{\mathtt{init}}$ is a formula specifying that the run is correctly initialized, which simply means 
that the opening services $\sigma_T^o$ of all tasks are executed once at the beginning of the run, and then a $\sigma^{q_0}$ is executed.

The second clause says that for every state $p \in Q$, whenever the run enters a state $p$ (by calling $\sigma^p$),
a sequence of services as specified in $\varphi(\alpha)$ is called to update $S_1, \dots, S_k$, simulating 
the action $\alpha$ that starts from $p$.

Finally, the last clause $\mathbf{G F} \sigma^{q_f}$ guarantees that the service $\sigma^{q_f}$ is applied infinitely often, 
which means that $q_f$ is reached infinitely often in the run.

We can prove the following lemma, which implies Theorem \ref{thm:ltlfo}:

\begin{lemma}
For RB-VASS $(Q, A)$ and states $q_0, q_f \in Q$, there exists a run $(q_0, \bar{z}_0), \dots, (q_m, \bar{z}_m), \dots, (q_n, \bar{z}_n)$ 
of $(Q, A)$ where $q_m = q_n = q_f$ and $\bar{z}_m \leq  \bar{z}_n$ 
\victor{Added the inequality.} 
iff there exists a global run $\rho$ of $\Gamma$ such that $\rho \models \Phi$.
\end{lemma}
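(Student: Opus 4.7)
The plan is to establish the two directions of the equivalence by showing how each RB-VASS action corresponds to the service block pinned down by $\varphi(\alpha)$, and how the sizes $|S_i|$ of the artifact relations in $\Gamma$ track counter values under lossiness bound $1$.

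For the forward direction, suppose a witness $(q_0,\bar{z}_0), \dots, (q_m,\bar{z}_m), \dots, (q_n,\bar{z}_n)$ with $q_m = q_n = q_f$ and $\bar{z}_m \leq \bar{z}_n$ exists. I first extend it to an infinite RB-VASS run visiting $q_f$ infinitely often by indefinitely replaying the segment from $m$ to $n$: each replay is feasible because the counter configuration reached at the end of any replay coordinate-wise dominates the configuration at position $m$, so the same action and lossiness choices remain legal. I then simulate the resulting infinite run step-by-step in $\Gamma$: at every state $p$ I fire $\sigma^p$ in $P_0$, and then for each coordinate $i$ carry out the block specified by $\phi_i$ --- picking a fresh ID for $\sigma_i^+$ when the counter must strictly increase, reusing an element already present in $S_i$ to force a collision when the lossiness absorbs the insertion, firing $\sigma_i^-$ once or twice depending on whether lossiness further reduces the counter, and executing $\sigma_i^c, \sigma_i^r, \sigma_i^o$ for a reset. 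Because the services of distinct sibling subtasks are independent, the resulting sequence is a legal interleaving of an underlying tree of local runs, hence a global run of $\Gamma$. It satisfies $\Phi$ since $\Phi_{\mathtt{init}}$ holds by the canonical opening prefix, every $\sigma^p$ is followed by the block $\varphi(\alpha)$ for the chosen action, and $\sigma^{q_f}$ appears infinitely often.

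For the backward direction, suppose $\rho \models \Phi$. The clause $\mathbf{G}(\sigma^p \to \bigvee_{\alpha} \varphi(\alpha))$ together with $\Phi_{\mathtt{init}}$ forces $\rho$ to decompose into a canonical alternation of state services $\sigma^{q_0^\rho}, \sigma^{q_1^\rho}, \dots$ separated by simulation blocks, each of which is the service sequence pinned down by $\varphi(\alpha_i)$ for some action $\alpha_i$ going from $q_i^\rho$ to $q_{i+1}^\rho$, with $q_0^\rho = q_0$. Set $\bar{z}_i^\rho = (|S_1|, \dots, |S_d|)$ in $\rho$ at the moment $\sigma^{q_i^\rho}$ fires. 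Inspecting each block coordinatewise shows that $\bar{z}_{i+1}^\rho$ is obtained from $\bar{z}_i^\rho$ by applying $\alpha_i$ with some lossiness vector in $\{0,1\}^d$: $\sigma_i^+$ either inserts a new ID (non-lossy increment) or collides with an existing element (lossy increment), $\sigma_i^-$ is fired once or twice, and the $\sigma_i^c, \sigma_i^r, \sigma_i^o$ sequence empties $S_i$ since reopening $C_i$ reinitializes its artifact relation to $\emptyset$. Hence $(q_i^\rho, \bar{z}_i^\rho)_{i \geq 0}$ is a valid infinite RB-VASS run. Since $\mathbf{G} \mathbf{F} \sigma^{q_f}$ holds, $q_f$ occurs infinitely often, and Dickson's lemma applied to the subsequence of configurations at $q_f$ yields indices $m < n$ with $q_m^\rho = q_n^\rho = q_f$ and $\bar{z}_m^\rho \leq \bar{z}_n^\rho$, the desired witness.

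The hard part will be verifying the coordinatewise $|S_i|$-evolution precisely: within each simulation block the services touching $S_i$ are causally ordered by the $\mathbf{X}$-chain in $\varphi(\alpha)$, so the net effect on $|S_i|$ is uniquely determined by the lossiness choices permitted by the insertion and retrieval semantics of artifact relations, and in turn matches exactly an RB-VASS action combined with a $\{0,1\}^d$-bounded lossiness vector.
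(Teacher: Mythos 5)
Your proposal is correct and takes essentially the approach the paper intends: the paper states this lemma without a detailed proof, relying on exactly your correspondence — pump the segment from $m$ to $n$ using $\bar{z}_m \leq \bar{z}_n$ and monotonicity of RB-VASS steps, simulate actions by the service blocks pinned down by $\varphi(\alpha)$ with $|S_i|$ tracking the counters, and in the converse direction read off counters as set sizes and extract the self-covering witness at $q_f$ via Dickson's lemma. One small repair: in the forward simulation a lossy increment at counter value $0$ cannot be realized by a collision (the set $S_i$ is empty), so insert a fresh ID instead and maintain the invariant $|S_i| \geq \bar{z}(i)$; this suffices because $\Phi$ constrains only the service sequence, and a larger set only makes the required retrievals (one or two applications of $\sigma_i^-$) easier.
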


\subsection{Expressiveness of HLTL-FO}
\label{sec:complete}

We next show that $\hltlfo$ expresses, in a reasonable sense, all interleaving-invariant $\ltlfo$ properties.
We consider a notion of interleaving-invariance of $\ltlfo$ formulas based on their
propositional structure, rather than the specifics of the propositions' interpretation (which may lead to ``accidental'' invariance).
In view of Lemma \ref{lem:noglobal}, we consider only formulas with no global variables or set atoms.
We first recall the logic $\ltlfo$, slightly adapted to our context.
Let $\Gamma = \langle \cala, \Sigma, \Pi \rangle$ be a HAS where
$\cala = \langle \calh, \mathcal{DB} \rangle$.
An $\ltlfo$ formula $\varphi_f$ over $\Gamma$ consists of an LTL formula $\varphi$ with propositions $P \cup \Sigma$
together with a mapping $f$ associating to each $p \in P$ a condition over $\bar x^T$ for some $T\ \in \calt$
(and we say that $f(p)$ is over $T$) .
Satisfaction of $\varphi_f$ on
a global run $\rho = \{(I_i,\sigma_i)\}_{i \geq 0}$ of $\Gamma$ on database $D$,
where $I_i = (\nu_i, stg_i, D, S_i)$,
is defined as usual, modulo the following:
\begin{itemize}\itemsep=0pt\parskip=0pt
\item $f(p)$ over $T$ holds in $(I_i,\sigma_i)$ iff $stg_i(T) = \aactive$ and the condition
$f(p)$ on $\nu_i(\bar x^T)$ holds;
\item proposition $\sigma$ in $\Sigma$ holds in $(I_i, \sigma_i)$ if $\sigma = \sigma_i$.
\end{itemize}
Thus, the information about $(I_i,\sigma_i)$ relevant to satisfaction of $\varphi_f$ consists of
$\sigma_i$, the stage of each task (active or not), and the truth values in $I_i$ of $f(p)$ for $p \in P$.

We now make more precise the notion of (propositional) invariance under interleavings.
Consider an $\ltlfo$ formula $\varphi_f$ over $\Gamma$. Invariance under interleavings is a property of
the propositional formula $\varphi$ (so independent on the interpretation of propositions provided by $f$). Let $P \cup \Sigma$
be the set of propositions of $\varphi$ and let $P_T$
denote the subset of $P$ for which $f(p)$ is a condition over $\bar x^T$. Thus, $\{P_T \mid T \in \calt\}$
is a partition of $P$.
We define the set $\call(\Gamma)$ of $\omega$-words associated to $\Gamma$, on which $\varphi$ operates.
The alphabet, denoted ${\bf A}(\Gamma)$, consists of all triples $(\kappa, stg, \sigma)$ where $\sigma \in \Sigma$,
$\kappa$ is a truth assignment to the propositions in $P$, and $stg$ is a mapping associating to each $T \in \calt$
its stage ($\aactive, \ainit,$ or $\aclosed$).
An $\omega$-word $\{(\kappa_i, stg_i,  \sigma_i)\}_{i \geq 0}$ over ${\bf A}(\Gamma)$ is in $\call(\Gamma)$
if the following hold:
\begin{enumerate}\itemsep=0pt\parskip=0pt
\item for each $i > 0$, if $\sigma_i \in \Sigma^\delta_T$, then
$\kappa_i$ and $\kappa_{i-1}$ agree on all $P_{\bar T}$ where $\bar T \neq T$;
\item the sequence of calls, returns, and internal services obeys the conditions on service sequences in global runs of $\Gamma$;
\item for each $i > 0$ and $T \in \calt$, $stg_i(T)$ is the stage of $T$ as determined by the sequence
of calls and returns in $\{\sigma_j\}_{j < i}$.
\end{enumerate}
The formal definition of (2) and (3) mimic closely the analogous definition of global runs of
HAS's  (omitted).
Consider an $\omega$-word $u = \{(\kappa_i, stg_i, \sigma_i)\}_{i \geq 0}$ in $\call(\Gamma)$.
We define the partial order $\preceq_u$ on $\{ i \mid i \geq 0\}$ as the reflexive-transitive closure of the
relation consisting of all pairs $(i,j)$ such that $i < j$ and for some $T$,
$\sigma_i, \sigma_j \in \Sigma^{obs}_T$.
Observe that $0$ is always the minimum element in $\preceq_u$.
A linearization of $\preceq_u$ is a total order on $\{ i \mid i \geq 0\}$ containing $\preceq_u$.
One can represent a linearization of $\preceq_u$ as a sequence
$\{i_j \mid j \geq 0\}$ such that $i_n \preceq_u i_m$ implies that $n \leq m$.
For each such linearization $\alpha$ we define the $\omega$-word
$u_\alpha  = \{(\bar \kappa_j, \overline{\emph{stg}}_j, \sigma_{i_j})\}_{ j \geq 0}$
in $\call(\Gamma)$ as follows. The stage function is the one determined by the sequence of services.
The functions $\bar \kappa_j$ are defined by induction as follows:
\begin{itemize}\itemsep=0pt\parskip=0pt
\item $\bar \kappa_{0} = \kappa_0$;
\item if $j >0$ and $\sigma_{i_j} \in \Sigma^\delta_T$
then $\bar \kappa_{j} = \bar \kappa_{j-1}[P_T \mapsto \kappa_{i_j}(P_T)]$
\end{itemize}
Intuitively, $u_\alpha$ is obtained from $u$ by commuting actions that are incomparable with respect to $\preceq_u$,
yielding the linearization $\alpha$.
We note that the relation $\preceq_u$ is the analog to our setting of Mazurkiewicz
traces, used in concurrent systems to capture dependencies among process actions \cite{mazur,mazurbook,gastin06}.

\vspace{-2mm}
\begin{definition}
An $\ltlfo$ formula $\varphi_f$ over $\Gamma$ is propositionally invariant with respect to interleavings if for every
$u \in \call(\Gamma)$ and linearization $\alpha$ of $\preceq_u$ ,
$~~u \models \varphi~~$  iff  $~~u_\alpha \models \varphi$.
\end{definition}

We can show the following. 

\begin{theorem} \label{thm:complete}
$\hltlfo$ expresses precisely the $\ltlfo$ \\ properties of HAS's that are propositionally invariant with respect to interleavings.
\end{theorem}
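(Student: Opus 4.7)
The plan is to prove the two inclusions separately.

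\emph{Soundness direction} (every $\hltlfo$ property is invariant). Given an $\hltlfo$ formula $\forall \bar y [\varphi_f]_{T_1}$, I would first translate it into an equivalent $\ltlfo$ formula over $\Gamma$ by recursively unfolding each $[\psi]_{T_c}$ construct: on a global run, $[\psi]_{T_c}$ is true at position $i$ iff $\sigma_i = \sigma^o_{T_c}$ and the subsequence of events from position $i$ belonging to $\Sigma^{obs}_{T_c}$ and its descendants, together with the corresponding local state, satisfies (the unfolding of) $\psi$. Propositional invariance then follows by induction on the depth of the $\hltlfo$ formula. The base case is immediate since a condition $f(p)$ over $\bar x^T$ only reads state modified by services in $\Sigma^\delta_T$, and these are totally ordered by $\preceq_u$. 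For the inductive step, commuting two events $\sigma_i, \sigma_j$ incomparable under $\preceq_u$ implies they are observed by disjoint task subtrees, so neither the truth of any atomic condition nor of any recursive $[\psi]_{T_c}$ sub-evaluation is affected.

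\emph{Completeness direction} (every invariant $\ltlfo$ formula is expressible in $\hltlfo$). Given an interleaving-invariant $\ltlfo$ formula $\varphi_f$ over $\Gamma$, the plan is to view the words in $\call(\Gamma)$ as Mazurkiewicz traces over the concurrent alphabet $({\bf A}(\Gamma), D)$, where two letters are declared dependent iff their services both lie in $\Sigma^{obs}_T$ for some common task $T$. Under this dependency, the partial order $\preceq_u$ defined in the paper coincides exactly with the Mazurkiewicz partial order on traces, and the propositional invariance hypothesis says precisely that the propositional LTL skeleton $\varphi$ is \emph{trace-closed}. I would then invoke the expressive completeness result of Gastin and Kuske \cite{gastin04,gastin06} for trace-closed LTL, which yields an equivalent formula in a local temporal logic whose formulas are anchored at individual process (i.e., task) events and whose modalities step only along the dependency structure.

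The translation from the local temporal logic back into $\hltlfo$ exploits two special features of the HAS dependency alphabet: (i) the dependency graph on tasks is a \emph{tree} (two tasks share observable events only when one is the parent of the other, via $\sigma^o_{T_c}$ and $\sigma^c_{T_c}$), and (ii) information flows across tasks only at these synchronization points. Local operators referring to task $T$ become classical LTL operators inside a $[\cdot]_T$ block; each dependency step from parent $T$ to child $T_c$, occurring at a synchronization letter, becomes a $[\psi]_{T_c}$ subformula evaluated at $\sigma^o_{T_c}$; the conditions $f(p)$ remain attached as quantifier-free formulas over the $\bar x^T$ relevant to the enclosing task; and any shared first-order variables are lifted to the outer $\forall \bar y$ quantifier.

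The main obstacle will be the last translation step: the Gastin--Kuske local logic is designed for arbitrary trace alphabets and its formulas can in principle reference arbitrary neighbors in the dependency graph, whereas $\hltlfo$ syntax permits only downward nesting from a task into its children. The tree shape of the HAS dependency structure is exactly what is needed to close the gap, since every dependency edge lies between a parent and a child, so the translation never requires an escape into a sibling subtree. A careful bookkeeping argument, routing each cross-task reference along the unique parent-to-child path in $\calh$, completes the construction, and the resulting $\hltlfo$ formula is equivalent to $\varphi_f$ on every full tree of local runs of $\Gamma$, hence on every global run.
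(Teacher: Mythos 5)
Your plan follows the same route as the paper: handle soundness directly, and for completeness view $\call(\Gamma)$ as Mazurkiewicz traces, invoke the Gastin et al.\ expressive-completeness results \cite{gastin04,gastin06} to pass to a local/trace temporal logic, and then translate that logic into \hltlfo using the hierarchy. The soundness direction and the identification of the trace framework are fine. But the two places where the paper does real work are exactly the places your sketch leaves open, and one of them rests on a claim that is not correct as stated.

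First, before any trace-completeness theorem applies, the \ltlfo formula must be \emph{localized}: its letters $(\kappa,stg,\sigma)$ carry stage information and truth values of propositions of \emph{all} tasks, whereas the trace framework needs each event to carry only propositions of the tasks observing it. The paper does this with a separate lemma (translation into ``local-LTL'' over $\calh(\Gamma)$), proved via Kamp's theorem (FO $=$ LTL on $\omega$-words); your proposal silently assumes the formula is already in such a form. Second, and more seriously, the final translation step is not the routine bookkeeping you describe. The local logic obtained from \cite{gastin04,gastin06} has modalities ${\bf X}_T\varphi$ and $\varphi\,{\bf U}_T\,\psi$ that jump to the \emph{next $T$-event in the global partial order} $\preceq_u$. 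Since a task $T$ is opened many times (once per segment of each of the many runs of its parent, which is itself re-invoked), the next $T$-event seen from a subformula anchored at some $[\psi]_{\bar T}$ typically lies in a \emph{later invocation} of $T$, i.e., in a different subtree of the tree of local runs; it is not reachable by the unique parent-to-child path in $\calh$, so your argument that ``the translation never requires an escape into a sibling subtree'' hides precisely the difficulty rather than resolving it. Because \hltlfo only nests downward from a run into the runs of children called within it, simulating ${\bf X}_T$ and ${\bf U}_T$ requires, for each ancestor $T_i$ on the path from the root to $T$, auxiliary formulas asserting that all intermediate calls yield runs with no further call leading to $T$ (or whose $T$-runs all satisfy ${\bf G}\varphi_1$, in the until case) and that the first subsequent call leading to $T$ satisfies the target formula; the paper constructs these (its $\alpha_i,\beta_i$), performs a four-way case analysis on the relative position of $T$ and the anchor task (equal, ancestor, descendant, incomparable), handles propositions $(p,\sigma^o_{T'})$ and $(p,\sigma^c_{T'})$ separately (the latter needing a past operator eliminated again via Kamp), and terminates by a rank argument on the remaining trace modalities. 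Without supplying this construction, or an equivalent one, the completeness inclusion is not established.
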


We next sketch the proof.
For conciseness, we refer throughout the proof to {\em propositionally} interleaving-invariant $\ltlfo$ simply as interleaving-invariant $\ltlfo$.

Showing that $\hltlfo$ expresses only interleaving-invariant $\ltlfo$ properties is straightforward.
The converse however is non-trivial.
We begin by showing a normal form for LTL formulas, which facilitates the application to our context of results
from \cite{gastin04,gastin06} on temporal logics for concurrent processes.
Consider the alphabet ${\bf H}(\Gamma) = \{(\kappa, \sigma) \mid (\kappa, stg, \sigma) \in {\bf A}(\Gamma)\}$.
Thus, ${\bf H}(\Gamma)$ is ${\bf A}(\Gamma)$ with the stage information omitted.
Let $\calh(\Gamma) = h(\call(\Gamma))$
where $h((\kappa,stg,\sigma)) = (\kappa, \sigma)$. 
We define local-LTL to be LTL 
using the set of propositions P$\Sigma = \{(p,\sigma) \mid p \in P_T, \sigma \in \Sigma_T^{obs}\}$.
A proposition
$(p,\sigma)$ holds in $(\bar \kappa,\bar \sigma)$ iff $\bar \sigma = \sigma$ and $\bar \kappa(p)$ is true.    
The definition of interleaving-invariant local-LTL formula is the same as for LTL.

\begin{lemma} \label{lem:local-ltl}
For each interleaving-invariant LTL formula $\varphi$ over $\call(\Gamma)$ one can construct an interleaving-invariant local-LTL formula $\bar \varphi$ 
over $\calh(\Gamma)$
such that for every $u \in \call(\Gamma)$, $u \models \varphi$ iff $h(u) \models \bar \varphi$
where $h((\kappa,stg,\sigma)) = (\kappa, \sigma)$. 
\end{lemma}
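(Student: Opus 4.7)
The plan is to construct $\bar\varphi$ by a two-stage syntactic rewriting of $\varphi$: first discharge the stage component as redundant, then move each atomic proposition onto a ``task-visible'' service so that it becomes a local-LTL atom $(p,\sigma)$.

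Stage elimination is immediate: condition (3) in the definition of $\call(\Gamma)$ makes $stg_i$ a deterministic function of $\{\sigma_j\}_{j < i}$, so this coordinate is redundant in every word of $\call(\Gamma)$ and its image under $h$. Since $\varphi$ has no atoms referring explicitly to stages, this step requires no change to $\varphi$; we simply reinterpret it over $\calh(\Gamma)$.

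For the main step, we translate atomic propositions. Service propositions $\sigma \in \Sigma_T^{obs}$ are expressible as $(\top_T, \sigma)$ after augmenting $P_T$ with a distinguished always-true symbol $\top_T$ (this augmentation affects only $\bar\varphi$'s vocabulary, not $\varphi$'s semantics). For a data proposition $p \in P_T$, condition (1) ensures $\kappa_i(p) = \kappa_{i-1}(p)$ whenever $\sigma_i \notin \Sigma_T^{\delta}$; hence $p$ is constant on every maximal block of positions labeled by services outside $\Sigma_T^{\delta}$. We rewrite each occurrence of $p$ so that its evaluation is shifted to the next $\Sigma_T^{obs}$-event, at which point its value still equals $\kappa_i(p)$ and can be read off by a positive local atom $(p,\sigma)$. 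The recursion through Boolean and temporal cases is routine, using $\mathbf{X}$ and $\mathbf{U}$ together with the indicator formulas $\bigvee_{\sigma \in \Sigma_T^{obs}} (\top_T, \sigma)$ to delimit ``next task-$T$ event''.

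The main obstacle is justifying that this forward shift of $p$'s evaluation preserves the semantics of the enclosing temporal operators; naively this would require past-LTL. The rescue is the hypothesis of interleaving invariance: given any $u \in \call(\Gamma)$ and position $i$, we can choose, by commuting $\preceq_u$-incomparable events, a linearization $u_\alpha$ in which the next position is the intended $\Sigma_T^{obs}$-witness, and invariance guarantees $u \models \varphi \iff u_\alpha \models \varphi$. This is exactly where we invoke the Mazurkiewicz-trace completeness results of \cite{gastin04,gastin06}, with the task partition $\{\Sigma_T^{obs}\}_{T \in \calh}$ providing the independence relation on actions. The resulting $\bar\varphi$ has every atom tagged with a task-$T$ service, which is the standard local form from which interleaving invariance of $\bar\varphi$ follows immediately; both directions of $u \models \varphi \iff h(u) \models \bar\varphi$ are then obtained by straightforward induction on the structure of $\varphi$ modulo the atom translation.
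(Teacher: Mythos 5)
There is a genuine gap at the heart of your translation. Your key step reads the value of $p \in P_T$ at the \emph{next} $\Sigma_T^{obs}$-event and claims ``its value still equals $\kappa_i(p)$''. This is false in general: condition (1) on $\call(\Gamma)$ only guarantees that $p$ is frozen at positions whose service lies \emph{outside} $\Sigma_T^\delta$, so the next $T$-observable event may itself be a $\Sigma_T^\delta$-service that rewrites $p$, and the value read there is the \emph{new} value, not $\kappa_i(p)$ (moreover, after the last $T$-event of the word there is no forward witness at all, a case you never treat). The correct reference point is \emph{backward}: $\kappa_i(p)$ equals the value set at the most recent position $j \le i$ with $\sigma_j \in \Sigma_T^\delta$, or the initial assignment if no such $j$ exists. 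This is a past-looking definition, and it is exactly the obstruction you acknowledge; but your rescue via interleaving invariance does not work as a proof. Invariance lets you permute $\preceq_u$-incomparable events of a whole word and preserves the truth of $\varphi$ on that word; it does not license choosing a different linearization per position or per occurrence of a proposition while evaluating a single fixed formula $\bar\varphi$ on the single word $h(u)$, and commuting the next $T$-event to before position $i$ changes the very truth value of $p$ at that position that you are trying to preserve. The completeness results of \cite{gastin04,gastin06} do not supply the missing pointwise equivalence either; in the paper they are invoked only for the subsequent step (from local-LTL/LTL$(\preceq)$ to HLTL), not for this alphabet translation, and your appeal to them here essentially assumes the hard content rather than proving it.

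The paper's own proof sidesteps the past-reference problem by a different, simpler route: translate $\varphi$ into an FO sentence over words in $\calh(\Gamma)$ using only the propositions $(p,\sigma)$, where defining the stage of each task and the current value of each $p \in P_T$ (``the value fixed at the latest $\Sigma_T^\delta$-position, or the initial one'') is straightforward because FO can quantify over earlier positions; then apply Kamp's theorem to convert this FO sentence back into an LTL formula over the same propositions, which is by construction in local-LTL, and inherits interleaving invariance from $\varphi$. If you want to salvage a direct syntactic construction, you would need either this FO detour, or past temporal operators plus a separation/elimination argument --- a purely future-directed shift of atoms, even for interleaving-invariant $\varphi$, does not have the semantics you claim. (Your use of a fresh always-true proposition $\top_T$ to express ``the current service is $\sigma$'' also changes the proposition sets $P_T$, which the statement of the lemma does not permit, though this is a minor issue compared to the forward-shift error.)
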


\begin{proof}
We use the equivalence of FO and LTL over $\omega$-words \cite{kamp}. It is easy to see that each LTL formula $\varphi$ 
over $\call(\Gamma)$ can be translated into an FO formula $\psi(\varphi)$ over 
$\calh(\Gamma)$ using only propositions in P$\Sigma$, such that for every $u \in \call(\Gamma)$,
$u \models \varphi$ iff $h(u) \models \psi(\varphi)$. Indeed, it is straightforward to define by FO means 
the stage of each transaction in a given configuration, as well as each proposition in $P \cup \Sigma$ 
in terms of propositions in P$\Sigma$, on words in $\calh(\Gamma)$.  
One can then construct from the FO sentence $\psi(\varphi)$ an LTL formula $\bar \varphi$ 
equivalent to it over words in $\calh(\Gamma)$, using the same set of propositions P$\Sigma$.
The resulting LTL formula is thus in local-LTL, and it is easily seen that it is interleaving-invariant.
\end{proof}

We use a propositional variant HLTL of $\hltlfo$, 
defined over $\omega$-words in $\calh(\Gamma)$
similarly to $\hltlfo$.
More precisely, LTL formulas applying to transaction $T$ use propositions
in $P_T \cup \Sigma^{obs}_T$ and expressions $[\psi]_{T_c}$ where $T_c$ is a child of $T$ and $\psi$ is an HLTL formula applying to 
$T_c$. 

We show the following key fact.

\begin{lemma} \label{lem:hltl}
For each interleaving-invariant local-LTL formula over $\calh(\Gamma)$ there exists
an equivalent HLTL formula over $\calh(\Gamma)$. 
\end{lemma}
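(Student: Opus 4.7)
The plan is to exploit the close correspondence between the interleaving structure of $\calh(\Gamma)$ and Mazurkiewicz traces, and then to invoke expressive-completeness theorems for local temporal logics over traces.

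\textbf{Step 1 (traces).} I would equip the alphabet ${\bf H}(\Gamma)$ with the independence relation $\calI$ defined by: $(\kappa,\sigma)$ and $(\kappa',\sigma')$ are independent iff there is no task $T$ with $\{\sigma,\sigma'\} \subseteq \Sigma_T^{obs}$. A direct check shows that the partial order $\preceq_u$ defined in the excerpt coincides with the trace order induced by $\calI$ on $u$: two positions are comparable exactly when their letters are dependent, and linearizations of $\preceq_u$ are exactly the linearizations of the corresponding Mazurkiewicz trace. Consequently, a propositionally interleaving-invariant local-LTL formula $\varphi$ defines a trace-closed $\omega$-language over ${\bf H}(\Gamma)$.

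\textbf{Step 2 (local temporal logic).} I would then invoke the completeness results of \cite{gastin04,gastin06}, which supply a local temporal logic $\call$ that is expressively complete for trace-closed FO-definable $\omega$-languages over distributed alphabets. Since every LTL formula is FO-definable on words and trace-closure preserves FO-definability on traces, the trace closure of $\varphi$'s language is captured by some $\call$-formula $\varphi'$. In $\call$ every subformula is anchored at a single process (here, task $T$) and its temporal operators step only over positions carrying a letter in $\Sigma_T^{obs}$.

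\textbf{Step 3 (translation to HLTL).} I would finally translate $\varphi'$ into HLTL by induction on the task tree $\calh$, bottom-up. A subformula anchored at a leaf $T$ already uses only letters of $\Sigma_T^{obs}$ and no sub-tasks, so it is already in the shape of an HLTL body. For an internal task $T$, each maximal sub-expression of $\varphi'$ anchored at a child $T_c$ is replaced by a wrapper $[\psi_c]_{T_c}$ whose argument is the inductively-obtained HLTL translation. Correctness follows because the HLTL semantics of $[\psi_c]_{T_c}$ evaluates $\psi_c$ on the local run of $T_c$ begun at $\sigma^o_{T_c}$, and this local run is exactly the $T_c$-projection of the trace between the paired $\sigma^o_{T_c}$ and $\sigma^c_{T_c}$ --- the same object on which the $\call$-subformula is interpreted.

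\textbf{Main obstacle.} The delicate step is Step 3. Local temporal logics over general Mazurkiewicz traces may refer to another process at an arbitrary future position (``at the next action of $T_c$, $\psi$ holds''), whereas HLTL may refer to a child task only at the exact configuration where $\sigma^o_{T_c}$ is fired and only through the $[\cdot]_{T_c}$ construct. Bridging this gap exploits the tree-shaped dependency graph induced by $\calh$ together with the one-call-per-segment restriction of HAS (Section \ref{sec:framework}): every $T_c$-labeled position of the trace is dominated in $\preceq_u$ by a unique $\sigma^o_{T_c}$ of its parent $T$, so references to $T_c$ can be rewritten, via trace-preserving equivalences, into references made at that call position. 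Making this rewriting effective while respecting the semantics on finite, blocking, and infinite local runs (the finite-word variant of LTL from Appendix \ref{app:ltl}) and while preserving the alternation between parent and child layers of the hierarchy is the technical heart of the argument.
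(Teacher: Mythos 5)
Your Steps 1 and 2 coincide with the paper's strategy: the order $\preceq_u$ is treated as a Mazurkiewicz trace order, and the completeness results of \cite{gastin04,gastin06} are invoked to capture the interleaving-invariant local-LTL property by a formula of the local trace logic (LTL$(\preceq)$, with operators $\mathbf{X}_T$ and $\mathbf{U}_T$). The problem is Step 3, which is precisely where the paper's entire technical effort lies, and your sketch of it would not work as stated. In LTL$(\preceq)$ a subformula anchored at task $T$ is interpreted on the full future $T$-projection of the trace, i.e.\ on the \emph{concatenation of all future local runs} of $T$, whereas $[\psi]_{T}$ in HLTL is evaluated on a single local run begun at one occurrence of $\sigma^o_{T}$. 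These two objects do not coincide, so your claim that ``correctness follows because'' the argument of $[\psi_c]_{T_c}$ is exactly the $T_c$-projection between a paired $\sigma^o_{T_c}$ and $\sigma^c_{T_c}$ is false in general; indeed you then concede this mismatch yourself as the ``main obstacle'' without resolving it. Moreover, an $\mathbf{X}_T$ or $\mathbf{U}_T$ operator can occur inside a subformula $[\psi]_{\bar T}$ where $\bar T$ is an ancestor or descendant of $T$, or incomparable to it, so a bottom-up replacement of ``maximal sub-expressions anchored at a child'' by $[\cdot]_{T_c}$ wrappers does not even cover the syntactic situations that arise.

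The paper closes this gap by working in the mixed logic HLTL${}+{}$LTL$(\preceq)$ and eliminating the trace operators by induction on a rank measure, with a case analysis on the relative position of $T$ and the task $\bar T$ hosting the operator ($T=\bar T$, ancestor, descendant, incomparable). The key device is a family of formulas $\alpha_i$ and $\beta_i(\cdot)$, threaded along the path from the root to $T$ in $\calh$, expressing ``no further call leading to $T$ occurs in the remainder of the current run of $T_i$'' and ``such a call exists and the first resulting run of $T$ satisfies a given property''; a disjunction over the ancestor level at which the next call to $T$ occurs, together with guessing the truth value of the operator and enforcing the guess, reduces the rank at each step, and a final phase eliminates the propositions $(p,\sigma)$ (using a past operator for the $\sigma^c_T$ case). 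None of this machinery, or a substitute for it, appears in your proposal, so the translation that the lemma actually asserts is left unproved.
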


\begin{proof}
To show completeness of HLTL, we use a logic shown in \cite{gastin04,gastin06}
to be complete for expressing LTL properties invariant with respect to valid interleavings
of actions of concurrent processes (or equivalently, well-defined on Mazur-kievicz traces).
The logic, adapted to our framework, operates on partial orders $\preceq_u$ 
of words $u \in \calh(\Gamma)$, and is denoted LTL$(\preceq)$. 
For $u = \{(\kappa_i, \sigma_i) \mid i \geq 0\}$, we define the projection of $u$ on $T$ as the 
subsequence $\pi_T(u) = \{(\kappa_{i_j}|_{P_T}, \sigma_{i_j})\}_{j \geq 0}$ where $\{\sigma_{i_j} \mid j \geq 0\}$ 
is the subsequence of $\{\sigma_i \mid i \geq 0\}$ retaining all services in $\Sigma^{obs}_T$.
LTL$(\preceq)$ uses the set of propositions P$\Sigma$ and
the following temporal operators on $\preceq_u$:
\begin{itemize}
\item {\bf X}$_T \varphi$, which holds in $(\kappa_i,\sigma_i)$ if 
$\pi_T(v) \neq \epsilon$ 
for $v = \{(\kappa_j, \sigma_j) \mid j \geq m\}$, where $m$ is the minimum index such that $i \prec_u m$,
and $\varphi$ holds on  $\pi_T(v)$;
\item $\varphi$ {\bf U}$_T$ $\psi$, which holds in $(\kappa_i,\sigma_i)$ if 
$\pi_T(v) \neq \epsilon$ 
for $v = \{(\kappa_j, \sigma_j) \mid j \geq i\}$, and $\varphi$ {\bf U} $\psi$ holds
on $\pi_T(v)$.
\end{itemize}
From Theorem 18 in \cite{gastin04} and Proposition 2 and Corollary 26 in \cite{gastin06} it follows that 
LTL$(\preceq)$ expresses all local-LTL properties over $\calh(\Gamma)$ invariant with respect to interleavings. 

We next show that HLTL can simulate LTL$(\preceq)$.
To this end, we consider an extension of HLTL in which LTL$(\preceq)$ formulas may be used in addition to
propositions in $P_T \cup \Sigma^{obs}_T$ in every formula applying to transaction $T$. 
We denote the extension by HLTL+LTL$(\preceq)$. Note that for each formula $\xi$ in LTL$(\preceq)$, $[\xi]_{T_1}$ is 
an HLTL+LTL$(\preceq)$ formula. The proof consists in showing that the  LTL$(\preceq)$ formulas can be
eliminated from HLTL+LTL$(\preceq)$ formulas. This is done by recursively reducing the depth of nesting
of $X_T$ and $U_T$ operators, and finally eliminating propositions. 
We define the {\em rank} of an LTL$(\preceq)$ formula to be the maximum number of $X_T$ and $U_T$ operators 
along a path in its syntax tree. For a formula $\xi$ in HLTL+LTL$(\preceq)$, we define $r(\xi) = (n,m)$ 
where $n$ is the maximum rank of an LTL$(\preceq)$ formula occurring in $\xi$, and $m$ is the number of
such formulas with rank $n$. The pairs $(n,m)$ are ordered lexicographically.
 
Let $[\xi]_{T_1}$ be an HLTL+LTL$(\preceq)$ formula. We associate to $[\xi]_{T_1}$ the tree $\T(\xi)$ whose nodes
are all occurrences of subformulas of the form $[\psi]_T$, with an edge 
from $[\psi_i]_{T_i}$ to 
$[\psi_j]_{T_j}$ if the latter occurs in $\psi_i$ and $T_j$ is a child of $T_i$ in $\calh$.

Consider an  HLTL+LTL$(\preceq)$ formula $[\xi]_{T_1}$ such that $r(\xi) \geq (1,1)$. 
Suppose $\xi$ has a subformula ${\bf X}_T \varphi$ in LTL$(\preceq)$ of maximum rank.
Pick one such occurrence and let $\bar T$ be the minimum task (wrt $\calh$) such that ${\bf X}_T \varphi$ occurs in $[\psi]_{\bar T}$.
We construct an HLTL+LTL$(\preceq)$ formula $\bar \xi$ such that $r(\bar \xi) < r(\xi)$, essentially by eliminating
{\bf X}$_T$.  We consider 4 cases: $T = \bar T$, $T$ is a descendant or ancestor of $\bar T$, or neither.

Suppose first that $T = \bar T$. Consider an occurrence of ${\bf X}_T \varphi$. Intuitively, there are two cases: ${\bf X}_T \varphi$ is
evaluated inside the run of $T$ corresponding to  $[\psi]_T$, or at the last configuration. 
In the first case ($\neg \sigma^c_T$ holds),  
${\bf X}_T \varphi$ is equivalent to ${\bf X} \varphi$.
In the second case ($\sigma^c_T$ holds), ${\bf X}_T \varphi$ holds iff $\varphi$ holds at the next call to $T$. Thus, 
$\xi$ is equivalent to $\xi_1 \vee \xi_2$, where: 
\begin{enumerate}
\item $\xi_1$ says that $\varphi$ does not hold at the next call to $T$ (or no such call exists) and 
${\bf X}_T \varphi$ is replaced in $\psi$
by $\neg \sigma^c_T \wedge {\bf X} \varphi$ 
\item $\xi_2$ says that $\varphi$ holds at the next call to $T$ (which exists) and 
${\bf X}_T \varphi$ is replaced in $\psi$
by $\neg \sigma^c_T \rightarrow {\bf X} \varphi$.
\end{enumerate} 
We next describe how $\xi_1$ states that $\varphi$ does not hold at the next call to $T$
($\xi_2$ is similar). We need to state that either 
there is no future call to $T$, or such a call exists and $\neg \varphi$ holds at the first such call. 
Consider the path from $T_1$ to $T$ in $\calh$. Assume for simplicity that the path is
$T_1, T_2, \ldots, T_k$ where $T_k = T$. For each $i$, $1 \leq i < k$, we define inductively (from $k-1$ to $1$)
formulas $\alpha_i, \beta_i(\neg \varphi)$
such that $\alpha_i$ says that there is no call leading to $T$ in the remainder of the current subrun of $T_i$,
and $\beta_i(\neg \varphi)$ says that such a call exists and the first call leads to a subrun of $T$ satisfying $\neg \varphi$.  
First, $\alpha_{k-1} = {\bf G} (\neg \sigma^o_{T_k})$
and $\beta_{k-1}(\neg \varphi) = \neg \sigma^o_{T_k} ~{\bf U} ~[\neg \varphi]_{T_k}$.
For $1 \leq i < k-1$, $\alpha_i = {\bf G} (\sigma^o_{T_{i+1}} \rightarrow [\alpha_{i+1}]_{T_{i+1}})$
and $\beta_i(\neg \varphi) = (\sigma^0_{T_{i+1}} \rightarrow [\alpha_{i+1}]_{T_{i+1}})~{\bf U}~[\beta_{i+1}(\neg \varphi)]_{T_{i+1}}$.
Now $\xi_1 = \xi_1^0 \vee \bigvee_{1 \leq j < k} \xi_1^j$ where $\xi_1^0$ states that there is no next call to $T$
and $\xi_1^j$ states that $T_j$ is the minimum task such that the next call to $T$ occurs during the {\em same} run of $T_j$
(and satisfies $\neg \varphi$).
More precisely, let $[\psi_1]_{T_1}, [\psi_2]_{T_2}, \ldots [\psi_k]_{T_k}$ be the path leading from $[\xi]_{T_1}$
to $[\psi]_{T}$ in $\T(\xi)$ (so $\psi_1 = \xi$ and $\psi_k = \psi$). Then $\xi_1^0$ is obtained by replacing each
$\psi_i$ by $\bar \psi_i$, $1 \leq i < k$, defined inductively as follows. First, $\bar \psi_{k-1}$ 
is obtained from $\psi_{k-1}$ by replacing 
$[\psi_k]_{T_k}$ with $[\psi_k]_{T_k} \wedge \alpha_{k-1}$. For $1 \leq i < k-1$,
$\bar \psi_i$ is obtained from $\psi_i$ by replacing $[\psi_{i+1}]_{T_{i+1}}$ with
$[\bar \psi_{i+1}]_{T_{i+1}} \wedge \alpha_i$. 
For $1 \leq j < k$, $\xi_1^j$ is obtained by replacing in $\psi_j$, $[\psi_{j+1}]_{T_{j+1}}$ with 
$[\bar \psi_{j+1}]_{T_{j+1}} \wedge \beta_j(\neg \varphi)$. 
It is clear that $\xi_1$ states the desired property. The formula $\xi_2$ is constructed similarly.
Note that $r(\xi_1 \vee \xi_2) < r(\xi)$.

Now suppose $T$ is an ancestor of $\bar T$.  We reduce this case to the previous ($T = \bar T$).
Let $T'$ be the child of $T$.  Suppose $[\psi_T]_T$ is the ancestor of 
$[\psi]_{\bar T}$ in $\T(\xi)$.
Then $\xi$ is equivalent to $\bar \xi = \xi_1 \vee \xi_2$ where:
\begin{enumerate}
\item $\xi_1$ says that $\varphi$ does not hold at the next action of $T$ wrt $\preceq$ (or no such next action exists) and
$\psi$ is replaced by $\psi({\bf X}_T \varphi \leftarrow \emph{false})$ ($\leftarrow$ denotes substitution)
\item $\xi_2$ says that $\varphi$ holds at the next action of $T$ wrt $\preceq$ and
$\psi$ is replaced by $\psi({\bf X}_T \varphi \leftarrow \emph{true})$
\end{enumerate}
To state that $\varphi$ does not hold at the next call to $T$ (or no such call exists) 
$\xi_1$ is further modified by replacing in $\psi_T$, $[\psi_{T'}]_{T'}$ with
$[\psi_{T'}]_{T'} \wedge ({\bf G}(\neg \sigma^c_{T'}) \vee (\neg \sigma^c_{T'} ~{\bf U}~ (\sigma^c_{T'} \wedge \neg {\bf X}_T \varphi))$. 
Smilarly, $\xi_2$ is further modified by replacing in $\psi_T$, $[\psi_{T'}]_{T'}$ with
$[\psi_{T'}]_{T'} \wedge (\neg \sigma^c_{T'} ~{\bf U}~ (\sigma^c_{T'} \wedge {\bf X}_T \varphi))$.
Note that there are now two occurrences of ${\bf X}_T \varphi$ in the modified $\psi_T$'s.
By applying twice the construction for the case $\bar T = T$ we obtain an equivalent $\bar \xi$ such that
$r(\bar \xi) < r(\xi)$. 

Next consider the case when $\bar T$ is an ancestor of $T$. Suppose the path from $T_1$ to $T$ in $\calh$
is $T_1, \ldots, T_i, \ldots T_k$ where $T_i = \bar T$ and $T_k = T$.
Consider the value of ${\bf X}_T \varphi$ in the run $\rho_\psi$ of $\bar T$ on which $\psi$ is evaluated.
Similarly to the case $T = \bar T$, there are two cases: 
$\varphi$ holds at the next invocation of $T$ following $\rho_\psi$, or it does not.
Thus, $\xi$ is equivalent to $\xi_1 \vee \xi_2$, where:
\begin{enumerate}
\item $\xi_1$ says that $\varphi$ does not hold at the next call to $T$ (or no such call exists) and
${\bf X}_T \varphi$ is replaced in $\psi$
by $\beta_i(\varphi)$, where $\beta_i(\varphi)$ says that there exists a future call leading to $T$ in the current run of $\bar T$, and 
the first such run of $T$ satisfies $\varphi$; $\beta_i(\varphi)$ is constructed as in the case $T = \bar T$.
\item $\xi_2$ says that $\varphi$ holds at the next call to $T$ following the current run of $\bar T$ and
${\bf X}_T \varphi$ is replaced in $\psi$
by $\alpha_i \vee \beta_i(\varphi)$ where $\alpha_i$, constructed as for the case $T = \bar T$, 
says that there is no future call leading to $T$ in the current run of $\bar T$.
\end{enumerate}
To say that $\varphi$ does not hold at the next call to $T$ following $\rho_\psi$ (or no such call exists), $\xi_1$ is modified analogously to 
the case $\bar T = T$, and similarly for $\xi_2$.

Finally suppose the least common ancestor of $\bar T$ and $T$ is $\hat T$ distinct from both.
Let $[\psi_{\hat T}]_{\hat T}$ be the ancestor of $[\psi]_{\bar T}$ in $\T(\xi)$.
Consider the value of ${\bf X}_T \varphi$ in the run of $\bar T$ on which $\psi$ is evaluated.
There are two cases: $\varphi$ holds at the next invocation of $T$ following the run of $\bar T$, or it does not.
Thus, $\xi$ is equivalent to $\xi_1 \vee \xi_2$, where:
\begin{enumerate}
\item $\xi_1$ says that $\varphi$ does not hold at the next call to $T$ (or no such call exists) and
$\psi$ is replaced by $\psi({\bf X}_T \varphi \leftarrow \emph{false})$
\item $\xi_2$ says that $\varphi$ holds at the next call to $T$ and
$\psi$ is replaced by $\psi({\bf X}_T \varphi \leftarrow \emph{true})$
\end{enumerate}
To say that $\varphi$ does not hold at the next call to $T$ (or no such call exists), $\xi_1$ is modified analogously to
the case $\bar T = T$, and similarly for $\xi_2$, taking into account the fact that the next call to $T$, if it exists,
must take place in the current run of $\hat T$ or of one of its ancestors.
This completes the simulation of {\bf X}$_T \varphi$.

Now  suppose $\xi$ has a subformula $(\varphi_1 ~{\bf U}_T~ \varphi_2)$ of maximum rank.
Pick one such occurrence and let $\bar T$ be the minimum task (wrt $\calh$) such that 
$(\varphi_1 ~{\bf U}_T~ \varphi_2)$  occurs in $[\psi]_{\bar T}$. There are several cases: $\bar T = T$, 
$\bar T$ is an ancestor or descendant of $T$, or neither. The simulation technique is similar to the above.
We outline the construction for the most interesting case when $\bar T = T$.

Consider the run of $T$ on which $[\psi]_T$ is evaluated. There are two cases: 
$(\dag)$ $(\varphi_1 ~{\bf U}_T~ \varphi_2)$ holds on the concatenation of the future runs of $T$, or $(\dag)$ does not hold.
Thus, $\xi$ is equivalent to $\xi_1 \vee \xi_2$ where:

\begin{enumerate}
\item $\xi_1$ says that $(\dag)$ holds and $\psi$ is modified by replacing the occurrence of  $(\varphi_1 ~{\bf U}_T~ \varphi_2)$
with ${\bf G} \varphi_1  \vee ~(\varphi_1 ~{\bf U}~ \varphi_2)$, and

\item $\xi_2$ says that $(\dag)$ does not hold and $\psi$ is modified by replacing the occurrence of  $(\varphi_1 ~{\bf U}_T~ \varphi_2)$
with $(\varphi_1 ~{\bf U}~ \varphi_2)$.
\end{enumerate}

We show how $\xi_1$ ensures $(\dag)$. Let $T_1, \ldots, T_k$ be the path from root to $T$ in $\calh$. 
For each $i$, $1 \leq i < k$, we define inductively (from $k-1$ to $1$)
formulas $\alpha_i, \beta_i$ as follows. Intuitively, $\alpha_i$ says that all future calls leading to $T$ 
from the current run of $T_i$ must result in runs satisfying {\bf G} $\varphi_1$:
\begin{itemize}
\item $\alpha_{k-1} = {\bf G} (\sigma^o_{T_k} \rightarrow [{\bf G}~\varphi_1]_{T_k})$,
\item for $1 \leq i < k-1$, $\alpha_i = {\bf G} (\sigma^o_{T_{i+1}} \rightarrow [\alpha_{i+1}]_{T_{i+1}})$
\end{itemize}
The formula $\beta_i$ says that there must be a future call to $T$ in the current run of $T_i$ satisfying $\varphi_1 {\bf U} \varphi_2$
and all prior calls result in runs satisfying ${\bf G} \varphi_1$:
\begin{itemize}
\item $\beta_{k-1} = (\sigma^o_{T_k} \rightarrow [{\bf G} \varphi_1]_{T_k}) ~{\bf U}~ [\varphi_1 {\bf U} \varphi_2]_{T_k}$,
\item for $1 \leq i < k-1$, $\beta_i = (\sigma^o_{T_{i+1}} \rightarrow [\alpha_{i+1}]_{T_{i+1}}) ~{\bf U}~ [\beta_{i+1}]_{T_{i+1}}$.
\end{itemize}
  
Now $\xi_1$ is $\bigvee_{1 \leq j < k} \xi_j$ where $\xi_j$ 
states that the concatenation of runs resulting from calls to $T$ within the run of $T_j$ 
on which $[\psi_j]_{T_j}$ is evaluated, satisfies $(\varphi_1 ~{\bf U}~ \varphi_2)$.   
More precisely, let $[\psi_1]_{T_1}, \ldots, [\psi_k]_{T_k}$ be the path from $[\xi]_{T_1}$ to $[\psi]_T$ in $\T(\xi)$
(so $\psi_1 = \xi$ and $\psi_k = \psi$). For each $j$ we define $\psi_i^j$, $1 \leq i < k$ as follows:
\begin{itemize}
\item if $j < k-1$, $\psi^j_{k-1}$ is obtained from $\psi_{k-1}$ by replacing $[\psi_k]_{T_k}$ with
$[\psi_k]_{T_k} \wedge \alpha_{k-1}$
\item if $j = k-1$, $\psi^j_{k-1}$ is obtained from $\psi_{k-1}$ by replacing $[\psi_k]_{T_k}$ with
$[\psi_k]_{T_k} \wedge \beta_{k-1}$
\item for $j < i < k-1$, $\psi^j_i$ is obtained from $\psi_i$ by replacing $[\psi^j_{i+1}]_{T_{i+1}}$
with  $[\psi^j_{i+1}]_{T_{i+1}} \wedge \alpha_i$
\item $\psi^j_j$ is obtained from $\psi_j$ by replacing  $[\psi^j_{j+1}]_{T_{j+1}}$ with \\
$[\psi^j_{j+1}]_{T_{j+1}} \wedge \beta_j$
\item for $1 \leq i <j$, $\psi^j_i$ is obtained from $\psi_i$ by replacing $[\psi_{i+1}]_{T_{i+1}}$
with $[\psi^j_{i+1}]_{T_{i+1}}$.
\end{itemize}
Finally, $\xi_j = [\psi^j_1]_{T_1}$. The formula $\xi_2$ is constructed along similar lines.
This completes the case $(\varphi_1~{\bf U}_T~\varphi_2)$.

Consider now the case when the formula of maximum rank is a proposition $(p,\sigma) \in $ P$\Sigma$,
where $p \in P_T$ and $\sigma \in \Sigma_T^{obs}$.
There are several cases:
\begin{itemize}
\item $(p,\sigma)$ occurs in $[\psi]_{T}$. Then $(p,\sigma)$ is replaced with $p \wedge \sigma$.
\item $(p,\sigma)$ occurs in $[\psi]_{\bar T}$ where ${\bar T} \neq T$ and $\bar T$ is not a child or parent of $T$. 
Then $(p,\sigma)$ is replaced with {\em false}.
\item $(p, \sigma)$ occurs in $[\psi_{T'}]_{T'}$ for some parent $T'$ of $T$.
If $\sigma \in \Sigma_T$ then $(p,\sigma)$ is replaced with {\em false} in $\psi_{T'}$.
If $\sigma = \sigma^o_{T}$ then $(p,\sigma)$ is replaced by $[p]_T$.
If $\sigma = \sigma^c_{T}$, we use the past temporal operator {\bf S} whose semantics is symmetric to {\bf U}. 
This can be simulated in LTL, again as a consequence of Kamp's Theorem \cite{kamp}.
The proposition $(p,\sigma)$ is replaced in $\psi_{T'}$ by 
$\sigma_{T}^c \wedge ((\neg \sigma^o_{T})~{\bf S}~ [{\bf F}(\sigma^c_{T} \wedge p)]_{T})$
\item $(p, \sigma)$ occurs in $[\psi_{T'}]_{T'}$ for some child $T'$ of $T$. Let $[\psi_T]_T$ be the parent of $[\psi_{T'}]_{T'}$
in $\emph{Tree}(\xi)$.
As above, if $\sigma \in \Sigma_T$ then $(p,\sigma)$ is replaced with {\em false} in $\psi_{T'}$.
If $\sigma = \sigma^o_{T'}$, there are two cases: $(1)$ $p$ holds in $T$ when the call to $T'$ generating the run on which $\psi_{T'}$ is evaluated is made, and $(2)$ the above is false. Thus, $\psi_T$ is replaced by $\psi^1_T \vee \psi^2_T$ where $\psi^1_T$ corresponds to $(1)$ and  
$\psi^2_T$ to $(2)$.  Specifically:
\begin{itemize}
\item $\psi^1_T$ is obtained from $\psi_T$ by replacing $[\psi_{T'}]_{T'}$ with $p \wedge [\psi^1_{T'}]_{T'}$,
where $\psi^1_{T'}$ is obtained from $\psi_{T'}$ by replacing $(p, \sigma^o_{T'})$ with $\sigma^o_{T'}$
\item $\psi^2_T$ is obtained from $\psi_T$ by replacing $[\psi_{T'}]_{T'}$ with $\neg p \wedge [\psi^2_{T'}]_{T'}$ 
where $\psi^2_{T'}$ is obtained from $\psi_{T'}$ by replacing $(p, \sigma^o_{T'})$ with \emph{false}.
\end{itemize}
Now suppose $\sigma = \sigma^c_{T'}$. Again, there are two cases:
$(1)$ if $T'$ returns then $p$ holds in the run of $T$ when $T'$ returns, and $(2)$ this is false.
The two cases are treated similarly to the above.
\end{itemize}
This concludes the proof of the lemma.
\end{proof}

Theorem \ref{thm:complete} now follows. Let $\varphi_f$ be an interleaving-invariant $\ltlfo$ formula over $\Gamma$. 
By Lemma \ref{lem:local-ltl}, we can assume that $\varphi$ is in local-LTL
and in particular uses the set of propositions P$\Sigma$. 
By Lemma \ref{lem:hltl}, there exists an HLTL formula $[\xi]_{T_1}$ equivalent to $\varphi$ over $\omega$-words in 
$\calh(\Gamma)$, using propositions
in $P \cup \Sigma$. 
Moreover, by construction, each sub-formula $[\psi]_T$ of $[\xi]_{T_1}$ uses only propositions
in $P_T \cup \Sigma^{obs}_T$. 
It is easily seen that formula obtained by replacing each $p$ with $f(p)$ is a well-formed $\hltlfo$ formula
equivalent to $\varphi_f$ on all runs of $\Gamma$.

\subsection{Simplifications} \label{app:simplification}

We first show that the global variables, as well as set atoms,
can be eliminated from $\hltlfo$ formulas.

\begin{lemma} \label{lem:noglobal} Let $\Gamma$ be a HAS and $\forall \bar y [\varphi_f(\bar y)]_{T_1}$ an $\hltlfo$
formula over $\Gamma$. One can construct in linear time a HAS $\bar \Gamma$ and an $\hltlfo$
formula $[\bar \varphi_f]_{\bar T_1}$, where $\bar \varphi_f$ contains no atoms $S^T(\bar z)$,
such that $\Gamma \models \forall \bar y [\varphi_f(\bar y)]_{T_1}$ iff
$\bar \Gamma \models [\bar \varphi_f]_{\bar T_1}$.
\end{lemma}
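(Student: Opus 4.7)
The plan is to construct $\bar\Gamma$ and $\bar\varphi_f$ by internalizing both the global variables $\bar y$ and the truth values of the set atoms $S^T(\bar z)$ into ordinary artifact variables, so that the outer quantifier $\forall\bar y$ is absorbed into the universal over runs implicit in $\Gamma \models \cdot$, while every set atom becomes an equality between artifact variables.

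First I would dispose of the global variables. Since $\Pi$ does not mention $\bar y$, we add $\bar y$ to $\bar x^{T_1}_{in}$ without touching $\Pi$; every valuation of $\bar y$ then occurs as the input of some run of $\bar\Gamma$, mirroring the outer $\forall\bar y$. To make each $y \in \bar y$ visible inside a descendant task $T$ whose $\hltlfo$ subformula mentions $y$, we add a fresh copy of $y$ to $\bar x^{T'}_{in}$ for every $T'$ on the path from $T_1$ to $T$, extending the input variable mapping of the opening service of each such task so that the copy is bound to the parent's copy. Input preservation then keeps $y$ constant throughout every local run in which it appears.

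Next I would eliminate the set atoms. For each pair $(T,\bar z)$ such that $S^T(\bar z)$ occurs in $\varphi_f$, introduce a fresh numeric variable $b_{T,\bar z} \in \bar x^T$ whose intended invariant is $b_{T,\bar z} = 1$ exactly when $\bar z \in S^T$. Initialization is enforced by conjoining $b_{T,\bar z} = 0$ to the post-condition of $\sigma^o_T$, matching the requirement $S_0 = \emptyset$. For each internal service $\sigma = \langle \pi,\psi,\delta\rangle$ of $T$, the tracking update is built into the modified post-condition: after a retrieval the new $\nu'(\bar s^T)$ is the retrieved tuple, and the comparison $\bar s^T = \bar z$ is expressible on $\nu'$ alone; for insertions we split $\sigma$ into two consecutive services in $\bar\Gamma$ using a fresh copy $\bar s^T_{\mathrm{old}}$ of $\bar s^T$ that preserves the pre-update value of $\bar s^T$ into the next valuation, so the post-condition can compare the inserted tuple with $\bar z$ and set $b_{T,\bar z}$ accordingly; $\delta = \emptyset$ and the combined case follow by composition. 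Each occurrence of $S^T(\bar z)$ in $\varphi_f$ is then rewritten as $b_{T,\bar z} = 1$, yielding $\bar\varphi_f$.

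The main obstacle is establishing semantic equivalence. The argument exhibits a bijection, for each database $D$, between pairs consisting of a tree of local runs of $\Gamma$ on $D$ together with a valuation $\nu$ of $\bar y$, and trees of local runs of $\bar\Gamma$ on $D$: each pair extends uniquely to a tree of $\bar\Gamma$ by setting the propagated copies of $\bar y$ to $\nu(\bar y)$ and each $b_{T,\bar z}$ to the indicator of $\bar z \in S^T$, while conversely the augmented opening post-conditions and the tracking clauses force every tree of $\bar\Gamma$ to carry exactly these values. A routine induction on the structure of the subformulas $[\psi]_{T'}$ then shows that atoms in $\bar\varphi_f$ have exactly the truth values of their counterparts in $\varphi_{f^\nu}$ at every configuration, so the two satisfaction relations coincide. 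The construction is linear in $|\Gamma| + |\varphi_f|$: only finitely many tracking variables, and at most two auxiliary services per original service, are introduced, one family per pair $(T,\bar z)$ mentioned in $\varphi_f$.
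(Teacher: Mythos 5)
Your treatment of the global variables is essentially the paper's: $\bar y$ is added to the input variables and propagated down the hierarchy, so the outer $\forall \bar y$ is absorbed into the quantification over runs. The problems are in your elimination of the set atoms. First, the model gives you no way to ``preserve the pre-update value of $\bar s^T$ into the next valuation'' via a fresh copy $\bar s^T_{\mathrm{old}}$: in an internal transition only the \emph{input} variables are propagated, and pre- and post-conditions are separate formulas on $\nu$ and $\nu'$ respectively, so no condition can relate the old value of a non-input variable to its new value. Input variables cannot serve as the carrier either, since they are fixed at task opening. The step is also unnecessary: for an insertion the comparison $\bar s^T = \bar z$ concerns the \emph{current} valuation and can be tested in the pre-condition, so the right move (and the one the paper intends) is to case-split each inserting/retrieving service into finitely many variants whose pre-conditions test $\bar s^T = \bar z$ and the current flag value and whose post-conditions test the retrieved tuple against $\bar z$ and set or maintain the flag, without introducing any new configurations. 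Your alternative of splitting one service into two \emph{consecutive} services changes the temporal structure of every run (extra configurations that $\mathbf{X}$ and $\mathbf{U}$ can see), breaks the claimed run-for-run bijection, and would itself require an auxiliary mechanism to force the two services to occur back to back --- none of which you address. Relatedly, since the flag is not an input variable, its value is \emph{not} preserved by default across internal services that do not touch $\bar z$; preservation must be enforced by the same case-splitting, which your proposal glosses over.

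Second, encoding the flag as a numeric variable constrained by $b_{T,\bar z}=0$ or $b_{T,\bar z}=1$ uses numeric constants, i.e.\ arithmetic atoms (conditions are quantifier-free formulas over $\mathcal{DB} \cup \mathcal{C} \cup \{=\}$, and numeric constants are only available through $\mathcal{C}$). This would make the simplified system fall outside the arithmetic-free fragment for which the lemma is used in Section~\ref{sec:verification}. The paper avoids exactly this by using two fresh numeric variables $x_{\bar z}, y_{\bar z}$ and letting the equality $x_{\bar z}=y_{\bar z}$ play the role of the flag, so only equality atoms are added. (A smaller technical slip: the opening service $\sigma^o_T$ has no post-condition in the model, though initialization of non-input numeric variables to $0$ makes this point moot.) With the flag realized as the two-variable equality and the update realized by case-splitting services rather than by consecutive transitions, the rest of your argument --- the bijection between pairs (tree of local runs, valuation of $\bar y$) and trees of runs of $\bar\Gamma$, followed by induction on the subformulas --- goes through as in the paper.
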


\begin{proof}
Consider first the elimination of global variables. Suppose $\Gamma$ has tasks $T_1, \ldots, T_k$.
The Hierarchical artifact system $\bar \Gamma$ is constructed from $\Gamma$
by adding $\bar y$ to the variables of $T_1$ and augmenting the
input variables of all other tasks with $\bar y$ (appropriately renamed).
Note that $\bar y$ is unconstrained,
so it can be initialized to an arbitrary valuation and then passed as input to all other tasks.
Let $\Gamma$ consist of the resulting tasks, $\bar T_1, \ldots, \bar T_k$.
It is clear that  $\Gamma \models \forall \bar y [\varphi_f(\bar y)]_{T_1}$ iff
$\bar \Gamma \models [\bar \varphi_f]_{\bar T_1}$.

Consider now how to eliminate atoms of the form $S^T(\bar z)$ from $\bar \varphi_f$.
Recall that for all such atoms, $\bar z \subseteq \bar y$,
so $\bar z$ is fixed throughout each run. The idea is keep track of the membership of $\bar z$ in $S^T$
using two additional numeric artifact variables $x_{\bar z}$ and $y_{\bar z}$, such that $x_{\bar z} = y_{\bar z}$
indicates that $S^T(\bar z)$ holds\footnote{This is done to avoid introducing constants, that could also be used as flags.}.
Specifically,  a pre-condition ensures that $x_{\bar z} \neq y_{\bar z}$ initially holds, then $x_{\bar z} \neq y_{\bar z}$
is enforced as soon as there is an insertion $+S^T(\bar s^T)$ for which $\bar s^T = \bar z$, and
$x_{\bar z} \neq y_{\bar z}$ is enforced again whenever there is a retrieval of a tuple equal to $\bar z$.
This can be achieved using pre-and-post conditions of services carrying out the insertion or retrieval.
Then the atom $S^T(\bar z)$ can be replaced in
$\bar \varphi_f$ with $(x_{\bar z} =  y_{\bar z})$.
\end{proof}

We next consider two simplifications of artifact systems regarding the interaction of tasks with their subtasks.

\begin{lemma}  \label{lem:simplification}
Let $\Gamma$ be a HAS and $\varphi$ an $\hltlfo$ property over $\Gamma$. One can construct a HAS
$\tilde{\Gamma}$ and an $\hltlfo$ formula $\tilde{\varphi}$ such that
$\Gamma \models \varphi$ iff $\tilde{\Gamma} \models \tilde{\varphi}$ and:
$(i)$ $\bigcup_{T_c \in child(T)} \xttcup$ and
$\bigcup_{T_c \in child(T)} \xttcdown$ are disjoint for each task $T$ in $\tilde{\Gamma}$,
$(ii)$ for each child task $T_c \in child(T)$, $\xttcup \cap \varnum = \emptyset$.
\end{lemma}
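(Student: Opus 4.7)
The plan is to construct $\tilde{\Gamma}$ from $\Gamma$ via a sequence of local syntactic transformations applied independently to each task, and to derive $\tilde{\varphi}$ by a corresponding consistent renaming of variables inside the FO subformulas of $\varphi$. Equivalence is then argued by exhibiting a bijection between trees of local runs of $\Gamma$ and of $\tilde{\Gamma}$ preserving $\hltlfo$ satisfaction, built by induction along the task hierarchy.

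For (i), I process each task $T$ independently. For every $x \in \bar{x}^T$ appearing simultaneously in some $\xttcup$ and in some $\xttcdown$ (for possibly distinct children), I introduce a fresh companion $x' \in \bar{x}^T$ of the same sort and split the two roles: $x$ keeps its input-passer role (each $f_{in}$ that read $x$ continues to do so), while every $f_{out}$ that used to write $x$ is redirected to write $x'$ instead. By construction, $\bigcup_{T_c \in child(T)} \xttcup$ and $\bigcup_{T_c \in child(T)} \xttcdown$ are then disjoint in $\tilde{T}$. To preserve behavior I must propagate $x'$ back into $x$ so that subsequent openings read the fresh value. Between segments I achieve this by appending the conjunct $x = x'$ to the post-condition of every internal service of $T$. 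Within a segment -- where no internal service of $T$ runs between a child close writing $x'$ and a subsequent child open reading $x$ -- I introduce a fresh leaf pass-through task $T^{\sharp}$ that takes $x'$ as input and immediately returns it into $x$; any offending intra-segment read is preceded in the simulated run by a call to $T^{\sharp}$. Since $T^{\sharp}$ is a new sibling leaf, it preserves the depth of $\calh$ and touches no other variable.

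For (ii), I eliminate numeric return landings. For each child $T_c$ and each $z \in \xttcup \cap \varnum$, I drop $z$ from $\xttcup$; by (i) the value of $z$ is not forwarded as input to any subtask, so its only consumers are post-conditions of $T$'s internal services and FO subformulas of $\varphi$ referencing $T$. The idea is for $T$ to re-derive locally the information about $z$ that is actually consulted. Concretely, I augment $f_{out}$ with ID return landings sufficient for this reconstruction: either the IDs of the database/artifact tuples from which the child produced $y := f_{out}(z)$, or (in the arithmetic case) a bounded encoding of the sign vector of $y$ with respect to the finitely many polynomials appearing in $\Gamma$ and $\varphi$, each requiring only finitely many additional ID variables. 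Every post-condition of $T$ and every subformula of $\varphi$ that mentioned $z$ is rewritten to evaluate the same predicate over the new surrogate IDs.

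The formula $\tilde{\varphi}$ is obtained by applying these renamings uniformly to the FO subformulas of $\varphi$. The bijection between trees of local runs follows by induction on the hierarchy: each local transition of $\Gamma$ is replayed in $\tilde{\Gamma}$ with the additional bookkeeping on the fresh variables (and, for within-segment syncs, the $T^{\sharp}$ insertions), and conversely every run of $\tilde{\Gamma}$ projects back onto one of $\Gamma$ by erasing the fresh material. The main obstacle is the within-segment synchronization for (i): the pass-through construction must respect the constraint that each child opens at most once per segment and that all opened children close before the next internal service. If several dual-role variables need synchronization in the same segment, one must either introduce a separate $T^{\sharp}$ per such variable or split offending segments by inserting transparent ``sync'' internal services of $T$ that are made invisible to $\tilde{\varphi}$ by guarding every temporal operator against them. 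By comparison, (ii) and the compilation of $\tilde{\varphi}$ are routine bookkeeping.
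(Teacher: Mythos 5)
Your proposal fails on part (i) because it redirects the child \emph{returns} into fresh variables $x'$ and then tries to push the returned value forward into $x$, and both propagation mechanisms you offer are incompatible with the model's own restrictions. Adding $x = x'$ to the post-conditions of internal services propagates nothing: at an internal transition only the input variables $\bar x^T_{in}$ are preserved, so $x$ and $x'$ both receive fresh nondeterministic values and the conjunct merely equates the two new guesses. The within-segment fix is where the construction actually breaks: the pass-through task $T^{\sharp}$ must have $x$ among its return variables, so $x \in \bar x^T_{T^{\sharp}\uparrow}$ while $x$ still lies in $\bigcup_{T_c}\xttcdown$, reinstating exactly the overlap the lemma is supposed to remove; moreover a closing service may overwrite an ID variable of the parent only when it is currently $\anull$, and $x$ is typically non-null (it was just being passed down), so $T^{\sharp}$ cannot write it at all; and your ``transparent sync'' internal services are not transparent, since every internal service resets all non-input variables, destroying $x'$ before it can be read. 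The paper sidesteps all of this by flipping the direction: the \emph{fresh} variables $\hat x$ are the ones passed down, they acquire nondeterministically guessed values after each internal service (an extra init service handles the first segment), and the opening condition of $T_c$ is strengthened with $x = \hat x$, so the guess is forced to agree with the actual value exactly when it is consumed; returns keep targeting the original variables, and no forward propagation is ever needed.

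Part (ii) has a deeper gap. A numeric value returned by a child need not originate from any database or artifact tuple (e.g.\ it can be an externally supplied amount constrained only by an equation in a post-condition), so there are no surrogate IDs for the parent to ``re-derive'' it from; and the sign vector of the returned real with respect to the polynomials occurring in $(\Gamma,\varphi)$ does not determine the truth of the parent's later conditions, which may relate that value to \emph{other} reals through joint polynomial constraints. Making such finite abstractions of real values sound is precisely the hard content of the arithmetic machinery (cells and the hierarchical cell decomposition) developed later in the paper, not routine bookkeeping, and in any case this lemma must preserve satisfaction over actual runs with real data. The paper's proof uses the same guess-and-check device as in (i): the parent guesses the numeric return values in fresh numeric variables, passes the guesses down as additional inputs, the child's closing condition requires them to equal the actual values, the numeric returns are then dropped, and parent conditions and the property are rewritten to consult the guesses (keeping track of which child returned last). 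This prophecy-style guess-plus-guard construction is the key idea missing from both halves of your argument.
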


\begin{proof}
Consider (i). \newcommand{\yttcdown}{\hat{x}^T_{T_c^\downarrow}}
\newcommand{\xinit}{x_{init}}
\newcommand{\sigmainit}{\sigma_T^{init}}
We describe here informally the construction of $\tilde{\Gamma}$ that eliminates overlapping between \\
$\bigcup_{T_c \in child(T)} \xttcup$ and $\bigcup_{T_c \in child(T)} \xttcdown$.
For each task $T$ and for each subtask $T_c$ of $T$, for each variable $x \in \xttcdown$,
we introduce to $T$ a new variable $\hat{x}$ whose type is the same as the type (id or numeric) of $x$.
We denote by $\yttcdown$ the set of variables added to $T$ for subtask $T_c$.
Then instead of passing $\xttcdown$ to $T_c$, $T$ passes $\yttcdown$ to $T_c$ when $T_c$ opens.
And for the opening service $\sigma_{T_c}^o$ with opening condition $\pi$, we check $\pi$ in conjunction with $\bigwedge_{x \in \xttcdown } (x = \hat{x})$.
Note that $\bigcup_{T_c \in child(T)} \yttcdown$ and $\bigcup_{T_c \in child(T)} \xttcup$ are disjoint.
By this construction, in each run of $\tilde{\Gamma}$, after each application of an internal service $\sigma$ of task $T$,
the variables in $\yttcdown$ for each subtask $T_c$ receives a set of non-deterministically chosen values. Then each subtask $T_c$ can be opened
only when $\yttcdown$ and $\xttcdown$ have the same values. So passing $\yttcdown$ to $T_c$ is equivalent to passing $\xttcdown$ to $T_c$.

To guarantee that there is a bijection from the runs of $\Gamma$ to the runs of $\tilde{\Gamma}$,
we also need to make sure that the values of $\yttcdown$ are non-deterministically chosen before the first application of internal service.
(Recall that they either contain 0 or \anull at the point when $T$ is opened.)
So we argument $T$ with an extra binary variable $\xinit$ and an extra internal service $\sigmainit$.
Variable $\xinit$ indicates whether task $T$ has been ``initialized''.
The service $\sigmainit$ has precondition that checks whether $\xinit = 0$ and post-condition sets $\xinit = 1$.
It sets all id variables to $\anull$ and numeric variables 0 except for variables in $\yttcdown$ for any $T_c$.
So application of $\sigmainit$ assigns values to $\yttcdown$ for every subtask $T_c$ non-deterministically and all other variables
are initialized to the initial state when $T$ is opened.
All other services are modified such that they can be applied only when $\xinit = 1$.
So in a projected run $\rho_T$ of $\tilde{\Gamma}$, the suffix with $\xinit = 1$ corresponds to the original projected run of $\Gamma$.
Thus we only need to rewrite the $\hltlfo$ property $\varphi$ to $\tilde{\varphi}$ such that each formula in $\Phi_T$
only looks at the suffix of projected run $\rho_T$ after $\xinit$ is set to be 1.
(Namely, each $\psi \in \Phi_T$ is replaced with $\mathbf{F}( (\xinit=1) \land \psi)$.)

Now consider (ii).
We outline the construction of $\tilde{\Gamma}$ and $\tilde{\varphi}$ informally.
For each task $T$, we introduce a set of new numeric variables $\{x_{T_c} | T_c \in child(T), x \in \xttcup \cap \varnum \}$ to $\bar{x}^T$.
Intuitively, these variables contain non-deterministically guessed returning values from each child task $T_c$. These 
are passed to each child task $T_c$ as additional input variables. Before $T_c$ returns, these are compared to the values 
of the returning numeric variables of $T_c$, and $T_c$ returns only if they are identical.
More formally, for each child task $T_c$ of $T$, variables $\{x_{T_c} | x \in \xttcup \cap \varnum\}$ are passed from $T$ to $T_c$
as part of the input variables of $T_c$.
For each variable $x_{T_c}$ in $T$, we let $x_{T_c \rightarrow T} \in \bar{x}^{T_c}$ be the corresponding input variable of $x_{T_c}$.
And for each $x_{T_c}$, we denote by $x_{ret}$ the variable in $\bar{x}^{T_c}$ satisfying that
$f_{out}(x) = x_{ret}$ for $f_{out}$ in the original $\Gamma$.
Then at $T_c$, we remove all numeric variables from $\bar{x}^{T_c}_{ret}$
and add condition $\bigwedge_{x \in {\xttcup} \cap \varnum} x_{ret} = x_{T_c \rightarrow T} $ to the closing condition of $T_c$.
Note that we need to guarantee that the variables in $\{x_{T_c} | T_c \in child(T), x \in \xttcup \cap \varnum \}$ obtain
non-deterministically guessed values. This can be done as in the simulation for (i).

Conditions on $\bar x^T$ after a subset $T$'s children has returned are evaluated using the guessed values 
for the variables returned so far. Specifically, the correct value to be used is the latest returned by a child transaction, if any
(recall that children tasks can overwrite each other's numeric return variables in the parent). 
Keeping track of the sequence of returned transactions and evaluating conditions with the correct value
can be easily done directly in the verification algorithm, at negligible extra cost. 
This means that we can assume that tasks have the form in (ii) without the exponential blowup in the conditions, 
but with the quadratic blowup in the number of variables.

To achieve the simulation fully via the specification is costlier because some of the conditions needed have exponential size. 
We next show how this can be done.  Intuitively, we guess initially an order of the return of the children
transactions and enforce that it be respected. We also keep track of the children that have already returned. 
Let $\emph{child}(T) = \{T_1,\ldots,T_n\}$. To guess an order of return, we use new ID variables $\bar o = \{o_{ij} \mid 1 \leq i,j \leq n\}$.
Intuitively, $o_{ij} \neq \anull$ says that $T_i$ returns before $T_j$. We also use new ID variables $\{t_i \mid 1 \leq i \leq n\}$,
where $t_i \neq \anull$ means that $T_i$ has returned. The variables $\bar o$ are subject to a condition  
specifying the axioms for a total order:
$$\begin{array}{l}
\wedge_{1 \leq i , j \leq n} (o_{ij} \neq \anull \vee o_{ji} \neq \anull) \\
\wedge_{1 \leq i < j \leq n} \neg (o_{ij} \neq \anull \wedge o_{ji} \neq \anull) \\
\wedge_{1 \leq i,j,m \leq n} ((o_{ij} \neq \anull \wedge o_{jm} \neq \anull) \rightarrow o_{im} \neq \anull) \\
\end{array}
$$
These are enforced using pre-conditions of services as well as one additional initial internal service 
(which in turn requires a minor modification to $\varphi$, similarly to (i)).
When $T_i$ returns, $t_i$ is set to a non-null value, and the condition 
$$\bigwedge_{1 \leq i,j \leq n} (t_i \neq \anull \wedge t_j = \anull) \rightarrow o_{ij} \neq \anull$$ 
enforcing that transactions return in the order specified by $\bar o$ is maintained using pre-conditions.
Observe that, at any given time, the latest transaction that has returned is the $T_i$ such that 
$$t_i \neq \anull \wedge \bigwedge_{1 \leq j \leq n} ((o_{ij} \neq \anull) \rightarrow t_j = \anull)$$

For each formula $\pi$ over $\bar{x}^T$, we construct a formula $o(\pi)$ by replacing each variable $x \in \bar{x}^T_{\mathbb{R}}$
with $x_{T_c}$ for the latest $T_c$ where $x \in \xttcup$ if there is such $T_c$). 
The size of the resulting $o(\pi)$ is exponential in the maximum arity of database relations.
Finally we obtain $\tilde{\Gamma}$ and $\tilde{\varphi}$ by for every $T \in \calh$, replacing each condition $\pi$ over $\bar{x}^T$
with $o(\pi)$.  
One can easily verify that $\tilde{\Gamma} \models \tilde{\varphi}$ iff $\Gamma \models \varphi$ and
for every task $T$ of $\Gamma$, $\xttcup$ does not contain numeric variables.
This completes the proof of (ii). 
\end{proof}

The construction in (i) takes linear time in the original
specification and property. For (ii), the construction introduces a quadratic number of new variables and
the size of conditions becomes exponential in the maximum arity of data-base relations.
However, as discussed in Appendix \ref{app:hltl}, the verification algorithm can be slightly adapted to circumvent
the blowup in the specification without penalty to the complexity. Intuitively, this makes efficient use
of non-determinism, avoiding the explicit enumeration of choices required in the specification, which leads to the exponential blowup.

\section{Verification without Arithmetic} \label{sec:verification-app}

\subsection{Proof of Theorem \ref{thm:actual-symbolic}}

\subsubsection{Only-if: from actual runs to symbolic runs}  

Let $\Tree$ be a tree of local runs accepted by $\calb_\varphi$ (with database $D$).
The construction of $\Sym$ from $\Tree$ is simple. This can be done by replacing
each local run $\rho_T \in \Tree$ with a local symbolic run $\trt$.
More precisely, let 
$$\rho_T = (\nu_{in}, \nu_{out}, \{(J_i,\sigma_i)\}_{0 \leq i < \gamma})$$ be a local run in $\Tree$, where $J_i = (\nu_i, S_i)$,
We construct a corresponding local symbolic run 
$$\trt = (\tau_{in}, \tau_{out}, \{(I_i,\sigma_i)\}_{0 \leq i < \gamma})$$
For $0 \leq i < \gamma$, $I_i = (\tau_i, \bar{c}_i)$ is constructed from $(\nu_i, S_i)$ as follows.
The navigation set $\cale_T$ of $\tau_i$ contains every $x_R$ for every $x \in \bar{x}^T$ and
$R$ such that $\nu(x)$ is an ID of relation $R$ in $D$.
Then we define $\nu^*_i$ to be a mapping from
$\cale^+_T = \cale_T \cup \{0, \anull\} \cup \bar{x}^T$ to actual values, where:
\begin{itemize}\itemsep=0pt\parskip=0pt
\item $\nu^*_i(e) = e$ if $e \in \{0, \anull\}$,
\item $\nu^*_i(e) = \nu_i(x)$ for $e = x$ or $e = x_R$, and
\item $\nu^*_i(e.\xi) = t.\xi$ if $\nu^*_i(e)$ is an ID of a tuple $t \in D$.
\end{itemize}
We construct the equality type $\sim_{\tau_i}$ such that for every $e$ and $e'$ in $\cale^+$,
$e \sim_{\tau_i} e'$ iff $\nu^*_i(e) = \nu^*_i(e')$.
Also we let $\tauin = \tau_0 | \bar{x}^T_{in}$ and $\tauout = \tau_{\gamma-1} | \bar{x}^T_{in} \cup \bar{x}^T_{ret}$ if $\nuout \neq \bot$
and $\tau_{out} = \bot$ otherwise. Since $D$ satisfies the functional dependencies,
for every $\tau_i$ and expressions $e$ and $e'$, $e \sim_{\tau_i}, e'$
implies that $\nu^*_i(e) = \nu^*_i(e')$, so for every attribute $a$, if $e.a$ and $e'.a$ are in the navigation set of $\tau_i$,
then $e.a \sim_{\tau_i}, e'.a$ because $\nu^*_i(e.a) = \nu^*_i(e'.a)$.

We also note the following facts.
\begin{fact} \label{fact:cond-actual-symbolic}
For every condition $\psi$ over $\bar{x}^T$, $D \models \psi(\nu_i)$ iff $\tau_i \models \psi$.
\end{fact}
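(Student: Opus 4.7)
The plan is to prove the fact by structural induction on the condition $\psi$, using crucially the construction of $\nu^*_i$ and the way $\cale_T$ and $\sim_{\tau_i}$ are built from it. Since conditions are quantifier-free Boolean combinations of atoms (equalities and relation atoms from $\db$; note that in this arithmetic-free section no $\calc$-atoms occur), the Boolean case is immediate once the atomic cases are settled, so I would focus effort there.

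For equality atoms $x = y$ with $x,y \in \bar x^T \cup \{\anull, 0\}$, the construction gives $x \sim_{\tau_i} y$ iff $\nu^*_i(x) = \nu^*_i(y)$, which by definition of $\nu^*_i$ coincides with $\nu_i(x) = \nu_i(y)$ (with the convention that $\nu^*_i(\anull) = \anull$ and $\nu^*_i(0) = 0$). Hence $\tau_i \models x=y$ iff $D \models (x=y)(\nu_i)$. For relation atoms $R(x, y_1,\dots,y_m, z_1,\dots,z_n)$, I would split on whether $\nu_i(x) = \anull$. If so, then by the semantics of conditions the atom is false under $\nu_i$; on the symbolic side, $x_R \notin \cale_T$ (since $\cale_T$ only contains $x_R$ when $\nu_i(x)$ is an ID of $R$ in $D$), so the atom also fails in $\tau_i$. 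Otherwise $\nu_i(x)$ is some tuple's ID, and since ID domains of distinct relations are disjoint, there is a unique $R'$ with $\nu_i(x) \in Dom(R'.\mathit{ID})$; the atom is satisfiable only when $R' = R$, at which point $x_R \in \cale_T$ and, by closure of $\cale_T$ under extension up to length $h(T)$, all of $x_R.a_1,\dots,x_R.a_m,x_R.f_1,\dots,x_R.f_n$ lie in $\cale_T$.

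With $x_R$ and its one-step descendants in the navigation set, the key observation is that $\nu^*_i(x_R.a_j) = t.a_j$ and $\nu^*_i(x_R.f_k) = t.f_k$ where $t$ is the unique tuple of $R$ with ID $\nu_i(x)$ (this uniqueness is guaranteed by the key dependency satisfied by $D$). Therefore $(y_1,\dots,y_m,z_1,\dots,z_n) \sim_{\tau_i} (x_R.a_1,\dots,x_R.a_m,x_R.f_1,\dots,x_R.f_n)$ iff $\nu_i(y_j) = t.a_j$ for all $j$ and $\nu_i(z_k) = t.f_k$ for all $k$, which is exactly the condition for $D \models R(\bar x,\bar y,\bar z)(\nu_i)$. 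This gives the equivalence for relational atoms. The Boolean combinations case is then immediate from the inductive hypothesis.

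The only potentially delicate point is ensuring that the symbolic semantics given via $\sim_\tau$ and $\cale_T$ matches the actual database semantics uniformly across all atoms; this is where the construction of $\nu^*_i$ plays a central role, since it ``witnesses'' the navigation structure of the database restricted to what is reachable from $\bar x^T$ via foreign keys (up to the bound $h(T)$, which is not tight here but will matter later when symbolic instances must be connected across tasks). I do not expect any serious obstacle in this fact itself; the real work lies in the subsequent stages of Theorem~\ref{thm:actual-symbolic}, where symbolic runs without collisions must be lifted back to actual runs over a suitably enlarged yet still finite database.
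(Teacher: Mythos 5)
Your proposal is correct and matches the paper's (implicit) argument: the paper states Fact~\ref{fact:cond-actual-symbolic} without proof as an immediate consequence of building $\sim_{\tau_i}$ from $\nu^*_i$, and your structural induction on $\psi$ with the case split on equality atoms, $\anull$-valued IDs, and relation atoms (using key dependencies and closure of $\cale_T$ up to length $h(T)\ge 2$) is exactly the routine verification the paper leaves to the reader.
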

\begin{fact} \label{fact:consistent}
For all $i,i'$ and $\bar{x} \subseteq \bar{x}^T$,  if $\nu_i(\bar{x}) = \nu_{i'}(\bar{x})$
then $\tau_i | \bar{x} = \tau_{i'} | \bar{x}$.
\end{fact}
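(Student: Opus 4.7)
The plan is to trace through the explicit construction of $\tau_i$ from $\nu_i$ given in the preceding paragraphs and verify that the projection $\tau_i|\bar{x}$ depends on $\nu_i$ only through its restriction to $\bar{x}$ and on the fixed database $D$. Since the same reasoning applies to $\tau_{i'}|\bar{x}$ with $\nu_{i'}(\bar{x}) = \nu_i(\bar{x})$, the two projections must coincide automatically.

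Concretely, I would proceed in three steps. First, I establish that the projected navigation sets $\cale_T|\bar{x}$ coming from $\tau_i$ and $\tau_{i'}$ agree: by construction, $x_R$ is included in $\cale_T$ exactly when $\nu_i(x)$ is the id of some tuple in $D(R)$, and the set of longer expressions anchored at $x_R$ (of length at most $h(T)$) is purely schema-dependent. So membership in $\cale_T|\bar{x}$ is determined by $\nu_i(\bar{x})$ and $\db$, both common to $i$ and $i'$.

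Second, I would show by induction on the length of expressions that $\nu_i^*$ and $\nu_{i'}^*$ agree on every $e \in (\cale_T|\bar{x}) \cup \bar{x} \cup \{\anull, 0\}$. The base case follows immediately from the hypothesis $\nu_i(\bar{x}) = \nu_{i'}(\bar{x})$ and the convention that $\nu^*$ is the identity on $\{\anull, 0\}$. The inductive step for $e.\xi$ uses the functional dependency imposed by the key $R.\emph{ID}$: if $\nu_i^*(e) = \nu_{i'}^*(e)$ and this common value is the id of a (necessarily unique) tuple $t \in D$, then $\nu_i^*(e.\xi) = t.\xi = \nu_{i'}^*(e.\xi)$. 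This is the only step carrying any content, and its essence is simply that foreign-key navigation in a fixed database is deterministic.

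Since $\sim_{\tau_i}$ is defined as the kernel of $\nu_i^*$, and likewise for $\sim_{\tau_{i'}}$, the agreement of the two extended valuations on the relevant set of expressions yields equality of the restricted equivalence relations; combined with the equality of projected navigation sets from the first step, this gives $\tau_i|\bar{x} = \tau_{i'}|\bar{x}$. I do not anticipate any real obstacle: the fact is a direct formalization of the intuition that the symbolic type of a tuple of variables is a function of the actual values of those variables together with $D$, and it is recorded separately only because it will be invoked repeatedly in the subsequent parts of the proof of Theorem~\ref{thm:actual-symbolic}.
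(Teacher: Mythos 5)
Your argument is correct and matches the paper's intent: the paper states this fact without proof, treating it as immediate from the construction of $\tau_i$ via the kernel of $\nu_i^*$, and your three steps (anchors of $\cale_T|\bar{x}$ determined by $\nu_i(\bar{x})$ and the schema, agreement of $\nu_i^*$ and $\nu_{i'}^*$ on projected expressions by induction using determinism of key/foreign-key navigation in the fixed $D$, hence equal restricted equality types) are exactly the verification the paper leaves implicit. No gap.
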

Given the sequence $\{(\tau_i, \sigma_i)\}_{0 \leq i < \gamma}$, the sequence of vectors of $TS$-isomorphism type counters
$\{\bar{c}_i\}_{0 \leq i < \gamma}$ is uniquely defined. Let $\trt = (\tau_{in}, \tau_{out}, \{(I_i, \sigma_i)\}_{0 \leq i < \gamma})$.
In view of Fact \ref{fact:cond-actual-symbolic}, it is easy to see that $\trt$ satisfies all items in the definition
of local symbolic run that do not involve the counters.
To show that $\trt$ is a local symbolic run, it remains to show that
$\bar{c}_i \geq \bar 0$ for $0 \leq i < \gamma$.
To see that this holds, we associate a sequence of counter vectors $\{\tilde{c}_i\}_{0 \leq i < \gamma}$ to the local run
$\rho_T$, where each $\tilde{c}_i$ provides, for each
$TS$-isomorphism type $\hat{\tau}$, the number of tuples in $S_i$ of $TS$-isomorphism type  $\hat{\tau}$
(the $TS$-isomorphism type of a tuple $t \in S_i$ is defined analogously to the $T$-isomorphism type for each local instance).
By definition, $\tilde{c}_i \geq \bar 0$ for each $i \geq 0$. Thus it is sufficient to show that $\tilde{c}_i \leq \bar{c}_i$ for each $i$.
We show this by induction. For $i = 0$,  $\tilde{c}_0 = \bar{c}_0 = 0$. Suppose $\tilde{c}_{i-1} \leq \bar{c}_{i-1}$ and consider the
transition under service $\sigma_i$ in $\rho_T$ and $\trt$. It is easily seen that $\tilde{c}_{i-1}$ and $\bar{c}_{i-1}$  are modified in
the same way {\em except} in the case when $+S^T(\bar{s}^T) \in \delta$, $\hat{\tau}_{i-1}$ is not input-bound,
and $\nu_{i-1}(\bar{s}^T) \in S_{i-1}$. In this case,
if $\hat{\tau}$ is the $TS$-isomorphism type of $\nu_{i-1}(\bar{s}^T)$,
$\tilde{c}_{i}(\hat{\tau}) = \tilde{c}_{i-1}(\hat{\tau})$ whereas $\bar{c}_{i}(\hat{\tau}) = \bar{c}_{i-1}(\hat{\tau}) +1$.
In all cases, $\tilde{c}_i \leq \bar{c}_i$. Thus, $\trt$ is a local symbolic run.
The fact that $\Sym$ is a tree of symbolic local runs follows from Fact \ref{fact:consistent}, which ensures the consistency of the isomorphism
types passed to and from subtasks. Finally, the fact that $\Sym$ is accepted by $\calb_\varphi$ follows from acceptance of $\Tree$ by $\calb_\varphi$
and Fact \ref{fact:cond-actual-symbolic}.

\subsubsection{If part: from symbolic runs to actual runs} 

We denote by \FD the set of key dependences in the database schema $\db$ and \IND the set of foreign key dependences.
We show the following.
\begin{lemma} \label{lem:sym-to-loc}
For every symbolic tree of runs $\Sym$ accepted by $\calb_\beta$, there exists a tree $\Tree$ of local runs 
accepted by $\calb_\beta$ with a finite database instance $D$ where $D \models \FD$.
\end{lemma}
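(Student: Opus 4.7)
Following the authors' hint, the plan is a two-stage construction: first realize $\Sym$ as an accepted tree of actual local runs over a (possibly infinite) database $D^\infty$, then collapse the database to a finite $D$ while preserving acceptance by $\calb_\beta$ and the FDs.

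\textbf{Stage 1 (infinite realization).} I would define a global equality relation $\equiv$ on the disjoint union, over all symbolic instances $I_i = (\tau_i, \bar c_i)$ occurring in $\Sym$, of the navigation-expression sets $\cale^+_T$ of the task $T$ carrying $I_i$. Concretely, $\equiv$ is the smallest equivalence containing the following pairs: (a) within each instance $I_i$, any two expressions related by $\sim_{\tau_i}$; (b) across an internal transition inside a local run of $T$, each navigation expression anchored at a preserved input variable $x \in \bar x^T_{in}$ at position $i-1$ and its counterpart at position $i$; (c) across a parent--child opening edge from $\trt$ to $\trtc$ labeled $i$, each child-side expression anchored at $x \in \bar x^{T_c}_{in}$ and the corresponding parent-side expression anchored at $f_{in}(x)$ at position $i$, and symmetrically for closing edges via $f_{out}$; (d) for each internal service applying $-S^T(\bar s^T)$ at position $i$, a pairing of $\bar s^T$ at position $i$ with the occurrence of $\bar s^T$ at some earlier insertion position of a tuple of the same $TS$-isomorphism type (the counters $\bar c_i \geq \bar 0$ ensure such an insertion exists). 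I would then assign fresh, pairwise distinct representatives to the $\equiv$-classes, respecting sort constraints, yielding an infinite database $D^\infty$ and an actual instance per configuration. Because the equality types $\sim_{\tau_i}$ already enforce key-induced consistency along navigation, $D^\infty \models \FD$, and freshness ensures that set insertions produce genuinely new tuples (matching the non-input-bound counters without collision). By Fact~\ref{fact:cond-actual-symbolic}, every condition evaluates on the actual configuration exactly as on $\tau_i$, so the resulting infinite tree $\Tree^\infty$ is accepted by $\calb_\beta$.

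\textbf{Stage 2 (finitization).} The set of distinct $T$-isomorphism types appearing in $\Sym$ is finite for every $T$, since each navigation set has depth bounded by $h(T)$ and the schema is fixed. Hence only finitely many local neighborhood ``shapes'' appear throughout $\Tree^\infty$. I would define a coarser equivalence $\approx$ on the data values of $D^\infty$ that identifies two IDs of the same relation whenever their depth-$h(T)$ foreign-key neighborhoods are compatible with a single representative tuple, and then quotient $D^\infty$ and every configuration of $\Tree^\infty$ by $\approx$ to obtain the finite $D$ and the desired tree $\Tree$.

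\textbf{Main obstacle.} The hard part is Stage 2: showing that $\approx$ can be chosen to simultaneously (i) keep $D$ finite, (ii) preserve $D \models \FD$, and (iii) not flip any truth value of a condition inspected by $\calb_\beta$. The lever for (i) and (iii) is the precise choice $h(T) = 1 + |\bar x^T| \cdot F(\delta)$: the navigation radius exposed at $T$ is strictly larger than that used by any condition reachable in runs of $T$ or of its descendants, so two IDs that agree up to depth $h(T)$ are indistinguishable to every test that $\calb_\beta$ performs, and only finitely many depth-$h(T)$ neighborhoods exist per relation. The danger for (ii) is that $\approx$ might collapse tuples whose keys and isomorphism-type constraints force them to be distinct; avoiding this requires that $\approx$ respect all $\sim_{\tau_i}$-induced inequalities, which is exactly what $h(T)$ being larger than the reachable foreign-key chains guarantees. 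A careful counting argument then bounds the number of $\approx$-classes per relation, yielding finiteness of $D$ and completing the proof.
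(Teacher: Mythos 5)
Your Stage 1 is essentially the paper's first step (a chase-like global equality type generated by local types, input-variable propagation, parent--child passing, and insertion/retrieval matching), but even there you assert the crucial point rather than prove it: after taking the transitive closure of (a)--(d) (plus the congruence rule $e \equiv e' \Rightarrow e.a \equiv e'.a$, which you omit but need for the key dependencies), nothing guarantees that two expressions that are \emph{inequivalent} in some local type $\sim_{\tau_i}$ are not forced into the same class. If they were, conditions would flip and acceptance by $\calb_\beta$ would be lost; your appeal to Fact~\ref{fact:cond-actual-symbolic} goes the wrong way (it is used for the direction from actual to symbolic runs). The paper's proof spends most of its effort here: it proves an invariant (every chase class is the reach-set of a single variable via a navigation of length at most $h(T)$, using $h(T) = 1 + |\bar x^T|\cdot F(\delta)$ and the segment/life-cycle structure) precisely to show the global merge restricted to each instance coincides with the local equality type.

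The more serious gap is Stage 2. Your finitization collapses IDs by ``compatible depth-$h(T)$ foreign-key neighborhoods,'' which engages only with the database navigation structure and ignores the artifact relation $S^T$. In an infinite local symbolic run the set accumulates unboundedly many tuples, and in a periodic run the tuples inserted in different periods typically have \emph{identical} neighborhood shapes; collapsing them indiscriminately makes tuples coincide that must coexist in $S^T$ and be retrieved separately, so some later retrieval finds an empty set and the resulting sequence is no longer a legal local run. Conversely, refusing to collapse such tuples abandons finiteness. The paper resolves this not by a shape-based quotient but by first restricting to finite or \emph{periodic} symbolic runs (justified via the VASS lemma), then constructing a periodic insertion-retrieval matching $\Ret$ with bounded gap, proving every life cycle has bounded timespan, partitioning life cycles into finitely many classes of equivalent life cycles with pairwise disjoint timespans, and merging values only across life cycles in the same class --- disjointness of timespans is exactly what guarantees the reused values never need to be simultaneously present in $S^T$ (this is the content of Lemma~\ref{lem:sets-infinite}). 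Your proposal contains no counterpart to this temporal analysis, and without it the quotient argument does not go through.
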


Note that the above does not require that $D$ satisfy $\IND$. This is justified by the following.
\begin{lemma} \label{lem:foreignkey}
For every tree of local runs $\Tree$ with database $D \models \FD$ if $\Tree$ is accepted by $\calb_\beta$ then there exists a finite database 
$D' \models \FD \cup \IND$
such that $\Tree$ with database $D'$ is also a tree of local runs accepted by $\calb_\beta$
\end{lemma}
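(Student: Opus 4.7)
The plan is to extend $D$ to a finite $D'$ by adding a bounded supply of witness tuples that repair every foreign-key violation, while drawing all new values from outside the set of values that appear anywhere in $\Tree$. Because $D' \supseteq D$, every positive atom satisfied by the local transitions of $\Tree$ remains satisfied, so the real task is to ensure that no negative atom that currently holds is falsified by a newly inserted tuple.

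Concretely, I would first introduce one \emph{ghost} tuple $g_R$ per relation $R \in \db$: pick a fresh id $\gamma_R \in Dom(R.\emph{ID})$, set each foreign key $F_i$ of $g_R$ referencing $R_{F_i}$ to the ghost id $\gamma_{R_{F_i}}$, and assign fresh numeric values to the non-key attributes. The ghost tuples witness each other's foreign keys, closing all cycles of the foreign-key graph in one layer. Second, for every value $v \in Dom(R_F.\emph{ID})$ that appears as a foreign-key value in $D$ without a matching id in $R_F$, add a single \emph{chase} tuple $t_v$ to $R_F$ with $t_v.\emph{ID} = v$, foreign keys pointing to the appropriate $\gamma_{R_{F_i}}$, and fresh non-key values. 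Since $D$ is finite there are only finitely many such $v$ per relation, so $D'$ is finite. All foreign-key values are now witnessed, and by coalescing chase tuples that share $v$, no key conflict is introduced.

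The key subtlety is the freshness of the newly chosen ids and reals. Let $\cala = \texttt{adom}(D) \cup \bigcup_{\rho_T \in \Tree,\, i} \texttt{adom}(\nu_i^{\rho_T})$ be the set of all values appearing in $D$ or in any valuation along $\Tree$. Although $\Tree$ may be infinite, it is countable (each task has only finitely many children in $\calh$, each local run has countably many positions, and the whole tree is thus countable), so $\cala$ is countable. Since $\mathbb{R}$ and each $Dom(R.\emph{ID})$ are uncountable while only finitely many fresh ids and reals are required, such witnesses always exist. With this choice, any atom $R(\bar y)$ evaluated at some $(I_i,\sigma_i)$ of $\Tree$ has $\nu_i(\bar y) \in \cala$, while a newly inserted ghost or chase tuple carries at least one fresh value (a ghost id in a foreign-key position, or a fresh real in a non-key position) outside $\cala$; hence $\nu_i(\bar y)$ cannot match any added tuple. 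It follows that $D \cup \calc \models \alpha(\nu)$ iff $D' \cup \calc \models \alpha(\nu)$ for every condition $\alpha$ evaluated along $\Tree$, so every pre- and post-condition check in every local transition of $\Tree$ continues to hold, and $\Tree$ is a tree of local runs over $D'$ with the same adornment accepted by $\calb_\beta$.

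The main obstacle is the combination of schema cyclicity with the possible infiniteness of $\Tree$: a naive inclusion-dependency chase need not terminate on cyclic schemas, and even a terminating chase could accidentally create matches for negative atoms in some configuration of the tree. The ghost-tuple construction sidesteps nontermination by closing every cycle in a single layer of fresh witnesses, and the cardinality argument on $\cala$ guarantees that the added witnesses can be made semantically invisible to $\Tree$ even when $\Tree$ itself introduces infinitely many values not present in $D$.
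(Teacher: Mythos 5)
Your construction is essentially the one the paper uses: extend $D$ by a bounded number of witness tuples (the paper first pads every empty relation with an arbitrary tuple and then, for each dangling foreign-key value $v$, adds to the referenced relation a tuple with ID $v$ whose remaining attributes are copied from an existing tuple; acceptance is then asserted to be preserved ``since $D'$ is an extension of $D$'', and finiteness is immediate). Your ghost-plus-chase variant is the same one-layer saturation that closes foreign-key cycles without a full chase, and you go further than the paper in one respect: you explicitly argue that the added tuples are invisible to the (possibly negated) atoms evaluated along $\Tree$, a point the paper's one-line justification does not engage with at all.

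Two points in that invisibility argument need repair. First, in this model the ID domains are \emph{countably infinite}, not uncountable (only the non-key attributes range over $\mathbb{R}$), and an infinite $\Tree$ may use ID values outside $\texttt{adom}(D)$, since post-conditions need not constrain every variable; so a ghost id that is fresh with respect to your set $\cala$ need not exist. Freshness with respect to the finite $D'$ (which is all the key dependencies require) is always available, and the fresh reals do carry the invisibility argument, but only for relations having at least one numeric attribute. Second, and relatedly, the claim that every added tuple contains at least one fresh value fails for relations with no non-key attributes: in the extreme case of a relation consisting solely of its ID, the chase tuple is forced to be exactly $(v)$ with $v \in \texttt{adom}(D)$, so a condition containing $\neg R(y)$ that holds at a configuration with $\nu(y)=v$ would be falsified, and no choice of fresh values can avoid this. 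The paper's own proof does not address this corner case either (it copies attribute values from $\texttt{adom}(D)$ and asserts preservation outright), so your write-up is, if anything, closer to a complete argument; but as stated the cardinality premise is wrong and the ``at least one fresh value'' step does not cover all schemas, so you should either supply a separate argument for all-ID relations or note that the preservation claim is only needed (and holds) for the trees of local runs to which the lemma is actually applied.
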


\begin{proof}
We can construct $D'$ by adding tuples to $D$ as follows. First, for each relation $R$ such that $R$ is empty in $D$, we add an arbitrary tuple 
$t$ to $R$. Next, for each foreign key dependency $R_i[F] \subseteq R_j[\ID]$,
for each tuple $t$ of $R_i$ such that there is no tuple in $R_j$ with id $t[F]$, we add to $R_j$ a tuple $t'$ where
\begin{itemize}\itemsep=0pt\parskip=0pt
\item $t'[\ID] = t[F]$, and
\item $t'[attr(R_j) - \{\ID\}] = t''[attr(R_j) - \{\ID\}]$ where $t''$ is an existing tuple in $R_j$.
\end{itemize}
$\Tree$ with database $D'$ is accepted by $\calb_\beta$ since $D'$ is an extension of $D$.   
Also $D'$ is finite since the number of added tuples is at most linear in
the sum of number of empty relations in $D$ and the number of tuples in $D$ that violate $\IND$.
\end{proof}

To show Lemma \ref{lem:sym-to-loc}, we begin with a construction of a local run $\rho_T$ on a finite database $D_T$ 
for each local symbolic run $\trt \in \Sym$. 
The local runs are constructed so that they can be merged consistently into a tree of local runs $\Tree$
with a single finite database $D$.
The major challenge in the construction of each $\rho_T$ and $D_T$ is that if 
$\trt$ is infinite, the size of $S^T$ can grow infinitely, and a naive construction of
$\rho_T$ would require infinitely many distinct values in $D_T$. Our construction needs to ensure that
$D_T$ is always finite. 
For ease of exposition, we first consider the case where $\trt$ is finite and then extend the result to infinite $\trt$.

\subsubsection*{Finite Local Symbolic Runs}

Recall from the previous section that $\nu^*(e)$ denotes the value of expression $e$ in database $D_T$ with valuation 
$\nu$ of $\bar x^T$. By abuse of notation, we extend $\nu^*(e)$ to 
$e \in \{x_R.w | x \in \bar{x}^T, R \in \db \} \cup \bar{x}^T \cup \{0, \anull\}$
where there is no restriction on the length of $w$. So for expression $e = x_R.w$, 
$\nu^*(e)$ is the value in $D_T$ obtained by foreign key navigation starting from the value $\nu^*(x)$ at relation $R$ 
and by the sequence of attributes $w$, if such a value exists. Note that $\nu^*$ may be only partially defined since 
$D_T$ may not satisfy all foreign key constraints.
Analogously, we define $\nu^*_{in}(e)$ to be the value of $e$ in $D_T$ at valuation $\nu_{in}$
and $\nu^*_{out}(e)$ to be the value of $e$ in $D_T$ at valuation $\nu_{out}$. 

We prove the following, showing the existence of an actual local run corresponding to a finite local symbolic run.
The lemma provides some additional information used when merging local runs into a final tree of runs.  

\begin{lemma} \label{lem:sym-to-loc-finite}
For every finite local symbolic run 
 $\trt = \\ (\tauin, \tauout, \II)$ ($\gamma \neq \omega$), there exists a local run
$\rho_T = (\nu_{in}, \nu_{out}, \RS)$ on finite database $D_T \models \FD$ such that for every $0 \leq i < \gamma$,
\begin{itemize}\itemsep=0pt\parskip=0pt
\item [(i)] for every expression $e = x_R.w$ where $\nu_i^*(e)$ is defined,
there exists expression $e' = x_R.w'$ where $|w'| \leq h(T)$ such that $\nu_i^*(e) = \nu_i^*(e')$,
\item [(ii)] for all expressions $e, e' \in \cale^+_T$ of $\tau_i$,
if $\nu_i^*(e)$ and $\nu_i^*(e')$ are defined, then $e \sim_{\tau_i} e'$ iff $\nu_i^*(e) = \nu_i^*(e')$, and
\item [(iii)] for $\delta = h(T_c)$ if $\sigma_i \in \{ \sigma_{T_c}^o, \sigma_{T_c}^c\}$ for some $T_c \in child(T)$ and $\delta = 1$ otherwise,
for every expression $e \in \cale^-_T = \cale^+_T - \{x_R.w | x \in \bar{x}^T, |w| > \delta\}$, $\nu_i^*(e)$ is defined.
\end{itemize}
\end{lemma}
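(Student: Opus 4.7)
The plan is to build $\rho_T$ and $D_T$ by induction on $i \leq \gamma$, maintaining the invariants that $\nu_i$ realizes $\tau_i$ on $\cale_T^+$ (which yields (ii)) and that properties (i) and (iii) hold on the current partial $D_T$. At each step only $O(|\cale_T^+|)$ fresh ids and tuples are introduced, so since $\gamma$ is finite, the final $D_T$ is finite. For the base step, $\nu_0$ is set as prescribed by the definition of a local symbolic run ($\nu_{in}$ on $\bar{x}^T_{in}$, $\anull$ on the remaining id variables, $0$ on the remaining numeric variables); one fresh id is minted per $\sim_{\tau_0}$-class of id-sorted expressions in $\cale_T^+$, and for each such id I create one tuple of the matching relation whose attributes realize $\sim_{\tau_0}$.

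For the inductive step, given $\nu_{i-1}$, $D_T^{i-1}$ and the symbolic successor $(\tau_i, \bar c_i)$ under $\sigma_i$, I choose $\nu_i$ by reusing the existing value for every $\sim_{\tau_i}$-class already materialized (input variables, coordinates unchanged across the transition, values received from a returning subtask), and by picking a fresh id or a fresh real of the appropriate sort for every genuinely new class. I then extend $D_T$ with the newly introduced expressions. To enforce (i) and (iii) at each $i$, the materialized foreign-key graph is only grown to depth $h(T)$; every foreign key that would otherwise dangle beyond this depth is redirected to an already materialized tuple of the target relation whose surrounding equality type is compatible with $\tau_i$. The pigeonhole bound built into $h(T) = 1 + |\bar{x}^T| \cdot F(\delta)$ guarantees that such a target always exists, and the FDs are preserved because each fresh id labels exactly one tuple. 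Quantifier-free pre- and post-conditions are satisfied automatically by Fact~\ref{fact:cond-actual-symbolic}. The opening and closing of a subtask $T_c$ only touch $\bar{x}^T_{T_c^\uparrow}$, and property (iii) with $\delta = h(T_c) \leq h(T)$ is precisely what lets the child see a fully defined local view of the relevant part of $D_T$.

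The main obstacle is the treatment of insertions into $S^T$. When $\sigma_i$ contains $+S^T(\bar s^T)$ and $\hat\tau = \tau_i \mid (\bar{x}^T_{in} \cup \bar s^T)$ is not input-bound, the symbolic counter is incremented unconditionally, whereas in reality $\nu_i(\bar s^T)$ might already lie in $S$, producing no change. I sidestep this by exploiting the freedom in the preceding assignment: since $\hat\tau$ is not input-bound, some coordinate of $\bar s^T$ is not $\sim_{\tau_i}$-forced to an input-anchored value, so choosing a fresh id for it makes $\nu_i(\bar s^T)$ distinct from every previous insertion of the same $TS$-type and keeps $|S|$ in lockstep with $\bar c_i$. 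For input-bound $\hat\tau$ the counter is capped at $1$, exactly matching the unavoidable collision produced when all coordinates are anchored to the (unchanged) input variables. Retrievals are unproblematic: the maintained invariant $|\{t \in S : t \text{ has type } \hat\tau'\}| \geq \bar c_i(\hat\tau')$ always offers an eligible tuple of the prescribed type. This yields the required $\rho_T$ on the finite database $D_T \models \FD$ satisfying (i)--(iii).
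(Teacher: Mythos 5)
Your forward, step-by-step construction diverges from the paper's proof, which builds a \emph{global} isomorphism type over all expressions of the entire run at once (via a chase with Segment, Life-Cycle, Input and FD conditions, driven by an explicit matching $\Ret$ of each retrieval to a prior insertion of the same $TS$-type) and then takes the equivalence classes themselves as the domain of $D_T$. The difference is not cosmetic: the greedy induction has genuine gaps precisely where the global construction does its work. First, your freshness argument for avoiding collisions is unsound. ``Not input-bound'' only means no coordinate of $\bar{s}^T$ is forced equal to an expression anchored at an \emph{input} variable; a coordinate can still be forced by $\sim_{\tau_i}$ to equal a navigation expression anchored at a non-input variable whose value is already fixed in your partial $D_T$, so you are not free to mint a fresh id there, and two insertions of the same non-input-bound $TS$-type may be forced to collide under your choices. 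In the paper, distinctness of $\bar{s}^T$-values across life cycles (property $(\dag)$ in the proof of Lemma~\ref{lem:sets}) is a \emph{consequence} of the chase and of the invariant (Lemma~\ref{lem:invariant}), not a local choice one is entitled to make.

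Second, your retrieval invariant ``$|\{t \in S: t \text{ has type } \hat\tau'\}| \geq \bar c_i(\hat\tau')$'' presupposes that the actual type of an inserted tuple, computed in the database as it exists at retrieval time, coincides with the symbolic $TS$-type it had at insertion time. The $TS$-type involves foreign-key navigation up to depth $h(T)$ from $\bar{x}^T_{in}\cup\bar{s}^T$, while at the insertion step (an internal service) your property (iii) only guarantees the neighborhood to depth $1$; the deeper neighborhood gets materialized later, and nothing in a forward construction ensures it is materialized consistently with the equalities and inequalities that the retrieval-time type $\tau_j$ imposes between $\bar{s}^T$-anchored expressions, the input variables' expressions, and each other. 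The paper secures exactly this by merging, through the Life-Cycle condition and the FD chase, the classes of the inserted and retrieved occurrences and all their navigation expressions, and then proving (Lemma~\ref{lem:invariant}, which rests on Lemma~\ref{lem:segment-lifecycle} and the bound $h(T)=1+|\bar x^T|\cdot F(\delta)$) that these merges never force new equalities inside any single $\tau_i$ --- which is what gives (ii). Relatedly, your handling of (i)/(iii) by ``redirecting'' dangling foreign keys beyond depth $h(T)$ to a ``compatible'' existing tuple is unsubstantiated: such a compatible target need not exist, and redirecting can create equalities that violate (ii); the paper instead proves (i) by showing each instance has fewer than $h(T)$ equivalence classes (Lemma~\ref{lem:length}), so every reachable class is reachable by a short navigation, with no surgery on the database. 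To repair your proof you would essentially have to decide all cross-time identifications up front, which is the paper's global chase.
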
 
Part (i), needed for technical reasons, says that for all values $v$ in $D_T$, if $v$ is the value of expression $x_R.w$,
then $v$ is also the value of an expression $x_R.w'$ where the length of $w'$ is within $h(T)$.
Part (ii) says, intuitively, that the equality types in the symbolic local run and the constructed local run are the same. 
Part (iii) states that for every $0 \leq i < \gamma$, at valuation $\nu_i$, 
every expression $e$ within $\delta$ steps of foreign key navigation from any variable $x$ is defined in $D_T$. 
Since $\delta \geq 1$, this together with (ii) implies that for every condition $\pi$, $\tau_i \models \pi$ iff $D_T \models \pi(\nu_i)$.
So if $\trt$ is accepted by some computation of a \buchi \ automaton $B(T, \eta)$ 
then $\rho_T$ is also accepted by the same computation of $B(T, \eta)$.

We provide the proof of Lemma \ref{lem:sym-to-loc-finite} in the remainder of the section. 
We first show that from each finite local symbolic run $\trt$, we can construct a
\emph{global isomorphism type} of $\trt$, which is essentially an equality type over the entire set of expressions
in the symbolic instances of $\trt$.
Then we show that the local run $\rho_T$ and database $D_T$ whose domain values are the equivalence classes of
the global isomorphism type, satisfy the properties in Lemma \ref{lem:sym-to-loc-finite}.

\subsubsection*{Global Isomorphism Types}
We prove Lemma \ref{lem:sym-to-loc-finite} by constructing $\rho_T$ and $D_T$ from $\trt = (\tauin, \tauout, \II)$ $(\gamma \neq \omega)$.
We first introduce some additional notation.

Let $\cali^+$ be the set of symbolic instances $I_i$ of $\trt$ ($i < \gamma - 1$) such that $+S^T(\bar{s}^T) \in \delta_{i+1}$ and $\hat{\tau_i}$ is not input-bound.
Similarly let $\cali^-$ be the set of symbolic instances $I_i$ ($i < \gamma$) such that $-S^T(\bar{s}^T) \in \delta_i$ 
and $\hat{\tau_i}$ is not input-bound.
We define a one-to-one function $\Ret$ from $\cali^-$ to $\cali^+$ such that for every $I_i = \Ret(I_j)$,
$i < j$ and $\hat{\tau}_i = \hat{\tau}_j$. We say that $I_j$ retrieves from $I_i$.
As $\bar{c}_i \geq 0$ for every $i$, at least one mapping $\Ret$ always exists. 
Intuitively, $\Ret$ connects symbolic instance $I_j$ to $I_i$ such that $I_j$ retrieves a tuple from $S^T$ 
which has the same isomorphism type as a tuple inserted at $I_i$. 
For each $I_i = \Ret(I_j)$, in the local run $\rho_T$ we construct, 
valuations $\nu_i$ and $\nu_j$ have same values on variables $\bar{s}^T$.
Here we ignore input-bound isomorphism types
since these can be seen as part of the input isomorphism type: in $\rho_T$, instances
having the same input-bound $TS$-isomorphism type have the same values on $\bar{s}^T$.

Recall that a \emph{segment} $S = \{(I_i, \sigma_i)\}_{a \leq i \leq b}$ is a maximum consecutive subsequence of $\II$ such that
$\sigma_a$ is an internal service and for $a < i \leq b$, $\sigma_i$ is opening service or closing service of child tasks of $T$.
For our choice of the $\Ret$ relation, we define a \emph{life cycle} $L = \{(I_i, \sigma_i)\}_{i \in J}$
as a maximum subsequence of $\II$ for $J \subseteq [0, \gamma)$
where for each pair of consecutive $(I_a, \sigma_a)$ and $(I_b, \sigma_b)$ in $L$ where $a < b$,
$(I_a, \sigma_a)$ and $(I_b, \sigma_b)$ are either in the same segment or $I_a = \Ret(I_b)$.
Note that a life cycle $L$ is also a sequence of segments.
From the definition of local symbolic runs, we can show the following properties for segments and life cycles:
\begin{lemma} \label{lem:segment-lifecycle}
$(i)$ For every segment $S = \{(I_i, \sigma_i)\}_{a \leq i \leq b}$,
for every $i, j \in [a, b]$ where $i < j$, for $\bar{x} = \{x | x \in \bar{x}^T, x \not\sim_{\tau_i} \anull\}$,
$\tau_i | \bar{x} = \tau_j | \bar{x}$.
$(ii)$ For every life cycle $L = \{(I_i, \sigma_i)\}_{i \in J}$,
for every $i, j \in J$ where $i < j$, for $\bar{x} = \{x | x \in \bar{x}^T_{in} \cup \bar{s}^T, x \not\sim_{\tau_i} \anull\}$,
$\tau_i | \bar{x} = \tau_j | \bar{x}$.
\end{lemma}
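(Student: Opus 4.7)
The plan is to prove both parts by straightforward induction along the sequence of symbolic instances, exploiting the definition of the successor relation on symbolic instances for the different types of services.

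For part (i), I fix $i$ and set $\bar{x} = \{x \in \bar x^T \mid x \not\sim_{\tau_i} \anull\}$, then induct on $j \in [i, b]$. The key observation is that by the definition of a segment, the transition from $I_{j-1}$ to $I_j$ for any $j \in [a+1, b]$ is caused by either an opening service or a closing service of a child task $T_c$ of $T$. For opening services the successor relation forces $\tau_{j-1} = \tau_j$, so the projection is trivially preserved. For closing services, the successor relation guarantees $\tau_{j-1} \restriction \bar x^T_{const} = \tau_j \restriction \bar x^T_{const}$ where $\bar x^T_{const} = \bar x^T - \{x \in \xttcup \mid x \sim_{\tau_{j-1}} \anull\}$. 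The inductive hypothesis ensures that every $x \in \bar x$ is still non-null in $\tau_{j-1}$, hence $\bar x \subseteq \bar x^T_{const}$, and therefore the projection onto $\bar x$ is preserved in this step as well. Chaining gives $\tau_i \restriction \bar x = \tau_j \restriction \bar x$.

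For part (ii), I again fix $i \in J$, set $\bar{x} = \{x \in \bar x^T_{in} \cup \bar s^T \mid x \not\sim_{\tau_i} \anull\}$, and induct on the position of $j$ within the sequence $L$ following $i$. Let $i'$ be the predecessor of $j$ in $L$; the inductive hypothesis yields $\tau_i \restriction \bar x = \tau_{i'} \restriction \bar x$, so variables in $\bar x$ are non-null at $\tau_{i'}$. Two cases arise from the life cycle definition. If $(I_{i'}, \sigma_{i'})$ and $(I_j, \sigma_j)$ lie in the same segment, part (i) applied at $i'$ shows that the projection on the set of variables non-null at $\tau_{i'}$ is preserved up to $\tau_j$, and since $\bar x$ is contained in that set, a further projection yields $\tau_{i'} \restriction \bar x = \tau_j \restriction \bar x$. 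If instead $I_{i'} = \Ret(I_j)$, then by the definition of $\Ret$ the $TS$-isomorphism types coincide: $\hat \tau_{i'} = \hat \tau_j$, i.e., $\tau_{i'} \restriction (\bar x^T_{in} \cup \bar s^T) = \tau_j \restriction (\bar x^T_{in} \cup \bar s^T)$, and projecting to $\bar x \subseteq \bar x^T_{in} \cup \bar s^T$ again gives the desired equality. Combining with the inductive hypothesis closes the induction.

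The main conceptual obstacle is purely bookkeeping: ensuring that in part (ii) the variables in $\bar x$ genuinely remain in $\bar x^T_{in} \cup \bar s^T$ along the life cycle (so that the projection is well-defined at each step), and that retrievals preserve not just values but the entire equality type projected onto $\bar x^T_{in} \cup \bar s^T$. Both facts follow directly from the definitions, so no deeper argument is needed; the lemma is essentially a structural consequence of the constraints placed on successors in symbolic runs and on the $\Ret$ function.
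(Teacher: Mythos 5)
Your proof is correct and is essentially the argument the paper intends: the paper asserts this lemma as a direct consequence of the definition of local symbolic runs, and your induction along the successor relation (opening services of children leave $\tau$ unchanged, closing services of children preserve the projection onto $\bar{x}^T_{const}$ which contains all variables non-null at the previous instance, and $\Ret$-linked instances share the same $TS$-isomorphism type, so projections onto $\bar{x} \subseteq \bar{x}^T_{in} \cup \bar{s}^T$ coincide) is exactly that argument made explicit. The only case your part (i) does not name is a transition under $\sigma^c_T$ itself (possible at the end of a terminal returning segment), but it is immediate since the successor rule there forces $(\tau, \bar{c}) = (\tau', \bar{c}')$.
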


Next, for each symbolic instance $I_i$, we define the {\em pruned} isomorphism type 
\reviewer{pruned isomorphism type is defined implicitly.} \yuliang{changed this paragraph a little bit to make it an explicit definition.}
$\lambda_i = (\cale_i, \sim_i)$ of $I_i$ as follows.
Intuitively, $\lambda_i$ is obtained from $\tau_i$ by removing expressions with ``long'' navigation from variables. 
Formally, let $\cale^+_T$ be the extended navigation set of $\tau_i$ and $\cale^-_T = \cale^+_T - \{x_R.w | x \in \bar{x}^T, |w| > \delta\}$, where 
$\delta = 1$ if $T$ is a leaf task, otherwise $\delta = \\ \max_{T_c \in child(T)} h(T_c)$. 
A \emph{local expression} of $I_i$ is a pair $(i, e)$ where $e \in \cale^-_T$, and we define that
$\cale_i = \{ (i, e)| e \in \cale^-_T\}$ is the \emph{local navigation set} of $\lambda_i$. 
We also define the \emph{local equality type} $\sim_i$ of $\lambda_i$ to be an equality type over $\cale_i$ where
$(i, e) \sim_i (i, e')$ iff $e \sim_{\tau_i} e'$, for every $e, e' \in \cale^-_T$.

Then we define the global isomorphism type as follows.
A global isomorphism type is a pair $\Lambda = (\cale, \sim)$, where 
$\cale = \bigcup_{0 \leq i < \gamma} \cale_i$ is called the \emph{global navigation set} and 
$\sim$ is an equality type over $\cale$ called \emph{global equality type}.
For each expression $e \in \cale$, let $[e]$ denote its equivalence class with respect to $\sim$.
The global equality type $\sim$ is constructed as follows: 
\begin{enumerate}
\item \textbf{Initialization:} $\sim \ \leftarrow \bigcup_{0 \leq i < \gamma} \sim_i$
\item \textbf{Chase:} 
Until convergence, merge two equivalence classes $E$ and $E'$ of $\sim$ if 
$E$ and $E'$ satisfy one of the following conditions: 
\begin{itemize}\itemsep=0pt\parskip=0pt
\item \textbf{Segment-Condition:} For some segment 
 $S = \\ \{(I_i, \sigma_i)\}_{a \leq i \leq b}$,
variable $x \in \bar{x}^T$ and $i, i' \in [a, b]$
where $x \not\sim_{\tau_i} \anull$ and $x \not\sim_{\tau_{i'}} \anull$,
$E = [(i, x)]$ and $E' = [(i', x)]$.
\item \textbf{Life-Cycle-Condition:} For some life cycle
$L = \\ \{(I_i, \sigma_i)\}_{i \in J}$,
variable $x \in \bar{x}^T_{in} \cup \bar{s}^T$ and $i, i' \in J$
where $x \not\sim_{\tau_i} \anull$ and $x \not\sim_{\tau_{i'}} \anull$,
$E = [(i, x)]$ and $E' = [(i', x)]$.
\item \textbf{Input-Condition:} For some variable $x \in \bar{x}^T_{in}$ and $i, i' \in [0, \gamma)$, $E = [(i, x)]$ and $E' = [(i', x)]$.
\item \textbf{FD-Condition:} For some local expressions $(i, e), (i', e')$ and attribute $a$
where $(i, e) \sim (i', e')$, $E = [(i, e.a)]$ and $E' = [(i', e'.a)]$.
\end{itemize}
\end{enumerate}

From the global isomorphism type $\Lambda$ defined above, we construct $\rho_T$ and $D_T$ as follows.
The domain of $D_T$ is the set of equivalence classes of $\sim$.  
Each relation $R(id, a_1, \dots, a_k)$ in $D_T$ consists of all tuples $([(i,e)], [(i,e.a_1)], \ldots [(i,e.a_k)])$ for which
$(i,e), (i,e.a_1), \ldots, (i,e.a_k) \in \cale$. 
Note that the chase step guarantees that for all local expressions $(i, e), (i', e')$,
if $(i, e.a), (i', e'.a) \in \cale$ and $(i, e) \sim (i', e')$, then $(i, e.a) \sim (i', e'.a)$.
It follows that $D_T \models \FD$.
We next define $\rho_T = (\nu_{in}, \nu_{out}, \{(\rho_i, \sigma_i)\}_{0 \leq i < \gamma})$, where $\rho_i = (\nu_i, S_i)$.
First, let $\nu_i(x) =  [(i,x)]$ for $0 \leq i < \gamma$, $\nu_{in} = \nu_0|{\bar x^T_{in}}$, and $\nu_{out} = \bot$ if $\tau_{out} = \bot$
and $\nu_{out} = \nu_{\gamma-1}|\bar x^T_{ret}$ otherwise. 
Suppose that, as will be shown below, properties (i)-(iii) of Lemma \ref{lem:sym-to-loc-finite} hold for $D_T$ 
and the sequence $\{\nu_i\}_{0 \leq i < \gamma}$ so defined. Note that (ii) and (iii) imply that the pre-and-post conditions of all services
$\sigma_i$ hold. Also, by construction, for every variable $x \in \bar{x}^T$ where $\nu_{i-1}(x) = \nu_{i}(x)$ is required by the transition under $\sigma_i$ 
we always have $(i, x) \sim (i+1, x)$. Consider the sets $\{S_i\}_{0 \leq i < \gamma}$. 
Recall the constraints imposed on sets by the definition of local run: 
$S_0 = \emptyset$, and for $0 < i < \gamma$ where $\delta_i$ is the set update of $\sigma_i$,
\begin{enumerate}
\item $S_i = S_{i - 1} \cup \nu_{i - 1}(\bar{s}^T)$ if $\delta_i = \{+S^T(\bar{s}^T)\}$,
\item $S_i = S_{i - 1} - \nu_i(\bar{s}^T)$ if $\delta_i = \{-S^T(\bar{s}^T)\}$ and
\item $S_i = (S_{i - 1} \cup \{\nu_{i-1}(\bar{s}^T)\}) - \{\nu_i(\bar{s}^T)\}$ if $\delta_i = \\ \{+S^T(\bar{s}^T), -S^T(\bar{s}^T)\}$,
\item $S_i = S_{i - 1}$ if $\delta_i = \emptyset$.
\end{enumerate}
Note that the only cases that can make $\rho_T$ invalid are those for which $\delta_i$ contains $-S^T(\bar{s}^T)$. 
\reviewer{I was not even sure what it meant for a case to be problematic.} \yuliang{fixed.}
Indeed, while a tuple can always be inserted, a tuple can be retrieved only if it belongs to $S^T$ (or is simultaneously inserted as in case (3)).
Thus, in order to show that the specified retrievals are possible, it is sufficient to prove the following.

\begin{lemma} \label{lem:sets}
Let $0 < i < \gamma$ be such that  (1)-(4) hold for $\{S_j\}_{0 \leq j < i}$.
If $\delta_i = \{-S^T(\bar{s}^T)\}$ then $\nu_i(\bar s^T) \in S_{i-1}$.
If $\delta_i = \{+S^T(\bar{s}^T), -S^T(\bar{s}^T)\}$ then either $\nu_i(\bar s^T) \in S_{i-1}$ or
$\nu_i(\bar s^T) = \nu_{i-1}(\bar s^T)$.
\end{lemma}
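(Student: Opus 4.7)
The plan is to prove the claim by induction on $i$, splitting on whether the $TS$-isomorphism type $\hat\tau_i = \tau_i|(\bar{x}^T_{in} \cup \bar{s}^T)$ is input-bound. If $\hat\tau_i$ is input-bound, then every non-null component of $\bar{s}^T$ is forced by $\hat\tau_i$ to equal an expression rooted at an input variable; combined with the Input-Condition of the chase (which identifies input-variable classes uniformly across the entire run), this determines the value $\nu_i(\bar{s}^T)$ solely from $\hat\tau_i$. The symbolic update rule confines the counter for an input-bound type to $\{0,1\}$, and the symbolic retrieval transition requires $\bar{c}_{i-1}(\hat\tau_i) \geq 1$. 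A straightforward inductive invariant on the prefix shows that the unique tuple determined by $\hat\tau_i$ lies in $S_{i-1}$ exactly when that counter equals $1$, yielding the claim in this case.

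If $\hat\tau_i$ is not input-bound, let $I_k = \Ret(I_i)$, which exists because the non-negativity of $\bar{c}_{i-1}$ forces the matching to be total on $\cali^-$. The pair $(I_k, I_i)$ lies in a common life cycle, so by the Life-Cycle-Condition of the chase (which exactly encodes Lemma~\ref{lem:segment-lifecycle}(ii)) we have $[(k,s)] = [(i,s)]$ for every $s \in \bar{s}^T$ non-null at step $k$; the remaining components are null at step $i$ as well, since $\hat\tau_k = \hat\tau_i$. Therefore $\nu_k(\bar{s}^T) = \nu_i(\bar{s}^T)$, and the insertion transition at step $k+1$ places exactly this tuple into $S_{k+1}$.

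The crux of the proof is to argue that no intermediate retrieval at step $m$ with $k+1 \leq m < i$ removes $\nu_i(\bar{s}^T)$ prematurely. Such a removal would require $\nu_m(\bar{s}^T) = \nu_i(\bar{s}^T)$, equivalently $[(m,s)] = [(i,s)]$ for every component $s$ of $\bar{s}^T$. By injectivity of $\Ret$, $I_m$ belongs to a life cycle disjoint from that of $I_i$, so it suffices to exhibit a witness component $s_0 \in \bar{s}^T$ for which the chase does not identify $[(m,s_0)]$ with $[(i,s_0)]$. Non-input-boundedness of $\hat\tau_i$ supplies such an $s_0$: a component non-null at step $k$ but not equated in $\hat\tau_i$ to any expression rooted at $\bar{x}^T_{in}$. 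I would then argue by induction on the chase steps that $[(m,s_0)]$ and $[(i,s_0)]$ are never merged: the Segment- and Life-Cycle-Conditions act strictly within a single segment or life cycle, the Input-Condition does not apply to $s_0$, and any FD-induced identification ultimately traces back to one of the preceding three rules, preserving life-cycle separation on $s_0$. This combinatorial analysis of the chase is the main obstacle and requires carefully ruling out cascading FD-equalities that would inadvertently cross life-cycle boundaries.

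Finally, the mixed update $\delta_i = \{+S^T(\bar{s}^T), -S^T(\bar{s}^T)\}$ reduces to the two cases above. If $\Ret$ pairs $I_i$ with an earlier insertion $I_k$ with $k < i-1$, the non-input-bound argument applies verbatim, since the newly inserted tuple at step $i$ does not interfere with the life-cycle bookkeeping linking $I_k$ to $I_i$. Otherwise the only symbolically consistent configuration is the degenerate one in which the retrieval sees exactly the tuple just inserted at step $i$, giving $\nu_i(\bar{s}^T) = \nu_{i-1}(\bar{s}^T)$, which is the second disjunct that the lemma explicitly permits.
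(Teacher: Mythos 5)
Your proposal follows essentially the same route as the paper's proof: the input-bound case via the $\{0,1\}$-valued counter and the fact that the $TS$-isomorphism type determines the tuple (the paper's observation $(\ddag)$), and the non-input-bound case via the $\Ret$ pairing plus the Life-Cycle chase condition to get $\nu_k(\bar s^T)=\nu_i(\bar s^T)$, with premature removals excluded by showing that $\bar s^T$-values at instances in distinct life cycles must differ (the paper's observation $(\dag)$); your witness-component analysis of the chase phases is exactly how $(\dag)$ is justified, and it is what the invariant of Lemma \ref{lem:invariant} (with its segment/life-cycle/input phase structure) delivers, so that part of your plan is sound even though left as a sketch. One step needs repair: the claim that an intermediate retrieving instance $I_m$ with $k<m<i$ lies in a life cycle disjoint from that of $I_i$ does not follow from injectivity of $\Ret$. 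The correct reason is structural: since $\delta_i$ contains a retrieval, $\sigma_i$ is an internal service, so $I_i$ is the first instance of its segment; hence the immediate predecessor of $I_i$ in its life cycle cannot be a same-segment instance and must be $\Ret(I_i)=I_k$, so no member of that life cycle has index strictly between $k$ and $i$, forcing $I_m$ into a different life cycle. With that substitution (and noting that your witness argument does not actually need $\hat\tau_m$ to be non-input-bound, since the chosen component $s_0$ already rules out any cross-life-cycle merge outside the input phase), your proof goes through and matches the paper's, including the treatment of the mixed update $\{+S^T(\bar s^T),-S^T(\bar s^T)\}$ where $\Ret(I_i)=I_{i-1}$ yields the degenerate disjunct $\nu_i(\bar s^T)=\nu_{i-1}(\bar s^T)$.
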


\begin{proof}
The key observations, which are easily checked by the construction of $\Lambda$, are the following:
\begin{itemize}\itemsep=0pt\parskip=0pt
\item [(\dag)] for every $k,k' \in [0,\gamma)$, if $\hat{\tau}_k$, $\hat{\tau}_{k'}$
are not input-bound and $I_k$ and $I_{k'}$ are not in the same life cycle, 
then $\nu_k(\bar{s}^T) \neq \nu_{k'}(\bar{s}^T)$.
\item [(\ddag)] for every $k,k' \in [0,\gamma)$, if $\hat{\tau}_k$, $\hat{\tau}_{k'}$
are input-bound,
$\nu_k(\bar{s}^T) = \nu_{k'}(\bar{s}^T)$ iff $\hat{\tau}_k = \hat{\tau}_{k'}$.
\end{itemize}

Now suppose that $0 < i < \gamma$, (1)-(4) hold for $\{S_j\}_{0 \leq j < i}$, and
$\delta_i = \{-S^T(\bar{s}^T)\}$.  
Suppose first that $\hat{\tau}_i$ is not input-bound. 
Let $L$ be the life cycle to which $I_i$ belongs, and $n< i$ be such that $I_n = \Ret(I_i)$.
By $(\dag)$, $\nu_k(\bar{s}^T) \neq \nu_{i}(\bar{s}^T)$ for every $n < k < i$.
Since (1)-(4) hold for all $j < i$, $\nu_n(\bar{s}^T) \in S_{i-1}$. By construction of $\Lambda$ (specifically the Life-Cycle chase condition),
$\nu_n(\bar{s}^T) = \nu_i(\bar{s}^T)$. Thus, $\nu_i(\bar{s}^T) \in S_{i-1}$.
The case when $\delta_i = \{+S^T(\bar{s}^T), -S^T(\bar{s}^T)\}$ is similar.

Now suppose $\hat{\tau}_i$ is input-bound and $\delta_i = \{-S^T(\bar{s}^T)\}$. 
By definition of symbolic local run, $\bar c_{i-1}(\hat{\tau}_i) = 1$.
Thus, there must exist a maximum $n < i$ such that $\hat{\tau}_n = \hat{\tau}_i$ and
for which the transition under $\sigma_n$ sets $\bar c_n(\hat{\tau}_i) = 1$. Since $\bar c_{i-1}(\hat{\tau}_i) = 1$ and $n$ is maximal,
there is no $j$, $n < j < i$ for which $\delta_j$ contains $-S^T(\bar{s}^T)$ and $\hat{\tau}_j = \hat{\tau}_i$. 
From the above and $(\ddag)$ it easily follows that $\nu_n(\bar{s}^T) = \nu_i(\bar{s}^T)$ and $\nu_i(\bar{s}^T) \in S_{i-1}$.
The case when $\delta_i = \{+S^T(\bar{s}^T), -S^T(\bar{s}^T)\}$ is similar.

\end{proof}

It remains to prove properties (i)-(iii) of Lemma \ref{lem:sym-to-loc-finite}. First, 
as $\delta \geq 1$ and $\delta \geq h(T_c)$ for every $T_c \in child(T)$,
property (iii) is immediately satisfied. We next prove (i) and (ii).

\subsubsection*{Proof of property (i)} \label{sec:length}
We first introduce some additional notation. 
For each $i$ and $(i,e) \in \cale_i$, we denote by $[(i,e)]_i$ the equivalence class of $(i,e)$ wrt $\sim_i$.
And for $x \in \bar x^T$ we denote by 
$\reach_i(x,w)$ the unique equivalence class of $\sim_i$ reachable from $[(i, x_R)]_i$ by some navigation $w$ (if such class exists).
More precisely: 

\begin{definition} \label{def:reach}
For each $0 \leq i < \gamma$, we define $G(\sim_i)$ to be the labeled directed graph 
whose nodes are the equivalence classes of $\sim_i$ and where for each attribute $a$, 
there is an edge labeled $a$ from 
$E$ to $F$ if there exist $e \in E$ and $f \in F$ such that
$(i, e.a) \in \cale_i$ and $e.a \sim_{\tau_i} f$. 
Note that for each $E$ there is at most one outgoing edge labeled $a$.
For $x \in \bar x^T, x \not\sim_i \anull $ and sequence of attributes $w$, we denote by $\reach_i(x,w)$ the 
unique equivalence class $F$ of $\sim_i$ reachable from
$[(i,x)]_i$ by a path in $G(\sim_i)$ whose sequence of edge labels spells $w$, if such exists, and the empty set otherwise.
\end{definition}

By our choice of $h(T)$ and our construction of the $\lambda_i$'s, we can show that
\begin{lemma} \label{lem:length}
For every $0 \leq i < \gamma$ and expression $x_R.w$, if 
$\reach_i(x, w)$ is non-empty, then there exists an expression $x_R.\tilde{w}$ where $|\tilde{w}| < h(T)$
such that $\reach_i(x, w) = \reach_i(x, \tilde{w})$.
\end{lemma}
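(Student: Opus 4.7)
}

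The plan is a standard pumping argument on the finite deterministic graph $G(\sim_i)$. First I will bound the number of equivalence classes reachable from $[(i,x_R)]_i$ in $G(\sim_i)$, and then use determinism of $G(\sim_i)$ to shortcut any overly long path.

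For the counting bound, observe that every node of $G(\sim_i)$ is an equivalence class of $\sim_i$, and $\sim_i$ is an equivalence relation on the local navigation set $\cale_i$ which was obtained by restricting $\cale_T^+$ to expressions of the form $x_R.w$ with $|w| \leq \delta$ (together with variables and the constants $0,\anull$). By the definition of $F$, for each fixed variable $x \in \bar{x}^T$ there are at most $F(\delta)$ expressions of the form $x_R.w$ with $|w| \leq \delta$ in $\cale_T^-$, so the total number of equivalence classes of $\sim_i$ is at most $|\bar{x}^T| \cdot F(\delta) + O(1)$; a careful accounting gives the sharp bound $|\bar{x}^T| \cdot F(\delta)$ on the number of classes that can be reached from $[(i,x_R)]_i$ along a non-empty navigation (the $+1$ in $h(T)$ accounts for the path length being one less than the node count).

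Next I will exploit determinism of $G(\sim_i)$: by Definition~\ref{def:reach}, each class has at most one outgoing edge for each attribute label, so a navigation $w$ determines at most one path from $[(i,x_R)]_i$, and hence $\reach_i(x,w)$ is a single class $F$ (when non-empty). Suppose $|w| \geq h(T)$. Then the unique path of length $|w|$ visits $|w|+1 \geq h(T)+1$ nodes, which strictly exceeds the number of classes reachable from $[(i,x_R)]_i$ established in the first step. By the pigeonhole principle, some class $C$ is visited at positions $j < k$ along the path; excising the factor of $w$ between positions $j$ and $k$ produces a strictly shorter navigation $w'$ that still leads from $[(i,x_R)]_i$ to $F$, i.e.\ $\reach_i(x,w') = \reach_i(x,w)$. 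Iterating this excision until no class is revisited yields a navigation $\tilde{w}$ with $|\tilde{w}| \leq |\bar{x}^T| \cdot F(\delta) < h(T)$ and $\reach_i(x,\tilde{w}) = \reach_i(x,w)$, as required.

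The only delicate point will be the exact counting in the first step — in particular, verifying that the classes reachable from $[(i,x_R)]_i$ are indeed bounded by the number of expressions of the form $x_R.w$ with $|w| \leq \delta$, even though intermediate classes on a path may contain expressions anchored at other variables. I expect this to follow because any class on such a path contains the concrete concatenated expression $x_R.w'$ (for $w'$ the prefix of edge labels so far), which lies in $\cale_T^-$ as long as $|w'| \leq \delta$, and beyond that length the pumping argument has already produced a shorter witness. The rest of the argument is routine.
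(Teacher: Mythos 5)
Your proposal is correct and follows essentially the same route as the paper's proof: bound the number of equivalence classes of $\sim_i$ (nodes of $G(\sim_i)$) by $|\bar{x}^T|\cdot F(\delta) < h(T)$ via counting navigation expressions per variable, then shorten the path from $[(i,x_R)]_i$ to $\reach_i(x,w)$ to a simple one of length below $h(T)$ — your cycle-excision/pigeonhole phrasing is just the explicit form of the paper's "take a simple path" step. The delicate point you flag is handled in the paper exactly as you anticipate, by bounding all classes (over all anchor variables) rather than only those anchored at $x$.
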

\begin{proof}
It is sufficient to show that for each $i$, $|G(\sim_i)| < h(T)$, where $|G(\sim_i)|$ is
the number of nodes in $G(\sim_i)$. Indeed, since there is a path from $[(i, x_R)]_i$ to 
$\reach_i(x, w)$ in $G(\sim_i)$, there must exist a simple such path, of length at most $|G(\sim_i)| < h(T)$.

To show that $|G(\sim_i)| < h(T)$, recall that $|G(\sim_i)|$ is bounded by the number of isomorphism types of $\sim_i$. 
Recall that $h(T) = 1 + |\bar{x}^T| \cdot F(\delta)$ 
where $F(n)$ is the maximum number of distinct paths of length at most $n$ 
starting from any relation in the foreign key graph FK.
By definition, for each variable $x$, the number of expressions $\{e | e = x_R.w, (i, e) \in \cale_i\}$
is bounded by $F(\delta)$. Thus the number of equivalence classes of $\sim_i$ is at most
$|\bar{x}^T| \cdot F(\delta) < h(T)$. So $|G(\sim_i)| < h(T)$.
\end{proof}

Property (i) now follows from Lemma \ref{lem:length}. Let $e = x_R.w$ be an expression for which $\nu_i^*(e)$ is defined.
By construction, $\reach_i(x,w) \subseteq \nu_i^*(e)$. By Lemma \ref{lem:length}, there exists $e' = x_R.w'$ where $|w'| < h(T)$ 
and $\reach_i(x,w') = \reach(x,w)$. It follows that 
$\nu_i^*(e')$ is defined and $\nu_i^*(e) \cap \nu_i^*(e') \neq \emptyset$. 
As $\nu_i^*(e)$ and $\nu_i^*(e')$ are equivalence classes of $\sim$, we have $\nu_i^*(e) = \nu_i^*(e')$,  proving (i).

\subsubsection*{Proof of property (ii)} \label{sec:invariant}

To show property (ii), it is sufficient to show 
an invariant which implies property (ii) and is satisfied throughout the construction of $\Lambda$.
For simplicity, we assume that the chase step in the construction of $\sim$ is divided into the following 3 phases.
\begin{itemize}\itemsep=0pt\parskip=0pt
\item The \emph{Segment Phase}. In this phase, we merge equivalence classes $E$ and $E'$
that satisfies either the Segment-Condition or the FD-condition.
\item The \emph{Life Cycle Phase}. In this phase, we merge equivalence classes $E$ and $E'$
that satisfies either the Life-Cycle-Condition or the FD-condition.
\item The \emph{Input Phase}. In this phase, we merge equivalence classes $E$ and $E'$
that satisfies either the Input-condition or the FD-condition.
\end{itemize}
It is easily seen that no chase step applies after the input phase.
Thus, the above steps compute the complete chase.

For each equivalence class $E$ of $\sim$, we let $i(E)$ be the set of indices $\{i | (i, e) \in E\}$ and
for each $i \in i(E)$, we denote by $E|_i$ the projection of $E$ on the navigation set $\cale_i$.
One can show that during the segment phase, for every $E$ of $\sim$, $i(E)$ are indices within the same segment.
During the life cycle phase, for every $E$ of $\sim$, $i(E)$ are indices within the same life cycle.
And during the input phase, $i(E)$ can be arbitrary indices.

The invariant is defined as follows.

\begin{lemma}{(Invariant of $\Lambda$)} \label{lem:invariant}
Throughout the construction of $\Lambda$, 
for every equivalence class $E$ of $\sim$, there exists variable $x \in \bar{x}^T$ and navigation $w$ 
where $|w| \leq h(T)$, such that for every $i \in i(E)$, $E|_i = \reach_i(x, w)$. 
\end{lemma}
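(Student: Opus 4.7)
The plan is to prove the invariant by induction on the sequence of chase steps used to construct $\sim$, showing that each merge preserves the existence of a common witness $(x, w)$ with $|w| \leq h(T)$. The base case is immediate: prior to any chase step, $\sim$ is the disjoint union $\bigcup_{0 \leq i < \gamma} \sim_i$, so every class $E$ has $i(E) = \{i\}$ for a single index, and the pair $E \ni (i, x_R.w)$ consisting of a variable $x$ and navigation $w \in \cale_T^-$ with $|w| \leq \delta \leq h(T)$ directly gives the witness, since $E|_i = \reach_i(x, w)$ by definition of $\sim_i$.

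For the inductive step, I would consider each of the four merge conditions. In the Segment-, Life-Cycle-, and Input-Conditions, the merge is triggered by two classes containing $(i, x)$ and $(i', x)$ for a common variable $x$ (non-$\anull$ at both indices), and these indices lie within a single segment, life cycle, or the whole run respectively. The crucial point is that Lemma \ref{lem:segment-lifecycle}, together with the fact that input variables are propagated unchanged, ensures that the reachability structure from $x$ via any navigation $w$ is preserved across the relevant indices. Consequently, whichever witness $(y, u)$ is supplied by induction for one of the two classes can serve as witness for the merged class: the equality $E|_j = \reach_j(y, u)$ transfers to every $j$ in the union of the index sets. The FD-Condition case is handled by extending the parent witness: if the parent class $[(i,e)] = [(i',e')]$ has witness $(y, u)$, then the merged child class $[(i, e.a)] \cup [(i', e'.a)]$ is captured by the pair $(y, u.a)$, since navigation by the attribute $a$ from $\reach_j(y, u)$ lands in $\reach_j(y, u.a)$ at each index $j$ in the new class.

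The main obstacle is controlling the length bound under repeated FD-extensions: $|u.a|$ may grow past $h(T)$ after nested merges. The plan to handle this is to invoke Lemma \ref{lem:length}, which bounds the size of the graph $G(\sim_j)$ by $h(T)$ and hence guarantees, at each index $j$, the existence of a simple replacement navigation $\tilde w_j$ of length less than $h(T)$ reaching the same class. The delicate point I anticipate as the technical heart of the proof is showing that a single shortening $\tilde w$ works simultaneously for all $j \in i(E)$. This should follow from the same consistency that powered the earlier cases: within the phase in which the merge occurs, Lemma \ref{lem:segment-lifecycle} forces the subgraph of $G(\sim_j)$ reachable from $y$ (restricted to non-null navigations) to be isomorphic across all $j \in i(E)$, so any simple path witnessing a shortening at one representative index is realized identically at every other index. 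Once this uniform shortening is established, the invariant is restored with $|\tilde w| < h(T)$ and the induction proceeds.
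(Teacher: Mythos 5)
Your base case and your treatment of the FD-condition within a single segment are essentially right, but the inductive step for the merge conditions has a genuine gap: you claim that ``whichever witness $(y,u)$ is supplied by induction for one of the two classes can serve as witness for the merged class,'' transferring $E|_j = \reach_j(y,u)$ to all indices of the union. This transfer is not available for an arbitrary inherited witness. Lemma \ref{lem:segment-lifecycle} preserves the projected isomorphism type only for \emph{non-null variables within one segment} (part (i)), only for $\bar{x}^T_{in} \cup \bar{s}^T$ across a life cycle (part (ii)), and only for $\bar{x}^T_{in}$ across arbitrary indices. So in the life-cycle phase, if the inherited witness of $[(i,x)]$ is anchored at some variable $z \notin \bar{x}^T_{in} \cup \bar{s}^T$, then $z$ may be overwritten (or null) in the other segments of the life cycle, and $\reach_{i'}(z,w)$ at an index $i'$ of a different segment need not equal $E|_{i'}$ --- it may even be empty. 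The same problem defeats your ``isomorphic subgraph of $G(\sim_j)$ reachable from $y$'' argument for uniform shortening: that isomorphism across all $j \in i(E)$ holds only when the anchor lies in the variable set preserved over the span of $i(E)$, which is exactly what you have not arranged.

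The paper closes this gap by organizing the chase into three phases (segment, life cycle, input) and proving a \emph{strengthened}, phase-dependent invariant: when two classes $[(i,y)]$ and $[(i',y)]$ are merged on a common variable $y$, the witness of the merged class is \emph{re-anchored} to $(y,\epsilon)$ --- in the life-cycle phase $y \in \bar{x}^T_{in} \cup \bar{s}^T$ and in the input phase $y \in \bar{x}^T_{in}$, so every class produced in a phase carries a witness whose anchor is preserved across the relevant span. FD-extensions are then handled by shortening at the single triggering index $i$ via Lemma \ref{lem:length} (yielding $\tilde u$ with $|\tilde u| < h(T)$, so $|\tilde u.a| \le h(T)$), deriving the single equivalence $y_{R_1}.\tilde u.a \sim_{\tau_i} x_{R_2}.w$ there, and propagating that one bounded-length equivalence to all other indices using Lemma \ref{lem:segment-lifecycle} --- no uniform-shortening or graph-isomorphism argument is needed. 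To repair your proof you would need to add this re-anchoring (or an equivalent strengthening of the induction hypothesis restricting witness anchors per phase); as written, the inductive step fails in the life-cycle and input phases.
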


Lemma \ref{lem:invariant} implies that for each equivalence class $E$ of $\sim$
and for each $\lambda_i$, $E$ is a superset of 
at most one equivalence class of $\lambda_i$. So $(i, e) \sim (i, e')$ implies $(i, e) \sim_i (i, e')$ 
thus $\Lambda | \cale_i = \lambda_i$ for every $0 \leq i < \gamma$, which implies
property (ii) of Lemma \ref{lem:sym-to-loc-finite}.

\begin{proof}
We consider each step of the construction of the global equality type $\sim$.
For the initialization step, the invariant holds by Lemma \ref{lem:length}.

For the Chase steps, assume that the invariant is satisfied before
merging two equivalence classes $E$ and $E'$. 
For each equivalence class $E$ of $\sim$, we denote by
$x(E)$ and $w(E)$ the variable and the navigation for $E$ as stated in Lemma \ref{lem:invariant}.
To show the invariant is satisfied after merging $E$ and $E'$,
it is sufficient to show that there exists variable $y$ and navigation $u$ where $|u| \leq h(T)$
such that for every $i \in i(E)$, $E|_i = \reach_i(y, u)$ and for every $i \in i(E')$, $E'|_i = \reach_i(y, u)$. 

Consider the segment phase. Suppose first that $E$ and $E'$ are merged due to the Segment-Condition. For simplicity, 
we let $x = x(E), x' = x(E'), w = w(E)$ and $w' = w(E')$.
If $E = [(i, y)]$ and $E' = [(i', y)]$ where $i, i'$ are indices within the same segment $S$, 
then by the assumption, we have $(i, y) \in \reach_i(x, w)$, so $y \sim_{\tau_i} x_R.w$.
As $i(E)$ are indices of a segment $S$, and by Lemma \ref{lem:segment-lifecycle},
we have that for every $j \in i(E)$, $y \sim_{\tau_j} x_R.w$,
so $E|_j = \reach_j(x, w) = \reach_j(y, \epsilon)$.
Similarly, we can show that for every $j \in i(E')$, 
$E'|_j = \reach_j(y, \epsilon)$.

Next suppose $E$ and $E'$ are merged due to the FD-condition. 
Thus, $E = [(i, e.a)]$ and $E' = [(i', e'.a)]$ where $(i, e) \sim (i', e')$.
Let $E^*$ be the equivalence class of $\sim$ that contains $(i, e)$ and $(i', e')$.
By the assumption, for $y = x(E^*)$ and $u = w(E^*)$, we have that 
$E^*|_i = \reach_i(y, u)$ so $(i, e) \in \reach_i(y, u)$.
By Lemma \ref{lem:length}, there exists navigation $\tilde{u}$ where $|\tilde{u}| < h(T)$ such that
$\reach_i(y, u) = \reach_i(y, \tilde{u})$. So  \\ $(i, e.a) \in \reach_i(y, \tilde{u}.a)$.
Then in $E$, by the hypothesis, we have $(i, e.a) \in \reach_i(x, w)$ so $\reach_i(y, \tilde{u}.a) = \reach_i \\ (x, w)$.
As $i(E)$ are indices of a segment $S$, and by Lemma \ref{lem:segment-lifecycle}, 
we have that for every $j \in i(E)$, for some relation $R_1$ and $R_2$,
$y_{R_1}.\tilde{u}.a \sim_{\tau_j} x_{R_2}.w$ so $E|_j = \reach_j(x, w) = \reach_j(y, \tilde{u}.a)$.
Similarly, we can show that for every $j \in i(E')$,  $E'|_j = \\ \reach_j(y, \tilde{u}.a)$.
Therefore, the invariant is preserved during the segment phase.

Consider the life cycle phase. 
We can show that the invariant is again preserved, together with the
following additional property: for each equivalence class $E$ of $\sim$ produced in this phase,
$x(E) \in \bar{x}^T_{in} \cup \bar{s}^T$. 
Suppose $E$ and $E'$ are merged due to the Life-Cycle Condition,
where $E = [(i, y)]$, $E' = [(i', y)]$
and $y \in \bar{x}^T_{in} \cup \bar{s}^T$. We have that
$E |_j = \reach_j(x, w) = \reach_j(y, \epsilon)$ for every $j \in i(E)$.
Indeed, by Lemma \ref{lem:segment-lifecycle} and because $i(E)$ are indices of some life cycle $L$,
$x_R.w \sim_{\tau_i} y$ implies that $x_R.w \sim_{\tau_j} y$ for every index $j$ of $L$.
Similarly, $E' |_j = \reach_j(y, \epsilon)$ for every $j \in i(E')$.
The case when $E$ and $E'$ are merged in this stage due to the FD-condition is similar to the above.
Following similar analysis, we can show that the input phase also preserves the invariant 
together with the property that for every $E$ produced at the input phase, $x(E) \in \bar{x}^T_{in}$.
This uses the fact that $\tau_i | \bar{x}^T_{in} = \tauin$ for every $0 \leq i < \gamma$. 
\end{proof}

This completes the proof of Lemma \ref{lem:sym-to-loc-finite}.

\subsubsection*{Infinite Local Symbolic Runs} \label{sec:inf-local}

In this section we show that Lemma \ref{lem:sym-to-loc-finite} can be extended to infinite periodic local symbolic runs, 
which together with finite runs are sufficient to represent accepted symbolic trees of runs 
by our VASS construction (see Lemma \ref{lem:vass-correctness}). 
Specifically, we show that we can extend the construction of the global isomorphism type
to infinite periodic $\trt$, while producing only {\em finitely} many equivalence classes.
This is sufficient to show that the corresponding database $D_T$ is finite.
We define periodic local symbolic runs next. 

\begin{definition}
A local symbolic run $\trt = \\ (\tauin, \tauout, \II)$ is periodic if $\gamma = \omega$ and 
there exists $n > 0$ and $0 < t \leq n$, such that for every $i \geq n$, 
symbolic instances $I_i = (\tau_i, \bar{c}_i, \sigma_i)$ and $I_{i-t} = (\tau_{i-t}, \bar{c}_{i-t}, \sigma_{i-t})$
satisfy that $(\tau_i, \sigma_i) = (\tau_{i - t}, \sigma_{i - t})$ and $\bar{c}_i \geq \bar{c}_{i - t}$.
The integer $t$ is called the \emph{period} of $\trt$.
\end{definition} 

From Lemma \ref{lem:vass-correctness} in Section \ref{sec:verification}, we have the following:
\begin{corollary}
It there exists a symbolic tree of runs $\Sym$ accepted by $\calb_\beta$, then there exists a symbolic tree of runs $\Sym'$ 
accepted by $\calb_\beta$ such that for every $\trt \in \Sym$, $\trt$ is finite or periodic.
\end{corollary}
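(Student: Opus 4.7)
The plan is to extract $\Sym'$ from $\Sym$ by recursively re-witnessing each local symbolic run using a VASS path guaranteed by Lemma \ref{lem:vass-correctness}. The key observation is that any VASS path produced by that lemma automatically yields a local symbolic run which is either finite (returning or blocking path) or periodic (lasso path), so doing this replacement node-by-node produces a symbolic tree of runs of the desired form.

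More concretely, I proceed top-down on $\Sym$. Since $\Sym$ is accepted by $\calb_\beta$, every node $\trt$ (with label $T$, input type $\tauin^T$, output type $\tauout^T$, and truth assignment $\beta^T$ to $\Phi_T$ read off from the accepting adornment) witnesses that $(\tauin^T, \tauout^T, \beta^T) \in \calr_T$. I would process nodes starting from the root, maintaining the invariant that each node to be processed has an associated triple in $\calr_T$. For a node with triple $(\tauin^T,\tauout^T,\beta^T)$, Lemma \ref{lem:vass-correctness} supplies a witnessing returning, lasso, or blocking path in $\calv(T,\beta^T)$, from which I read off a local symbolic run $\trt'$: a returning or blocking path of length $n$ gives a finite local symbolic run, and a lasso path $(v_0,\bar 0)\ldots(v_n,\bar z_n)\ldots(v_n,\bar z_n')$ with $\bar z_n\le\bar z_n'$ gives an infinite local symbolic run obtained by unfolding the cycle from $v_n$ to $v_n$ indefinitely. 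This unfolded run is periodic in the sense of the definition, taking the period $t$ to be the length of the cycle and taking the threshold to be $n+t$: because $\bar z_n' - \bar z_n \ge \bar 0$, each additional traversal of the cycle strictly weakens the counter constraints, so the resulting infinite sequence $\{(I_i,\sigma_i)\}_{i<\omega}$ is a valid local symbolic run and satisfies $(\tau_i,\sigma_i)=(\tau_{i-t},\sigma_{i-t})$ with $\bar c_i\ge\bar c_{i-t}$ for $i\ge n+t$.

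For every transition in $\trt'$ under an opening service $\sigma^o_{T_c}$ of a child $T_c$, the VASS construction chose some triple $(\tauin^{T_c},\tauout^{T_c},\beta^{T_c})\in\calr_{T_c}$; I attach a child node to $\trt'$ and recursively apply the same construction to it using this triple. In the case of a lasso path, each iteration of the cycle fires the same opening service with the same chosen triple, so the infinitely many calls to $T_c$ produced by the unfolding all receive (freshly built, isomorphic) recursively constructed subtrees. Combining the re-witnessed local symbolic runs in this way yields a symbolic tree of runs $\Sym'$ in which every node is either finite or periodic, and whose consistency conditions across edges are satisfied by construction (since the VASS transitions enforce exactly the input/return type compatibility required by the definition of a symbolic tree of runs). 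Acceptance of $\Sym'$ by $\calb_\beta$ follows from the fact that each VASS path was chosen so that the state component projects onto an accepting computation of $B(T,\beta^T)$, which is precisely the condition for acceptance.

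The main obstacle is making the lasso-unfolding compatible with the tree structure: one must check that iterating the cycle infinitely many times does not force an invalid symbolic run (counters remain non-negative, types remain consistent) and that the recursive construction of subtrees for the repeated subtask openings can always be carried out. Non-negativity is handled by the $\bar z_n\le\bar z_n'$ condition together with the non-negativity already guaranteed along the prefix; type consistency across the periodic boundary is immediate from the equality $(\tau_i,\sigma_i)=(\tau_{i-t},\sigma_{i-t})$; and the recursive step on children is always available because each $\calr_{T_c}$ was assumed computed/available at the moment of the VASS construction, exactly as in the proof of Lemma \ref{lem:vass-correctness}.
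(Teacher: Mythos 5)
Your proposal is correct and follows essentially the same route as the paper: the corollary is obtained there directly from Lemma \ref{lem:vass-correctness}, whose proof contains exactly your construction — returning/blocking paths yield finite local symbolic runs, a lasso path is unfolded into a periodic run using $\bar z_m \le \bar z_n$ to keep counters non-negative, and fresh copies of recursively constructed child subtrees are attached for each repetition of an opening service in the unfolded cycle. Your handling of acceptance via the $B(T,\beta)$ state component and of type consistency across the periodic boundary matches the paper's argument.
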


The above corollary indicates that for verification, it is sufficient to consider only finite and periodic $\trt$.
So what we need to prove is:

\begin{lemma} \label{lem:sym-to-loc-infinite}
For every periodic local symbolic run  $\trt = \\ (\tauin, \tauout, \IIw)$, there exists a local run
$\rho_T = \\ (\nu_{in}, \nu_{out}, \RSw)$ on finite database $D_T \models \FD$ such that for every $i \geq 0$,
\begin{itemize}\itemsep=0pt\parskip=0pt
\item [(i)] for every expression $e = x_R.w$ where $\nu_i^*(e)$ is defined,
there exists expression $e' = x_R.w'$ where $|w'| \leq h(T)$ such that $\nu_i^*(e) = \nu_i^*(e')$,
\item [(ii)] for all expressions $e, e' \in \cale^+_T$ of $\tau_i$,
if $\nu_i^*(e)$ and $\nu_i^*(e')$ are defined, then $e \sim_{\tau_i} e'$ iff $\nu_i^*(e) = \nu_i^*(e')$, and
\item [(iii)] for $\delta = h(T_c)$ if $\sigma_i \in \{ \sigma_{T_c}^o, \sigma_{T_c}^c\}$ for some $T_c \in child(T)$ and $\delta = 1$ otherwise,
for every expression $e \in \cale^-_T = \cale^+_T - \{x_R.w | x \in \bar{x}^T, |w| > \delta\}$, $\nu_i^*(e)$ is defined.
\end{itemize}
\end{lemma}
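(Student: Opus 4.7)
The plan is to adapt the construction from the proof of Lemma \ref{lem:sym-to-loc-finite} to produce a finite database $D_T$ even for an infinite $\trt$, exploiting periodicity to keep the number of equivalence classes of the global isomorphism type bounded. The key is to choose the retrieval function $\Ret$ periodically: for indices $0 \leq i < n+t$ I use the same freedom as in the finite case, while for indices $i \geq n+t$ I set $\Ret(I_i) = I_{j+t}$ whenever $\Ret(I_{i-t}) = I_j$, so that the life-cycle structure of the periodic portion is a shifted copy of its predecessor period. The hypothesis $\bar c_i \geq \bar c_{i-t}$ for $i \geq n$ ensures that this choice is always realizable: the pool of symbolically available tuples of each non-input-bound $TS$-isomorphism type in the later period dominates the pool in the earlier one, so any retrieval realized in period $k$ has an analogue in period $k+1$.

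With $\Ret$ so fixed, I construct the global isomorphism type $\Lambda = (\cale, \sim)$ with $\cale = \bigcup_{i \geq 0} \cale_i$, running exactly the chase steps (Segment, Life-Cycle, Input, and FD) of Lemma \ref{lem:sym-to-loc-finite}. Although $\cale$ is infinite, I claim that the chase produces only finitely many equivalence classes. The key structural fact to prove is that for all $i, j \geq n$ with $i \equiv j \pmod{t}$ and all $e \in \cale_i^- = \cale_j^-$, one has $(i,e) \sim (j,e)$; once this holds, $\sim$ has at most as many classes as its restriction to the finite prefix $\bigcup_{0 \leq i < n+t} \cale_i$. Proving the claim reduces to three observations: (a) the Input-condition directly merges $(i,x)$ with $(j,x)$ for every $x \in \bar{x}^T_{in}$; (b) the Life-Cycle-condition, because $\Ret$ was chosen periodically, merges the images of $\bar{s}^T$ across corresponding life cycles in the two periods; and (c) the FD-condition propagates these merges along foreign-key navigation, which by Lemma \ref{lem:length} has bounded depth.

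Having established that $\sim$ has finitely many classes, I define $D_T$, the valuations $\nu_i$, and the sets $S_i$ literally as in the finite case. Properties (i)--(iii) then follow verbatim, since the Length Lemma (Lemma \ref{lem:length}) and the Invariant Lemma (Lemma \ref{lem:invariant}) depend only on the local structure of each $\lambda_i$, which is identical to the finite setting. Set-realizability (the analog of Lemma \ref{lem:sets}) extends because the periodic $\Ret$ maps each non-input-bound retrieval to an insertion whose chase-induced valuation coincides, placing the retrieved tuple in $S_{i-1}$; input-bound retrievals are handled exactly as before. The main obstacle is item (b) of the structural claim: one must verify that the surplus $\bar c_i - \bar c_{i-t}$, which may be strictly positive and grow unboundedly, is absorbed not by creating fresh equivalence classes but by \emph{collisions} in the actual run, i.e., insertions at later periods producing tuples already present in $S^T$. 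Showing that the periodic choice of $\Ret$ and the corresponding identifications $(i,e) \sim (i+t,e)$ are mutually consistent, so that the chase never needs to distinguish periodic copies of the same navigation expression, is the heart of the argument and is where the bounded-database property ultimately comes from.
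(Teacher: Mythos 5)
Your overall strategy (periodic choice of $\Ret$, then argue the global isomorphism type has finitely many classes, then reuse the finite-case construction) is the right general shape, but the central step is wrong as stated. Your structural claim rests on item (b): that the Life-Cycle chase condition, applied to a periodically chosen $\Ret$, merges $(i,e)$ with $(i+t,e)$ for expressions anchored in $\bar{s}^T$. It does not. The Life-Cycle condition only merges occurrences of $\bar{x}^T_{in}\cup\bar{s}^T$ \emph{within a single life cycle} (i.e., along the chain created by $\Ret$ and shared segments); two distinct life cycles, even if they are shifted copies of one another, are never merged by the chase --- indeed the soundness of retrievals in the finite case (fact $(\dag)$ in the proof of Lemma \ref{lem:sets}) depends precisely on tuples of distinct life cycles being \emph{distinct}. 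So the chase you run produces one family of classes per life cycle, and since a periodic run has infinitely many life cycles, $\sim$ need not have finite index and $D_T$ need not be finite. If instead you force the identifications $(i,e)\sim(i+t,e)$ by fiat, you must prove retrievability is preserved, and this is exactly where overlapping life cycles break the argument: two life cycles whose timespans overlap and whose inserted tuples are identified would require the same value to be present in $S^T$ ``twice'' at some instant, which a set cannot do.

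The paper's proof resolves this with two ingredients you are missing. First, $\Ret$ is not obtained by naively shifting an arbitrary assignment on a prefix (your shift is not even well defined when $\Ret(I_{i-t})$ points into the non-periodic prefix, since $I_{j+t}$ need not lie in $\cali^+$ or have the right $TS$-isomorphism type); rather, retrievals in the first new period that would reach back before index $n-t$ are \emph{re-targeted} to unused insertions inside $[n-t+1,n]$, whose existence is guaranteed by the counting argument $M^+_{\hat\tau}-M^-_{\hat\tau}=\bar c_n(\hat\tau)-\bar c_{n-t}(\hat\tau)\ge 0$. This yields the bound $j-i\le 2t$ on insert--retrieve gaps and hence a uniform bound on the timespan of every life cycle. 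Second, values are shared only between life cycles that are \emph{equivalent} (same sequence of dynamic segments, ignoring static ones) \emph{and have disjoint timespans}; since equivalence has finite index and timespans are bounded, the life cycles split into finitely many such groups, and the merging is done as an explicit post-chase step (merge corresponding dynamic and static segments, then close under the FD condition). Disjointness of timespans is what makes the analog of Lemma \ref{lem:sets} go through (Lemma \ref{lem:sets-infinite}): overlapping life cycles keep distinct values, and an identified tuple is always re-inserted before it is retrieved again. Without the bounded-timespan construction of $\Ret$ and the disjointness requirement, your ``collisions absorb the surplus'' intuition cannot be turned into a proof.
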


Intuitively, if we directly apply the construction of $\rho_T$ and $D_T$ from Lemma \ref{lem:sym-to-loc-finite}
in the case of finite $\trt$, then each life cycle with non-input-bound $TS$-isomorphism types would be assigned 
with distinct sets of values, which could lead to an infinite $D_T$. 
However, for any two life cycles $L_1$ and $L_2$ which are disjoint in their timespan, 
reusing the same values in $L_1$ and $L_2$ does not cause any conflict. 
And in particular, if $L_1$ and $L_2$ are identical on the sequence of $\tau_i$'s and $\sigma_i$'s,
they can share exactly the same set of values.

Thus at a high level, our goal is to show that any periodic local symbolic run $\trt$ 
can be partitioned into finitely many subsets of identical life cycles with disjoint timespans.
Unfortunately, this is generally not true if we pick the $\Ret$ function arbitrarily
(recall that $\Ret$ defines the set of life cycles). This is because an arbitrary $\Ret$ may yield 
life cycles whose timespans have unbounded length. If the timespans overlap,
it is impossible to separate the life cycles into finitely many subsets of life cycles with disjoint timespans.  
So instead of picking an arbitrary $\Ret$ as in the finite case, 
we show that for periodic $\trt$ we can construct $\Ret$ such that 
the timespan of each life cycle has bounded length. This implies that
we can partition the life cycles into finitely many subsets of identical life cycles with disjoint timespans, as desired.
Finally we show that given the partition, we can construct the local run $\rho_T$ together with a finite $D_T$.

We first define the equivalence relation between life cycles.
\begin{definition}
Segments $S_1 = \{(I_i, \sigma_i)\}_{a_1 \leq i \leq b_1}$ and \\
 $S_2 = \{(I_i, \sigma_i)\}_{a_2 \leq i \leq b_2}$ are equivalent, denoted as
$S_1 \equiv S_2$, if $\{(\tau_i, \sigma_i)\}_{a_1 \leq i \leq b_1} = \{(\tau_i, \sigma_i)\}_{a_2 \leq i \leq b_2}$.
\end{definition}
\begin{definition}
A segment $S = \{(I_i, \sigma_i)\}_{a \leq i \leq b}$ is \emph{static} 
if $I_a \in \cali^-$, $I_b \in \cali^+$ and $\tau_a | \bar{s}^T = \tau_b | \bar{s}^T$.
A segment $S$ is called \emph{dynamic} if it is not static.
\end{definition}

When we compare two life cycles $L_1$ and $L_2$, we can ignore their static segments 
since they do not change the content of $S^T$. We define equivalence of two life cycles as follows.

\begin{definition}
For life cycle $L$, let $dym(L) = \{S_i\}_{1 \leq i \leq k}$ be the sequence of dynamic segments of $L$.
Two life cycles $L_1$ and $L_2$ are equivalent, denoted as $L_1 \equiv L_2$, if 
$|dym(L_1)| = |dym(L_2)|$ and for $dym(L_1) = \{S_i^1\}_{1 \leq i \leq k}$ and 
$dym(L_2) = \{S_i^2\}_{1 \leq i \leq k}$, for every $1 \leq i \leq k$, $S^1_i \equiv S^2_i$.
\end{definition}

Note that for each life cycle $L$, the number of dynamic segments within $L$ is bounded by $|\bar{s}^T|$ 
since within $L$, each variable in $\bar{s}^T$ is written at most once by returns of child tasks of $T$.
For a task $T$, as the number of $T$-isomorphism types is bounded, the number of services is bounded and the length of a segment is bounded
because each subtask can be called at most once, the number of equivalence classes of segments is bounded.
And since the number of dynamic segments is bounded within the same life cycle, the number of equivalence classes
of life cycles is also bounded. Thus,
\begin{lemma} \label{lem:equivalent}
The equivalence relation $\equiv$ on life cycles has finite index.
\end{lemma}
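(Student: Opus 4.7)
The plan is to bound the number of equivalence classes of $\equiv$ by bounding, in turn, (a) the number of equivalence classes of segments up to $\equiv$, and (b) the number of dynamic segments that can occur in a single life cycle; the product of these bounds will bound the index of $\equiv$ on life cycles.

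First I would bound the number of equivalence classes of segments. Recall that a segment is a maximal subsequence $\{(I_i,\sigma_i)\}_{a\le i\le b}$ in which $\sigma_a$ is the only internal service (or opening service), with all subsequent $\sigma_j$ being opening/closing services of children of $T$. Since each child $T_c\in\mathrm{child}(T)$ can be opened at most once inside a single segment (by the definition of local symbolic run), the length $b-a+1$ is bounded by $1+2|\mathrm{child}(T)|$. Together with the fact that the alphabet of each $(\tau_j,\sigma_j)$ pair is drawn from a finite set (finitely many $T$-isomorphism types, since $\mathcal E_T$ and $\sim_\tau$ range over a finite universe determined by $\bar x^T$, $h(T)$, and the finite schema $\db$, and finitely many services in $\Sigma_T^{obs}$), the number of possible sequences $\{(\tau_j,\sigma_j)\}_{a\le j\le b}$ — and hence the number of $\equiv$-classes of segments — is finite. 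Call this bound $N_{\mathrm{seg}}$.

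Next I would bound $|dym(L)|$ for every life cycle $L$. A dynamic segment is, by definition, one that changes $\tau|\bar s^T$ between its first and last instance. Within a single life cycle $L$, the values of $\bar s^T$ can only change through the life-cycle connections (via $\Ret$) or through returns of child tasks inside a segment. Since, by the constraints in the definition of local (symbolic) run, the variables in $\bar s^T$ can each be overwritten only a bounded number of times within a single run of $T$ — and in particular within the stretch spanned by one life cycle each variable of $\bar s^T$ is touched by at most one child return (because $\xttcup$ overwrites only $\anull$ slots) — we obtain $|dym(L)|\le |\bar s^T|$. Call this bound $N_{\mathrm{dym}}$.

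Combining the two bounds, any life cycle is determined up to $\equiv$ by a sequence of at most $N_{\mathrm{dym}}$ segments each drawn from $N_{\mathrm{seg}}$ equivalence classes, yielding at most $\sum_{k=0}^{N_{\mathrm{dym}}} N_{\mathrm{seg}}^{k}$ distinct equivalence classes of life cycles, which is finite. The main obstacle I expect is verifying cleanly that $|dym(L)|$ is indeed bounded by $|\bar s^T|$: one must appeal carefully to the restriction that, inside a life cycle, the $\bar s^T$ variables are ``refreshed'' by child returns only where they were previously $\anull$, so that each component of $\bar s^T$ contributes at most once to dynamicity before the life cycle's $\Ret$-edge resets the picture. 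Once that accounting is done, the rest is bookkeeping over finite alphabets.
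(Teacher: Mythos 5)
Your proposal is correct and follows essentially the same route as the paper: bound the number of $\equiv$-classes of segments (finitely many $T$-isomorphism types and services, segment length bounded since each child is called at most once per segment), bound the number of dynamic segments per life cycle by $|\bar{s}^T|$ (each $\bar{s}^T$ variable is written at most once by child returns within a life cycle, since returns only overwrite $\anull$ variables), and conclude that life cycles have finitely many equivalence classes since $\equiv$ depends only on the bounded sequence of dynamic segments.
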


Our next step is to show that one can define a $\Ret$ function so that all life cycles have bounded timespans.
The timespan of a life cycle is defined as follows:
\begin{definition}
The timespan of a life cycle $L$, denoted by $sp(L)$, is an interval $[a, b]$ where $a$ is the index of the first symbolic instance 
of the first dynamic segment of $L$ and $b$ is the index of the last symbolic instance of the last dynamic segment.
\end{definition}

Consider an equivalence class $\call$ of life cycles. 
Suppose that for each $L \in \call$, the length of $sp(L)$ is bounded by some constant $m$.
Then we can further partition $\call$ into $m$ subsets $\call_0, \dots, \call_{m-1}$ of life cycles with disjoint timespan
by assigning each $L \in \call$ where $sp(L) = [a, b]$ to the subset $\call_{k}$ where $k = a \ \texttt{mod} \ m$.

We next show how to construct the function $\Ret$.
In particular, we construct a periodic $\Ret$ such that there is a short gap between each pair of inserting and retrieving instances. 
This is done in several steps, illustrated in Figure \ref{fig:retrieve}.
\begin{enumerate}
\item Initialize $\Ret$ to be an arbitrary one-to-one mapping 
with domain $\{I_i | I_i \in \cali^-, 0 \leq i \leq n\}$
such that for every $I_i = \Ret(I_j)$, $i < j$ and $\hat{\tau}_i = \hat{\tau}_j$
(recall that $\hat{\tau}_i = \tau_i | \bar{x}^T_{in} \cup \bar{s}^T$).
\item For every $j \in [n + 1, n + t]$, for $j' = j - t$ and for $i'$ being the index where $I_{i'} = \Ret(I_{j'})$,
\begin{itemize}\itemsep=0pt\parskip=0pt
\item [(i)] if $i' \in [n - t + 1, n]$, then for $i = i' + t$,
let $\Ret \leftarrow \Ret[I_{j + k \cdot t} \mapsto I_{i + k \cdot t} | k \geq 0]$, otherwise
\item  [(ii)] if $i' \in [0, n - t]$, then we pick $i \in [n - t + 1, n]$ satisfying that
$I_{i} \in \cali^+$, $\hat{\tau}_{i} = \hat{\tau}_j$ and $I_i$ is currently not in the range of $\Ret$. 
Then we let $\Ret \leftarrow \Ret[I_{j + k \cdot t} \mapsto I_{i + k \cdot t} | k \geq 0]$.
\end{itemize}
\end{enumerate}
At step 2 for the case $i' \in [0, n - t]$, the $i$ that we picked always exists for the following reason.
For every $TS$-isomorphism type $\hat{\tau}$, let 
\begin{itemize}\itemsep=0pt\parskip=0pt
\item $M_{\hat{\tau}}^-$ be the number of symbolic instances in $\cali^-$ with $TS$-isomorphism type $\hat{\tau}$ and indices in $[n - t + 1, n]$ 
that retrieves from symbolic instances with indices in $[0, n - t]$, and
\item $M_{\hat{\tau}}^+$ be the number of symbolic instances in $\cali^+$ with $TS$-isomorphism type $\hat{\tau}$ and indices in $[n - t + 1, n]$ 
that is NOT retrieved by symbolic instances with indices in $[n - t + 1, n]$.
\end{itemize}

We have $M_{\hat{\tau}}^+ - M_{\hat{\tau}}^- = \bar{c}_n(\hat{\tau}) - \bar{c}_{n - t}(\hat{\tau}) \geq 0$.
So for every $I_{i'} = \Ret(I_{j'})$ where $j' \in [n - t + 1, n]$ and $i' \in [0, n - t]$, 
we can always find a unique $i \in [n - t + 1, n]$ such that $I_i \in \cale^+$, $\hat{\tau}_i = \hat{\tau}_{j'} = \hat{\tau}_{i'}$ 
and $I_i$ is not retrieved by any retrieving instances with indices in $[n - t + 1, n]$.

\begin{figure}[!h]
\centering
\includegraphics[scale=0.55]{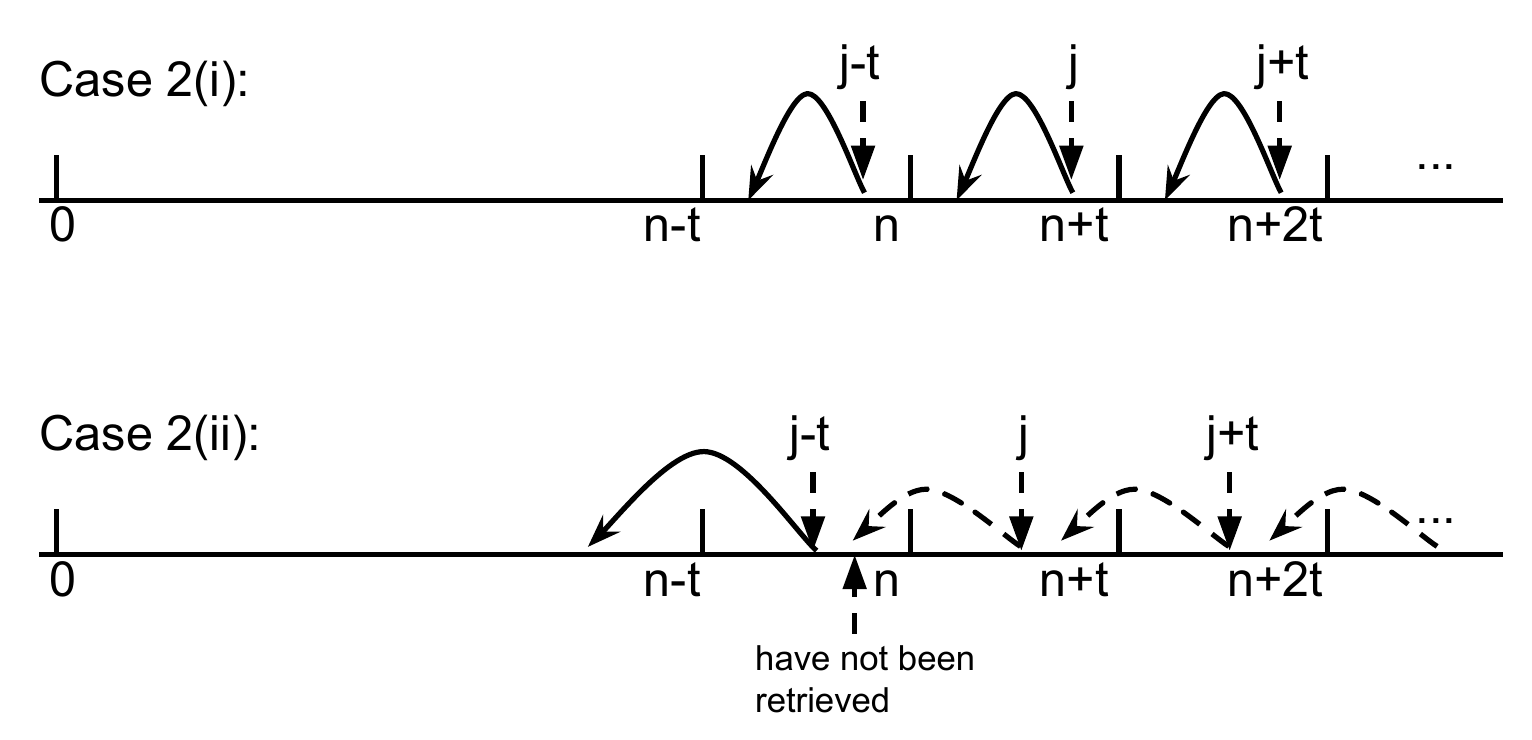}
\caption{Construction of $\Ret$}
\label{fig:retrieve}
\end{figure}

Let us fix the function $\Ret$ constructed above.
We first show the following:

\begin{lemma} \label{lem:periodic}
For every periodic $\trt$, and $j > n$, $I_i = \\ \Ret(I_j)$ implies that $j - i \leq 2t$ and
$I_{i+t} = \Ret \\ (I_{j+t})$.
\end{lemma}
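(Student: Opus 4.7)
The plan is to prove both conclusions by a direct unpacking of the construction of $\Ret$ given in Step 2 preceding the lemma, organized by induction on how far $j$ lies beyond $n$. The key observation is that Step 2 defines $\Ret$ only on indices $j \in [n+1, n+t]$ and then extends by the rule $\Ret(I_{j+k\cdot t}) = I_{i+k\cdot t}$ for all $k \geq 0$; once this periodic extension is in view, both claims fall out almost immediately.

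First I would establish the base case, $j \in [n+1, n+t]$. For such $j$, Step 2 sets $\Ret(I_j) = I_i$ with $i \in [n-t+1, n]$ in both subcases (i) and (ii), so $j - i \leq (n+t) - (n-t+1) = 2t - 1 < 2t$, giving the bound. Before invoking subcase (ii), I would verify the existence of the claimed $i$ using the counting argument already sketched in the paper: since $\bar{c}_n(\hat{\tau}) - \bar{c}_{n-t}(\hat{\tau}) \geq 0$ (by the definition of periodic local symbolic run, which requires $\bar{c}_i \geq \bar{c}_{i-t}$), we have $M_{\hat\tau}^+ \geq M_{\hat\tau}^-$, so a suitable $I_i \in \cali^+$ of matching type in $[n-t+1, n]$ with unused slot is always available. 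I would also check that subcases (i) and (ii) together yield a one-to-one $\Ret$ with the required type-matching property $\hat{\tau}_i = \hat{\tau}_j$: in (i) this is inherited from the mapping at $j' = j - t$, and in (ii) it is enforced by the choice of $i$.

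Next I would handle $j > n + t$ by induction on $k$, where $j = j_0 + k\cdot t$ with $j_0 \in [n+1, n+t]$ and $k \geq 1$. By the extension rule, $\Ret(I_j) = I_{i_0 + k\cdot t}$ where $I_{i_0} = \Ret(I_{j_0})$ with $i_0 \in [n-t+1, n]$ established in the base case. Hence $i = i_0 + k\cdot t$ and $j - i = j_0 - i_0 \leq 2t$, proving the bound uniformly. For the second claim, periodicity is built into the extension itself: applying the rule with $k+1$ in place of $k$ gives $\Ret(I_{j+t}) = \Ret(I_{j_0 + (k+1)\cdot t}) = I_{i_0 + (k+1)\cdot t} = I_{i+t}$.

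The step I expect to require the most care is not the arithmetic but the verification that the construction of $\Ret$ is well-defined and consistent when the two subcases of Step 2 interact with the periodic extension from earlier iterations. Specifically, I must check that when Step 2 is applied to $j \in [n+1, n+t]$ via subcase (ii), the chosen $i \in [n-t+1, n]$ is not already claimed by some $\Ret(I_{j''})$ coming from the periodic extension of an earlier (already-processed) $j' \in [n+1, n+t]$. This is ensured by processing the values $j \in [n+1, n+t]$ in a fixed order and always selecting $i$ among those still ``unassigned'' after prior steps; the counting inequality $M_{\hat\tau}^+ \geq M_{\hat\tau}^-$ guarantees sufficient supply. Once this bookkeeping is in place, the conclusions of the lemma follow as above.
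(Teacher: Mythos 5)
Your proposal is correct and follows essentially the same route as the paper: the periodicity claim $I_{i+t}=\Ret(I_{j+t})$ is read off directly from the extension rule $\Ret[I_{j+k\cdot t}\mapsto I_{i+k\cdot t}\mid k\geq 0]$, and the bound $j-i\leq 2t$ comes from window arithmetic on the base assignments, which is exactly the paper's two-line argument; your extra care about the existence and non-collision of the index chosen in subcase (ii) duplicates the counting argument ($M^+_{\hat{\tau}}-M^-_{\hat{\tau}}=\bar{c}_n(\hat{\tau})-\bar{c}_{n-t}(\hat{\tau})\geq 0$) that the paper places just before the lemma rather than in its proof. One small misstatement to fix: in subcase (i) the image index is $i=i'+t\in[n+1,n+t]$, not $[n-t+1,n]$ as you claim; this does not hurt the conclusion, since there $j-i=j'-i'<t$, so the bound $j-i\leq 2t$ holds in both subcases.
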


\begin{proof}
By construction, for every $I_i = \Ret(I_j)$ where $j > i > n$, $I_{i + t} = \Ret(I_{j + t})$.
And it is also guaranteed that for the indices $i$ and $j$, either (1) $i$ and $j$ are both in the same range 
$[n + t k + 1, n + t (k + 1)]$ for
some $k \geq 0$, or (2) $i \in [n + t k + 1, n + t (k + 1)]$ and $j \in [n + t (k + 1) + 1, n + t (k + 2)]$ 
for some $k \geq 0$.
In both cases, $j - i \leq 2t$.
\end{proof}

For every life cycle $L$, for every pair of consecutive dynamic segments $S$ and $S'$, we denote by $gap(S, S')$
the number of static segments in between $S$ and $S'$. To show that $sp(L)$ is bounded, 
it is sufficient to show that $gap(S, S')$ is bounded for every pair of consecutive dynamic segments 
$S$ and $S'$. For every segment $S$, we denote by $a(S)$ the index of the first symbolic instance of $S$.
For every segment $S$ where $a(S) > n$, we let $p(S) = (a(S) - n - 1) \ \texttt{mod} \ t$. 

For every pair of consecutive dynamic segments $S$ and $S'$ and
by periodicity of $\Ret$, there are no two static segments $T$ and $T'$ in $L$ in between $S$ and $S'$
such that $a(S) < a(T) < a(T') < a(S') $ and $p(T) = p(T')$. Thus in $L$,
the number of static segments in between $S$ and $S'$ is at most $n + t$. 
Then by Lemma \ref{lem:periodic}, the number of symbolic instances in between any pair of consecutive segments is
bounded by $\max(2t, n)$ so $gap(S, S') \leq (n + t) \cdot \max(2t, n + t)$. And as the number of dynamic segments in $L$
is bounded by $|\bar{s}^T|$ and the length of each segment is at most $2 |child(T)|$, it follows that:

\begin{lemma} \label{lem:timespan}
For every periodic local symbolic run $\trt$ and life cycle $L$ of $\trt$,
$|sp(L)|$ is bounded by $m = (n + t) \cdot \max(2t, n + t) \cdot (|\bar{s}^T| + 1) \cdot 2 |child(T)|$.
\end{lemma}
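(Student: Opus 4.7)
The plan is to assemble the bound on $|sp(L)|$ from four independent estimates that have already been (implicitly or explicitly) argued in the text preceding the lemma, and simply combine them arithmetically.

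First, I would count dynamic segments. Each dynamic segment is ``anchored'' by a write to some variable in $\bar s^T$ (either through an insertion that actually changes the tuple, or through a distinct retrieval), and within a single life cycle each variable of $\bar s^T$ is written at most once before the cycle closes. Hence the number of dynamic segments in $L$ is at most $|\bar s^T|$, which yields the factor $(|\bar s^T|+1)$ after accounting for the two endpoints of $sp(L)$.

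Second, I would bound the number of \emph{static} segments occurring between two consecutive dynamic segments $S, S'$ of $L$. Here the periodicity of $\Ret$ enforced in step~2 of the construction is crucial: if two static segments $U,U'$ both lie strictly between $S$ and $S'$ with $a(S)<a(U)<a(U')<a(S')$ and $p(U)=p(U')$, then by periodicity $U'$ would be forced to share its $\Ret$-partner with some other segment in $L$, contradicting the one-to-one nature of $\Ret$. Since $p(\cdot)$ takes values in $\{0,\dots,t-1\}$ and at most $n$ segments precede the periodic phase, this gives a bound of $n+t$ static segments per gap.

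Third, I need an upper bound on the ``time distance'' covered by one segment together with the portion of the run immediately following it until the next segment belonging to $L$ appears. Lemma~\ref{lem:periodic} gives exactly this: any retrieval in the periodic part retrieves from an insertion at most $2t$ positions earlier, so the index distance between two consecutive life-cycle members cannot exceed $\max(2t,n+t)$ (the $n+t$ term handles the non-periodic prefix, where the naive bound applies). Meanwhile each individual segment has length at most $2|\mathit{child}(T)|$, since within a single segment each child task contributes at most one opening and one closing service call.

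Finally, I would multiply: the timespan $sp(L)$ consists of at most $|\bar s^T|+1$ dynamic segments interleaved with at most $(|\bar s^T|)\cdot(n+t)$ static segments, each gap between consecutive life-cycle positions contributing at most $\max(2t,n+t)$ indices, and each segment itself contributing at most $2|\mathit{child}(T)|$ indices. Collecting these factors yields
\[
|sp(L)| \;\le\; (n+t)\cdot \max(2t,n+t)\cdot (|\bar s^T|+1)\cdot 2|\mathit{child}(T)|,
\]
which is exactly $m$. I do not expect a genuine obstacle here: the only subtle step is the pigeonhole argument for the static-segment bound via $p(\cdot)$, and this follows directly from the way $\Ret$ was constructed in Step~2 above. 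The rest is bookkeeping of already-established quantities.
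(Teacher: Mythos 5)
Your decomposition is essentially the paper's own: at most $|\bar{s}^T|$ dynamic segments per life cycle, at most $n+t$ static segments between consecutive dynamic ones via the pigeonhole on $p(\cdot)$, at most $\max(2t,n+t)$ instances between consecutive members of $L$ by Lemma~\ref{lem:periodic}, segment length at most $2|child(T)|$, and then multiply to get $m$. The one place where you supply detail that the paper omits is the justification of the pigeonhole step, and the mechanism you give there is not the right one: if two static segments $U,U'$ of $L$ strictly between $S$ and $S'$ satisfied $p(U)=p(U')$ with $a(U')=a(U)+kt$, injectivity of $\Ret$ is never threatened, because iterating Lemma~\ref{lem:periodic} sends the $\Ret$-partner of $U$ to the $kt$-shifted instance, which is a \emph{different} instance from the partner of $U$, so no two retrievals are forced onto the same insertion. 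The contradiction comes instead from walking the $\Ret$-chain of $L$ backwards from $U'$: by the iterated periodicity of $\Ret$ and the periodicity of the pairs $(\tau_i,\sigma_i)$, this backward chain is the $kt$-shifted copy of the backward chain from $U$, so after finitely many steps some segment of $L$ lying strictly between $S$ and $S'$ would have to coincide with the $kt$-shifted copy of the dynamic segment $S$ itself; being dynamic is determined by the periodic data, so that segment would be dynamic, contradicting that every segment of $L$ strictly between two consecutive dynamic segments is static. Similarly, your parenthetical that a dynamic segment is anchored by ``an insertion that actually changes the tuple, or a distinct retrieval'' is slightly off — along a life cycle the $\bar{s}^T$-type is preserved across $\Ret$-links and across static segments, and the changes that make a segment dynamic come from child-task returns overwriting $\anull$ variables, which is exactly why each variable of $\bar{s}^T$ can be rewritten at most once per life cycle. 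Since your quantitative conclusions ($\le |\bar{s}^T|$ dynamic segments, $\le n+t$ static segments per gap) agree with the paper's, the final bound is unaffected; only the supporting argument for the pigeonhole claim needs to be replaced by the shift argument above.
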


So for a possibly infinite set of life cycles $\call$ where $|sp(L)| \leq m$ for each $L \in \call$, 
$\call$ can be partitioned into sets $\call_0, \dots, \call_{m - 1}$ by
assigning each life cycle $L \in \call$ where $sp(L) = [a, b]$ to the set $\call_{a \ \mathtt{mod} \ m}$.
So for every $\call_i$ and two distinct $L_1, L_2$ in $\call_i$
where $sp(L_1) = [a_1, b_1]$ and $sp(L_2) = [a_2, b_2]$, we have $a_1 \neq a_2$.
Assume $a_1 < a_2$. Then as $a_1 \equiv a_2 \pmod m$, $a_2 - a_1 \geq m$. And since $b_1 - a_1 + 1 < m$,
$L_1$ and $L_2$ are disjoint. 
Thus, given Lemma \ref{lem:equivalent} and Lemma \ref{lem:timespan}, we have
\begin{lemma}
Every local symbolic run $\trt$ can be partitioned into finitely many subsets of life cycles 
such that for each subset $\call$, if $L_1 \in \call$, $L_2 \in \call$ and $L_1 \neq L_2$ then 
$L_1 \equiv L_2$ and $sp(L_1) \cap sp(L_2) = \emptyset$.
\end{lemma}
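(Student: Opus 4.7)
The plan is to combine the two preceding lemmas (Lemma \ref{lem:equivalent} and Lemma \ref{lem:timespan}) in a two-step partitioning argument. First, I would use Lemma \ref{lem:equivalent}, which says the equivalence relation $\equiv$ on life cycles has finite index, to partition all life cycles of $\trt$ into finitely many equivalence classes $\call^{(1)}, \ldots, \call^{(k)}$, where every two life cycles within the same class are $\equiv$-equivalent. This handles the first property required by the lemma for free.

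Second, within each $\call^{(j)}$, I would refine the partition further to guarantee disjoint timespans. By Lemma \ref{lem:timespan}, every life cycle $L$ of $\trt$ satisfies $|sp(L)| \leq m$ for the constant $m = (n+t)\cdot \max(2t, n+t) \cdot (|\bar{s}^T|+1) \cdot 2|child(T)|$. I would then bucket the life cycles in $\call^{(j)}$ by the residue of the starting index of $sp(L)$ modulo $m$: for each $i \in \{0, \ldots, m-1\}$, let $\call^{(j)}_i$ consist of all $L \in \call^{(j)}$ with $sp(L) = [a, b]$ and $a \equiv i \pmod m$. This produces at most $k \cdot m$ subsets in total, which is finite.

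To verify disjointness within each $\call^{(j)}_i$, take two distinct $L_1, L_2 \in \call^{(j)}_i$ with $sp(L_1) = [a_1, b_1]$ and $sp(L_2) = [a_2, b_2]$, and assume wlog $a_1 \leq a_2$. Since $L_1 \neq L_2$, we must have $a_1 \neq a_2$ (different life cycles in the same equivalence class cannot start at the same index, as that would make them identical by periodicity). From $a_1 \equiv a_2 \pmod m$ and $a_1 < a_2$ we obtain $a_2 - a_1 \geq m$, while Lemma \ref{lem:timespan} gives $b_1 - a_1 < m$, so $b_1 < a_2$ and hence $sp(L_1) \cap sp(L_2) = \emptyset$.

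I expect no serious obstacle; the hard work was already done in Lemmas \ref{lem:equivalent} and \ref{lem:timespan}. The only subtle point is justifying that two distinct life cycles in the same equivalence class cannot share the same starting index of their timespan, which follows from the fact that the periodic structure of the symbolic run together with the fixed choice of the $\Ret$ function makes a life cycle uniquely determined by the starting index of its first dynamic segment.
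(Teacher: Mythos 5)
Your proposal is correct and follows essentially the same route as the paper: partition by the finite-index equivalence $\equiv$ (Lemma \ref{lem:equivalent}), then bucket each class by the starting index of the timespan modulo the bound $m$ from Lemma \ref{lem:timespan}, and conclude disjointness from $a_2 - a_1 \geq m$ versus $|sp(L_1)| \leq m$. The only cosmetic difference is your justification that distinct life cycles in a bucket cannot share a starting index: rather than appealing to periodicity, it suffices to note that life cycles are maximal subsequences under the segment/$\Ret$ relation, hence pairwise disjoint, so two distinct ones cannot contain the same instance.
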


Next, we show how we can construct the local run $\rho_T$ and finite database $D_T$ from $\trt$ using the partition.
We first construct global isomorphism type $\Lambda = (\cale, \sim)$ of $\trt$ using the approach for the finite case. 
Then we merge equivalent segments in $\Lambda$ as follows to obtain a new global isomorphism type with finitely many equivalence classes.
To merge two equivalent segments $S_1 = \{(I_i, \sigma_i)\}_{a_1 \leq i \leq a_1 + l}$ 
and  $S_2 = \\ \{(I_i, \sigma_i)\}_{a_2 \leq i \leq a_2 + l}$,
first for every $0 \leq i \leq l$ and for every $x \in \bar{x}^T$, 
we merge the equivalence classes $[(a_1 + i, x)]$ and $[(a_2 + i, x)]$ of $\sim$.
Then we apply the chase step (i.e. the FD-condition) to make sure the resulting database satisfies $\FD$.

The new $\Lambda$ is constructed as follows.
For every two segments $S_1 = \{(I_i, \sigma_i)\}_{a \leq i \leq b}$ and $S_2 = \{(I_i, \sigma_i)\}_{c \leq i \leq d}$,
we define that $S_1$ precedes $S_2$, denote by $S_1 \prec S_2$, if $b < c$.
For each subset $\call$ and for each pair of life cycles $L_1, L_2 \in \call$
where $dym(L_1) = \{S_i^1\}_{1 \leq i \leq k}$ and $dym(L_2) = \{S_i^2\}_{1 \leq i \leq k}$,
\begin{itemize}\itemsep=0pt\parskip=0pt
\item for $1 \leq i \leq k$, merge $S_i^1$ and $S_i^2$,
\item for $1 \leq i < k$, 
for every static segments $S_1 \subseteq L_1$ and $S_2 \subseteq L_2$ where
$S_i^1 \prec S_1 \prec S_{i+1}^1$, $S_i^2 \prec S_2 \prec S_{i+1}^2$ and $S_1 \equiv S_2$,
merge $S_1$ and $S_2$, and
\item for every pair of static segments $S_1 \subseteq L_1$ and $S_2 \subseteq L_2$
where $S_k^1 \prec S_1$, $S_k^2 \prec S_2$ and $S_1 \equiv S_2$, merge $S_1$ and $S_2$.
\end{itemize}
Finally, $\rho_T$ and $D_T$ are constructed following the same approach as in the finite case.
In the above construction, 
as the number of subsets of life cycles is finite, and for each $\call$,
the number of dynamic segments is bounded and the number of equivalence classes of static segments is bounded,
the number of equivalence classes of $\Lambda$ is also finite so $D_T$ is finite.

By an analysis similar to the finite case, 
we can show that $\rho_T$ and $D_T$ satisfy property (i)-(iii)
in Lemma \ref{lem:sym-to-loc-infinite} and $D_T \models \FD$.
In particular, to show property (ii), we can show the same invariant as in Lemma \ref{lem:invariant}, 
the invariant holds because every pair of merged segments are equivalent.

Finally, to show Lemma \ref{lem:sym-to-loc-infinite}, it remains to show
that $\rho_T$ is a valid local run. Similar to the finite case, it is sufficient to show that
\begin{lemma}\label{lem:sets-infinite}
For every $i \geq 0 $, if $\delta_i = \{-S^T(\bar{s}^T)\}$ then $\nu_i(\bar s^T) \in S_{i-1}$.
If $\delta_i = \{+S^T(\bar{s}^T), -S^T(\bar{s}^T)\}$ then either $\nu_i(\bar s^T) \in S_{i-1}$ or
$\nu_i(\bar s^T) = \nu_{i-1}(\bar s^T)$.
\end{lemma}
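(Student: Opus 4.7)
The proof mirrors that of Lemma \ref{lem:sets} from the finite case, splitting on whether $\hat{\tau}_i$ is input-bound and carefully accounting for the additional merging of equivalent segments performed in the infinite construction. For input-bound $\hat{\tau}_i$, the argument transfers verbatim: since $\bar c_{i-1}(\hat{\tau}_i)=1$, there is a maximum $n<i$ with $\hat{\tau}_n=\hat{\tau}_i$ for which $\sigma_{n+1}$ sets $\bar c_n(\hat{\tau}_i)=1$ and no intermediate step decrements it; observation $(\ddag)$, which depends only on the local per-type behavior and not on periodicity, gives $\nu_n(\bar s^T)=\nu_i(\bar s^T)\in S_{i-1}$.

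For non-input-bound $\hat{\tau}_i$, let $L$ be the life cycle containing $I_i$ under the periodic $\Ret$ built in the construction, and let $n<i$ be the unique index with $I_n=\Ret(I_i)$; by definition of life cycle, $I_n\in L$ and $\hat{\tau}_n=\hat{\tau}_i$. The Life-Cycle chase condition used when building $\Lambda$ forces $[(n,y)]=[(i,y)]$ for every $y\in\bar s^T$ with $y\not\sim_{\tau_n}\anull$, hence $\nu_n(\bar s^T)=\nu_i(\bar s^T)$. I then argue inductively that the rules defining $\{S_j\}_{j\geq 0}$ correctly track $\Ret$: insertions at step $j+1$ with $I_j\in\cali^+$ add $\nu_j(\bar s^T)$, retrievals at step $j$ with $I_j\in\cali^-$ remove the tuple assigned by $\Ret^{-1}$, and, because $\Ret$ is one-to-one, the insertion at $n+1$ is paired exactly with the retrieval at $i$, so no intermediate retrieval can remove $\nu_n(\bar s^T)$.

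The principal obstacle is that $\Lambda$ was constructed by additionally identifying equivalent segments \emph{across} distinct life cycles lying in the same subset $\call_k$, so a priori this extra merging could collapse $\nu_n(\bar s^T)$ with some $\nu_j(\bar s^T)$ for $n<j<i$ and thereby invalidate the inductive bookkeeping above. To rule this out, I exploit the disjointness of timespans guaranteed by the partition: inside each $\call_k$, any two life cycles $L,L'$ satisfy $sp(L)\cap sp(L')=\emptyset$, so if $I_j$ with $n<j<i$ lies in a life cycle $L'\neq L$ that is merged with $L$, then $j\in sp(L')$ while also $\{n,i\}\subseteq sp(L)$ and $n<j<i$ forces $j\in sp(L)$, a contradiction. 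Life cycles in distinct subsets are never merged at all. Consequently, no spurious identification of $\bar s^T$-values across life cycles in the interval $[n+1,i-1]$ occurs, and the inductive bookkeeping of $S_{i-1}$ remains faithful to $\Ret$, yielding $\nu_i(\bar s^T)\in S_{i-1}$.

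The case $\delta_i=\{+S^T(\bar s^T),-S^T(\bar s^T)\}$ is handled exactly as in Lemma \ref{lem:sets}: either the above analysis applies and gives $\nu_i(\bar s^T)\in S_{i-1}$, or the tuple $\nu_{i-1}(\bar s^T)$ freshly inserted by this transition is the one retrieved, yielding $\nu_i(\bar s^T)=\nu_{i-1}(\bar s^T)$. This completes the proof, and together with Lemma \ref{lem:sym-to-loc-infinite}(i)--(iii) closes the infinite case of the symbolic-to-actual direction of Theorem \ref{thm:actual-symbolic}.
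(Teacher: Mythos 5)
Your input-bound case and your identification of the principal obstacle (value reuse caused by merging equivalent life cycles) are both on target, but the way you dispose of that obstacle has a genuine gap. Your contradiction argument rests on two containment claims: that the $\Ret$-paired indices satisfy $\{n,i\}\subseteq sp(L)$, and that any instance of a merged life cycle $L'$ has its index in $sp(L')$. Neither holds. By definition, $sp(L)$ only spans from the first to the last \emph{dynamic} segment of $L$, while a life cycle also contains static segments — in particular static segments occurring after its last dynamic segment, which are exactly the segments the merging construction treats in its third bullet, and which in a periodic run can recur forever, far outside $sp(L)$. A $\Ret$ pair $(I_n,I_i)$ can consist of the closing insertion of one trailing static segment and the opening retrieval of the next, so both indices lie outside $sp(L)$; symmetrically, an instance $I_k$ of a merged life cycle $L'$ sitting in one of $L'$'s static segments can have its index inside $sp(L)$ even though $sp(L)\cap sp(L')=\emptyset$. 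So instances of merged (hence value-sharing) life cycles \emph{can} occur strictly between $n$ and $i$, and your claim that ``no spurious identification of $\bar s^T$-values across life cycles in the interval $[n+1,i-1]$ occurs'' does not follow; with it falls the inductive bookkeeping, since removal from $S^T$ is by value, not by which $\Ret$-pair an instance belongs to.

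The paper's proof takes the opposite tack: it does not exclude such collisions but neutralizes them. It isolates two facts about the construction of $\Lambda$: (i) life cycles whose timespans overlap are never merged, so their non-input-bound $\bar s^T$-values stay distinct (the analog of $(\dag)$ in Lemma \ref{lem:sets}); and (ii) if an instance $I_k$ of a disjoint-timespan life cycle, lying strictly between the paired insertion and retrieval, carries the same $\bar s^T$-value, then $I_k$ belongs to a \emph{static} segment of its life cycle. Since a static segment retrieves a tuple at its start and re-inserts a tuple of the same type (hence, after merging, the same value) at its end, its net effect on $S^T$ is null, so the tuple $\nu_i(\bar s^T)$ inserted at the transition out of $I_n$ is still present in $S_{i-1}$. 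To repair your proof you would need to replace the disjointness contradiction with an argument of this kind (or prove claim (ii) yourself); the combined insert-and-retrieve case can then be finished as you indicate, exactly as in Lemma \ref{lem:sets}.
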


\begin{proof}
The following can be easily checked by the construction of $\Lambda$:
\begin{itemize}\itemsep=0pt\parskip=0pt
\item [(i)] for every pair of distinct life cycles $L$ and $L'$ where $sp(L) \cap sp(L') \neq \emptyset$, 
for every $I_k \in L$ and $I_{k'} \in L'$, if $\hat{\tau}_k$, $\hat{\tau}_{k'}$
are not input-bound then $\nu_k(\bar{s}^T) \neq \nu_{k'}(\bar{s}^T)$, and
\item  [(ii)]for every pair of life cycles $L$ and $L'$ where $sp(L) \cap sp(L') = \emptyset$, 
if $I_i,I_j \in L$, $I_j = \Ret(I_i)$, $\hat{\tau}_i$ is not input-bound, 
$I_k \in L'$ for $j < k < i$ and $\nu_k(\bar{s}^T) = \nu_i(\bar{s}^T) = \nu_j(\bar{s}^T)$,
then $I_k$ is contained in a static segment of $L'$.
\item [(iii)] for every $k,k' \geq 0$, if $\hat{\tau}_k$, $\hat{\tau}_{k'}$ are input-bound,
$\nu_k(\bar{s}^T) = \nu_{k'}(\bar{s}^T)$ iff $\hat{\tau}_k = \hat{\tau}_{k'}$.
\end{itemize}

Consider the case when $\delta_i = \{-S^T(\bar{s}^T)\}$ and $\hat{\tau_i}$ is not input-bound.
Let $I_j = \Ret(I_i)$ and $L$ be the life cycle that contains $I_i$. 
Consider $I_k$ where $j < k < i$ and let $L'$ be the life cycle containing $I_k$.
If $sp(L) \cap sp(L') \neq \emptyset$, by (i), $\nu_i(\bar{s}^T) \neq \nu_k(\bar{s}^T)$.
If $sp(L) \cap sp(L') = \emptyset$, by (ii), the segment containing $I_k$ 
is static, so it does not change $S^T$. Thus, for every segment $S$ between $I_j$ and $I_i$,
the tuple $\nu_i(\bar{s}^T)$ remains in $S^T$ after $S$. So $\nu_i(\bar{s}^T) \in S_{i-1}$.
The case when $\delta_i = \{-S^T(\bar{s}^T), +S^T(\bar{s}^T)\}$ is similar.

The proof for the case when $\hat{\tau_i}$ is input-bound is the same as the proof for Lemma \ref{lem:sets}.
\end{proof}

This completes the proof of Lemma \ref{lem:sym-to-loc-infinite}.

\subsubsection*{Symbolic Trees of Runs}

Finally, we show Lemma \ref{lem:sym-to-loc} by providing a recursive construction of a tree of runs 
$\Tree$ and database $D$ from any symbolic tree of runs $\Sym$ where all local symbolic runs 
are either finite or periodic, using Lemmas \ref{lem:sym-to-loc-finite} and \ref{lem:sym-to-loc-infinite}.
Intuitively, the construction simply applies the two lemmas to each node $\trt$ of $\Sym$ to obtain
a local run $\rho_T$ with a local database $D_T$. Then the local runs and databases are combined into
a tree of local runs recursively by renaming the values in each $\rho_T$ and $D_T$ 
in a bottom-up manner, reflecting the communication among local runs via input and return variables.

Formally, we first define recursively the construction function $F$ where $F(\Sym_T) = (\Tree_T, D_T)$ where
$\Sym_T$ is a subtree of $\Sym$ and $(\Tree_T, D_T)$ are the resulting subtree of local runs
and database instance. $F$ is defined as follows.

If $T$ is a leaf task, then $\Sym_T$ contains a single local symbolic run $\trt$.
We define that $F(\Sym_T) = F(\trt) = (\rho_T, D_T)$ where $\rho_T$ and $D_T$ are the
local run and database instance shown to exist in Lemmas \ref{lem:sym-to-loc-finite} and \ref{lem:sym-to-loc-infinite}
corresponding to $\trt$.

If $T$ is a non-leaf task where the root of $\Sym_T$ is $\trt = (\tauin, \tauout, \II)$, then we first let 
$(\rho_T, D_{\mathtt{root}}) = F(\trt)$. Next, let $J = \{i | \sigma_i = \sigma_{T_c}^o, T_c \in child(T)\}$.
For every $i \in J$, we denote by $\Sym_i$ the subtree rooted at the child of $\trt$ where
the edge connecting it with $\trt$ is labeled $i$ and let $\tilde{\rho}_i$ be the root of $\Sym_i$.
We denote by $(\Tree_i, D_i) = F(\Sym_i)$ and by $\rho_i$ the local run at the root of $\Tree_i$.
From the construction in Lemmas \ref{lem:sym-to-loc-finite} and \ref{lem:sym-to-loc-infinite},
we assume that the domains of $D_{\mathtt{root}}$ and the $D_i$'s are equivalence classes of local expressions.
We first define the renaming function $r$ whose domain is $\bigcup_{i \in J} \adom(D_i)$ 
as follows.
\begin{enumerate}
\item Initialize $r$ to be the identity function.
\item For every $i \in J$, for every expression $x_R.w$ where $x \in \bar{x}^{T_c}_{in}$ and $\nu_{in}^*(x_R.w)$ is defined,
for $y = f_{in}(x)$, let $r \leftarrow r[\nu_{in}^*(x_R.w) \mapsto \nu_i^*(y_R.w)]$.
Note that $\nu_{in}^*$ is defined wrt $\nu_{in}$ of $\rho_i$ and $D_i$ and $\nu_i^*$ is defined wrt $\nu_i$ of $\rho_T$ and $D_{\mathtt{root}}$.
And we shall see next that for every such $x_R.w$, if $\nu_{in}^*(x_R.w)$ is defined, then $\nu_i^*(y_R.w)$ is also defined.
\item For every $i \in J$ where $\tilde{\rho}_i$ is a returning local symbolic run where
the index of the corresponding $\sigma_{T_c}^c$ in $\trt$ is $j$, 
for every expression $x_R.w$ where $x \in \bar{x}^{T_c}_{ret}$ and $\nu_{out}^*(x_R.w)$ is defined,
for $y = f_{out}^{-1}(x)$, 
let $r \leftarrow r[\nu_{out}^*(x_R.w) \mapsto \nu_j^*(y_R.w)]$.
\end{enumerate}
We denote by $r(D)$ the database instance obtained by replacing each value $v \in dom(r)$ in $D$ with $r(v)$
and denote by $r(\Tree)$ the tree of runs obtained by replacing each value $v \in dom(r)$ in $\Tree$ with $r(v)$.

Then if $\trt$ is finite, we define $F(\Sym_T) = (\Tree_T, D_T)$
where $D_T = D_{\mathtt{root}} \cup \bigcup_{i \in J} r(D_i)$ and
$\Tree_T$ is obtained from $\Sym_T$ by replacing the root of $\Sym_T$ with $\rho_T$ and each subtree
$\Sym_i$ with $r(\Tree_i)$. 

If $\trt$ is periodic where the period is $t$ and the loop starts with index $n$, we define $F(\Sym_T) = (\Tree_T, D_T)$
where $D_T = D_{\mathtt{root}} \cup \bigcup_{i \in J, i < n} r(D_i)$ and
$\Tree_T$ is obtained from $\Sym_T$ by replacing the root of $\Sym_T$ with $\rho_T$  and each subtree
$\Sym_i$ with $r(\Tree_{i'})$, where $i' = i$ if $i < n$ otherwise $i' = n + (i - n) \texttt{ mod } t$. 

To prove the correctness of the construction, we first need to show that for every $\Sym_T$ and $(\Tree_T, D_T) = F(\Sym_T)$,
$D_T$ is a finite database satisfying $\FD$ and $\Tree_T$ is a valid tree of runs over $D_T$.
Let $\trt$ and $\rho_T$ be the root of $\Sym_T$ and $\Tree_T$ respectively. 
We show the following:
\begin{lemma}
For every symbolic tree of runs $\Sym_T$ where $(\Tree_T, D_T) = F(\Sym_T)$,
$D_T$ is a finite database satisfying $\FD$, $\Tree_T$ is a valid tree of runs over $D_T$, and
$(\rho_T, D_T)$ satisfies properties (i)-(iii) in Lemma \ref{lem:sym-to-loc-finite} and \ref{lem:sym-to-loc-infinite}.
\end{lemma}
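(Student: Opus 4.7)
The plan is to prove the lemma by induction on the height of $\Sym_T$ in the task hierarchy $\calh$. The base case, when $T$ is a leaf task, is immediate: $F(\Sym_T) = F(\trt)$ returns the pair $(\rho_T, D_T)$ produced by Lemmas~\ref{lem:sym-to-loc-finite} and~\ref{lem:sym-to-loc-infinite}, which already asserts finiteness, $D_T \models \FD$, validity of $\rho_T$, and properties (i)-(iii).

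For the inductive step, assume the result holds for every subtree $\Sym_i$ hanging off the root of $\Sym_T$, yielding $(\Tree_i, D_i) = F(\Sym_i)$ with finite $D_i \models \FD$ and $\Tree_i$ valid on $D_i$, satisfying (i)-(iii) at its root $\rho_i$. First I would argue finiteness of $D_T$: in the finite $\trt$ case it is a finite union of finitely many finite $r(D_i)$'s together with the finite $D_{\mathtt{root}}$; in the periodic case only those $D_i$ with $i < n$ are taken, so again finitely many. Next I would show that the renaming function $r$ is well defined as a function, i.e.\ that whenever the construction sets $r(u) \mapsto v$ and $r(u) \mapsto v'$ we have $v = v'$. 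This follows from the input/output compatibility enforced in the definition of symbolic tree of runs (the conditions on $\tau'_{in}$ and on $\tau_k$ matching $(f_{in} \circ f^{-1}_{out})(\tau_{out})$ up to $h(T_c)$) together with property (ii) at both ends, which guarantees that whenever two expressions are forced to be equal by the equality types of the symbolic runs, their $\nu^*$-values already coincide; property (i) then allows one to restrict attention to navigation expressions of bounded length $h(T_c)$, matching exactly the depth to which the symbolic tree of runs preserves isomorphism types.

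Once $r$ is shown to be well defined, verifying $D_T \models \FD$ is the main obstacle and requires some care. The potential threat is that the union $D_{\mathtt{root}} \cup \bigcup_i r(D_i)$ glues tuples across the subdatabases via the common identifiers introduced by $r$. Using the induction hypothesis that each $D_i \models \FD$ and that $D_{\mathtt{root}} \models \FD$, a key violation could arise only between a tuple in $D_{\mathtt{root}}$ and a tuple in some $r(D_i)$, or between $r(D_i)$ and $r(D_j)$ for $i \neq j$. Here I would exploit property (i), which bounds the navigation depth reached at each side, together with the fact that the equality type $\tau_i$ in $\rho_T$ and the input/output isomorphism type of $\rho_i$ agree up to depth $h(T_c)$; any identification of id-values by $r$ is then consistent with an identification of their dependent tuples, so the merge cannot introduce two distinct non-id attribute tuples sharing a key. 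For the periodic case the additional issue is that a single $\Tree_{i'}$ is reused for many invocations of $T_c$: one must check that the inputs passed in at all these invocations actually agree on the isomorphism type of $\tilde{\rho}_{i'}$, which is exactly ensured by periodicity of $\trt$ (the $\tau_i$'s and $\sigma_i$'s repeat with period $t$).

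Finally I would verify that $\Tree_T$ is a valid tree of runs over $D_T$, and that (i)-(iii) carry over to $(\rho_T, D_T)$. Validity of $\Tree_T$ amounts to checking the two clauses in the definition of tree of local runs at the root edges from $\rho_T$ to each $r(\rho_{i'})$: that $\nu_{in}^{i} = f_{in} \circ \nu_i$ on $\bar x^{T_c}_{in}$, and that on returning edges $\nu_k(z) = \nu_{out}^i(f_{out}(z))$ for each $z \in \bar x^T_{T_c\uparrow}$ with $\nu_{k-1}(z) = \anull$; both hold by construction of $r$. Enlarging the local database to $D_T$ preserves property (ii) because merging only happens along identifications forced by $r$, which were already identified by the global equality type on both sides; property (i) is preserved because any new expression value in $D_T$ was already reachable via a short expression in either $D_{\mathtt{root}}$ or some $r(D_i)$; and property (iii) is preserved because the depth $h(T_c)$ of definedness required at opening/closing services of $T_c$ coincides with the depth up to which $r$ identifies expressions across the parent-child boundary. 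The main technical obstacle throughout is the compatibility argument for $r$; once that is in place, the remaining verifications are book-keeping against the definitions given in Section~\ref{sec:symrep}.
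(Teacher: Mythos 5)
Your proposal is correct and follows essentially the same route as the paper: induction on the task hierarchy, with the base case delegated to Lemmas~\ref{lem:sym-to-loc-finite} and~\ref{lem:sym-to-loc-infinite}, and the inductive step hinging on the consistency of the renaming $r$, obtained by chaining properties (i)--(iii) at the child (induction hypothesis) and at the parent with the isomorphism-type matching $\tau'_{in} = f^{-1}_{in}(\tau_i)\,|\,(\bar{x}^{T_c}_{in}, h(T_c))$ (and its returning analogue) from the definition of symbolic trees of runs. The extra detail you supply on $\FD$ for the merged database and on reusing a single $\Tree_{i'}$ across periodic invocations is exactly what the paper compresses into ``one can easily check,'' so there is no substantive divergence.
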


\begin{proof}
We use a simple induction. For the base case, where $T$ is a leaf task, the lemma holds trivially. 
For the induction step, assume that for each $i \in J$, $D_i$ is finite and satisfies $\FD$, $\Tree_i$ is a valid tree of runs over $D_i$, 
and $(\rho_i, D_i)$ satisfies property (i)-(iii). 

For each $i \in J$, where $\tilde{\rho}_i$ is a local symbolic run of task $T_c \in child(T)$, 
we first consider the connection between $\tilde{\rho}_i$ and $\trt$ via input variables. 
As $\rho_i$ satisfies properties (i) and (ii), for every expressions $x_R.w$ and $x'_{R'}.w'$ in 
the input isomorphism type $\tauin$ of $\tilde{\rho}_i$,
if $\nu_{in}^*(x_R.w)$ and $\nu_{in}^*(x'_{R'}.w')$ are defined, then
$\nu_{in}^*(x_R.w) = \nu_{in}^*(x'_{R'}.w')$ iff $x_R.w \sim_{\tauin} x'_{R'}.w'$.
And by definition of symbolic tree of runs,
we have that $\tauin = f_{in}^{-1}(\tau_i) | (\bar{x}^{T_c}_{in}, h(T_c))$.
So for $y = f_{in}(x)$ and $y' = f_{in}(x')$,
$\nu_{in}^*(x_R.w) = \nu_{in}^*(x'_{R'}.w')$ iff $y_R.w \sim_{\tau_i} y'_{R'}.w'$.
Then as $\rho_T$ satisfies (ii) and (iii),
$\nu^*_i(y_R.w)$ and $\nu^*_i(y'_{R'}.w')$ are defined and 
$\nu^*_i(y_R.w) = \nu^*_i(y'_{R'}.w')$ iff $y_R.w \sim_{\tau_i} y'_{R'}.w'$ so
$\nu^*_i(y_R.w) = \nu^*_i(y'_{R'}.w')$ iff $\nu_{in}^*(x_R.w) = \nu_{in}^*(x'_{R'}.w')$.

If $\tilde{\rho}_i$ is returning, using the same argument as above, we can show the following.
Let $j$ be the index of the corresponding returning service $\sigma_{T_c}^c$.
Let $f$ be the function where $f(x) = \begin{cases} f_{in}(x) \text{, } x \in \bar{x}^{T_c}_{in} 
\\ f_{out}^{-1}(x) , x \in \bar{x}^{T_c}_{ret} \end{cases}$ and let $\nu$ be the valuation where
$\nu(x) = \begin{cases} \nu_{in}(x) , x \in \bar{x}^{T_c}_{in} 
\\ \nu_{out}(x) , x \in \bar{x}^{T_c}_{ret} \end{cases}$, where $\nu_{in}$ and $\nu_{out}$ are the input and output
valuation of $\rho_i$.
For all expressions $x_R.w$ and $x'_{R'}.w'$ where
$x, x' \in \bar{x}^{T_c}_{ret} \cup \bar{x}^{T_c}_{in}$, if $\nu^*(x_R.w)$ and $\nu^*(x'_{R'}.w')$ are defined, 
then for $y = f(x)$ and $y' = f(x')$, $\nu_j^*(y_R.w)$ and $\nu_j^*(y'_{R'}.w')$ are also defined and
$\nu_j^*(y_R.w) = \nu_j^*(y'_{R'}.w')$ iff $\nu^*(x_R.w) = \nu^*(x'_{R'}.w')$.

Given this, after renaming, $D_{\mathtt{root}}$ and $r(D_i)$ can be combined consistently.
Also, one can easily check that $\Tree_T$ is a valid tree of runs where $(\rho_T, D_T)$ satisfies properties (i)-(iii) and $D_T \models \FD$.
And $D_T$ is a finite database because it is the union of $D_{\mathtt{root}}$ and finitely many $r(D_i)$'s and by the hypothesis,
$D_{\mathtt{root}}$ and the $D_i$'s are finite.
\end{proof}

Finally, to complete the proof of correctness of the construction, we note: 
\begin{lemma}
For every full symbolic tree of runs $\Sym$ where all local symbolic runs in $\Sym$ are either
finite or periodic, for $(\Tree, D) = F(\Sym)$ and
every $\hltlfo$ property $\varphi$, $\Sym$ is accepted by $\calb_\varphi$ 
iff $\Tree$ is accepted by $\calb_\varphi$ on $D$.
\end{lemma}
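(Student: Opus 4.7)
The plan is to proceed by induction on the height of the symbolic tree of runs $\Sym$, showing the stronger statement: for every subtree $\Sym_T$ rooted at a local symbolic run $\trt$ of some task $T$, for $(\Tree_T, D_T) = F(\Sym_T)$ with root $\rho_T$, and for every truth assignment $\eta \in 2^{\Phi_T}$, the symbolic tree $\Sym_T$ is accepted by $B(T,\eta)$ (together with the associated adornment) if and only if $\Tree_T$ is accepted by $B(T,\eta)$ on $D_T$. The top-level statement then follows by taking $T = T_1$ and using Lemma~\ref{lem:symbuchi} on the symbolic side together with the analogous criterion for actual trees on the other.

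The cornerstone that I would prove first is a \emph{local equivalence lemma}: for every index $i$ in $\trt$ and every condition $\pi$ over $\bar{x}^T$ that can appear in $\Phi_T$, one has $\tau_i \models \pi$ iff $D_T \models \pi(\nu_i)$. This is immediate from properties (ii) and (iii) of Lemma~\ref{lem:sym-to-loc-finite} (resp.\ Lemma~\ref{lem:sym-to-loc-infinite}): property (iii) guarantees that every expression of depth $1$ used in atoms of $\pi$ has a defined $\nu_i^{*}$-value, and property (ii) guarantees that equalities between those values coincide with $\sim_{\tau_i}$. Consequently, the propositional trace induced by the sequence $\{(\tau_i,\sigma_i)\}$ through any truth assignment to $\Phi_T$ coincides exactly with the trace induced by $\{(\nu_i,\sigma_i)\}$, so $B(T,\eta)$ has the same set of accepting runs on $\trt$ as on $\rho_T$ (where the truth values of propositions $[\psi]_{T_c}$ are treated as atomic and encoded by the adornment).

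For the inductive step, consider an internal task $T$ and assume the equivalence holds for all children. Fix an adornment $\alpha$ witnessing acceptance of $\Sym_T$. For each edge from $\trt$ to a child $\tilde\rho_i$ at position $j$, the adornment assigns a truth assignment $\eta_c \in 2^{\Phi_{T_c}}$, and by the definition of acceptance $\tilde\rho_i$ is accepted by $B(T_c,\eta_c)$. By the induction hypothesis, the constructed local run $\rho_i$ at the root of $F(\Sym_i)$ is accepted by $B(T_c,\eta_c)$ as well. Using these child acceptances as the interpretation of the propositions $[\psi]_{T_c}$ at position $j$, and combining with the local equivalence lemma for the internal conditions, the same accepting run of $B(T,\eta)$ that witnesses acceptance of $\trt$ transfers to $\rho_T$ and vice versa. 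A subtle point here is that the renaming $r$ used to stitch $D_{\mathtt{root}}$ with the $r(D_i)$'s preserves equality types, hence cannot change the satisfaction of any condition in $\Phi_{T_c}$ nor alter the sequence of pre/post conditions of services; this must be verified carefully from the definition of $r$ (which merely identifies values that name the same equivalence class on the shared expressions).

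The main obstacle, and what I expect to require the most care, is the periodic case. Here the construction $F$ reuses the same constructed subtree $r(\Tree_{i'})$ for the infinitely many calls $i \ge n$ with $i' = n + (i-n)\bmod t$, but the acceptance condition on the infinite local run $\rho_T$ requires visiting an accepting state of $B(T,\eta)$ for infinite runs infinitely often (when $\tau_{out} = \bot$) and, for each entry into a child task, a proper accepting child run. I would argue this in two steps: first, by periodicity of $\trt$ and the bounded state space of $B(T,\eta)$, choose an accepting run of $B(T,\eta)$ on $\{(\tau_i,\sigma_i)\}$ that is itself eventually periodic with period a multiple of $t$ and whose loop visits an accepting state; by the local equivalence lemma this lifts verbatim to a run on $\{(\nu_i,\sigma_i)\}$. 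Second, for each position $i \ge n$ where a child is opened, the reused child tree $r(\Tree_{i'})$ is a valid witness because the adornment is determined only by the isomorphism types passed in and out, which by construction of $F$ (and the periodicity of $\Ret$) match those at position $i'$. This reduces global infinite acceptance to a finite number of child acceptance checks, each handled by induction.
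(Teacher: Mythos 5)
Your overall route is the same as the paper's: the paper disposes of this lemma in one line, observing that by construction of $F$ one has $\tau_i \models \pi$ iff $D \models \pi(\nu_i)$ for every condition $\pi$ over $\bar x^T$ and every position $i$ of every local run, so that the service sequences, the condition values and the tree structure carry over, and acceptance by $\calb_\varphi$ transfers in both directions. Your ``local equivalence lemma'' derived from properties (ii)--(iii) of Lemmas \ref{lem:sym-to-loc-finite} and \ref{lem:sym-to-loc-infinite}, your remark that the renaming $r$ only identifies values naming the same equivalence class (so it cannot change satisfaction of any condition), and your induction over the hierarchy with explicit adornments are a more detailed rendering of exactly this argument, and are consistent with the paper's recursive correctness lemma for $F$.

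There is, however, one incorrect step, precisely at the point you yourself identify as delicate. Your justification for the periodic case --- ``the adornment is determined only by the isomorphism types passed in and out'' --- is false: the truth assignment in $2^{\Phi_{T_c}}$ satisfied by a child subtree depends on the entire subtree (which services it executes, which grandchildren it calls, and so on), not merely on $\tau_{in}$ and $\tau_{out}$; two child subtrees with identical input and output isomorphism types can satisfy complementary formulas of $\Phi_{T_c}$. Since $F$ replaces the subtree at every position $i \geq n$ by a copy of the one at $i' = n + (i-n) \bmod t$, if $\Sym_i$ and $\Sym_{i'}$ satisfied different assignments then the adorned $\omega$-word of the parent run in $\Tree$ would differ from that of $\trt$ at infinitely many positions, and B\"uchi acceptance need not transfer: for instance a property of the shape $\mathbf{F}(\sigma^o_{T_c} \wedge [\psi]_{T_c})$ whose only witnessing call occurs in a later repetition would be lost after the reuse. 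The repair, consistent with how the lemma is actually used, is to note that the symbolic trees to which it is applied are those produced from lasso paths in Lemma \ref{lem:vass-correctness}, where literal copies of the same child subtree are attached at the corresponding positions of every repetition; hence the adornment can be taken periodic with period $t$, the reuse performed by $F$ leaves the truth values of the child propositions unchanged, and your lifting of an (eventually periodic) accepting run of $B(T,\eta)$ then goes through. Without some such restriction (or an explicit ``without loss of generality, the subtrees hanging in the periodic part repeat with period $t$''), the step as you justify it does not hold.
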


The above follows immediately from the fact that by construction, for every task $T$
and local symbolic run $\trt = \\ (\tauin, \tauout, \II)$ in $\Sym$ where the corresponding local run in $\Tree$
is $\rho_T = (\nu_{in}, \nu_{out}, \RS)$, for every condition $\pi$ over $\bar{x}^T$ and $0 \leq i < \gamma$,
$\tau_i \models \pi$ iff $D \models \pi(\nu_i)$.

This completes the proof of Lemma \ref{lem:sym-to-loc}, and the only-if part of Theorem \ref{thm:actual-symbolic}.

\subsection{Proof of Lemma \ref{lem:vass-correctness}}
\noindent
The proof is by induction on the task hierarchy $\calh$.

\vspace{4mm}
\noindent
{\bf Base Case}
Consider $\calr_T(\tauin, \tauout, \beta)$ where $T$ is a leaf task.
As $T$ has no subtask, $dom(\bar{o})$ is always empty so $\bar o$ can be ignored. Note that, by definition, there can be
no blocking path of $\calv(T,\beta)$.

For the \emph{if} part, consider $(\tauin, \tauout, \beta) \in \calr_T$.
Suppose first that $\tauout \neq \bot$.
By definition, there exists a finite local symbolic run $(\tauin, \tauout, \II)$ accepted by
$B(T, \beta)$, where $\gamma \in \mathbb{N}$ and $\sigma_{\gamma-1} = \sigma^c_T$.
Consider an accepting computation $\{q_i\}_{0 \leq i < \gamma}$ of $B(T, \eta)$ on $\II$, such that
$q_{\gamma-1} \in Q_{fin}$.  We can construct a returning path $P = \{(p_i, \bar{z}_i)\}_{0 \leq i < \gamma}$ of $\calv(T, \beta)$
where for each state  \\ $p_i = (\tau_i, \sigma_i, q_i, \bar{o}_i, \bar{c}_{ib}^i)$,
$(\tau_i, \sigma_i, q_i)$ is obtained directly from $\II$ and $\{q_i\}_{0 \leq i < \gamma}$,
$\bar{z}_i = \bar{c}_i$, and $\bar{c}_{ib}^i$ is the projection of $\bar{c}_i$ to 
input-bound $TS$-isomorphism types.

Now suppose $\tauout = \bot$. By definition, and since $T$ is a leaf task, there exists an infinite symbolic run 
$(\tauin, \tauout, \\ \IIw)$ accepted by $B(T, \beta)$. Consider
the sequence $\{q_i\}_{0 \leq i < \omega}$ of states in an accepting computation of $B(T, \eta)$ on $\IIw$. There must exist
$q_f \in Q_{inf}$ such that for infinitely many $i$, $q_i = q_f$. So
we can construct a path $P = \{(p_i, \bar{z}_i)\}_{0 \leq i < \omega}$ of $\calv(T, \beta)$
where for each state $p_i = (\tau_i, \sigma_i, q_i, \bar{o}_i, \bar{c}_{ib}^i)$ is obtained in the same way as in the case 
where $\tauout \neq \bot$.
It is sufficient to show that there exists a finite prefix $\{(p_i, \bar{z}_i)\}_{0 \leq i \leq n}$
of $P$ such that there exists $m < n$ such that $(\tau_m, \sigma_m, q_m, \bar{c}^m_{ib}) = (\tau_n, \sigma_n, q_n, \bar{c}^n_{ib})$, $q_m = q_n = q_f$,
and $\bar{z}_m \leq \bar{z}_n$. By the pigeonhole principle,
there exist $\tau$, $\sigma$, $\bar{c}_{ib}$ and an infinite $J \subseteq \mathbb{N}$
such that $(\tau_j, \sigma_j, \bar{c}^j_{ib} q_j) = (\tau, \sigma, \bar{c}_{ib}, q_f)$ 
for every $j \in J$. Consider the sequence $\{\bar z_j \mid j \in J\}$.
Next, there exists an infinite $J_1 \subseteq J$ such that $\{\bar z_j \mid j \in J_1\}$ is non-decreasing in the first dimension.
A straightforward induction shows that there exists an infinite $J_{|\bar z|} \subseteq J$ such that $\{\bar z_j \mid j \in J_{|\bar z|}\}$ is non-decreasing
in all dimensions. Now consider $m, n \in J_{|\bar z|}$, $m < n$. The sequence
$(p_0, \bar{z}_0), \ldots, (p_m, \bar{z}_m), \ldots, (p_n, \bar{z}_n)$ is a lasso path of $\calv(T,\beta)$.

For the \emph{only-if} direction, if there exists a returning path in $\calv(T, \beta)$, then by definition, $\tauin$ and $\tauout$
together with the sequence $\{(I_i,\sigma_i)\}_{0 \leq i \leq n}$ where each
$(I_i,\sigma_i)$ is obtained directly from $(p_i, \bar{z}_i)$ is a valid local symbolic run $\tilde{\rho}_T$.
And $\tilde{\rho}_T$ is accepted by $B(T, \beta)$ since $q_n$ is in $Q^{fin}$.
If there exists a lasso path in $\calv(T, \beta)$, then we can obtain a finite sequence \\ $\{(I_i, \sigma_i)\}_{0 \leq i \leq n}$ similar to
above. And we can construct \\ $\{(I_i,\sigma_i)\}_{0 \leq i < \omega}$ by repeating the subsequence from index $m+1$ to index $n$
infinitely many times.
As $q_n = q_f \in Q^{inf}$, $(\tauin, \bot, \{(I_i,\sigma_i)\}_{0 \leq i < \omega})$ is an infinite local symbolic run accepted by $B(T, \beta)$,
so $(\tauin, \bot, \beta) \in \calr_T$.

\vspace{4mm}
\noindent
{\bf Induction}
Consider a non-leaf task $T$, and suppose the statement is true for all its children tasks.

For the \emph{if} part, suppose $(\tauin, \tauout, \beta) \in \calr_T$. Then there exists an adorned symbolic tree of runs $\Sym_T$
with root $\trt = (\tauin, \tauout, \II)$
accepted by $\calb_{\bar \beta}$. We construct a path $P = \{(p_i, \bar{z}_i)\}_{0 \leq i < \gamma}$
of $\calv(T, \beta)$ as follows. The transitions in $\trt$ caused by internal services are treated as in the base case.
Suppose that $\sigma_i = \sigma^o_{T_c}$ for some child $T_c$ of $T$. Then there is an edge labeled $(i, \beta^{T_c})$
from $\trt$ to a symbolic tree of runs accepted by $\calb_{\bar \beta^{T_c}}$,
rooted at a run $\trtc$ of $T_c$ with input $\tauin^{T_c}$ and output
$\tauout^{T_c}$. Thus, $(\tauin^{T_c}, \tauout^{T_c}, \beta^{T_c}) \in \calr_{T_c}$ and
$\calv(T, \beta)$ can make the transition from $(p_{i-1}, \bar{z}_{i-1})$ to $(p_i, \bar{z}_i)$ as in its definition
(including the updates to $\bar o$).
If $\tauout^{T_c} \neq \bot$ then there exists a minimum $j > i$ for which $\sigma_j = \sigma^c_{T_c}$ and
once again $\calv(T, \beta)$ can make the transition from $(p_{j-1}, \bar{z}_{j-1})$ to $(p_j, \bar{z}_j)$ as in its definition,
mimicking the return of $T_c$ using the isomorphism type $\tauout^{T_c}$ stored in $\bar o(T_c)$.
Now consider the resulting path $P =  \{(p_i, \bar{z}_i)\}_{0 \leq i < \gamma}$.
By applying a similar analysis as in the base case, if $\gamma \neq \omega$ and $\tauout \neq \bot$, then $P$ is a returning path.
If $\gamma \neq \omega$ and $\tauout = \bot$, then $P$ is a blocking path.
If $\gamma = \omega$, then there exists a prefix $P'$ of $P$ such that $P'$ is a lasso path.

For the \emph{only-if} direction, let $P$ be a path of $\calv(T, \beta)$,
starting from a state
$p_0 = (\tau_0, \sigma_0, q_0, \bar{o}_0, \bar{c}^{0}_{ib})$ where $\tau_0 | \bar{x}^T_{in} = \tauin$.
If $P$ is a returning path, let $v_n = (\tau_n, \sigma_n, q_n, \bar{o}_n, \bar{c}^{n}_{ib})$ be its last state and
$\tauout = \tau_n | (\bar{x}^T_{in} \cup \bar{x}^T_{ret})$.  If $P$ is not a returning path, then $\tauout = \bot$.
From $P$ we can construct a adorned symbolic tree of runs $\Sym_T$ accepted by $\calb_{\bar \beta}$ as follows.
The root of $\Sym_T$ is a local symbolic run $\trt$
constructed analogously to the construction in the only-if direction in the base case.
Then for each $\sigma_i = \sigma_{T_c}^o$, by the induction hypothesis, there exists a symbolic tree
of runs $\Sym_{T_c}$ whose root has input isomorphism type $\tauin^{T_c}$, output isomorphism type
$\tauout^{T_c}$ and is accepted by $\calb_{\beta^{T_c}}$ (note that $\tauin^{T_c}$, $\tauout^{T_c}$ and
$\beta^{T_c}$ are uniquely defined by $P$ and $i$).
We connect $\Sym_T$ with $\Sym_{T_c}$ with an edge labeled $(i, \beta^{T_c})$.

If $P$ is a returning or blocking path, then $\Sym_T$ is accepted by $\calb_{\bar \beta}$.
If $P$ is a lasso path, then we first modify the root $\trt$ of $\Sym_T$ by repeating
the subsequence from $m + 1$ to $n$ infinitely, then for each integer $i$ such that
$m + 1 \leq i \leq n$ and $\Sym_T$ is connected with some $\Sym_{T_c}$
with edge labeled index $(i, \beta^{T_c})$, for each repetition $I_{i'}$ of symbolic instance $I_i$,
we make a copy of $\Sym_{T_c}$ and connect $\Sym_T$ with $\Sym_{T_c}$ with edge labeled $(i', \beta^{T_c})$.
The resulting $\Sym_T$ is accepted by $\calb_{\bar \beta}$. Thus, $(\tauin, \tauout, \beta) \in \calr_T$.

%

\subsection{Complexity of Verification without Arithmetic} \label{sec:complexity-no-arith-app}

Let $\Gamma$ be a HAS and $\varphi$ an $\hltlfo$ formula over $\Gamma$. 
Recall the VASS $\calv(T,\beta)$ constructed for each task $T$ and assignment $\beta$ to $\Phi_T$.
According to the discussion of the complexity of verification in Section \ref{sec:verification},
checking whether $\Gamma \not\models \varphi$ can be done in $O(h \log n \cdot 2^{c \cdot d \log (d)})$ nondeterministic space, 
where $c$ is a constant, $h$ is the depth of $\calh$, and $n, d$ bound the number of states, resp. vector dimensions of $\calv(T,\beta)$
for all $T$ and $\beta$. We will estimate these bounds using the maximum number of $T$-isomorphism types, denoted $M$,
and the maximum number of $TS$-isomorphism types, denoted $D$. We also denote by $N$ the size of $(\Gamma, \varphi)$.
To complete the analysis, the specific bounds $M$ and $D$ will be computed for acyclic, linear-cyclic, 
and cyclic schemas, as well as with and without artifact relations. 

By our construction, the vector dimension of each $\calv(T,\beta)$ is the number of $TS$-isomorphism types, so bounded by $D$.
The number of states is at most the product of the number of distinct $T$-isomorphism types, 
the number states in $B(T, \beta)$, the number of all possible $\bar{o}$ and the number of possible states of $\bar{c}_{ib}$.
And since the number of $T_c$-isomorphism types is no more than the number of $T$-isomorphism types if $T_c$ is child of $T$,
the number of all possible $\bar{o}$ is at most $(3 + M)^{|child(T)|} \leq (3+M)^N$.
Note that the number of states in $B(T, \beta)$ 
is at most exponential in the size of the $\hltlfo$ property $\varphi$ (extending the classical construction \cite{WolperVardiSistla}).
Thus, $n = M \cdot 2^{O(N)} \cdot (3+M)^N \cdot 2^D$ bounds the number of states of all $\calv(T,\beta)$. 
It follows that $O(h \log n \cdot 2^{c \cdot d \log (d)}) = O(h \cdot N \cdot \log M \cdot 2^{c\cdot D \cdot \log D})$, yielding the complexity of 
checking $\Gamma \not\models \varphi$. 
Thus, checking whether $\Gamma \models \varphi$ can be done in
$O( h^2 \cdot N^2 \log^2 M \cdot 2^{c \cdot D \log D} )$ deterministic space by Savitch's Theorem \cite{sipser},
for some constant $c$.

For artifact systems with no artifact relation, the bounds degrade to 
$O( h \cdot N \log M )$ and $O( h^2 \cdot N^2 \log^2 M )$.

The number of $T$- and $TS$-isomorphism types depends on the type of the schema $\db$ of $\Gamma$, as described next.
In our analysis, we denote by $r$ the number of relations in $\db$ and $a$ the maximum arity of relations in $\db$. 
We also let $k = \max_{T \in \calh} |\bar{x}^T|$, $s = \max_{T \in \calh}|\bar{s}^T|$ and $h$ be the height of $\calh$.


\vspace{5mm}
\noindent
{\bf Acyclic Schema} if $\db$ is acyclic, then the length of each expression 
in the navigation set is bounded by the number of relations in $\db$. 
So the size of the navigation set of each 
$T$-isomorphism type is at most $a^r k$. 
The total number of $T$-isomorphism types is at most the product of
the number of possible navigation sets and the number of possible equality types.
So $M = (r + 1)^k \cdot (a^r k)^{a^r k}$ is a bound for the number of $T$-isomorphism types for every $T$.

For $TS$-isomorphism types, we note that within the same path in $\calv(T, \beta)$, all $TS$-isomorphism types have the same
projections on $\bar{x}^T_{in}$ since the input variables are unchanged throughout a local symbolic run. 
So within each query of (repeated) reachability, each $TS$-isomorphism type can be represented by (1) the equality connections
from expressions starting with $x \in \bar{x}^T_{in}$ to expressions starting with $x \in \bar{s}^T$ and (2) the equality connections
within expressions starting with $x \in \bar{s}^T$.
For (1), the total number of all possible connections is at most ${M_1}^{M_2}$ where $M_1$ is the number of expressions starting with 
$x \in \bar{x}^T_{in}$
and $M_2$ is the number of expressions starting with $x \in \bar{s}^T$. For (2), the total number of all possible connections is 
at most $M_2^{M_2}$. Note that $M_1 \leq a^r k$ and $M_2 \leq a^r s$.
So the total number of $TS$-isomorphism type is at most 
$D = (r+1)^s \cdot (a^r k \cdot a^r s)^{a^r s} =(r+1)^s \cdot (a^{2r} k \cdot s)^{a^r s}$.
So for $\db$ of fixed size and $S^T$ of fixed arity, 
the number of $T$-isomorphism type is exponential in $k$ and 
the number of $TS$-isomorphism type is polynomial in $k$.

By substituting the above values of $M$ and $D$ in 
the space bound $O( h^2 \cdot N^2 \log^2 M \cdot 2^{c \cdot D \log D} )$, we obtain:

\begin{theorem} \label{thm:acyclic}
For HAS $\Gamma$ with acyclic schema and $\hltlfo$ property $\varphi$ over $\Gamma$,
$\Gamma \models \varphi$ can be checked in $O(\exp(N^{c_1}))$ deterministic space, where $c_1 = O(a^{r \log r} s)$.
If $\Gamma$ does not contain artifact relations, then 
$\Gamma \models \varphi$ can be checked in $c_2 \cdot N^{O(1)}$ deterministic space,
where $c_2 = O(a^{2r} \log^2 a^r)$.
\end{theorem}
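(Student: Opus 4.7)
The plan is to combine the general verification complexity $O(h^2 \cdot N^2 \log^2 M \cdot 2^{c \cdot D \log D})$ derived from Lemma \ref{lem:vass-correctness} and Savitch's theorem with the schema-specific bounds on $M$ (number of $T$-isomorphism types) and $D$ (number of $TS$-isomorphism types) just computed for the acyclic case. So the proof is essentially a careful substitution and simplification, exploiting the observation that for an acyclic schema every foreign-key navigation path has length at most $r$, so only expressions of bounded size enter the isomorphism types.

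First, I would record the bounds already established: $M \le (r+1)^k \cdot (a^r k)^{a^r k}$ and $D \le (r+1)^s \cdot (a^{2r} k s)^{a^r s}$, where $k, s, r, a, h$ are as defined. Since $k, s \le N$ and $h \le N$, I would rewrite $\log M = O(a^r k \log(a^r k)) = O(a^r \log a \cdot N \log N)$ and $D \log D = O(a^r s \log(a^{2r} k s)) \cdot \log\!\bigl((r{+}1)^s (a^{2r} k s)^{a^r s}\bigr) = O(a^{r \log r} s \cdot \log N)$ after absorbing polynomial factors into the exponent. Plugging into $O(h^2 N^2 \log^2 M \cdot 2^{c D \log D})$, the dominant term is the $2^{c D \log D}$ factor, which gives $O(\exp(N^{c_1}))$ with $c_1 = O(a^{r \log r} s)$ as claimed. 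The $h^2 N^2 \log^2 M$ prefactor is polynomial in $N$ and thus absorbed by adjusting $c_1$.

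For the case with no artifact relations, the VASS degenerates: the counter dimension $D$ collapses to $0$ (no $TS$-types needed), the input-bound counter vector $\bar c_{ib}$ disappears, and the state repeated-reachability bound reduces to $O(h \log n)$ nondeterministic space, hence $O(h^2 \log^2 n)$ deterministic space. Here $n = M \cdot 2^{O(N)} \cdot (3+M)^N$, so $\log n = O(N \log M) = O(N \cdot a^r \log a \cdot N \log N)$. Squaring gives $c_2 \cdot N^{O(1)}$ with $c_2 = O(a^{2r} \log^2 a^r)$ as stated.

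The main obstacle is bookkeeping rather than conceptual: keeping track of which parameters appear polynomially versus in the exponent, and in particular arguing that $D \log D$ is $O(a^{r \log r} s \cdot \mathrm{polylog}(N))$ so that the $\mathrm{polylog}$ factor can be hidden inside the constant $c_1$ in $O(\exp(N^{c_1}))$. A secondary subtlety is verifying that the bound on $n$ contributed by the B\"uchi automaton for $B(T,\beta)$ and by the mapping $\bar o$ (both singly exponential in $N$) is indeed dominated by $2^{cD\log D}$ in the acyclic case, so it does not show up in the stated exponent; this is immediate once one checks that $D$ itself is polynomial in $N$ for fixed schema, which makes $D \log D$ polynomial in $N$ and thus $2^{c D \log D}$ the single-exponential-in-$N^{c_1}$ term that dominates.
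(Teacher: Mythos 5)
Your proposal is correct and follows essentially the same route as the paper: the paper obtains the theorem precisely by substituting the acyclic-schema bounds $M \le (r+1)^k (a^r k)^{a^r k}$ and $D \le (r+1)^s (a^{2r} k s)^{a^r s}$ into the general bound $O(h^2 N^2 \log^2 M \cdot 2^{c\, D \log D})$ from the nested-VASS reachability analysis plus Savitch's theorem, and into $O(h^2 N^2 \log^2 M)$ when artifact relations are absent. The only thing to repair is the intermediate claim that $D\log D = O(a^{r\log r} s \cdot \log N)$ (or is polylogarithmic in $N$): since $k,s \le N$, $D\log D$ is polynomial in $N$ with exponent $O(a^{r\log r} s)$, which is exactly what your final sentence uses to conclude $2^{c\,D\log D} = O(\exp(N^{c_1}))$.
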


Note that if $\db$ is a Star schema \cite{starschema1,starschema2}, which is a special case of acyclic schema, then the 
size of the navigation set is at most $a r k$ instead of $a^r k$.
So verification has the complexities stated in Theorem \ref{thm:acyclic}, 
with constants $c_1 = O(a r s)$ and $c_2 = O({ar}^2 \log^2 ar)$ respectively.

Note that with the simulation used in Lemma \ref{lem:simplification},
the number of variables is at most quadratic in the original number of variables.
This only affects the constants in the above complexities.

\vspace{4mm}
\noindent
{\bf Linearly-Cyclic Schema} 
Consider the case where $\db$ is linearly cyclic. To bound the number of $T$- and $TS$-isomorphism types,
it is sufficient to bound $h(T)$, which equals to $1 + k \cdot F(\delta)$ where 
$\delta = \max_{T_c \in child(T)} \{h(T_c)\}$ if $T$ is a non-leaf task and $\delta = 1$
if $T$ is a leaf. And recall that $F(\delta)$ is the maximum number of distinct
paths of length at most $\delta$ starting from any relation in the foreign key graph FK. 
If $\db$ is linearly cyclic, then by definition, the graph of cycles in FK form an acyclic graph
$G$ (each node in $G$ is a cycle in the FK graph and there is an edge from cycle $u$ to cycle $v$ 
iff there is an edge from some node in $u$ to some node in $v$ in FK).

Consider each path $P$ of length at most $\delta$ in FK.
$P$ can be decomposed into a list of subsequences of nodes, where each subsequence 
consists of nodes within the same cycle in FK (as shown in Figure \ref{fig:linearly-cyclic}).

\begin{figure}[!h]
\centering
\includegraphics[scale=0.7]{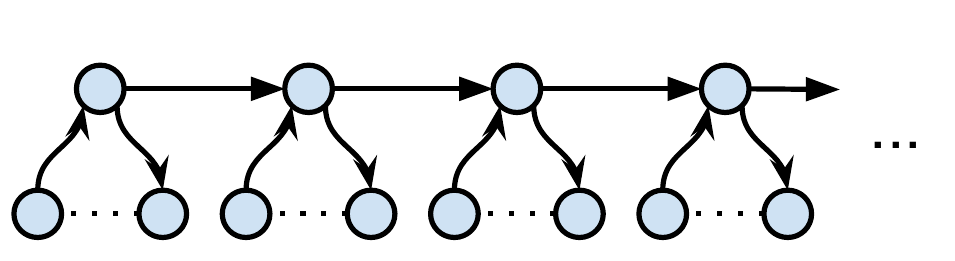}
\caption{Path in Linearly-Cyclic Foreign Key Graph}
\label{fig:linearly-cyclic}
\end{figure}

So $F(\delta)$ can be bounded by the product of (1) the number of distinct paths in $G$ starting from any cycle
and (2) the maximum number distinct paths of length at most $\delta$ formed using subsequences of nodes from cycles within the same path in $G$.
It is easy to see that (1) is at most $a^r$. And since the length of a path in $G$ is at most $r$,
(2) is at most $\delta^r$. Thus $F(\delta)$ is bounded by $a^r \cdot \delta^r = (a \cdot \delta)^r$.


So if $\db$ is linearly cyclic, then 
$h(T)$ is bounded by $1 + a^r k$ if $T$ is a leaf task and
$h(T)$ is bounded by $1 + (a \cdot \delta)^r   \cdot k$ if $T$ is non-leaf task
where $\delta = \max_{T_c \in child(T)} \{h(T_c)\}$. 
By solving the recursion, for every task $T$, we have that
$h(T) \leq c \cdot (a \cdot k)^{r \cdot h} $ for some constant $c$. 
So the size of the navigation set of each $T$-isomorphism type
is at most $c \cdot (a \cdot k)^{r(h+1)}$. Thus the number of
$T$- and $TS$-isomorphism types are bounded by $(r+1)^k \cdot (c \cdot (a \cdot k)^{r(h+1)})^{c \cdot (a \cdot k)^{r(h+1)}}$. 
By an analysis similar to that for acyclic schemas, we can show that

\begin{theorem} \label{thm:linear}
For HAS $\Gamma$ with linearly-cyclic schema and $\hltlfo$ property $\varphi$ over $\Gamma$,
$\Gamma \models \varphi$ can be checked in  \\ 
$O(2\hexp( N^{c_1 \cdot h}))$ deterministic space
where $c_1 = O(r)$. If $\Gamma$ does not contain artifact relations, then 
$\Gamma \models \varphi$ can be checked in $O(N^{c_2 \cdot h})$ deterministic space
where $c_2 = O(r)$.
\end{theorem}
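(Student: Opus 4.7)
The plan is to instantiate the generic space bound $O(h^2 \cdot N^2 \log^2 M \cdot 2^{c\cdot D \log D})$ (derived in the discussion preceding Theorem \ref{thm:acyclic}) for the case of linearly-cyclic schemas, mirroring exactly the strategy used for acyclic schemas but with more careful bounds on the navigation depth $h(T)$ and the sizes $M, D$ of the $T$- and $TS$-isomorphism-type sets. Thus the proof reduces to three quantitative lemmas: a combinatorial bound on $F(\delta)$ specific to linearly-cyclic foreign key graphs, a solution to the induced recurrence on $h(T)$, and the resulting bounds on $M$ and $D$.

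First, I would prove that $F(\delta) \leq (a \cdot \delta)^r$. The argument exploits the definition of linearly-cyclic: each relation lies in at most one simple cycle of FK, so the condensation of FK (contracting each simple cycle to a vertex) is a DAG on at most $r$ vertices. Any navigation path of length at most $\delta$ starting from a fixed $R$ can then be decomposed into a sequence of contiguous subpaths, each living in a single cycle of FK. The number of ways to choose which cycles are visited and in which order is bounded by $a^r$ (the total number of edge choices across the condensation DAG), while the number of ways to distribute the path length among at most $r$ cycles and choose the entry points within each is bounded by $\delta^r$. Multiplying yields the claimed bound.

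Second, I would solve the recurrence $h(T) \leq 1 + k \cdot F(\max_{T_c \in \text{child}(T)} h(T_c)) \leq 1 + k \cdot (a \cdot \max h(T_c))^r$ with base case $h(T) \leq 1 + a^r k$ for leaf tasks, deriving the bound $h(T) \leq c \cdot (a\cdot k)^{r \cdot h}$ announced in the text preceding the theorem. Plugging this back into the formulas from the acyclic case, the size of the navigation set of any $T$-isomorphism type is at most $k \cdot (ak)^{r(h+1)}$, whence $M \leq (r+1)^k \cdot (c (ak)^{r(h+1)})^{c(ak)^{r(h+1)}}$, and by the same projection argument as in the acyclic case $\log D = O((ak)^{r(h+1)} \cdot \log(ak))$.

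Third, I would substitute these bounds into $O(h^2 \cdot N^2 \cdot \log^2 M \cdot 2^{c \cdot D \log D})$. The dominant term is $2^{D \log D}$; since $D = 2^{O(N^{c_1 h})}$ for $c_1 = O(r)$ (for some constant absorbing $a$ and the schema-dependent factors), we obtain $D \log D = 2^{O(N^{c_1 h})}$ and hence the space bound $O(2\hexp(N^{c_1 \cdot h}))$. For the case without artifact relations, the $2^{D \log D}$ factor disappears entirely (the VASS degenerates to a finite automaton of size $M$), leaving a bound dominated by $\log M = O(N^{c_2 \cdot h})$, which yields $O(N^{c_2 \cdot h})$ deterministic space after Savitch. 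The main obstacle is the recurrence analysis in the second step: a naive unrolling gives a bound of $(ak)^{r^h}$ (double exponential in $h$), so one must be careful in collapsing this to the form $(ak)^{rh}$, which is what makes $c_1$ linear rather than exponential in $r$ and yields the doubly-exponential rather than non-elementary complexity for this schema class.
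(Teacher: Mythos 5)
Your overall route is the paper's: the same bound $F(\delta)\le(a\cdot\delta)^r$ obtained by contracting the simple cycles of the foreign-key graph into a DAG and decomposing a navigation path into at most $r$ within-cycle segments (the paper bounds the choice of condensation path by $a^r$ and the distribution of lengths by $\delta^r$, exactly as you do), the same recurrence $h(T)\le 1+k\cdot(a\cdot\max_{T_c\in child(T)}h(T_c))^r$ with base case $1+a^rk$, and the same substitution of the resulting bounds on the numbers $M$ and $D$ of $T$- and $TS$-isomorphism types into the generic space bound $O(h^2\cdot N^2\log^2 M\cdot 2^{c\cdot D\log D})$, the factor $2^{c\cdot D\log D}$ being absent when there are no artifact relations. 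So your steps 1 and 3 coincide with the appendix argument and are fine.

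The gap is your step 2, and it is not a routine omitted detail. You yourself observe that naive unrolling of the recurrence yields $h(T)\le(ak)^{O(r^h)}$, and you promise a "careful collapse" to $c\cdot(ak)^{r\cdot h}$ without ever supplying it; yet this is precisely the step on which the stated exponent $c_1\cdot h$ with $c_1=O(r)$ rests. Moreover, the collapse cannot come from solving this recurrence more carefully: writing $H_j$ for the bound at hierarchy height $j$, the recurrence $H_j\le 1+k(aH_{j-1})^r$ gives $\log H_j\approx r\log H_{j-1}+\log(ka^r)$, so unrolling forces $\log H_h=\Theta(r^h\log(ak))$, and this growth is actually attained when the schema has a condensation path through $\Theta(r)$ distinct cycles, since then $F(\delta)=\delta^{\Theta(r)}$. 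From the recurrence alone one therefore only gets navigation sets of size $(ak)^{O(r^h)}$, hence $\log M$ and $\log D$ of order $N^{O(r)^h}$ and a final space bound whose exponent is exponential, not linear, in the depth $h$. The paper's own text asserts the $c\cdot(ak)^{r\cdot h}$ solution with no more justification than "solving the recursion," so you have not diverged from its route; but as a self-contained proof your proposal does not establish the theorem's bound, and closing the gap would require either a genuinely sharper argument about the navigation depth $h(T)$ needed for consistency across the hierarchy (something beyond the path-counting recurrence), or accepting the weaker bound with $h$ in the exponent of the exponent.
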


\vspace{4mm}
\noindent
{\bf Cyclic Schema} If $\db$ is cyclic, then each relation in FK has at most $a$ outgoing edges so
$F(\delta)$ is bounded by $a^{\delta}$.
So $h(T) = O(k \cdot a^{\delta})$ 
where $\delta = 1$ if $T$ is a leaf task and 
$\delta = \\ \max_{T_c \in child(T)} h(T_c)$ otherwise. Solving the recursion yields $h(T) = h\hexp(O(N))$. 
By pursuing the analysis similarly to the above, we obtain the following:
\begin{theorem} \label{thm:cyclic}
For HAS $\Gamma$ with cyclic schema and $\hltlfo$ property $\varphi$ over $\Gamma$,
$\Gamma \models \varphi$ can be checked in $(h+2)\hexp(O(N))$ deterministic space.
If $\Gamma$ does not contain artifact relations, then 
$\Gamma \models \varphi$ can be checked in $h\hexp(O(N))$ deterministic space.
\end{theorem}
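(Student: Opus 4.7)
The plan is to follow the same template used for the acyclic and linearly-cyclic cases (Theorems \ref{thm:acyclic} and \ref{thm:linear}): bound $h(T)$ for each task, use this to bound the number $M$ of $T$-isomorphism types and the number $D$ of $TS$-isomorphism types, and then plug these into the space bound $O(h^2 \cdot N^2 \log^2 M \cdot 2^{c \cdot D \log D})$ established in Appendix \ref{sec:complexity-no-arith-app} (or into $O(h^2 \cdot N^2 \log^2 M)$ in the absence of artifact relations).

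The first step is to observe that for a cyclic schema we have no nontrivial restriction on navigation paths, so each node in the foreign-key graph FK has at most $a$ outgoing edges, giving $F(\delta) \leq a^{\delta}$. From the definition of $h(T)$ this yields the recurrence $h(T) = 1 + |\bar x^T| \cdot F(\delta)$ with $\delta = \max_{T_c \in \mathrm{child}(T)} h(T_c)$ for non-leaf $T$ and $\delta = 1$ for leaf $T$. Unrolling this recurrence along any root-to-leaf path of $\calh$ of length at most $h$ produces a tower of exponentials: each level roughly squares-and-exponentiates the previous value, so $h(T) = h\text{-}\exp(O(N))$ in the worst case.

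Next I would bound $M$ and $D$. Since the navigation set $\cale_T$ has size at most $|\bar x^T| \cdot F(h(T))$, which is already $h\text{-}\exp(O(N))$, and the number of equality types over a set of size $s$ is at most $s^s$, we obtain $M \leq (h{+}1)\text{-}\exp(O(N))$. The $TS$-isomorphism types are projections of $T$-isomorphism types onto $\bar x^T_{in} \cup \bar s^T$, so by the analysis used for the other schema classes we get $D \leq (h{+}1)\text{-}\exp(O(N))$ as well. Substituting into $O(h^2 \cdot N^2 \log^2 M \cdot 2^{c \cdot D \log D})$ gives deterministic space $(h+2)\text{-}\exp(O(N))$, since the outer factor $2^{c \cdot D \log D}$ contributes one more level of exponential on top of $D$. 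For the case without artifact relations, we drop the $2^{c \cdot D \log D}$ factor entirely, so the bound collapses to $O(h^2 \cdot N^2 \log^2 M)$, which is $h\text{-}\exp(O(N))$ since $\log M$ is itself $h\text{-}\exp(O(N))$ up to constants absorbed in $O(\cdot)$.

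The main obstacle is the careful bookkeeping in the recurrence for $h(T)$: one must verify that the repeated exponentiation along the task hierarchy produces exactly a tower of height $h$ (rather than $h+1$ or more) in the leading term, so that the final tower heights $h$ and $h+2$ stated in Table \ref{tab:complexity1} are tight. Apart from this, the argument is a direct mirror of the proofs of Theorems \ref{thm:acyclic} and \ref{thm:linear}, with the weaker path bound $F(\delta) \leq a^{\delta}$ in place of the polynomial or singly-exponential bounds available there. No new technique beyond those already in Appendix \ref{sec:complexity-no-arith-app} is required; the cyclic case simply exhibits the worst-case blow-up permitted by the VASS reduction.
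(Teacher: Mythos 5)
Your proposal follows essentially the same route as the paper's own (very terse) treatment of the cyclic case: bound $F(\delta)\leq a^{\delta}$, solve the recurrence $h(T)=1+k\cdot F(\delta)$ down the hierarchy to get a tower of height roughly $h$, bound the navigation-set size and hence $M$ and $D$, and substitute into the general space bound $O(h^2\cdot N^2\log^2 M\cdot 2^{c\cdot D\log D})$ (dropping the $2^{c\cdot D\log D}$ factor when there are no artifact relations). The only caveat, which the paper shares, is the loose bookkeeping between $h(T)$ and the navigation-set size $k\cdot F(h(T))\leq k\cdot a^{h(T)}$: stating both as $h\hexp(O(N))$ glosses over one exponentiation, and the stated tower heights $h$ and $h+2$ come out only under the tighter count where $h(T)$ is a tower of height about $h-1$.
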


To summarize, the schema type determines the size of the navigation set, and hence the complexity of verification, as follows
($h$ the height of the task hierarchy and $N$ the size of $(\Gamma, \varphi)$).
\begin{itemize}
\item Acyclic schemas are the least general, yet sufficiently expressive for many applications. A special case of
acyclic schema is the Star schema \cite{starschema1,starschema2} (or Snowflake schema) which is widely used in modeling business process data.
For fixed acyclic schemas, the navigation sets have constant depth.
\item Linearly-cyclic schemas extend acyclic schemas but yield higher complexity.
In general, the size of the navigation set is exponential in $h$ and polynomial in $N$.
Linearly-cyclic schemas allow very simple cyclic foreign key relations such as a single Employee-Manager relation.
They include important special cases such as schemas where each relation has at most one foreign key attribute.
\item Cyclic schemas allow arbitrary foreign keys but also come with much higher complexity (a tower of exponentials of height $h$),
as the size of navigation sets become hyper-\\exponential wrt $h$.
\end{itemize}


\section{Verification with Arithmetic}\label{app:arithmetic}

\subsection{Review of Quantifier Elimination}\label{sec:qe}

The quantifier elimination ($\qe$) problem for the reals can be stated as follows.

\begin{definition}
For real variables $Y = \{y_i\}_{1 \leq i \leq l}$
and a formula $\Phi(Y)$ of the form $$(Q_1 x_1) \dots (Q_k x_k) F(y_1 \dots y_l, \\ x_1 \dots x_k)$$
where $Q_i \in \{\exists, \forall\}$ and $F(y_1 \dots y_l, x_1 \dots x_k)$ is a 
Boolean combination of polynomial inequalities with integer coefficients, 
the quantifier elimination problem is to output a quantifier-free formula $\Psi(Y)$
such that for every $Y \in \mathbb{R}^l$, $\Phi(Y)$ is true iff $\Psi(Y)$ is true.
\end{definition}

The best known algorithm for solving the \qe problem for the reals has 
time and space complexity doubly-exponential in the number of quantifier alternations and singly-exponential
in the number of variables. When applying \qe in verification of HAS, we are only
interested in formulas that are existentially quantified. According to Algorithm 14.6 of \cite{cad-book}, 
the result for this special case can be stated as follows:
\begin{theorem} \label{thm:qe}
For existentially quantified formula $\Phi(Y)$, an equivalent quantifier-free formula $\Psi(Y)$ can be computed in time and space
$(s \cdot d)^{O(k) O(l)} $, where $s$ is the number of polynomials in $\Phi$, $d$ is the maximum degree
of the polynomials, $k$ is the quantifier rank of $\Phi$ and $l = |Y|$.
\end{theorem}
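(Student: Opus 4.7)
My plan is to obtain the stated bound by invoking the critical-point method for quantifier elimination over real closed fields, following the approach of Basu--Pollack--Roy (as packaged in Algorithm 14.6 of the cited monograph). The starting observation is that for existentially quantified $\Phi(Y) = \exists x_1 \dots \exists x_k\, F(Y, x_1, \dots, x_k)$, the truth value of $\Phi$ at any point $\bar{y} \in \mathbb{R}^l$ depends only on the sign vector that the $s$ polynomials of $F$ realize at points of $\mathbb{R}^{l+k}$ with $Y = \bar{y}$. So the task reduces to (i) describing the set of \emph{realizable} sign conditions, and (ii) assembling a quantifier-free formula over $Y$ that enforces the existence of an $x$-sample realizing one of the ``accepting'' sign conditions.

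The main technical step is to produce, from the family $\calp$ of $s$ polynomials of degree at most $d$ in the $k+l$ variables, a family $\poly_Y(\calp)$ of polynomials in $Y$ whose sign conditions on $Y$ determine the realizable sign conditions of $\calp$ after fixing $Y$. I would do this via critical-point computations: for each subset of at most $k+1$ polynomials from $\calp$, perturb and take the critical points of a suitable sum-of-squares function along the common zero set; project the resulting algebraic description to $Y$. A key quantitative input is the singly-exponential bound on the number of sign conditions of a family of polynomials over a variety (the Oleinik--Petrovsky/Milnor/Thom bound), which guarantees that only $(sd)^{O(k)}$ candidate sign vectors arise and only $(sd)^{O(k)}$ new polynomials in $Y$ need to be produced, each of degree $d^{O(k)}$.

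Once $\poly_Y(\calp)$ is in hand, the quantifier-free formula $\Psi(Y)$ is built as a disjunction over the sign conditions on $\poly_Y(\calp)$ that are compatible with at least one accepting sign condition of $\calp$; by the same Oleinik--Petrovsky type bound applied now in the $l$ free variables, the number of such sign conditions is $(sd)^{O(k)\cdot O(l)}$, each checkable in the same space. Aggregating: the number of polynomials is $(sd)^{O(k)}$, the arithmetic operations on them use polynomials of degree $d^{O(k)}$, and the enumeration of sign vectors over $Y$ contributes the second factor $O(l)$ in the exponent, yielding the final time/space estimate $(s \cdot d)^{O(k)\, O(l)}$.

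The hard part will be the careful bookkeeping that keeps the projection step from blowing up multiplicatively across quantifier blocks: because $\Phi$ has a single existential block ($k$ is the quantifier rank, here $=1$ block, but of width $k$), one projection phase suffices, which is exactly why the bound is singly exponential in $k$ and $l$ rather than the doubly exponential one incurred by naive CAD. I would therefore emphasize using the single-block variant of Algorithm 14.6 rather than iterated CAD projection, and cite the corresponding complexity theorem from the monograph to close the argument rather than redoing the algebraic-geometric estimates from scratch.
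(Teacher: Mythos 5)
Your proposal is correct and ends up in the same place as the paper: the paper gives no proof of this statement at all, simply quoting it as the single-existential-block specialization of Algorithm 14.6 of the cited monograph, and your plan likewise closes by citing that complexity theorem, with your critical-point/sign-condition sketch being a faithful summary of how that algorithm achieves the $(s\cdot d)^{O(k)O(l)}$ bound. The only nitpick is terminological: what the paper calls the ``quantifier rank'' $k$ is really the width of the single quantifier block (the number of eliminated variables), which you in fact treat correctly.
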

Note that in the special case when $l = 0$, quantifier elimination simply checks satisfiability.
Thus we have:
\begin{corollary} \label{cor:arith-sat}
Satisfiability over the reals of a Boolean combination $\Phi$ of polynomial inequalities with integer coefficients
can be decided in time and space $(s \cdot d)^{O(k)} $, where
$s$ is the number of polynomials in $\Phi$, $d$ is the maximum degree
of the polynomials, and $k$ is the number of variables in $\Phi$.
\end{corollary}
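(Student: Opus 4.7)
The plan is to derive this as a direct specialization of Theorem~\ref{thm:qe}. The satisfiability question over $\mathbb{R}$ for a quantifier-free Boolean combination $\Phi(x_1,\ldots,x_k)$ of polynomial inequalities is equivalent to the truth of the closed sentence $\Phi^* \equiv (\exists x_1)\cdots(\exists x_k)\, \Phi(x_1,\ldots,x_k)$. This sentence has the form required by Theorem~\ref{thm:qe}, with all quantifiers being existential (so quantifier rank at most one in the alternation sense, and $k$ quantified variables) and with $l=0$ free variables $Y$.

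First, I would invoke Theorem~\ref{thm:qe} on $\Phi^*$. The theorem produces a quantifier-free formula $\Psi(Y)$ equivalent to $\Phi^*$ such that for every $Y \in \mathbb{R}^l$, $\Phi^*(Y) \Leftrightarrow \Psi(Y)$. Because $l = 0$, the variable tuple $Y$ is empty and $\Psi$ is a variable-free Boolean combination of polynomial inequalities with integer coefficients, i.e., a Boolean combination of closed atomic sentences of the form $c \sim 0$ with $c \in \mathbb{Z}$ and $\sim \in \{=,<,>,\leq,\geq,\neq\}$. Each such atomic sentence is decidable in constant time (just inspect the sign of the integer constant), so $\Psi$ can be evaluated in time linear in its size. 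The value of $\Psi$ is then exactly the answer to the satisfiability question for $\Phi$.

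Next, I would read off the complexity. Substituting $l = 0$ into the bound $(s \cdot d)^{O(k)\,O(l)}$ from Theorem~\ref{thm:qe} yields $(s \cdot d)^{O(k)}$ for both the time to produce $\Psi$ and its size; the subsequent evaluation of $\Psi$ only adds a linear overhead and is absorbed into the same bound. Hence satisfiability of $\Phi$ can be decided in time and space $(s \cdot d)^{O(k)}$.

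There is no substantive obstacle beyond checking the correct specialization of the complexity formula at $l=0$; the only thing worth verifying explicitly is that the Tarski--Seidenberg/quantifier-elimination algorithm cited in Theorem~\ref{thm:qe} is uniform enough to accept $l=0$ as a legitimate input (i.e., a closed sentence with no parameter variables), which it is, since closed existential sentences are a standard base case of such algorithms. Thus the corollary is essentially a one-line instantiation of Theorem~\ref{thm:qe}.
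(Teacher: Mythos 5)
Your proposal matches the paper's own derivation: the paper obtains this corollary precisely by observing that when $l=0$ quantifier elimination on the fully existentially quantified sentence reduces to a satisfiability check, and then reads off the $(s\cdot d)^{O(k)}$ bound from Theorem~\ref{thm:qe}. The only caveat (shared equally by the paper) is that the bound $(s\cdot d)^{O(k)\,O(l)}$ must be read as coming from the underlying algorithm of \cite{cad-book}, whose single-block bound remains $(s\cdot d)^{O(k)}$ at $l=0$ rather than degenerating to a trivial exponent.
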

Also in \cite{cad-book}, it is shown that if the bit-size of coefficients in $\Phi$ is bounded by $\tau$, 
then the bit-size of coefficients in $\Psi$ is bounded by $\tau \cdot d^{O(k) O(l)}$.

\subsection{Review of General Real Algebraic Geometry}\label{sec:algebraic-geometry}
We next review a classic result in general real algebraic geometry.
For a given set of polynomials $\calp = \{P_1, \dots, P_s\}$ over $k$ variables $\{x_i\}_{1 \leq i \leq k}$, a sign condition of $\calp$
is a mapping $\sigma: \calp \mapsto \{-1, 0, +1\}$.
We denote by $c(\sigma, \calp)$ the semialgebraic set $\{x | x \in \mathbb{R}^k, 
sign(P(x)) = \sigma(P), \forall P \in \calp\}$
called the \emph{cell} of the sign condition $\sigma$ for $\calp$.

We use the following result from \cite{heinzt89, basu1996number}:
\begin{theorem} \label{thm:num-cells}
Given a set of polynomials $\calp$ with integer coefficients 
over $k$ variables $\{x_i\}_{1 \leq i \leq k}$, the number of distinct non-empty cells, 
namely $$\#\{\sigma : \calp \mapsto \{-1, 0, +1\} \ | c(\sigma, \calp)  \neq \\ \emptyset \},$$
is at most $(s \cdot d)^{O(k)}$, where $s = |\calp|$ and $d$ is the maximum degree of polynomials in $\calp$.
\end{theorem}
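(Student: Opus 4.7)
The plan is to extend the symbolic verification framework of Section \ref{sec:verification} by enriching isomorphism types with arithmetic information, then bound the size of the enriched state space using real algebraic geometry and plug the bound into the VASS complexity analysis. First I would extend each $T$-isomorphism type $\tau$ to a pair $(\tau, c)$, where $c$ is a cell (sign-condition mapping) over a finite set $\calp_T$ of polynomials associated with task $T$; the cell records, for every polynomial $P \in \calp_T$, whether $P$ evaluates to $>0$, $=0$, or $<0$ on the numeric values currently assigned to the variables and numeric navigation expressions appearing in $\calp_T$. This makes arithmetic conditions in services, opening/closing guards, and $\hltlfo$ propositions evaluable on the enriched type.

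The key construction is the Hierarchical Cell Decomposition (HCD), built bottom-up over $\calh$. For a leaf task $T$, I take $\calp_T$ to be the polynomials occurring in its services' conditions and in the sub-formulas of $\varphi$ attached to $T$, and let $\calp_T$ determine its cells. For an internal task $T$, I initialize $\calp_T$ with its own polynomials and then, for each child $T_c$, project the cells of $T_c$ onto the variables/expressions shared with $T$ through $f_{in}$ and $f_{out}$; by the Tarski--Seidenberg theorem, each such projection is expressible by a finite Boolean combination of sign conditions over a new set of polynomials, effectively computable by the quantifier-elimination procedure of Theorem~\ref{thm:qe}. These newly produced polynomials are added to $\calp_T$, and only the cells realized as non-empty are kept in the HCD. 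The resulting cells over $T$ refine all child cells on shared variables, which is exactly what is needed so that when a parent later calls a child with variables satisfying its cell, the child's guessed cell is automatically compatible.

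Next I would rebuild the symbolic run and VASS construction of Section~\ref{sec:symrep} with the extended isomorphism types $(\tau, c)$ in place of $\tau$. The transition rules of $\calv(T,\beta)$ are augmented with local cell-compatibility tests only: (i) for internal-service and within-segment transitions, the next cell $c'$ must refine the current cell $c$ on shared variables (projection containment); (ii) for an opening/closing service of child $T_c$, the parent cell $c$ must refine the projected child cell $c'$. The correctness argument, mirroring Theorem~\ref{thm:actual-symbolic} and Lemma~\ref{lem:vass-correctness}, is that these purely local checks are sufficient: because every cell kept in the HCD is non-empty and refinement is propagated forwards through every transition, the intersection of all cells encountered along a valid symbolic path is automatically non-empty, so actual real values realizing the symbolic run can be exhibited without any retroactive re-intersection. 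This is the main obstacle, and it rests entirely on the HCD being closed under the necessary projections, which in turn rests on Tarski--Seidenberg.

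Finally, for complexity, I would bound the number of extended isomorphism types by multiplying the bound $M$ on $T$-isomorphism types (from Appendix~\ref{sec:complexity-no-arith-app}) by the bound on the number of non-empty cells. By Theorem~\ref{thm:num-cells}, for a set $\calp_T$ of $s$ polynomials in $k$ real variables and expressions with maximum degree $d$, the number of non-empty cells is $(sd)^{O(k)}$. The count $k$ of numeric expressions scales with the navigation depth $h(T)$ of the schema analysis: constant for acyclic, polynomial of degree $O(h)$ for linearly-cyclic (giving the $h^2$ in the exponent after cascading through the hierarchy), and a tower of height $h$ for cyclic schemas. The size $s$ of $\calp_T$ grows through the bottom-up HCD construction: each level may apply quantifier elimination, whose output polynomial count and coefficient bit-size are bounded by a polynomial in the input parameters per Theorem~\ref{thm:qe}, so $s$ remains singly-exponential in the depth $h$. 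Substituting these bounds into the VASS space complexity $O\!\bigl(h^2 N^2 \log^2 M' \cdot 2^{c\,D' \log D'}\bigr)$ (with $M', D'$ the enriched counts) yields, after careful arithmetic, exactly the entries in Table~\ref{tab:complexity2}: one extra exponential over the no-arithmetic case for schemas without artifact relations, and compounding with the extra exponential already paid for artifact relations in the remaining cases, with the $h \mapsto h+1, h+2$ bumps for cyclic schemas reflecting the additional tower level introduced by the cell count.
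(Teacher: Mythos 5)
Your proposal does not address the statement you were asked to prove. Theorem~\ref{thm:num-cells} is a purely real-algebraic-geometric fact: that $s$ polynomials of degree at most $d$ in $k$ real variables realize at most $(s\cdot d)^{O(k)}$ non-empty sign conditions, rather than the trivial $3^s$. What you have written is instead a sketch of the verification-with-arithmetic pipeline (essentially Theorem~\ref{thm:arithmetic} and Table~\ref{tab:complexity2}), and in the course of it you explicitly \emph{invoke} Theorem~\ref{thm:num-cells} as an ingredient (``By Theorem~\ref{thm:num-cells}, \dots the number of non-empty cells is $(sd)^{O(k)}$''). With respect to the target statement your argument is therefore circular: the one quantity you are asked to bound is assumed, not derived. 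Nothing in the HCD construction, the Tarski--Seidenberg projection argument, or the VASS bookkeeping contributes to bounding the number of realizable sign conditions of a fixed polynomial family.

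For the record, the paper does not prove this theorem either; it cites it from the real algebraic geometry literature (Heintz; Basu--Pollack--Roy). A self-contained proof would proceed along entirely different lines from anything in your proposal: one perturbs the polynomials into general position, reduces counting non-empty sign conditions to counting connected components of realizations of sign conditions (each non-empty cell contains at least one such component, and distinct cells are disjoint), and then applies an Oleinik--Petrovsky--Milnor--Thom-type bound of the form $(O(s d))^{k}$ on the total number of connected components of semialgebraic sets cut out by such a family. The exponent $O(k)$ --- independent of $s$ --- is exactly the non-trivial content, and it is what makes the downstream complexity analysis in Section~\ref{sec:arithmetic} avoid a $3^s$ blow-up; but establishing it requires these topological/algebraic tools, none of which appear in your write-up.
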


Given a set of polynomials $\calp$, we can use the following naive approach to compute the set
of sign conditions resulting in non-empty cells.
We simply enumerate sign conditions of $\calp$ and discard sign conditions that results in empty cells or 
cells equivalent to any recorded sign conditions known to be non-empty. 
Checking whether a cell is empty and checking whether two cells are equivalent
can be reduced to checking satisfiability of a formula of polynomial inequalities.
By Corollary \ref{cor:arith-sat}, this naive approach takes space $(s \cdot d)^{O(k)}$.
\yuliang{The running time of the above naive algorithm is $3^s \cdot (s \cdot d)^{O(k)}$.
For linear arithmetic, there is an algorithm with running time $(s \cdot d)^{O(k)}$. 
I think for general polynomials, the upper bound for running time is still open.}

\begin{theorem} \label{thm:naive-approach}
Given a set of polynomials $\calp$ over $\{x_i\}_{1 \leq i \leq k}$,
the set of non-empty cells $\{\sigma : \calp \mapsto \{-1, 0, +1\} 
\ | \ c(\sigma, \calp) \neq \emptyset \}$ defined by $\calp$ can be computed in space $(s \cdot d)^{O(k)}$
where $s = |\calp|$ and $d$ is the maximum degree of polynomials in $\calp$.
\end{theorem}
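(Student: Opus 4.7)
The plan is to execute directly the naive enumeration algorithm sketched in the paragraph preceding the theorem, and to verify that the space accounting closes. I would iterate over all $3^s$ candidate sign conditions $\sigma : \calp \to \{-1,0,+1\}$ in lexicographic order, storing only an $O(s)$-bit counter that identifies the ``current'' $\sigma$. For each candidate, form the quantifier-free formula $\Phi_\sigma(x_1,\dots,x_k) = \bigwedge_{P \in \calp}\mathrm{sign}(P(x)) = \sigma(P)$, which is a Boolean combination of $s$ polynomial inequalities of degree at most $d$ over $k$ variables, and invoke the satisfiability procedure of Corollary~\ref{cor:arith-sat} to test whether $c(\sigma,\calp)\neq\emptyset$. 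If yes, append $\sigma$ to an output list; otherwise discard it and reuse the workspace.

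Correctness is essentially immediate from the semantics: the algorithm inspects every potential sign condition and retains exactly those whose cell is non-empty. No separate ``equivalence'' test between retained cells is required, because distinct sign conditions $\sigma \neq \sigma'$ always satisfy $c(\sigma,\calp)\cap c(\sigma',\calp) = \emptyset$, so two distinct sign conditions can be ``equivalent as cells'' only when both are empty, a case already handled by the emptiness test.

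For the space bound I would account for three contributions separately. (i) The enumeration counter uses $O(s)$ bits. (ii) Each invocation of Corollary~\ref{cor:arith-sat} runs in space $(s\cdot d)^{O(k)}$, and this workspace can be freed and reused between iterations. (iii) By Theorem~\ref{thm:num-cells}, the output list contains at most $(s\cdot d)^{O(k)}$ sign conditions, each describable by $O(s)$ bits, hence occupies $s\cdot (s\cdot d)^{O(k)} = (s\cdot d)^{O(k)}$ space. Summing the three contributions gives the claimed total of $(s\cdot d)^{O(k)}$.

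The only non-trivial point — really the sole thing to be careful about — is that both the per-iteration satisfiability cost (via Corollary~\ref{cor:arith-sat}) and the final output size (via Theorem~\ref{thm:num-cells}) fit within the \emph{same} $(s\cdot d)^{O(k)}$ envelope. Neither bound requires new ideas beyond invoking the two prior results; the exponential time cost of cycling through all $3^s$ sign conditions is irrelevant because the theorem concerns space, not time.
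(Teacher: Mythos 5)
Your proof is correct and follows essentially the same naive-enumeration strategy the paper uses: cycle through all $3^s$ sign conditions, test each for non-emptiness via Corollary~\ref{cor:arith-sat}, and bound the output by Theorem~\ref{thm:num-cells}. Your observation that the paper's additional ``equivalence between recorded cells'' check is redundant (distinct sign conditions yield disjoint cells) is a valid minor simplification and does not affect the bound.
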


\subsection{Cells for Verification}\label{sec:cells}

Intuitively, in order to handle arithmetic in our verification framework, 
we need to extend each isomorphism type $\tau$ with a set of polynomial inequality constraints over the set of numeric
expressions in the extended navigation set $\cale_T^+$. 

We say that an expression $e$ is numeric if $e = x$ for some numeric variable $x$
or $e = x_R.w$ and the last attribute of $w$ is numeric.
For each task $T$, we denote by $\calert$ the set of
numeric expressions of $T$ where for each $x_R.w \in \calert$, $|w| \leq h(T)$. 

The constraints over the numeric expressions are represented by a non-empty cell $c$ (formally defined below).
When a service is applied, the arithmetic parts of the conditions are evaluated against $c$.
And for every transition $I \goto{\sigma'} I'$ where $c, c'$ are the cells of $I, I'$ respectively, 
if any variables are modified by the transition, then the projection of $c'$ 
onto the preserved numeric expressions has to \emph{refine} the projection of 
$c$ onto the preserved numeric expressions. 
Similar compatibility checks are required when a child task returns to its parent.

We introduce some more notation.
For every $T \in \calh$, we consider polynomials in the polynomial ring $\mathbb{Z}[\calert]$.
For each polynomial $P$, we denote by $var(P)$ the set of numeric expressions mentioned in $P$
and for a set of polynomials $\calp$, we denote by $var(\calp)$ the set $\bigcup_{P \in \calp} var(P)$.
For $\calp \subset \mathbb{Z}[\calert]$ and $\cale \subseteq \calert$, 
we denote by $\calp | \cale$ the set of polynomials $\{P | P \in \calp, var(P) \subseteq \cale \}$.

We next define the cells used in our verification algorithm. 
At task $T$, for a set of numeric expressions $\cale \subseteq \calert$ and a set of
polynomials $\calp$ where $var(\calp) \subseteq \cale$, 
we define the cells over $(\cale, \calp)$ as follows.
\begin{definition} \label{def:cell}
A cell $c$ over $(\cale, \calp)$ is a subset of $\mathbb{R}^{|\cale|}$ for which
there exists a sign condition $\sigma$ of $\calp$ such that $c = c(\sigma, \calp)$.
\end{definition}

For $\calp \subset \mathbb{Z}[\calert]$, we 
denote by $\calk(\calp, \cale)$ the set of cells over $(\cale, \calp | \cale)$.
Namely, $\calk(\calp, \cale) = \{c(\sigma, \calp | \cale) | \sigma \in \calp | \cale \mapsto \{-1, 0, +1\}\}$.
And we denote by $\calk(\calp)$ the set of cells \\ $\bigcup_{\cale \subseteq \calert} \calk(\calp, \cale)$.

Compatibility between cells is tested using the notion of 
refinement. Intuitively, a cell $c$ refines another cell $c'$ if $c$ can be obtained by
adding extra numeric expressions and/or constraints to $c'$. Formally,
\begin{definition}
For cell $c$ over $(\cale, \calp)$ and cell $c'$ over $(\cale', \calp')$ where 
$c = c(\sigma, \calp)$ and $c' = c(\sigma', \calp')$, 
we say that $c$ refines $c'$, denoted by $c \sqsubseteq c'$,
if $\cale' \subseteq \cale$, $\calp' \subseteq \calp$ and $\sigma | \calp' = \sigma'$.
Note that if $\cale = \cale'$, then $c \sqsubseteq c'$ iff $c \subseteq c'$.
\end{definition}

We next define the projection of a cell onto a set of variables.
For each cell $c$ over $(\cale, \calp)$ where $\cale \subseteq \calert$ 
and variables $\bar{x} \subseteq \bar{x}^T$,
the projection of $c$ onto $\bar{x}$, denoted by $c | \bar{x}$,
is defined to be the projection of $c$ onto the expressions $\cale | \bar{x}$
where $\cale | \bar{x} = \{e \in \cale | e = x_R.w \lor e = x, x \in \bar{x} \}$.
By the Tarski-Seidenberg theorem \cite{tarski-seidenberg}, $c | \bar{x}$ is a union of disjoint cells.
Also, the projections $c | \bar{x}$ can be obtained by quantifier elimination.
Let $\Phi(c)$ be the conjunctive formula defining $c$ using polynomials in $\calp$.
Then by treating $\cale | \bar{x}$ as the set of free variables,
the formula $\Psi(c)$ obtained by eliminating $\cale - \cale | \bar{x}$ from $\Phi(c)$
defines $c | \bar{x}$. We denote by $\poly(c, \bar{x})$ the set of polynomials mentioned in $\Psi(c)$.
It is easy to see that $c | \bar{x}$ is a union of cells over $(\cale | \bar{x}, \poly(c, \bar{x}))$.

The following notation is useful for checking compatibility between a cell and the projection of another cell:
we define that a cell $c$ refines another cell $c'$ wrt to projection to $\bar{x}$, 
denoted as $c \sqsubseteq_{\bar{x}} c'$, if there exists a cell $\tilde{c} \subseteq c' | \bar{x}$
such that $c \sqsubseteq \tilde{c}$.

Finally, we introduce notations relative to variable passing between parent task and child task.
For each task $T$ and $T_c \in child(T)$, 
we denote by $\calertct$ the set of numeric expressions
$\{e | e \in \calertc, e = x \lor e = x_R.w, x \in \bar{x}^{T_c}_{in} \cup \bar{x}^{T_c}_{ret} \}$.
In other words, $\calertct$ is the subset of expressions in $\calertc$ connected with expressions in $\calert$ by calls/returns of $T_c$.
Let $f_{in}, f_{out}$ be the input and output mapping between $T$ and $T_c$.
For each expression $e \in \calertct$, we define $e^{T_c \rightarrow T}$ to be an expression in $\calert$ as follows.
If $e = x$, then $e^{T_c \rightarrow T} = (f_{in} \circ f_{out}^{-1})(x)$.
If $e = x_R.w$, then $e^{T_c \rightarrow T} = ((f_{in} \circ f_{out}^{-1})(x))_R.w$.
For a set of variables $\cale \subseteq \calertct$, we define
$\cale^{T_c \rightarrow T}$ to be $\{e^{T_c \rightarrow T} | e \in \cale \}$.
For a polynomial $P$ over $\calertct$ where $T_c \in child(T)$, 
we denote by $P^{T_c \rightarrow T}$ the polynomial obtained by replacing in $P$ each
numeric expression $e$ with $e^{T_c \rightarrow T}$.
For a cell $c$ of $T_c$ where $c = c(\sigma, \calp)$ and $var(P) \subseteq \calertct$ for every $P \in \calp$,
we let $c^{T_c \rightarrow T}$ to be the cell of $T$ which equals 
$c(\sigma', \calp')$, where $\calp' = \{P^{T_c \rightarrow T} | P \in \calp\}$
and $\sigma'$ is a sign condition over $\calp'$ such that 
$\sigma'(P^{T_c \rightarrow T}) = \sigma(P)$ for every $P \in \calp$.

\subsection{Hierarchical Cell Decomposition}\label{sec:cell-decomp}

We now introduce the Hierarchical Cell Decomposition. 
Intuitively, for each task $T$, we would like to compute a set of polynomials $\calp$ and 
a set of cells $\calk_T$ such that for each subset $\cale$ of $\calert$, 
the set of cells over $(\cale, \calp | \cale)$ in $\calk_T$ is a partition of $\mathbb{R}^{|\cale|}$.


The set of cells $\calk_T$ satisfies the property that for the set of polynomials $\calp$ 
mentioned at any condition of $T$ in the specification $\Gamma$
and $\hltlfo$ property $\varphi$, each cell $c \in \calk_T$ uniquely defines the sign condition of $\calp$. 
This allows us to compute the signs of any polynomial in any condition in the local symbolic runs.
In addition, for each pair of cells $c, c' \in \calk_T$, we require that the projection of $c$ and $c'$ 
to the input variables $\bar{x}^T_{in}$ (and $\bar{x}^T_{in} \cup \bar{s}^T$) be disjoint or identical.
So to check whether two cells $c$ and $c'$ of two consecutive symbolic instances in a local symbolic run are compatible when applying
an internal service, we simply need to check whether their projections on $\bar{x}^T_{in}$ are equal (note that refinement is implied by
equality).
Finally, for each child task $T_c$ of $T$, for each cell $c \in \calk_T$ and $c' \in \calk_{T_c}$,
$c$ uniquely defines the sign condition for the set of polynomials that defines $c' | \bar{x}^{T_c}_{in} $
and $c' | (\bar{x}^{T_c}_{ret} \cup \bar{x}^{T_c}_{in})$. This reduces to cell refinement the problem of 
checking compatibility when child tasks
are called or return. 

The Hierarchical Cell Decomposition is formally defined as follows. 
\begin{definition}
The Hierarchical Cell Decomposition associated to an artifact system $\calh$ and property $\varphi$
is a collection $\{\calk_T\}_{T \in \calh}$ of sets of cells, 
such that for each $T \in \calh$, $\calk_T = \calk(\calp'_T)$, where 
the set of polynomials $\calp'_T$ is defined as follows. First, let $\calp_T$ consist of the following:
\begin{itemize}
\item all polynomials mentioned in any condition over $\bar{x}^T$ in $\Gamma$ and the property $\varphi$,
\item polynomials $\{e | e \in \calert\} \cup \{e - e' | e, e' \in \calert\}$, and
\item for every $T_c \in child(T)$ and subset $\bar{x} \subseteq \bar{x}^{T_c}_{ret}$, 
the set of polynomials $\{P^{T_c \rightarrow T} | P \in \poly(c, \bar{x}^{T_c}_{in} \cup \bar{x}), c \in \calk_{T_c} \}$.
\end{itemize}
Next, let $\calp^s_T = \calp_T \cup \bigcup_{c \in \calk(\calp_T)} \poly(c, \bar{x}^T_{in} \cup \bar{s}^T)$. Finally, 
$\calp'_T = \calp_T^s \cup \bigcup_{c \in \calk(\calp_T^s)} \poly(c, \bar{x}^T_{in})$.
\end{definition}
\victor{I reformulated a bit the above.}

The Hierarchical Cell Decomposition satisfies the following property, as desired.

\begin{lemma} \label{lem:cell-projection}
Let $T$ be a task and $\calp_T'$ as above. For every pair of cells $c_1, c_2 \in \calk_T$, and 
$\bar{x} = (\bar{x}^T_{in} \cup \bar{s}^T)$ or $\bar{x} = \bar{x}^T_{in}$,
if $c_1 \in \calk(\calp'_T, \cale_1)$ and $c_2 \in \calk(\calp'_T, \cale_2)$ where $\cale_1 | \bar{x} = \cale_2 | \bar{x}$,
then $c_1 | \bar{x}$ and $c_2 | \bar{x}$ are either equal or disjoint.
\end{lemma}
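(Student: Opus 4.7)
The plan is to reduce the claim to an abstract ``closure under projection'' property of the polynomial set $\calp'_T$, and then verify that the hierarchical construction of $\calp'_T$ achieves exactly this closure at both levels $\bar x = \bar x^T_{in}\cup\bar s^T$ and $\bar x = \bar x^T_{in}$.

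First I will prove the following subclaim: \emph{if $\calp$ is a finite polynomial set such that for every cell $c\in\calk(\calp)$ the projection $c|\bar x$ is definable as a Boolean combination of sign conditions on polynomials in $\calp|\bar x$, then for every $c^*\in\calk(\calp,\cale)$ the projection $c^*|\bar x$ is itself a single cell of $\calk(\calp|(\cale|\bar x))$.} The proof has two halves. On one hand, every $y\in c^*|\bar x$ satisfies the same sign condition on $\calp|(\cale|\bar x)$ (since $c^*$ fixes a sign condition on the larger set $\calp|\cale$, and the polynomials in $\calp|(\cale|\bar x)$ depend only on $\bar x$-coordinates), so $c^*|\bar x$ sits inside a single cell $c'$ of the finer partition $\calk(\calp|(\cale|\bar x))$. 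On the other hand, by the hypothesis $c^*|\bar x$ is a union of cells of some coarser partition whose refinement is contained in $\calk(\calp|(\cale|\bar x))$, hence $c^*|\bar x$ is itself a union of cells of $\calk(\calp|(\cale|\bar x))$. Combining, $c^*|\bar x$ equals $c'$ whenever it is non-empty.

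Next I verify the hypothesis of the subclaim for $\calp'_T$. Take $\bar x = \bar x^T_{in}\cup\bar s^T$. Given $c\in\calk(\calp'_T)$, let $c_0\in\calk(\calp_T)$ be the unique cell of the coarsest partition containing $c$. By construction $\poly(c_0,\bar x)\subseteq\calp_T^s\subseteq\calp'_T$, and every polynomial in $\calp'_T\setminus\calp_T$ mentions only variables from $\bar x^T_{in}\cup\bar s^T\subseteq\bar x$. Since the extra sign conditions distinguishing $c$ from $c_0$ involve only variables in $\bar x$, we get
\[
c|\bar x \;=\; (c_0|\bar x)\,\cap\,\{\text{sign conditions on }\calp'_T\setminus\calp_T\},
\]
so $c|\bar x$ is definable by sign conditions on polynomials in $\poly(c_0,\bar x)\cup(\calp'_T\setminus\calp_T)\subseteq\calp'_T|\bar x$. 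The same argument works for $\bar x = \bar x^T_{in}$ by going through $\calp_T^s$ instead of $\calp_T$: the corresponding parent cell $c^s\in\calk(\calp_T^s)$ satisfies $\poly(c^s,\bar x^T_{in})\subseteq\calp'_T$ by the second stage of the construction, while $\calp'_T\setminus\calp_T^s$ involves only variables in $\bar x^T_{in}$.

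Applying the subclaim, every projection $c_i|\bar x$ is a single cell of $\calk(\calp'_T|(\cale_i|\bar x))$. Since $\cale_1|\bar x=\cale_2|\bar x$ by hypothesis, these partitions coincide, and two cells of a common partition are either equal or disjoint, which is exactly the conclusion. The main technical care is in the bookkeeping of how polynomials added to $\calp'_T$ for one purpose (projecting onto $\bar x^T_{in}$) interact with the projection onto the larger set $\bar x^T_{in}\cup\bar s^T$; the fact that \emph{all} newly added polynomials live in $\bar x^T_{in}\cup\bar s^T$ is what makes the closure hypothesis hold simultaneously at both levels and is the key design feature of the Hierarchical Cell Decomposition.
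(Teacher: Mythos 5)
Your proof is correct and follows essentially the same route as the paper's: you decompose each cell of $\calk(\calp'_T)$ as a coarser cell (of $\calk(\calp_T)$, resp.\ $\calk(\calp_T^s)$) intersected with constraints on the added polynomials, use that all added polynomials mention only expressions anchored at $\bar{x}$ so that projection distributes, and use that $\poly(\cdot,\bar{x})$ of the coarser cell lies in $\calp'_T$ to conclude the projection is a single cell of a partition determined only by $\cale|\bar{x}$, whence equal-or-disjoint. The only differences are cosmetic: you package the argument as an abstract ``closure under projection'' subclaim and treat the case $\bar{x}=\bar{x}^T_{in}\cup\bar{s}^T$ explicitly (the paper proves $\bar{x}=\bar{x}^T_{in}$ and declares the other case similar); just note that the hypothesis of your subclaim should refer to polynomials over $\cale|\bar{x}$ (as your verification in fact produces), not merely polynomials anchored at $\bar{x}$.
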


\begin{proof}
We prove the lemma for the case when $\bar{x} = \bar{x}^T_{in}$. The proof is similar for 
$\bar{x} = \bar{x}^T_{in} \cup \bar{s}^T$. 

Let $\tilde{\calp}_T^s = \bigcup_{c \in \calk(\calp_T^s)} \poly(c, \bar{x}^T_{in})$.
For each cell $c \in \calk(\calp', \\ \cale)$, since $\calp' | \cale =  (\calp_T^s | \cale) \cup (\tilde{\calp}_T^s | \cale)$ 
as $\calp' = \calp_T^s \cup \tilde{\calp}_T^s$,
there exist $c_1 \in \calk(\calp_T^s, \cale)$ and $c_2 \in \calk(\tilde{\calp}_T^s, \cale)$ such that $c = c_1 \cap c_2$.
Then consider $c | \bar{x}^T_{in}$. Since all polynomials in $\tilde{\calp}_T^s$ are over expressions of $\bar{x}^T_{in}$,
we have $c | \bar{x}^T_{in} = (c_1 \cap c_2) | \bar{x}^T_{in} = (c_1 | \bar{x}^T_{in}) \cap c_2$. 
And by definition, $\poly(c_1, \bar{x}^T_{in}) \subseteq \tilde{\calp}_T^s$, so 
$c_2$ uniquely defines the sign conditions for $\poly(c_1, \bar{x}^T_{in})$, which means that either $c_2 \cap c_1 | \bar{x}^T_{in} = \emptyset$
or $c_2 \subseteq c_1 | \bar{x}^T_{in}$. 
And as $c_2 \cap c_1 | \bar{x}^T_{in} = c | \bar{x}^T_{in}$ is non-empty, $c | \bar{x}^T_{in} = c_2$.

Therefore, for every $c_1 \in \calk(\calp_T', \cale_1)$ and $c_2 \in \calk(\calp_T', \cale_2)$ where
$\cale_1 | \bar{x}^T_{in} = \cale_2 | \bar{x}^T_{in} = \cale$, there exist cells $\tilde{c}_1, \tilde{c}_2 \in \calk(\calp_T', \cale)$
such that $c_1 | \bar{x}^T_{in} = \tilde{c}_1$ and $c_2 | \bar{x}^T_{in} = \tilde{c}_2$.
Since $\tilde{c}_1$ and $\tilde{c}_2$ are either disjoint or equal, $c_1 | \bar{x}^T_{in}$ and $c_2 | \bar{x}^T_{in}$ are also either disjoint or equal.
\end{proof}

From the above lemma, the following is obvious:
\begin{corollary} \label{cor:single-cell}
For every task $T$ and $c \in \calk_T$, $c | \bar{x}^T_{in}$ and $c | (\bar{x}^T_{in} \cup \bar{s}^T)$
are single cells in $\calk_T$.
\end{corollary}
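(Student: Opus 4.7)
The plan is to reuse the decomposition already exploited in the proof of Lemma~\ref{lem:cell-projection}, and to observe that the closure operations built into the definition of $\calp'_T$ were designed precisely so that projections onto $\bar{x}^T_{in}$ and onto $\bar{x}^T_{in}\cup\bar{s}^T$ are pre-indexed by a single sign condition rather than a union. So my goal is to trace, layer by layer, how $\calp'_T$ was built, and verify that the resulting single cell lives in $\calk_T$.

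First I would handle the $\bar{x}^T_{in}$ case. Take $c\in\calk(\calp'_T,\cale)$ for some $\cale\subseteq\calert$, and set $\tilde\calp^s_T=\bigcup_{d\in\calk(\calp^s_T)}\poly(d,\bar{x}^T_{in})$ as in the lemma's proof, so that $\calp'_T=\calp^s_T\cup\tilde\calp^s_T$. Writing $c=c_1\cap c_2$ with $c_1\in\calk(\calp^s_T,\cale)$ and $c_2\in\calk(\tilde\calp^s_T,\cale)$, note that every polynomial of $\tilde\calp^s_T$ is supported on $\bar{x}^T_{in}$-expressions, so $c_2|\bar{x}^T_{in}=c_2$. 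The construction of $\tilde\calp^s_T$ guarantees that $\poly(c_1,\bar{x}^T_{in})\subseteq\tilde\calp^s_T$, and therefore $c_1|\bar{x}^T_{in}$ is determined by a single sign condition on those polynomials; hence $c|\bar{x}^T_{in}=(c_1|\bar{x}^T_{in})\cap c_2$ is a single cell, and it lies in $\calk(\calp'_T,\cale|\bar{x}^T_{in})\subseteq\calk_T$.

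For the $\bar{x}^T_{in}\cup\bar{s}^T$ case I would apply the same argument one layer deeper. Let $\hat\calp_T=\bigcup_{d\in\calk(\calp_T)}\poly(d,\bar{x}^T_{in}\cup\bar{s}^T)$, so $\calp^s_T=\calp_T\cup\hat\calp_T$. Decomposing $c_1=c_{1a}\cap c_{1b}$ with $c_{1a}\in\calk(\calp_T,\cale)$, $c_{1b}\in\calk(\hat\calp_T,\cale)$, the support of $\hat\calp_T$ fixes $c_{1b}|(\bar{x}^T_{in}\cup\bar{s}^T)=c_{1b}$, and by construction $\poly(c_{1a},\bar{x}^T_{in}\cup\bar{s}^T)\subseteq\hat\calp_T$, so $c_{1a}|(\bar{x}^T_{in}\cup\bar{s}^T)$ is a single cell of $\calk(\hat\calp_T)$. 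Combining with the outermost factor $c_2$, whose projection is again itself, shows $c|(\bar{x}^T_{in}\cup\bar{s}^T)$ is a single cell of $\calk(\calp^s_T\cup\tilde\calp^s_T)=\calk_T$.

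The one subtlety worth stating carefully is why, at each step, the projection is captured by exactly one sign condition on $\calp'_T$ and not by a proper union of cells. This is the content of the Tarski–Seidenberg invocation implicit in the definition of $\poly(\cdot,\cdot)$: by closing $\calp^s_T$ (resp.\ $\calp_T$) under the projection polynomials before moving up the hierarchy, every cell's projection becomes itself a cell defined by those very polynomials, so no splitting can occur. Once this is noted, the corollary is just a bookkeeping exercise over the two-level construction of $\calp'_T$, and the reference to Lemma~\ref{lem:cell-projection} in the preceding text is really pointing at this same mechanism.
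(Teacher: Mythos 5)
Your scaffolding---decomposing $c$ as $c_1\cap c_2$ with $c_1\in\calk(\calp^s_T,\cale)$ and $c_2\in\calk(\tilde{\calp}^s_T,\cale)$, observing that $c_2$ is supported on $\bar{x}^T_{in}$-expressions, and using $\poly(c_1,\bar{x}^T_{in})\subseteq\tilde{\calp}^s_T$---is exactly the machinery inside the paper's proof of Lemma~\ref{lem:cell-projection}. But the step you actually lean on, namely that ``$c_1|\bar{x}^T_{in}$ is determined by a single sign condition,'' and the general principle in your last paragraph that closing under the projection polynomials makes ``every cell's projection itself a cell, so no splitting can occur,'' is not correct. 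Tarski--Seidenberg (and the paper, explicitly, right after defining $\poly$) only guarantees that the projection of a cell is a \emph{union} of disjoint cells of the projection polynomials; adding $\poly(c_1,\bar{x}^T_{in})$ to the ambient set refines the cells upstairs but does not make the projection of $c_1$ a single sign-condition cell, because $c_1$ is a cell of $\calp^s_T$ only, not of the closure. For instance, the cell $x^2+y^2-1=0$ projects onto $x$ to $\{x^2-1<0\}\cup\{x^2-1=0\}$, a union of two cells of its own projection polynomial. So the ``single cell'' conclusion at each layer of your argument is unjustified as written.

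The correct mechanism---and the one the paper's lemma proof uses---is that the \emph{co-factor} does the work: since $\poly(c_1,\bar{x}^T_{in})\subseteq\tilde{\calp}^s_T$, the cell $c_2$ fixes a complete sign condition on those polynomials, hence $c_2$ is either disjoint from the union $c_1|\bar{x}^T_{in}$ or contained in it; as $c$ is non-empty, $c|\bar{x}^T_{in}=(c_1|\bar{x}^T_{in})\cap c_2=c_2$, a single cell (and the signs of the remaining polynomials of $\calp'_T$ over $\bar{x}^T_{in}$-expressions are inherited from $c$'s sign condition, which gives membership in $\calk_T$). The same repair applies at your second layer for $\bar{x}^T_{in}\cup\bar{s}^T$, with $c_{1b}$ playing the role of $c_2$. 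Alternatively, you can avoid re-opening the lemma's proof altogether, which is what the paper's ``obvious from the above lemma'' intends: take any point of $c|\bar{x}$ and let $d\in\calk(\calp'_T,\cale|\bar{x})$ be the cell containing it; since $d|\bar{x}=d$, Lemma~\ref{lem:cell-projection} applied to the pair $(c,d)$ forces $c|\bar{x}$ and $d$ to be equal or disjoint, and they meet, so $c|\bar{x}=d\in\calk_T$.
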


In view of the corollary, we use the notations of single-cell operators (projection, refinement, etc.) 
on $c | \bar{x}^T_{in}$ and $c | (\bar{x}^T_{in} \cup \bar{s}^T)$ in the rest of our discussion.

To be able to connect with child tasks, we show the following property of $\calk_T$:
\begin{lemma} \label{lem:cell-containment}
For all tasks $T$ and $T_c$ where $T_c \in child(T)$,
and every cell $c_1 \in \calk_T$ and $c_2 \in \calk_{T_c}$
where $c_1 \in \calk(\calp'_T, \cale_1)$ and $c_2 \in \calk(\calp'_{T_c}, \cale_2)$,
for each set of variables $\bar{x} = \xttcup \cup \bar{y}$ where $\bar{y}$ is some subset of $\xttcdown$,
if $\cale_1 | \bar{x} = (\cale_2)^{T_c \rightarrow T} | \bar{x}$,
then either $(1)$ 
$c_1 \sqsubseteq_{\bar{x}} (c_2)^{T_c \rightarrow T}$
or $(2)$
$c_1 | \bar{x}$ is disjoint from $(c_2)^{T_c \rightarrow T} | \bar{x}$.
\end{lemma}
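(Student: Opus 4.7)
The plan is to prove the dichotomy by showing that the sign conditions defining $c_1$ already determine, relative to the cell decomposition of $(c_2)^{T_c\rightarrow T}|\bar{x}$, whether $c_1|\bar{x}$ lies in one of its cells or in its complement. First I let $\bar{x}_c\subseteq\bar{x}^{T_c}$ denote the $T_c$-side image of $\bar{x}$ under the input and output variable mappings, observe that $(c_2)^{T_c\rightarrow T}|\bar{x}=(c_2|\bar{x}_c)^{T_c\rightarrow T}$, and by Tarski--Seidenberg write $c_2|\bar{x}_c$ as a union of disjoint cells $\bigcup_i e_i$, each defined by a sign condition on $\poly(c_2,\bar{x}_c)$; translating yields $(c_2)^{T_c\rightarrow T}|\bar{x}=\bigcup_i d_i$ with $d_i=(e_i)^{T_c\rightarrow T}$.

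The heart of the argument is to show that the polynomials defining the $d_i$ lie in $\calp'_T|(\cale_1|\bar{x})$. In the main case $\bar{y}=\xttcdown$, we have $\bar{x}_c=\bar{x}^{T_c}_{in}\cup\bar{x}^{T_c}_{ret}$ and this inclusion is exactly the third clause of the definition of $\calp_T$ instantiated with the full return variable set $\bar{x}^{T_c}_{ret}$. For a proper subset $\bar{y}\subsetneq\xttcdown$ I factor $c_2|\bar{x}_c=\bigl(c_2|(\bar{x}^{T_c}_{in}\cup\bar{x}^{T_c}_{ret})\bigr)|\bar{x}_c$, note that the inner projection is already covered by $\calp_T$, and use the iterated construction $\calp_T\to\calp^s_T\to\calp'_T$ together with Lemma~\ref{lem:cell-projection} to capture the polynomials produced by the outer elimination step.

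Once all polynomials defining the $d_i$ are known to live in $\calp'_T|(\cale_1|\bar{x})$, the cell $c_1\in\calk(\calp'_T,\cale_1)$ assigns a unique sign to each of them, and consequently $c_1|\bar{x}$ is either contained in a unique $d_i$ or disjoint from $\bigcup_i d_i$. In the first case $c_1\sqsubseteq d_i$ because $d_i$ is defined by a subset of $c_1$'s defining polynomials with consistent signs, which gives case~(1); in the second case, case~(2) holds by definition.

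The hard part will be the second step, when $\bar{y}$ is a proper subset of $\xttcdown$: the definition of $\calp_T$ furnishes polynomials only for projections of the form $\bar{x}^{T_c}_{in}\cup\bar{x}$, so one must argue that further elimination of the extra input variables $\xttcdown\setminus\bar{y}$ does not introduce polynomials escaping $\calp'_T$. This requires carefully tracing how $c_1$'s sign conditions on the richer polynomial family propagate through the elimination to its sign condition on the coarser family, so that the projection-of-projection collapses to a single-step projection at the cell level.
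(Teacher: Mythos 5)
The core mechanism of your argument is the same as the paper's: write $(c_2)^{T_c\rightarrow T}|\bar{x}$ as a union of cells defined by translated projection polynomials, argue that these polynomials belong to $\calp'_T$ restricted to the relevant expressions, conclude that $c_1\in\calk(\calp'_T,\cale_1)$ fixes their signs and is therefore either inside the projection or disjoint from it, and in the inside case extract the witness cell for $\sqsubseteq_{\bar{x}}$. The paper phrases this with a single coarser decomposition $\calk(\calp_{T_c}^{\bar{x}},\cale_1)$: $c_1$ lies in (and refines) a unique cell $\tilde{c}_1$ of it, and $(c_2)^{T_c\rightarrow T}|\bar{x}$ is a union of its cells, so $\tilde{c}_1$ is contained in or disjoint from the projection. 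Up to packaging, your first and third steps are that argument.

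The genuine gap is precisely the step you flag as the hard part, and your proposed repair for it would fail. When $\bar{y}\subsetneq\xttcdown$, you factor $c_2|\bar{x}_c$ through $c_2|(\bar{x}^{T_c}_{in}\cup\bar{x}^{T_c}_{ret})$ and hope that the polynomials generated by the outer elimination are absorbed by the stages $\calp_T\rightarrow\calp^s_T\rightarrow\calp'_T$ together with Lemma~\ref{lem:cell-projection}. But those stages only add $\poly(c,\bar{x}^T_{in}\cup\bar{s}^T)$ for $c\in\calk(\calp_T)$ and $\poly(c,\bar{x}^T_{in})$ for $c\in\calk(\calp^s_T)$, i.e.\ projections of the \emph{parent's} cells onto its input and set variables; they say nothing about eliminating the remaining variables of $\xttcdown$ from the image of a \emph{child} cell, so there is no reason the defining polynomials of your $d_i$ lie in $\calp'_T$, and Lemma~\ref{lem:cell-projection} (which concerns exactly those two projections) does not apply. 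The paper's own proof never performs a two-step projection: it invokes the third clause of the definition of $\calp_T$ once, which supplies the translated polynomials $\poly(c,\bar{x}^{T_c}_{in}\cup\bar{z})$ for every $c\in\calk_{T_c}$ and every $\bar{z}\subseteq\bar{x}^{T_c}_{ret}$ — that is, projection sets whose parent-side form is all of $\xttcdown$ plus a subset of $\xttcup$. The printed lemma statement appears to have the roles of $\xttcup$ and $\xttcdown$ swapped relative to that clause; under the reading that matches the construction of $\calp_T$, the required inclusion is immediate and your ``hard case'' disappears, whereas under the literal reading neither your route nor anything else in the HCD as defined provides the missing polynomials. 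So the statement you still owe is not a consequence of the machinery you cite, and the correct fix is to align the projection set with the third clause rather than to iterate projections.
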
 

\begin{proof}
Denote by $\calp_{T_c}^{\bar{x}}$ the set of polynomials $\{P^{T_c \rightarrow T} | P \in \poly(c, \bar{x}), c \in \calk_{T_c} \}$.
For each cell $c_1 \in \calk(\calp'_T, \cale_1)$, there exists $\tilde{c}_1 \in \calk(\calp_{T_c}^{\bar{x}}, \cale_1)$ 
such that $c_1 \subseteq \tilde{c}_1$.
For each cell $c_2 \in \calk(\calp'_{T_c}, \cale_2)$, 
as $\cale_1 | \bar{x} = (\cale_2)^{T_c \rightarrow T} | \bar{x}$,
$(c_2)^{T_c \rightarrow T} | \bar{x}$ is a union of cells in $\calk(\calp_{T_c}^{\bar{x}}, \cale_1)$. 
So either $\tilde{c}_1$ is disjoint with or contained in 
$(c_2)^{T_c \rightarrow T} | \bar{x}$. 
If $\tilde{c}_1$ and $(c_2)^{T_c \rightarrow T} | \bar{x}$ are disjoint, then 
$(c_2)^{T_c \rightarrow T} | \bar{x}$ and $c_1 | \bar{x}$ are disjoint. 
If $\tilde{c}_1 \subseteq (c_2)^{T_c \rightarrow T} | \bar{x}$, then 
we have $c_1  \sqsubseteq \tilde{c}_1 \subseteq (c_2)^{T_c \rightarrow T} | \bar{x}$
so $c_1 \sqsubseteq_{\bar{x}} (c_2)^{T_c \rightarrow T}$.
\end{proof}

%

\subsection{Extended Isomorphism Types} \label{sec:extended-tau}
Given the Hierarchical Cell Decomposition $\{\calk_T\}_{T \in \calh}$, we can extend our notion of isomorphism type to support arithmetic.
\begin{definition}
For navigation set $\cale_T$, equality type $\sim_{\tau}$ over $\cale^+_T$ and $c \in \calk_T$,
the triple $\tau = (\cale_T, \sim_{\tau}, c)$ is an extended $T$-isomorphism type if
\begin{itemize}
\item $(\cale_T, \sim_\tau)$ is a $T$-isomorphism type, and
\item $c = c(\sigma, \calp_T' | (\calert \cap \cale_T^+) )$ for some sign condition $\sigma$ of \\ $\calp_T' | (\calert \cap \cale_T^+)$
such that for every numeric expression $e, e' \in \cale_T^+$, $e \sim_\tau e'$ iff $\sigma(e - e') = 0$ and $e \sim_\tau 0$
iff $\sigma(e) = 0$.
\end{itemize}
\end{definition}

For each condition $\pi$ over $\bar{x}^T$ and extended $T$-isomorphism type $\tau$, 
$\tau \models \pi$ is defined as follows. For each polynomial inequality ``$P \circ 0$'' in $\pi$ where $\circ \in \{<, > , =\}$,
$P \circ 0$ is true iff $\sigma(P) \circ 0$ where $\sigma$ is the sign condition of $c$. 
The rest of the semantics is the same as in normal $T$-isomorphism type.

The projection of an extended $T$-isomorphism type $\tau$ on $\bar{x}^T_{in}$ and $\bar{x}^T_{in} \cup \bar{s}^T$ 
is defined in the obvious way.
For $\tau = (\cale_T, \sim_\tau, c)$, we define that $\tau | \bar{x} = (\cale_T | \bar{x}, \sim_\tau | \bar{x}, c | \bar{x})$
for $\bar{x} = \bar{x}^T_{in}$ or $\bar{x} = \bar{x}^T_{in} \cup \bar{s}^T$.
The projection of $\tau$ on $\bar{x}^T_{in}$ and $\bar{x}^T_{in} \cup \bar{s}^T$
up to length $k$ is defined analogously.
The projection of every extended $T$-isomorphism type on $\bar{x}^T_{in} \cup \bar{s}^T$ is an extended $TS$-isomorphism type.

To extend the definitions of local symbolic run and symbolic tree of runs,
we first replace $T$-isomorphism type with extended $T$-isomorphism type 
and $TS$-isomorphism type with extended $TS$-isomorphism type
in the original definitions.
The semantics is extended with the following rules.

For two symbolic instances $I$ and $I'$ where the cell of $I$ is $c$ and the cell of $I'$ is $c'$, 
$I'$ is a valid successor of $I$ by applying service $\sigma'$ 
if the following conditions hold in addition to the original requirements:
\begin{itemize}
\item if $\sigma'$ is an internal service, then $c | \bar{x}^T_{in} = c' | \bar{x}^T_{in}$.
\item if $\sigma'$ is an opening service of $T_c \in child(T)$ or closing service of $T$, then $c = c'$.
\item if $\sigma'$ is a closing service of $T_c \in child(T)$, then $c' \sqsubseteq c$.
\end{itemize}
The counters $\bar{c}$ are updated as in transitions between symbolic instances without arithmetic.
Each dimension of $\bar{c}$ corresponds to an extended $TS$-isomorphism type.

For each local symbolic run $\trt = (\tauin, \tauout, \II)$, the following are additionally satisfied:
\begin{itemize}
\item $c_{in} = c_0 | \bar{x}^T_{in}$, where $c_{in}$ is the cell of $\tauin$ and
$c_0$ is the cell of $\tau_0$;
\item if $\tauout \neq \bot$, then $c_{out} \sqsubseteq_{\bar{x}^T_{in} \cup \bar{x}^T_{ret}} c_{\gamma - 1}$,
where $c_{out}$ is the cell of $\tauout$ and $c_{\gamma - 1}$ is the cell of $\tau_{\gamma - 1}$. 
\end{itemize}

In a symbolic tree of runs $\Sym$,
for every two local symbolic runs $\trt = (\tauin, \tauout, \II)$ and 
$\trtc = (\tauin', \tauout', \\ \{(I_i', \sigma_i')\}_{0 \leq i < \gamma'})$ where $T_c \in child(T)$,
if $\trtc$ is connected to $\trt$ by an edge labeled with index $i$, then 
the following conditions must be satisfied in addition to the original requirements:
\begin{itemize}
\item for the cell $c_i$ of symbolic instance $I_i$ and the cell $c_{in}$ of $\tauin'$, $c_i \sqsubseteq c^{T_c \rightarrow T}_{in}$.
\item if $\trtc$ is a returning local symbolic run, then for the cells $c_{out}$ of $\tauout'$ and 
$c_j$ of $I_j$ where $j$ is the smallest index such that $\sigma_j = \sigma_{T_c}^c$ and $j > i$,
we have that $c_j \sqsubseteq_{\bar{x}_\anull} c^{T_c \rightarrow T}_{out}$, 
where $\bar{x}_{\anull} = \{x | x \in \xttcup, x \sim_{\tau_{j-1}} \anull \}$. 
\end{itemize} 

\subsection{Actual Runs versus Symbolic Runs}\label{sec:connecting-runs}

We next show that the connection between actual runs and symbolic runs established in Theorem \ref{thm:actual-symbolic}
still holds for the extended local and symbolic runs. 
The structure of the proof is the same, so we only state the necessary modifications needed to handle arithmetic.

\subsubsection{From Trees of Local Runs to Symbolic Trees of Runs}
Given a tree of local runs $\Tree$, the construction of a corresponding symbolic tree of runs $\Sym$ can be done as follows.
We first construct $\Sym$ from $\Tree$ without the cells following the construction described in 
the proof of the only-if part of Theorem \ref{thm:actual-symbolic}.
Then for each task $T$ and symbolic instance $I$ with extended isomorphism type $\tau$ in some local symbolic run of $T$, 
let $\cale$ be the set of numeric expressions in $\tau$ and $v: \cale \mapsto \mathbb{R}$ the valuation of $\cale$ at $I$.
Then the cell $c$ of $I$ is chosen to be the unique cell in $\calk(\calp_T', \cale)$ that contains $v$.
For cells $c$ and $c'$ of two consecutive symbolic instances $I$ and $I'$ where the service that leads to $I'$ is $\sigma'$,
\begin{itemize}
\item if $\sigma'$ is an internal service, by Lemma \ref{lem:cell-projection}, as $c | \bar{x}^T_{in}$ and $c' | \bar{x}^T_{in}$ 
overlaps, we have $c | \bar{x}^T_{in} = c' | \bar{x}^T_{in}$,
\item if $\sigma'$ is an opening service, $c = c'$ is obvious, and
\item if $\sigma'$ is a closing service, let $\cale$ be the numeric expressions of $c$ and $\cale'$ be the numeric expressions of $c'$.
We have $\cale \subseteq \cale'$ so $\calp_T' | \cale \subseteq \calp_T' | \cale'$. So $c'$ can be written as $c_1 \cap c_2$
where $c_1 \in \calk(\calp_T', \cale)$ and $c_2 \in \calk(\calp_T', \cale' - \cale)$. 
As the values of the preserved numeric expressions are equal in the two consecutive instances, we have $c_1 = c$
so $c \sqsubseteq c'$.
\end{itemize}
Thus, each local symbolic run in $\Sym$ is valid. 
Following a similar analysis, one can verify that for every two connected local symbolic runs $\trt$ and $\trtc$,
the conditions for symbolic tree of runs stated in Appendix \ref{sec:extended-tau} are satisfied due to Lemma \ref{lem:cell-containment}.

\subsubsection{From Symbolic Trees of Runs to Trees of Local Runs}
Given a symbolic tree of runs $\Sym$, we construct the tree of local runs $\Tree$ as follows.
Recall that in the original proof, for each local symbolic run $\trt$, we construct the global isomorphism type $\Lambda$
of $\trt$ and use $\Lambda$ to construct the local run $\rho_T$ and database instance $D_T$.
With arithmetic, the construction of $\Lambda$ remains unchanged
but we use a different construction for $\rho_T$ and $D_T$.

To construct $\rho_T$ and $D_T$, we first define a sequence of mappings 
$\{p_i\}_{0 \leq i < \gamma}$ from the sequence of cells $\{c_i\}_{0 \leq i < \gamma}$ of $\trt$
where each $p_i$ is a mapping from $\cale^+_T \cap \calert$ to $\mathbb{R}$ and $\cale^+_T$ 
is the extended navigation set of $\tau_i$.
Note that each $p_i$ can be also viewed as a point in $c_i$.
The sequence of mappings $\{p_i\}_{0 \leq i < \gamma}$ determines the values of numeric expressions, as we shall see next.
For each mapping $p$ whose domain is the set of numeric expressions $\cale$, 
we denote by $p | \bar{x}$ the projection of $p$ to $\cale \cap (\bar{x} \cup \{x_R.w | x \in \bar{x}\})$. 
Then $\{p_i\}_{0 \leq i < \gamma}$ is constructed as follows:
\begin{itemize}
\item First, we pick an arbitrary point (mapping) $p_{in}$ from $c_{in}$ where $c_{in}$ is the cell of the 
input isomorphism type of $\trt$. 
\item Then, for each equivalence class $\call$ of life cycles in $\trt$,
let $c_{\call}$ be the cell 
of the last symbolic instances in the last dynamic segments
of life cycles in $\call$. Pick a mapping $p_{\call} \in c_{\call}$
such that $p_{\call} | \bar{x}^T_{in} = p_{in}$.
Such a mapping always exists because, by Lemma \ref{lem:cell-projection}, for each $0 \leq i < \gamma$, $c_i | \bar{x}^T_{in} = c_{in}$.
\item Next, for each equivalence class $\cals$ of segments in $\call$,
let $c_{\cals}$ be the cell of the last symbolic instance in segments in $\cals$.
Pick a mapping $p_{\cals}$ from $c_{\cals}$ such that 
$p_{\cals} | (\bar{x}^T_{in} \cup \bar{s}^T) = p_{\call} | (\bar{x}^T_{in} \cup \bar{s}^T)$.
Such a mapping always exists because for each life cycle $L \in \call$ and $I_i$ in $L$,
$c_{\call} | (\bar{x}^T_{in} \cup \bar{s}^T) \sqsubseteq c_i | (\bar{x}^T_{in} \cup \bar{s}^T)$.
\item Finally, for each segment $S = \{(I_i, \sigma_i)\}_{a \leq i \leq b} \in \cals$,
let $p_b = p_\cals$, and for $a \leq i < b$, let $p_i = p_{i + 1} | \bar{x}$
where $\bar{x} = \{x | x \not\sim_{\tau_i} \anull\}$ are the preserved variables from $I_i$ to $I_{i+1}$.
Such mappings always exist because for each $a \leq i < b$, $c_{i+1} \sqsubseteq c_i$.
\end{itemize}

For the sequence of mappings $\{p_i\}_{0 \leq i < \gamma}$ constructed above, the following is easily shown:
\begin{lemma} \label{lem:numeric-equality}
For all local expressions $(i, e)$ and $(i', e')$ in the global isomorphism type $\Lambda$, where $e$ and $e'$ are numeric, 
$(i, e) \sim (i', e')$ implies that $p_i(e) = p_{i'}(e')$.
\end{lemma}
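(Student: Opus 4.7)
The plan is to prove Lemma~\ref{lem:numeric-equality} by induction on the construction of the global equality type $\sim$, maintaining the invariant that for every pair of local expressions $(i,e) \sim (i',e')$ with $e,e'$ numeric, $p_i(e) = p_{i'}(e')$. The base case is the initialization step $\sim \,\leftarrow\, \bigcup_i \sim_i$: within a single instance, if $e \sim_{\tau_i} e'$ for numeric $e, e' \in \cale_T^+$, then by the definition of an extended $T$-isomorphism type, the cell $c_i$ forces the polynomial $e - e'$ to vanish; since $p_i \in c_i$, we immediately get $p_i(e) = p_i(e')$. Note that the polynomials of the form $e - e'$ are included in $\calp_T$ by construction, so the cells genuinely carry this information.

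For the inductive step I would dispatch each chase rule separately. The Segment-Condition merges $[(i,x)]$ with $[(i',x)]$ for a variable $x$ non-$\anull$ throughout a segment $S = \{(I_k, \sigma_k)\}_{a \le k \le b}$; when $x$ is numeric, the $p$-sequence construction gives $p_k = p_{k+1}|\bar{x}$ on the preserved variables, so $p_i(x) = p_{i'}(x)$ follows by a trivial chain of equalities along the segment. The Life-Cycle-Condition reduces similarly: within a life cycle $\call$, every segment representative $p_\cals$ is chosen to agree with $p_\call$ on $\bar{x}^T_{in} \cup \bar{s}^T$, which propagates down to all segment points for numeric $x$ in that set. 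The Input-Condition is immediate from $p_i|\bar{x}^T_{in} = p_{in}$ for every $i$.

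The main obstacle is the FD-Condition: given $(i,e) \sim (i',e')$ and an attribute $a$, we must show $p_i(e.a) = p_{i'}(e'.a)$ when $a$ is numeric, \emph{even if} $e$ and $e'$ are ID-expressions, so the inductive hypothesis does not apply directly at the level of $e, e'$ themselves. The strategy is to invoke the invariant of Lemma~\ref{lem:invariant}, which guarantees a witness variable $y \in \bar{x}^T$ and a navigation $w$ with $|w| < h(T)$ such that $[(i,e)]|_i = \reach_i(y,w)$ and $[(i',e')]|_{i'} = \reach_{i'}(y,w)$. Consequently $e \sim_{\tau_i} y_R.w$ and $e' \sim_{\tau_{i'}} y_R.w$, and extending by the FD-closure within each local $\tau$ (using $|w.a| \le h(T)$ to stay inside $\cale_T^+$) yields $e.a \sim_{\tau_i} y_R.w.a$ and $e'.a \sim_{\tau_{i'}} y_R.w.a$. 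Applying the base case to $\tau_i$ and $\tau_{i'}$ gives $p_i(e.a) = p_i(y_R.w.a)$ and $p_{i'}(e'.a) = p_{i'}(y_R.w.a)$, so it remains to show $p_i(y_R.w.a) = p_{i'}(y_R.w.a)$ for the canonical witness $y_R.w.a$.

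Here I would exploit the phasing of the chase established inside the proof of Lemma~\ref{lem:invariant}: the witness $y$ can always be taken to be an input variable, or a variable in $\bar{x}^T_{in} \cup \bar{s}^T$ whose class was produced in the life-cycle phase, or a preserved variable of a single segment, depending on the phase in which the FD-Condition fires. In each case, the $p$-sequence construction delivers a single mapping ($p_{in}$, $p_\call$, or $p_\cals$ respectively) that coincides with both $p_i$ and $p_{i'}$ on every numeric expression rooted at $y$ that fits inside the extended navigation sets, which includes $y_R.w.a$; this yields the desired equality and closes the induction. The subtle check I expect to spend the most care on is verifying that $y_R.w.a$ really does lie in $\cale_T^+$ at both $\tau_i$ and $\tau_{i'}$ so that the cells constrain it — this is exactly what Lemma~\ref{lem:length} is designed to provide, since $|w| < h(T)$ forces $|w.a| \le h(T)$.
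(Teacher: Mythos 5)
Your proof is correct and is essentially the argument the paper intends: the paper asserts this lemma without proof (``easily shown''), and your chase-phase induction — cells forcing intra-instance numeric equalities via the polynomials $e-e'$, the reference mappings $p_{in}$, $p_{\call}$, $p_{\cals}$ discharging the Input-, Life-Cycle- and Segment-Conditions, and the invariant of Lemma~\ref{lem:invariant} together with Lemma~\ref{lem:length} discharging the FD-Condition — is precisely what the construction of $\{p_i\}$ is set up to support. The one detail to tighten is that the shortened witness navigation must be taken uniformly for both indices $i$ and $i'$ (Lemma~\ref{lem:length} shortens per index, but the transfer via Lemma~\ref{lem:segment-lifecycle} used inside the proof of Lemma~\ref{lem:invariant}, i.e., the witness of the newly merged class, gives exactly this), a bookkeeping point at the same level of precision as the paper's own treatment.
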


Given the above property, we can construct $\rho_T$ and $D_T$ as follows.
We first construct $\rho_T$ and $D_T$ as in the case without arithmetic.
Then for each equivalence class $[(i, e)]$, we replace the value $[(i, e)]$ in $\rho_T$ and $D_T$ with
the value $p_i(e)$. It is clear that Lemmas \ref{lem:sym-to-loc-finite} and \ref{lem:sym-to-loc-infinite}
still hold since the global equality type in $\Lambda$ remains unchanged.


To construct the full tree of local runs $\Tree$ from the symbolic tree of runs, we perform the above construction
in a top-down manner. For each local symbolic run $\trt$, 
we first construct $\{p_i\}_{0 \leq i < \gamma}$ for the root $\tilde{\rho}_{T_1}$ of $\Sym$ using the above construction.
Then recursively for each $\trt \in \Sym$ and child $\trtc$ connected to $\trt$ by an edge labeled with index $i$,
we pick a mapping $p_{in}$ from $c_{in}$ of $\trtc$ such that $p_{in}^{T_c \rightarrow T} = p_i | \xttcdown$.
And if $\trtc$ is a returning run, we pick $p_{out}$ from $c_{out}$ of $\trtc$ such that 
$p_{out}^{T_c \rightarrow T} | \bar{x}_{\anull} = 
p_j | \bar{x}_{\anull}$ where $j$ is index of the corresponding closing service $\sigma_{T_c}^c$ at $\trt$, 
and $\bar{x}_{\anull}$ is defined as above. 

We next construct $\{p_i\}_{0 \leq i < \gamma}$ of $\trtc$ similarly to above, except that (1) $p_{in}$ is given, and
(2) if $\trtc$ is a returning run, then
for the equivalence class $\call$ of life cycles where $I_{\gamma - 1}$ is contained in some life cycle $L \in \call$,
we pick $p_{\call}$ such that $p_{\call} | \bar{x}^{T_c}_{in} \cup \bar{x}^{T_c}_{ret} = p_{out}$.
Then $\rho_{T_c}$ and $D_{T_c}$ are constructed following the above approach. 
The tree of local runs $\Tree$ is constructed as described in the proof of Theorem \ref{thm:actual-symbolic}.
Following the same approach, we can show: 
\begin{theorem} \label{thm:arithmetic-correctness}
For every HAS $\Gamma$ and $\hltlfo$ property $\varphi$ with arithmetic,
there exists a symbolic tree of runs $\Sym$ accepted by $\calb_\varphi$ iff
there exists a tree of local runs $\Tree$ and database $D$ such that $\Tree$ is accepted by $\calb_\varphi$ on $D$.
\end{theorem}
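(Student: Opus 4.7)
The plan is to adapt the proof of Theorem~\ref{thm:actual-symbolic} (which handles the non-arithmetic case) by threading the cell component of extended isomorphism types through both directions of the construction, leveraging the Hierarchical Cell Decomposition (HCD) to guarantee that cells fit together consistently along transitions and across parent--child interfaces.

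For the only-if direction (actual $\Rightarrow$ symbolic), I would start from a tree of local runs $\Tree$ over a database $D$ accepted by $\calb_\varphi$ and construct $\Sym$ exactly as in Theorem~\ref{thm:actual-symbolic}, replacing each local run by its symbolic counterpart and computing equality types and counters in the same way. The only new ingredient is the cell: at each symbolic instance $I$ with navigation set $\cale_T$ and with actual valuation $v$ of the numeric expressions in $\cale_T^+$, I pick the unique cell $c \in \calk(\calp_T',\cale)$ of the HCD containing $v$. The transition conditions on cells are then checked using the properties of the HCD: Lemma~\ref{lem:cell-projection} forces $c|\bar x^T_{in} = c'|\bar x^T_{in}$ whenever the two projections overlap (which they do for internal transitions, since the actual input values are preserved), and $c' \sqsubseteq c$ for closing services of children follows by noting that $c$ and $c'$ are both chosen to contain the same actual values on shared expressions, while the new expressions in $c'$ just add further sign constraints. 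Parent--child compatibility is handled analogously using Lemma~\ref{lem:cell-containment}.

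The harder direction is the if-direction (symbolic $\Rightarrow$ actual). The plan is to first invoke the non-arithmetic construction from Theorem~\ref{thm:actual-symbolic}: build, for each node $\trt$ of $\Sym$, the global isomorphism type $\Lambda$ and the corresponding $\rho_T$, $D_T$ satisfying properties (i)--(iii) of Lemmas~\ref{lem:sym-to-loc-finite} and \ref{lem:sym-to-loc-infinite}, then glue the local runs into a tree $\Tree_0$ and database $D$ exactly as before. This produces a skeleton in which the numeric values are still the equivalence classes $[(i,e)]$ of $\sim$. The new work is to instantiate the numerics: for each symbolic instance $I_i$ with cell $c_i$, pick a point $p_i \in c_i$, and then replace each value $[(i,e)]$ appearing in $\rho_T$ and $D_T$ by the real number $p_i(e)$.

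The main obstacle is carrying out this choice consistently, so that two local expressions identified by the global equivalence $\sim$ receive the same real value (Lemma~\ref{lem:numeric-equality}) and so that the choices on the parent and child sides of every call/return edge agree. The strategy is to select representatives in a top-down traversal of $\Sym$, and, within each local symbolic run, in a structured bottom-up order driven by the life-cycle / segment partition from the proof of Lemma~\ref{lem:sym-to-loc-infinite}. At the root task I pick an arbitrary $p_{in}$ from the input cell $c_{in}$; for each equivalence class $\call$ of life cycles I pick $p_{\call}$ from the cell of the last instance of $\call$ whose projection on $\bar x^T_{in}$ equals $p_{in}$, which is possible because Lemma~\ref{lem:cell-projection} forces $c_{\call}|\bar x^T_{in} = c_{in}$ (not merely overlapping). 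For each equivalence class of segments inside $\call$ I pick a compatible $p_{\cals}$ using the analogous fact for $\bar x^T_{in} \cup \bar s^T$, and then propagate $p_i$ backward along each segment via $p_i = p_{i+1}|\bar x$ restricted to the preserved variables, which is legitimate because the symbolic successor relation guarantees $c_{i+1} \sqsubseteq c_i$. For a child task $T_c$ called at position $i$, the input point $p_{in}^{T_c}$ is chosen so that $(p_{in}^{T_c})^{T_c \to T} = p_i|\bar x^T_{T_c^\downarrow}$; the requirement $c_i \sqsubseteq c_{in}^{T_c \to T}$ from the definition of symbolic tree of runs makes this possible. Symmetrically, on a returning subrun we first pick $p_{out}^{T_c}$ so that $(p_{out}^{T_c})^{T_c \to T}|\bar x_\anull$ matches $p_j|\bar x_\anull$ at the closing index $j$, using the condition $c_j \sqsubseteq_{\bar x_\anull} c_{out}^{T_c \to T}$, and then require the chosen $p_{\call}$ of the life cycle containing the last instance of the child to project correctly onto $p_{out}^{T_c}$. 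A straightforward verification shows that the assignment $p_i$ is well-defined on equivalence classes of $\sim$ (Lemma~\ref{lem:numeric-equality}), which is exactly what is needed for the substitution into $\rho_T$ and $D_T$ to produce a valid actual tree of local runs. Since the cell stored in each extended isomorphism type agrees with the sign conditions realized by the $p_i$'s, every arithmetic atom in a service pre-/post-condition or in $\varphi$ evaluates in the actual run exactly as it does symbolically, so acceptance by $\calb_\varphi$ transfers from $\Sym$ to $\Tree$.
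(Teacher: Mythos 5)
Your proposal is correct and follows essentially the same route as the paper's proof: the only-if direction assigns to each symbolic instance the unique HCD cell containing its actual numeric valuation and verifies the compatibility conditions via Lemmas~\ref{lem:cell-projection} and \ref{lem:cell-containment}, and the if direction reuses the non-arithmetic construction of $\Lambda$, $\rho_T$, $D_T$ and then instantiates the numerics by choosing points $p_i$ in the cells via the same input/life-cycle/segment hierarchy with backward propagation and top-down parent--child matching, establishing Lemma~\ref{lem:numeric-equality}. No substantive differences from the paper's argument.
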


\subsection{Complexity of Verification with Arithmetic} \label{sec:complexity-arithmetic}

Similarly to the analysis in Appendix \ref{sec:complexity-no-arith-app}, it is sufficient to
upper-bound the number of $T$-and $TS$-isomorphism types.
To do so, we need to bound the size of $\{\calk_T\}_{T \in \calh}$. 
By the construction of each $\calk_T$ and by Theorem \ref{thm:num-cells}, 
it is sufficient to bound the size of each $\calp_T'$.

We denote by $l$ the number of numeric expressions, $s$ the number of polynomials in $\Gamma$ and $\varphi$,
$d$ the maximum degree of these polynomials, $t$ the maximum bitsize of the coefficients, 
and $h$ the height of the task hierarchy $\calh$.
For each task $T$, we denote by $s(T)$ the number of polynomials in $\calp_T'$
and $d(T)$ the maximum degree of polynomials in $\calp_T'$.

If $T$ is a leaf task, then $|\calp_T| \leq s + l^2$. 
The number of polynomials in $\calp^s_T$ is no more than the product of 
(1) the number of subsets of $\calert$, (2) the maximum number of non-empty cells over $(\cale, \calp_T | \cale)$ and (3)
the maximum number of polynomials in each $\poly(c, \bar{x}^T_{in} \cup \bar{s}^T)$. 
By Theorem \ref{thm:qe}, the number of polynomials is no more than the running time, 
which is bounded by $( (s + l^2) \cdot d )^{O(l^2)}$.
Then by Theorem \ref{thm:num-cells},
the number of non-empty cells over $(\cale, \calp_T | \cale)$ is at most
$((s + l^2) \cdot d)^{O(l)}$. 
Thus, $|\calp^s_T| \leq ( (s + l^2) \cdot d )^{O(l^2)}$.
By the same analysis, we obtain that for $\calp'_T$, $s(T) = |\calp'_T| \leq ( (s + l^2) \cdot d )^{O(l^4)}$.
Similarly, $d(T)$ can be upper-bounded by $( (s + l^2) \cdot d )^{O(l^4)}$.

Next, if $T$ is a non-leaf task, we denote by $s'$ the size of $\calp_T$ and by $d'$ 
the maximum degree of polynomials in $\calp_T$.
We have that $s' \leq (s + l^2) + \sum_{T_c \in child(T)} 2^l (s(T_c) \cdot d(T_c))^{O(l^2)} \cdot (s(T_c) \cdot d(T_c))^{O(l)} 
\leq (s + l^2) + (s(T_c) \cdot d(T_c))^{O(l^2)}$, and
$d' \leq \max_{T_c \in child(T)} (s(T_c) \cdot d(T_c))^{O(l^2)}$.

Following the same analysis as above, we have that 
both $s(T)$ and $d(T)$ are at most $( (s' + l^2) \cdot d' )^{O(l^4)}$.
By solving the recursion, we obtain that 
$s(T), d(T) \leq ( (s + l^2) \cdot d )^{(c \cdot l^6)^h}$ for some constant $c$.
Then by Theorem \ref{thm:num-cells}, $|\calk_T|$ is at most $(s(T) \cdot d(T))^{O(k)}$. So we have
\begin{lemma}\label{lem:size-kt}
For each task $T$, the number of cells in $\calk_T$ is at most 
$( (s + l^2) \cdot d )^{(c \cdot l^6)^{h}} $ for some constant $c$.
\end{lemma}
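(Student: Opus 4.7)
The plan is to prove the bound by induction on the depth of $T$ in the hierarchy $\calh$, tracking two quantities for every task: $s(T) = |\calp_T'|$ and $d(T) = $ the maximum total degree of a polynomial in $\calp_T'$. Once these are bounded, Theorem~\ref{thm:num-cells} converts the bound on the defining polynomials into a bound on $|\calk_T|$, since $|\calk_T| \leq \sum_{\cale \subseteq \calert} |\calk(\calp_T',\cale)| \leq 2^l \cdot (s(T) \cdot d(T))^{O(l)}$.

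For the base case (leaf $T$), I would first observe that $|\calp_T| \leq s + l^2$ and the degrees of polynomials in $\calp_T$ are at most $d$, directly from the definition. Passing from $\calp_T$ to $\calp_T^s$ requires adjoining, for every subset $\cale \subseteq \calert$ and every non-empty cell in $\calk(\calp_T,\cale)$, the polynomials that appear in the quantifier-elimination output defining $c | (\bar x^T_{in} \cup \bar s^T)$. Theorem~\ref{thm:num-cells} bounds the number of non-empty cells by $((s + l^2)d)^{O(l)}$, and Theorem~\ref{thm:qe} bounds the number of polynomials produced by each projection (and their degrees) by $((s+l^2)d)^{O(l^2)}$. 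Multiplying by $2^l$ for the choice of $\cale$ and combining degree blowups gives $|\calp_T^s|$ and the degree bound, both of order $((s+l^2)d)^{O(l^2)}$. Repeating the argument for the projection onto $\bar x^T_{in}$ that produces $\calp_T'$ from $\calp_T^s$ yields $s(T),\, d(T) \leq ((s+l^2)d)^{O(l^4)}$.

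For the inductive step at a non-leaf $T$, I would assume the bounds hold for each child $T_c$. The new ingredient in $\calp_T$ is the union, over children $T_c$, subsets $\bar x \subseteq \bar x^{T_c}_{ret}$, and cells $c \in \calk_{T_c}$, of the polynomials $P^{T_c \to T}$ for $P \in \poly(c, \bar x^{T_c}_{in} \cup \bar x)$. The number of such polynomials is bounded by $|\emph{child}(T)| \cdot 2^l \cdot |\calk_{T_c}| \cdot (s(T_c) \cdot d(T_c))^{O(l^2)}$, which by the inductive hypothesis and Theorem~\ref{thm:num-cells} collapses to $(s(T_c) \cdot d(T_c))^{O(l^2)}$. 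The substitution $P \mapsto P^{T_c \to T}$ does not change the degree. Writing $s', d'$ for the resulting bounds on $|\calp_T|$ and its degrees, the same two-step projection argument used in the base case raises them to $s(T), d(T) \leq ((s' + l^2) d')^{O(l^4)} \leq (s(T_c) d(T_c))^{O(l^6)}$.

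The main obstacle is cleanly unwinding the resulting recurrence $s(T) \cdot d(T) \leq \bigl(s(T_c) \cdot d(T_c)\bigr)^{c' l^6}$ across $h$ levels of hierarchy. Iterating this inequality $h$ times starting from the base value $(s+l^2)d$ yields a tower of exponents whose height collapses (because each step only raises the previous bound to a polynomial-in-$l$ power, not an exponential one) to give $s(T) \cdot d(T) \leq ((s+l^2)d)^{(c' l^6)^h}$. Plugging this into $|\calk_T| \leq 2^l (s(T) d(T))^{O(l)}$ and absorbing the extra $O(l)$ factor and the $2^l$ term into the constant $c$ in the exponent yields the stated bound. The delicate accounting is making sure that the blowups from quantifier elimination ($l^2$ in the exponent of QE, $l$ in cell counting, $l^4$ for the two-phase projection, times another constant factor from the substitution into the parent) accumulate into the single constant $c$ in $(c \cdot l^6)^h$ rather than into a faster-growing exponent.
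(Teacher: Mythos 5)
Your proposal follows essentially the same route as the paper's proof: it tracks $s(T)$ and $d(T)$ by induction over the hierarchy, uses Theorem~\ref{thm:qe} to bound the polynomials produced by each cell projection and Theorem~\ref{thm:num-cells} to bound the number of non-empty cells, obtains the same recurrence $s(T), d(T) \leq ((s'+l^2)d')^{O(l^4)}$ with the child contribution of order $(s(T_c)d(T_c))^{O(l^2)}$, and unwinds it over $h$ levels to $((s+l^2)d)^{(c\cdot l^6)^h}$ before a final application of Theorem~\ref{thm:num-cells}. The accounting matches the paper's (modulo absorbing the $2^l$ and final $O(l)$ exponent into the constant $c$), so the proposal is correct and not materially different.
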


The space used by the verification algorithm with arithmetic is no more than the space needed
to pre-compute $\{\calk_T\}_{T \in \calh}$ plus the space for the VASS (repeated) reachability for each task $T$.
By Theoream \ref{thm:naive-approach}, for each task $T$, the set $\calk_T$ can be computed in
space $O\left(( (s + l^2) \cdot d )^{(c \cdot l^6)^{h}}\right) $.

For VASS (repeated) reachability, according to the analysis in Appendix \ref{sec:complexity-no-arith-app}, 
state (repeated) reachability can be computed in 
$O( h^2 \cdot N^2 \log^2 M \cdot 2^{c \cdot D \log D} )$ space ($O( h^2 \cdot N^2 \log^2 M)$ w/o. artifact relation), where 
$h$ is the height of $\calh$, $N$ is the size of $(\Gamma, \varphi)$, $M$ is the number of extended $T$-isomorphism types
and $D$ is the number of extended $TS$-isomorphism types.
With arithmetic, $M$ and $D$ are
the products of number of normal $T$-and $TS$-isomorphism types multiplied by $|\calk_T|$ respectively.
As $l$ is less than the number of expressions whose upper bounds are obtained 
in Appendix \ref{sec:complexity-no-arith-app}, by applying Lemma \ref{lem:size-kt}, 
we obtain upper bounds for $M$ and $D$ for the different types of schema.

By substituting the bounds for $M$ and $D$, we have the following results.
Note that for $\Gamma$ without artifact relations, the complexity is dominated by the space for pre-computing
$\{\calk_T\}_{T \in \calh}$.

\begin{theorem} \label{thm:acyclic-arithmetic}
Let $\Gamma$ be a HAS with {\bf acyclic} schema and $\varphi$ an $\hltlfo$ property over $\Gamma$,
where arithmetic is allowed in $\Gamma$ and $\varphi$.
$\Gamma \models \varphi$ can be verified in $2\hexp(N^{O(h + r)})$ deterministic space.
If $\Gamma$ does not contain artifact relation, then $\Gamma \models \varphi$ can be verified in 
$\exp(N^{O(h + r)})$ deterministic space.
\end{theorem}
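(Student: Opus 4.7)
The plan is to combine the reduction to VASS reachability from Section \ref{sec:verification} (extended to the arithmetic setting by Theorem \ref{thm:arithmetic-correctness}) with the acyclic-schema bounds from Appendix \ref{sec:complexity-no-arith-app} and the cell-decomposition bound from Lemma \ref{lem:size-kt}. Checking $\Gamma \models \varphi$ reduces to checking that $[\neg \xi]_{T_1}$ is not satisfied by any symbolic tree of runs of $\Gamma$ (now using \emph{extended} isomorphism types that carry a cell of the Hierarchical Cell Decomposition). Just as in the arithmetic-free case, this is in turn handled by (repeated) state reachability in the nested VASS $\calv(T,\beta)$, whose states now range over extended $T$-isomorphism types and whose dimension is the number of extended $TS$-isomorphism types. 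So the whole argument boils down to bounding these two counts and the cost of precomputing $\{\calk_T\}_{T \in \calh}$.

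First I would instantiate Lemma \ref{lem:size-kt} for the acyclic case. On an acyclic schema the length of every navigation expression is at most $r$ (the number of relations), so the number $l$ of numeric expressions appearing in any task is bounded by $a^r k$, i.e.\ by a polynomial in $N$ of degree $O(r)$. Plugging this into Lemma \ref{lem:size-kt} gives $|\calk_T| \leq ((s+l^2) d)^{(c l^6)^h} = 2^{N^{O(h+r)}}$. The number of extended $T$-isomorphism types is the product of the classical count (bounded by $(r{+}1)^k \cdot (a^r k)^{a^r k}$, i.e.\ $\exp(N^{O(r)})$ from Appendix \ref{sec:complexity-no-arith-app}) with $|\calk_T|$, which still lies in $2^{N^{O(h+r)}}$; analogously for the extended $TS$-isomorphism types, which additionally contribute at most a polynomial-in-$N$ factor in the exponent because of the input-bound restriction already exploited in Appendix \ref{sec:complexity-no-arith-app}. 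Thus both $M$ and $D$ are bounded by $2^{N^{O(h+r)}}$.

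Next I would substitute these bounds into the two sources of cost. The precomputation of each $\calk_T$ via Theorem \ref{thm:naive-approach} takes space $(s(T) \cdot d(T))^{O(l)} = 2^{N^{O(h+r)}}$, i.e.\ $\exp(N^{O(h+r)})$. For the VASS machinery, the non-deterministic space bound of Appendix \ref{sec:complexity-no-arith-app} is $O(h \log n \cdot 2^{c d \log d})$ with $n \leq M \cdot 2^{O(N)} \cdot (3{+}M)^N \cdot 2^D$ and $d \leq D$; squaring by Savitch gives deterministic space $O(h^2 N^2 \log^2 M \cdot 2^{cD \log D})$. With artifact relations the dominant term is $2^{cD\log D} = 2^{2^{N^{O(h+r)}}} = 2\hexp(N^{O(h+r)})$, which absorbs the precomputation cost. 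Without artifact relations $D$ disappears and the VASS cost collapses to $O(h^2 N^2 (\log M)^2) = N^{O(h+r)}$, so the overall complexity is dominated by the precomputation of $\{\calk_T\}$, which is $\exp(N^{O(h+r)})$.

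The main obstacle, as usual in this line of arguments, is not a conceptual one but tracking the tower of exponentials carefully: one must verify that the doubly-exponential blowup produced by the VASS dimension $D$ (which is intrinsically exponential in $l$ via the number of cells) truly governs the artifact-relation case, and that no hidden dependence on $h$ sneaks an extra level into the tower via the recursive definition of $\calp_T'$ in the Hierarchical Cell Decomposition. The key bookkeeping is already isolated in Lemma \ref{lem:size-kt} (the $(cl^6)^h$ exponent) and in the acyclic-schema polynomial bound on $l$; combining them linearly in the exponent of $N$ is what yields exactly the $N^{O(h+r)}$ exponent stated in the theorem, and what keeps the arithmetic case for acyclic schemas within a single extra exponential over the arithmetic-free bound of Theorem \ref{thm:acyclic}.
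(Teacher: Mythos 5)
Your proposal is correct and follows essentially the same route as the paper's own proof: it instantiates Lemma \ref{lem:size-kt} with the acyclic bound $l \leq a^r k$ on numeric expressions, multiplies the resulting cell count into the numbers $M$ and $D$ of extended $T$- and $TS$-isomorphism types, and observes that with artifact relations the space is dominated by the $2^{c\,D \log D}$ term of the nested-VASS (repeated) reachability tests, while without them it is dominated by precomputing $\{\calk_T\}_{T \in \calh}$ via Theorem \ref{thm:naive-approach} --- exactly the paper's argument. The only caveat, inherited from the paper's equally terse substitution step, concerns the final exponent bookkeeping: plugging $l = N^{O(r)}$ into $(c\, l^6)^h$ literally yields an exponent of the form $O(h \cdot r)$ (i.e., $N^{c \cdot h}$ with a schema-dependent constant, matching Table \ref{tab:complexity2}), so your claim that the contributions of $h$ and $r$ combine ``linearly'' to give $N^{O(h+r)}$ is really a multiplicative combination, not a gap in the approach but a point where the stated exponent should be read accordingly.
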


\begin{theorem} \label{thm:linear-arithmetic}
Let $\Gamma$ be a HAS with {\bf linearly-cyclic} schema and $\varphi$ an $\hltlfo$ property over $\Gamma$,
where arithmetic is allowed in $\Gamma$ and $\varphi$.
$\Gamma \models \varphi$ can be verified in $O(2\hexp(N^{c_1 \cdot h^2}))$ deterministic space, where $c_1 = O(r)$.
If $\Gamma$ does not contain artifact relation, then $\Gamma \models \varphi$ can be verified in 
$O(\exp(N^{c_2 \cdot h^2}))$ deterministic space, where $c_2 = O(r)$.
\end{theorem}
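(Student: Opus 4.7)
The plan is to adapt the symbolic verification framework of Section \ref{sec:verification} by refining isomorphism types with cells drawn from a pre-computed Hierarchical Cell Decomposition (HCD), and then to replay the VASS-based reachability argument of Theorem \ref{thm:no-arithmetic} against the enlarged state space. First, I would formalize extended $T$-isomorphism types that augment the navigation-set plus equality-type data of Section \ref{sec:symrep} with a non-empty cell of $\mathbb{R}^{|\cale_T \cap \calert|}$, chosen from a set $\calk_T$ computed bottom-up over $\calh$. The polynomials defining $\calk_T$ are the union of those appearing syntactically in conditions of $T$ and in $\varphi$, the expressions and pairwise differences over $\calert$ (to keep cells compatible with the equality type), and, crucially, the polynomials obtained by projecting each child's cells onto the interface variables $\bar x^{T_c}_{in}\cup \bar x^{T_c}_{ret}$ via quantifier elimination (Theorem~\ref{thm:qe}). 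I would then close $\calp_T$ under projection onto $\bar x^T_{in}$ and $\bar x^T_{in}\cup \bar s^T$ so that the projections of any two cells in $\calk_T$ are either equal or disjoint, matching Lemma~\ref{lem:cell-projection}.

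Next, I would lift local symbolic runs, symbolic trees of runs, and the VASS $\calv(T,\beta)$ of Section~\ref{sec:verification} by replacing every isomorphism type with an extended one, and adding the local compatibility tests described in Section~\ref{sec:arithmetic}: for internal transitions, $c$ and $c'$ must agree on $\bar x^T_{in}$; for opening calls, parent and child cells must coincide after projection via $f_{in}$; and for closings, the parent cell must refine the child's cell on the interface (which by Lemma~\ref{lem:cell-containment} is decidable locally). The only-if direction of the symbolic-vs-actual correspondence (analog of Theorem~\ref{thm:actual-symbolic}) is then routine; for the if direction, I would re-run the two-stage construction of Appendix~\ref{sec:connecting-runs}, picking representative points $p_i$ inside each cell so that projections onto preserved variables are equal on adjacent instances and onto interface variables across calls/returns; here the HCD compatibility guarantees that such representatives always exist.

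Correctness in place, the complexity analysis proceeds exactly as in Appendix~\ref{sec:complexity-no-arith-app}, but with the number of states $n$ and the VASS dimension $d$ inflated by a factor of $|\calk_T|$. The key quantitative ingredient is the bound on $|\calk_T|$: naive counting would give a hyperexponential blow-up in $h$, since one projection step can multiply polynomial counts doubly-exponentially, but the classical result of \cite{basu1996number} (Appendix~\ref{sec:algebraic-geometry}) bounds the number of non-empty cells by $(s\cdot d)^{O(k)}$, singly exponential in the number of variables and independent of $s$. Solving the resulting recursion $s(T), d(T) \le ((s+l^2)\cdot d)^{(c l^6)^h}$ yields that $|\calk_T|$ is at most doubly exponential with the exponent polynomial in $h$, which, when substituted into the VASS state-count bound and combined with Savitch's theorem as in Theorem~\ref{thm:no-arithmetic}, gives exactly the entries in Table~\ref{tab:complexity2} after a case analysis on acyclic, linearly-cyclic, and cyclic schemas.

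The main obstacle is controlling the cell explosion across the hierarchy: each bottom-up projection step via quantifier elimination can multiply both the number and the degree of polynomials, and a careless recursion produces a tower of exponentials whose height is the height of $\calh$ even for acyclic schemas. The cure is twofold: use the Tarski--Seidenberg projection so that only polynomials (not raw cells) are propagated, and then invoke the singly-exponential cell-count bound of \cite{basu1996number} so that the number of distinct \emph{non-empty} cells at each level is polynomial in the cumulative polynomial set even when the latter grows. A secondary subtlety is keeping the refinement of counter-bearing extended $TS$-isomorphism types consistent with collisions in $S^T$, which is handled by the same input-bound special case used without arithmetic in Section~\ref{sec:symrep}.
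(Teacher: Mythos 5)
Your proposal follows essentially the same route as the paper's own proof: extended isomorphism types refined by cells from a bottom-up Hierarchical Cell Decomposition closed under Tarski--Seidenberg projections onto $\bar x^T_{in}$, $\bar x^T_{in}\cup\bar s^T$, and the child interface variables, local refinement/compatibility tests in the lifted VASS, a two-stage transfer between symbolic and actual runs via representative points in cells, and the complexity analysis that tames the projection blow-up with the $(s\cdot d)^{O(k)}$ non-empty-cell bound of \cite{basu1996number}, solves the recursion $s(T),d(T)\le((s+l^2)\cdot d)^{(c\cdot l^6)^h}$, and substitutes the linearly-cyclic bound on the number of expressions (polynomial in $N$ with exponent $O(r\cdot h)$) into the VASS space bound and Savitch's theorem. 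This matches the paper's argument in both structure and quantitative detail, so no further comparison is needed.
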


\begin{theorem} \label{thm:cyclic-arithmetic}
Let $\Gamma$ be a HAS with {\bf cyclic} schema and $\varphi$ an $\hltlfo$ property over $\Gamma$,
where arithmetic is allowed in $\Gamma$ and $\varphi$.
$\Gamma \models \varphi$ can be verified in $(h+2)\hexp(O(N))$ deterministic space.
If $\Gamma$ does not contain artifact relation, then $\Gamma \models \varphi$ can be verified in 
$(h+1)\hexp(O(N))$ deterministic space.
\end{theorem}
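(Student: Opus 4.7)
The plan is to follow exactly the same template as in Theorems~\ref{thm:acyclic-arithmetic} and~\ref{thm:linear-arithmetic}: pre-compute the Hierarchical Cell Decomposition $\{\calk_T\}_{T \in \calh}$ bottom-up as described in Section~\ref{sec:cell-decomp}, then apply Savitch's theorem to the complement of the VASS (repeated) reachability algorithm running on nested $\calv(T,\beta)$'s whose states are built from \emph{extended} $T$-isomorphism types. The only new input is the estimate of the ingredients specific to cyclic schemas; the correctness argument is inherited from Theorem~\ref{thm:arithmetic-correctness}, so there is nothing new to prove about soundness or completeness of the procedure.

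The numerical engine is Lemma~\ref{lem:size-kt}, which gives $|\calk_T| \leq ((s+l^2)\cdot d)^{(c\cdot l^6)^h}$, where $l$ is the number of numeric expressions available at task $T$. For cyclic schemas, the analysis from the proof of Theorem~\ref{thm:cyclic} shows that the maximum navigation depth satisfies the recurrence $h(T) = O(k \cdot a^{\delta})$ with $\delta = \max_{T_c \in \emph{child}(T)} h(T_c)$, whose solution is $h\hexp(O(N))$, and hence both the number of expressions $l$ and the number of ordinary $T$-isomorphism types are bounded by $h\hexp(O(N))$. Substituting $l = h\hexp(O(N))$ into Lemma~\ref{lem:size-kt}, a short calculation with towers---using that $(cl^6)^h = 2^{h \log(cl^6)}$ only shifts a tower of height $h$ by a constant factor inside the top exponent---yields $|\calk_T| \leq (h+1)\hexp(O(N))$. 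Since the number $M$ of extended $T$-isomorphism types is the product of the ordinary count and $|\calk_T|$, and the same holds for the number $D$ of extended $TS$-isomorphism types, both $M$ and $D$ are still $(h+1)\hexp(O(N))$.

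The final step plugs these bounds into the generic space estimate $O(h^2 \cdot N^2 \log^2 M \cdot 2^{c D \log D})$ derived in Appendix~\ref{sec:complexity-no-arith-app}, together with the $(s \cdot d)^{O(l)}$ space for constructing each $\calk_T$ via Theorem~\ref{thm:naive-approach}. Without artifact relations the $2^{cD \log D}$ factor is absent and the cost is dominated by the cell pre-computation, giving $(h+1)\hexp(O(N))$. With artifact relations, $2^{cD \log D}$ with $D = (h+1)\hexp(O(N))$ contributes one additional exponential, yielding $(h+2)\hexp(O(N))$.

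The main obstacle, as in the linearly-cyclic case, is keeping the tower height correctly accounted for: naively iterating the cell construction along the hierarchy would add one level per task and produce a tower of height $\Omega(h)$ on top of an already $h$-high base, giving a much worse bound. The key observation that avoids this blow-up is the same as in Lemma~\ref{lem:size-kt}: at each task the projection phase in quantifier elimination produces only singly-exponentially many new polynomials in $l$, independently of the children's polynomial counts, so the tower grows by exactly one level between the leaves (where $l$ already sits at height $h$) and the root. The proof will therefore consist of carefully verifying this tower arithmetic, followed by direct substitution into the template estimate.
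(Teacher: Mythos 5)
Your proposal takes essentially the same route as the paper's own proof: correctness is inherited from Theorem~\ref{thm:arithmetic-correctness}, the cell-count bound of Lemma~\ref{lem:size-kt} is instantiated with the cyclic-schema bound $l = h\hexp(O(N))$ coming from the analysis behind Theorem~\ref{thm:cyclic}, yielding $M, D = (h+1)\hexp(O(N))$, and these are substituted into the generic space estimate $O(h^2 N^2 \log^2 M \cdot 2^{cD\log D})$ together with the $\calk_T$ pre-computation cost from Theorem~\ref{thm:naive-approach}, giving $(h+2)\hexp(O(N))$ with artifact relations and $(h+1)\hexp(O(N))$ without. Two minor inaccuracies in your write-up --- the number of \emph{ordinary} $T$-isomorphism types is $(h+1)\hexp(O(N))$ rather than $h\hexp(O(N))$ (it is exponential in the navigation-set size), and the per-task polynomial growth in Lemma~\ref{lem:size-kt} is not independent of the children's counts but of the form $(s(T_c)\cdot d(T_c))^{O(l^2)}$ --- do not affect the final bounds, since only $\log M$, $D$, and Lemma~\ref{lem:size-kt} as stated enter the computation.
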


\section{Undecidability Results} \label{app:undecidability}

We provide a proof of Theorem \ref{thm:restrictions} for relaxing restriction (2). Recall that HAS$^{(2)}$ allows 
subtasks of a given task to overwrite non-null ID variables.
The same proof idea can be used for restrictions (1) to (7).

\begin{proof}
We show undecidability by reduction from the Post Correspondence Problem (PCP) \cite{Post47,sipser}.
Given an instance $P = \{(a_i, b_i)\}_{1 \leq i \leq k}$ of PCP, 
where each $(a_i, b_i)$ is a pair of non-empty strings over $\{0,1\}$, we show how to construct a HAS$^{(2)}$ 
$\Gamma$ and $\hltlfo$ formula $\varphi$ such that there is a solution to $P$ iff 
there exists a run of $\Gamma$ satisfying $\varphi$ (i.e., $\Gamma \not\models \neg \varphi$).

The database schema of $\Gamma$ contains a single relation 
$$G(\underline{id}, \mathtt{next, label})$$ 
where $\mathtt{next}$ is a foreign-key attributes referencing attribute $id$ and
$\mathtt{label}$ is a non-key attribute. Let $\alpha, \beta$ be distinct id values in $G$.
A {\em path} in $G$ from $\alpha$ to $\beta$ is a sequence of IDs  $i_0, \ldots, i_n$ in $G$ where
$\alpha = i_0$, $\beta = i_n$, and for each $j, 0 \leq j < n$, $i_{j+1} = i_j.\mathtt{next}$.
It is easy to see that there is at most one path from $\alpha$ to $\beta$
for which $i_j \neq \alpha, \beta$ for $0 < j < n$, and the path must be simple ($i_0,i_1, \ldots,i_n$ are distinct).
If such a path exists, we denote by $w(\alpha, \beta)$ the sequence of labels $i_0.\mathtt{label}, \ldots, i_n.\mathtt{label}$ 
(a word over $\{0,1\}$, assuming the values of $\mathtt{label}$ are $0$ or $1$).
Intuitively, $\Gamma$ and $\varphi$ do the following given database $G$:
\begin{enumerate}
\item non-deterministically pick two distinct ids $\alpha, \beta$ in $G$ 
\item check that there exists a simple path from $\alpha$ to $\beta$ and that $w(\alpha,\beta)$ witnesses a solution to $P$;
the uniqueness of the simple path from $\alpha$ to $\beta$ is essential to ensure that 
$w(\alpha,\beta)$ is well defined. 
\end{enumerate} 

\noindent
Step 2 requires simultaneously parsing $w(\alpha,\beta)$ as $a_{s_1} \dots a_{s_m}$  and $b_{s_1} \dots b_{s_m}$
for some $s_i \in [1,k], 1 \leq i \leq m$, by synchronously walking the path from $\alpha$ to $\beta$ with two pointers
$P_a$ and $P_b$. More precisely, $P_a$ and $P_b$ are initialized to $\alpha$. 
Then repeatedly, an index $s_j \in [1, k]$ is picked
non-deterministically, and $P_a$ advances $|a_{s_j}|$ steps to a new position $P_a'$, such that 
the sequence of labels along the path from $P_a$ to $P_a'$ is $a_{s_j}$ and no id along the path equals $\alpha$ or $\beta$.
Similarly, $P_b$ advances $|b_{s_j}|$ steps to a new position $P_b'$, such that 
the sequence of labels along the path from $P_b$ to $P_b'$ is $b_{s_j}$ and  no id along the path equals $\alpha$ or $\beta$.
This step repeats until $P_a$ and $P_b$ simultaneously reach $\beta$ (if ever).
The property $\varphi$ checks that eventually $P_a = P_b = \beta$,
so $w(\alpha,\beta)$ witnesses a solution to $P$. 

In more detail, we use two tasks $T_p$ and $T_c$ where $T_c$ is a child task of $T_p$ (see Figure \ref{fig:non-null}).

\vspace*{-3mm}
\begin{figure}[!ht]
\centering
\includegraphics[scale=0.8]{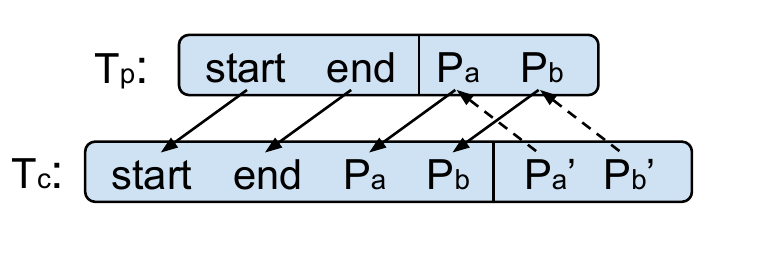}
\vspace*{-6mm}
\caption{Undecidiability for HAS$^{(2)}$}
\vspace*{-2mm}
\label{fig:non-null}
\end{figure}

\vspace{4mm}
\noindent
Task $T_p$ has two input variables $\emph{start}, \emph{end}$ (initialized to distinct ids $\alpha$ and $\beta$ by the global precondition), and two artifact variables $P_a$ and $P_b$ (holding the two pointers). 
$T_p$ also has a binary artifact relation $S$ whose set variables are $(P_a,P_b)$. 
At each segment of $T_p$, the subtask $T_c$ is called with $(P_a,P_b,\emph{start}, \emph{end})$ passed as input.
Then an internal service of $T_c$ computes $P_a'$ and $P_b'$, such that
$P_a, P_a', P_b$ and $P_b'$ satisfy the condition stated above for some $s_j \in [1,k]$.
Then $T_c$ closes and returns $P_a'$ and $P_b'$ to $T_p$, overwriting $P_a$ and $P_b$ 
(note that this is only possible because restriction $(2)$ is lifted).
At this point we would like to call $T_c$ again, but multiple calls to a subtasks are disallowed between internal transitions.
To circumvent this, we equip $T_p$ with an internal service that simply propagates $(P_a, P_b, \emph{start},\emph{end})$.
The variables $\emph{start},\emph{end}$ are automatically propagated as input variables of $T_p$.
Propagating $(P_a,P_b)$
is done by inserting it into $S$ and retrieving it in the next configuration (so $\delta= \{+S(P_a,P_b), -S(P_a,P_b)\}$). 
Now we are allowed to call again $T_c$, as desired.

It can be shown that there exists a solution to $P$ iff there exists a run of the above system that reaches a configuration in which
$P_a = P_b = \emph{end}$. This can be detected by a second internal service \emph{success} of $T_p$ with pre-condition 
$P_a = P_b = \emph{end}$. Thus, the $\hltlfo$ property $\varphi$ is simply [{\bf F} (\emph{success})]$_{T_p}$. 
Note that this is in fact an HLTL formula.
Thus, checking $\hltlfo$ (and indeed HLTL) properties of HAS$^{(2)}$ systems is undecidable.
\end{proof}



\end{document}